\documentclass[12pt,letterpaper]{article}
\input{glyphtounicode}
\usepackage[T1]{fontenc}
\usepackage{lmodern}
\usepackage[utf8]{inputenc}
\usepackage{geometry}
\usepackage{booktabs}
\usepackage{amsmath}
\usepackage{amsthm} 
\usepackage{amsfonts}
\usepackage{dsfont}
\usepackage{cancel}
\usepackage{mathtools}
\usepackage{datetime}
\usepackage{bm}
\usepackage{enumitem}
\setlist{noitemsep, topsep=2pt}
\usepackage{changepage}
\usepackage{amsthm}
\usepackage{verbatim}
\usepackage{amsmath,amssymb}
\usepackage{graphicx}
\usepackage{emptypage}
\usepackage{mathabx}
\usepackage{newlfont}
\usepackage{tikz-cd}
\usepackage{tikz}
\usepackage[all,cmtip]{xy}
\usepackage[bottom]{footmisc}
\usepackage[titletoc,title]{appendix}
\usepackage{chngcntr}
\usepackage{apptools}
\usepackage{listings}
\usepackage{color}
\usepackage{sgame, tikz} 

\usepackage{kpfonts}  
\usepackage{natbib}
\AtAppendix{\counterwithin{lem}{section}}
\usepackage{caption}
\usepackage{subcaption}
\usepackage{float}
\usepackage{thmtools,thm-restate}
\usepackage{epigraph}
\usepackage{xr-hyper}
\usepackage{hyperref}
\usepackage{wasysym}
\hypersetup{colorlinks,linkcolor=blue,allcolors=blue}
\usepackage{sgamevar}
\hypersetup{urlcolor=black}

\theoremstyle{plain} 
\newtheorem{cor}{Corollary} 
\newtheorem{prop}{Proposition}
 
\newtheorem{theorem}{Theorem}
\newtheorem{lemma}{Lemma}
\theoremstyle{definition} 
\newtheorem{ex}{Example}

\makeatletter
\def\th@remark{
  \thm@headfont{\bfseries}
  \normalfont 
  \thm@preskip\topsep \divide\thm@preskip\tw@
  \thm@postskip\thm@preskip
}
\makeatother
\allowdisplaybreaks
\theoremstyle{remark}

\let\emptyset\varnothing

\DeclareMathOperator{\E}{\mathds{E}}
\renewcommand{\P}{\mathds{P}}

\newcommand{\R}{\mathds{R}}

\newcommand{\N}{\mathds{N}}

\newcommand{\A}{\mathcal{A}}

\newcommand{\Y}{\mathcal{Y}}

\renewcommand{\1}{\mathds{1}}

\newcommand\eqid{\stackrel{d}{=}}

\renewcommand{\epsilon}{\varepsilon}

\renewcommand*\d{\mathop{}\!\mathrm{d}}
\newcommand{\argmax}{\operatornamewithlimits{argmax}}

\newcommand{\X}{\mathcal{X}}

\usepackage{setspace}

\theoremstyle{definition}

\reversemarginpar
\makeatletter
\g@addto@macro\normalsize{%
  \setlength\abovedisplayskip{6pt plus 2pt minus 4pt}%
  \setlength\belowdisplayskip{6pt plus 2pt minus 4pt}%
  \setlength\abovedisplayshortskip{0pt plus 2pt}%
  \setlength\belowdisplayshortskip{4pt plus 2pt minus 2pt}%
}
\makeatother

\newdate{draftdate}{02}{10}{2023}
\usdate

\usepackage[nameinlink]{cleveref}
\crefname{manualasm}{assumption}{assumptions}
\crefname{cor}{corollary}{corollaries}
\crefalias{prop}{proposition}
\crefname{claim}{claim}{claims}
\crefname{ex}{example}{examples}
\crefname{defn}{definition}{definitions}
\crefname{rmk}{remark}{remarks}
\crefname{alg}{algorithm}{algorithms}

\begin{document}
\title{\textbf{Screening Frontiers}\thanks{I thank Ben Brooks, Piotr Dworczak, Jeff Ely, Nima Haghpanah, Jacob Leshno, Andrew McClellan, Meg Meyer, Lea Nagel, Mariann Ollár, Daniel Rappoport, Doron Ravid, Roberto Saitto, Alex Teytelboym, Udayan Vaidya, as well as seminar and conference audiences at University of Chicago, University of Michigan, Rice Theory Mini-Conference, Brown University, University of Mannheim, EC'26, and ESSET 2026 for helpful comments and discussions. Some results in this paper were also reported in my job market paper (\citealt{yang2023workingpaper}), which has since been split into separate papers.}
}
\author{Frank Yang\thanks{Department of Economics, Stanford University. Email: \href{mailto:shuny@stanford.edu}{\color{blue}shuny@stanford.edu}.}}
\date{\today\\
}
\maketitle
\begin{abstract}
A principal screens an agent with an arbitrary set of allocations $\mathcal{X}$. The agent's values for the allocations are comonotonic. A subset $\mathcal{X}^\star \subseteq \mathcal{X}$ is a \textit{\textbf{surplus-elasticity frontier}} if \textit{(i)} any other allocation has a demand curve that is pointwise lower and less elastic than that of some allocation in $\mathcal{X}^\star$ and \textit{(ii)} the allocations in $\mathcal{X}^\star$ can be ordered in terms of their demand curves such that a higher demand curve is more inelastic. We show that any surplus-elasticity frontier is an optimal menu. Moreover, if the incremental demand curves along the frontier are also ordered by their elasticities, then the frontier remains optimal even if randomization is allowed. The frontier is agnostic to type distributions and redistributive welfare weights---the same menu remains optimal for a broad class of objectives, yielding a decomposition between menu design and optimal pricing. Under the same incremental-elasticity condition, a generalized notion of the frontier is not only sufficient but also necessary for this robust optimality. As applications, we derive new results on optimal bundling, taxation, sequential screening, selling information, and regulating a data-rich monopolist. \\

\noindent\textbf{Keywords:} Multidimensional screening, menu design, bundling, taxation, costly screening, sequential screening, selling information, monopoly regulation, redistribution.

\end{abstract}
\setcounter{page}{1}
\newpage

\addtocontents{toc}{\protect\setcounter{tocdepth}{2}} 
\newpage

\section{Introduction}

Products are increasingly feature-rich. How should a multiproduct firm design its product mix? In this paper, we answer this question by considering a screening model with an arbitrary allocation space $\mathcal{X}$. The only substantive assumption we make is that consumers can be ordered so that higher types value every allocation more.\footnote{When $\mathcal{X}$ is finite, this is equivalent to assuming that the distribution of value types in $\R^{\mathcal{X}}$ is a \textit{\textbf{comonotonic}} joint distribution. It restricts consumer preferences \textit{across} types but imposes no restrictions on preferences over allocations for any \textit{fixed} type. \Cref{app:multidim} shows that exact comonotonicity is not necessary: our results continue to hold approximately when consumer preferences are sufficiently positively correlated.}

In this environment, we argue that there are two key dimensions that the firm should optimize for: \textit{(i)} the level of surplus and \textit{(ii)} the elasticity of demand.\footnote{For expositional simplicity, we assume zero marginal costs; when there are positive costs, we can incorporate them by redefining the demand curves as net of production costs, with price elasticities defined with respect to the cost-adjusted demand curve.} Specifically, for each allocation, consider the \textit{\textbf{demand curve}} generated when the product is sold alone.  We call a menu of products $\mathcal{X}^\star \subseteq \mathcal{X}$ a \textit{\textbf{surplus-elasticity frontier}} if:
\begin{itemize}
    \item[\textit{(i)}] any other allocation has a demand curve that is pointwise lower and less elastic than that of some allocation in $\mathcal{X}^\star$; 
    \item[\textit{(ii)}] the allocations in $\mathcal{X}^\star$ can be ordered in terms of their demand curves such that a higher demand curve is also more inelastic. 
\end{itemize}
 \Cref{fig:frontier} illustrates this definition in the case where any two demand curves can be pointwise compared both in levels and in elasticities. In such a case, a surplus-elasticity frontier is just like a Pareto frontier with the two optimizing dimensions being the level of demand curves and the elasticity of demand curves. More generally, because each of the two comparisons is only a partial order, a surplus-elasticity frontier need not always exist. 

However, whenever such a frontier exists, we show that it is an optimal menu (\Cref{thm:main}). Moreover, if the \textit{\textbf{incremental demand curves}} along the frontier---the differences of consecutive demand curves---are also ordered by their price elasticities, then the frontier is optimal even among \textit{\textbf{stochastic}} mechanisms (\Cref{thm:main2}). As we show, for profit maximization, the optimal prices can then be determined by the demand profile method that simply prices the upgrades according to the incremental demand curves, and, under mild assumptions, every product along the frontier must be used (\Cref{cor:profile}). Importantly, a generalization of our frontier concept is not only sufficient but also \textit{\textbf{necessary}} for optimality, in a sense made precise below (\Cref{thm:converse}).

Conceptually, our results are very intuitive. The level of demand determines the \textit{\textbf{surplus}} available in a product in the absence of any private information. The elasticity of demand, as we show, is equivalent to a precise notion of \textit{\textbf{dispersion}} in the values for a given product on the log scale, and hence dictates the information rents to the agent per unit of surplus. In this sense, the surplus-elasticity frontier balances creating surplus and controlling information rents. 

Technically, our results are ``dimension-reduction'' results---as we show, the agents' preferences over the allocations on the frontier always satisfy the Spence--Mirrlees condition and hence the problem of pricing the frontier allocations can always be solved via standard one-dimensional methods. Our results reduce the original arbitrary allocation space $\mathcal{X}$ to a subset of allocations $\mathcal{X}^\star$ for which the problem becomes tractable. As we discuss, these abstract results accommodate objectives beyond profit maximization, such as weighted welfare maximization, and immediately lead to a variety of new results in important screening problems such as optimal bundling, optimal taxation, sequential screening, selling information, and monopoly regulation.

\begin{figure}[t]
    \centering
    \includegraphics[width=0.44\linewidth]{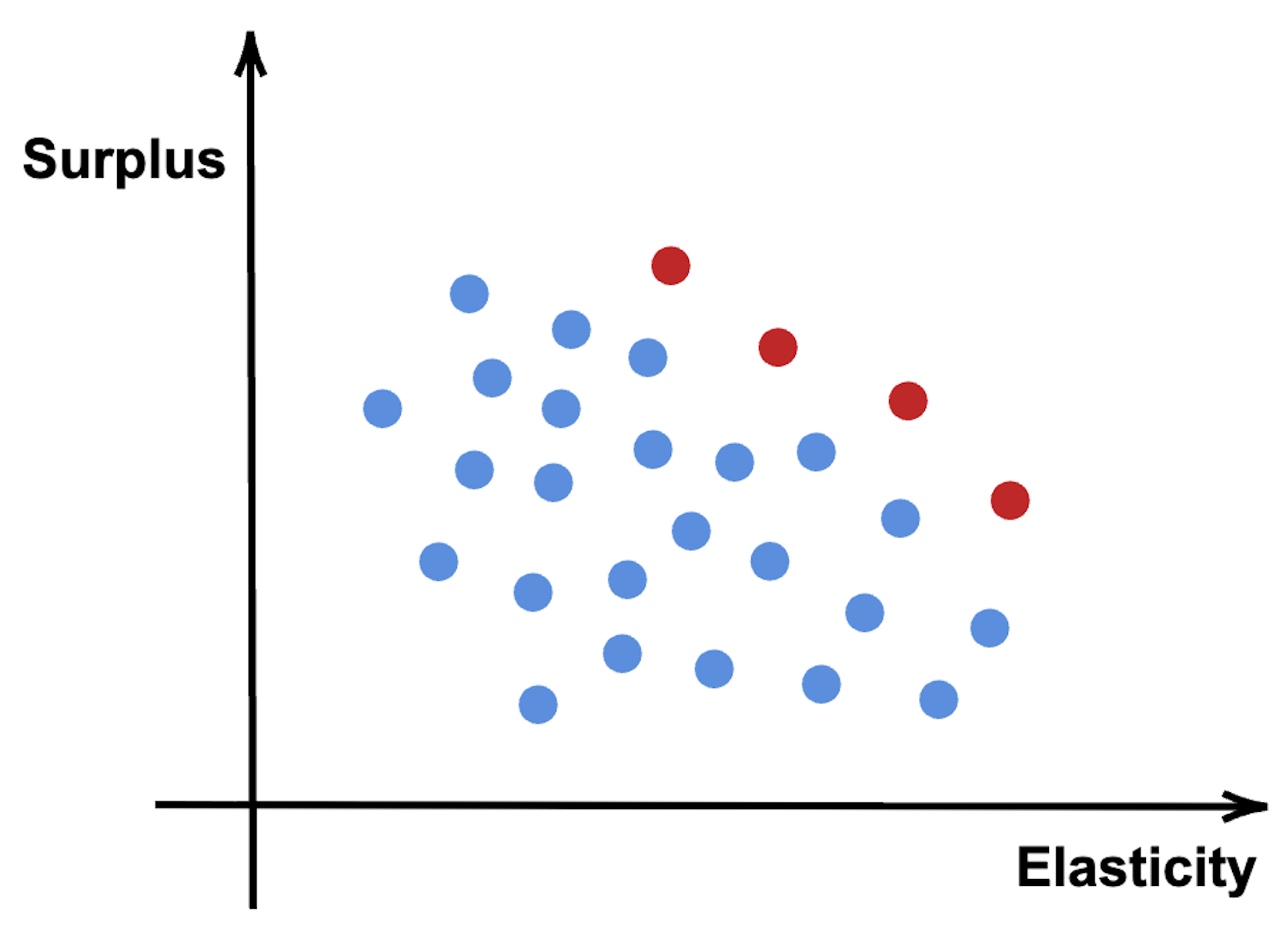}
    \caption{Illustration of Surplus-Elasticity Frontier. Each point is a product, identified by its sold-alone demand curve. The figure illustrates a case where any two demand curves can be pointwise compared in terms of their levels and their elasticities, in which case a surplus-elasticity frontier always exists (highlighted in red) and is optimal by \Cref{thm:main}. }
    \label{fig:frontier}
\end{figure}

To illustrate the results, we start with two warm-up examples. 

\begin{ex}\label{ex:one-good}
Consider the classic problem of selling a single good while allowing for lotteries. The allocation space is $\mathcal{X} = [0,1]$, where $x$ is the probability of allocating the good. The demand curve induced by any lottery $x \in \mathcal{X}$ is simply a \textit{\textbf{linear scaling}} of the full-allocation demand curve, which lowers its level but does not affect its elasticity. Thus, the surplus-elasticity frontier consists only of the degenerate lottery that allocates the good for sure, i.e., $\X^\star = \{1\}$. By \Cref{thm:main}, a posted price is then optimal, recovering the well-known result of \citet{Myerson1981} and \citet{riley1983optimal}.
\qed
\end{ex}

\begin{figure}[t]
    \centering
    \includegraphics[width=0.44\linewidth]{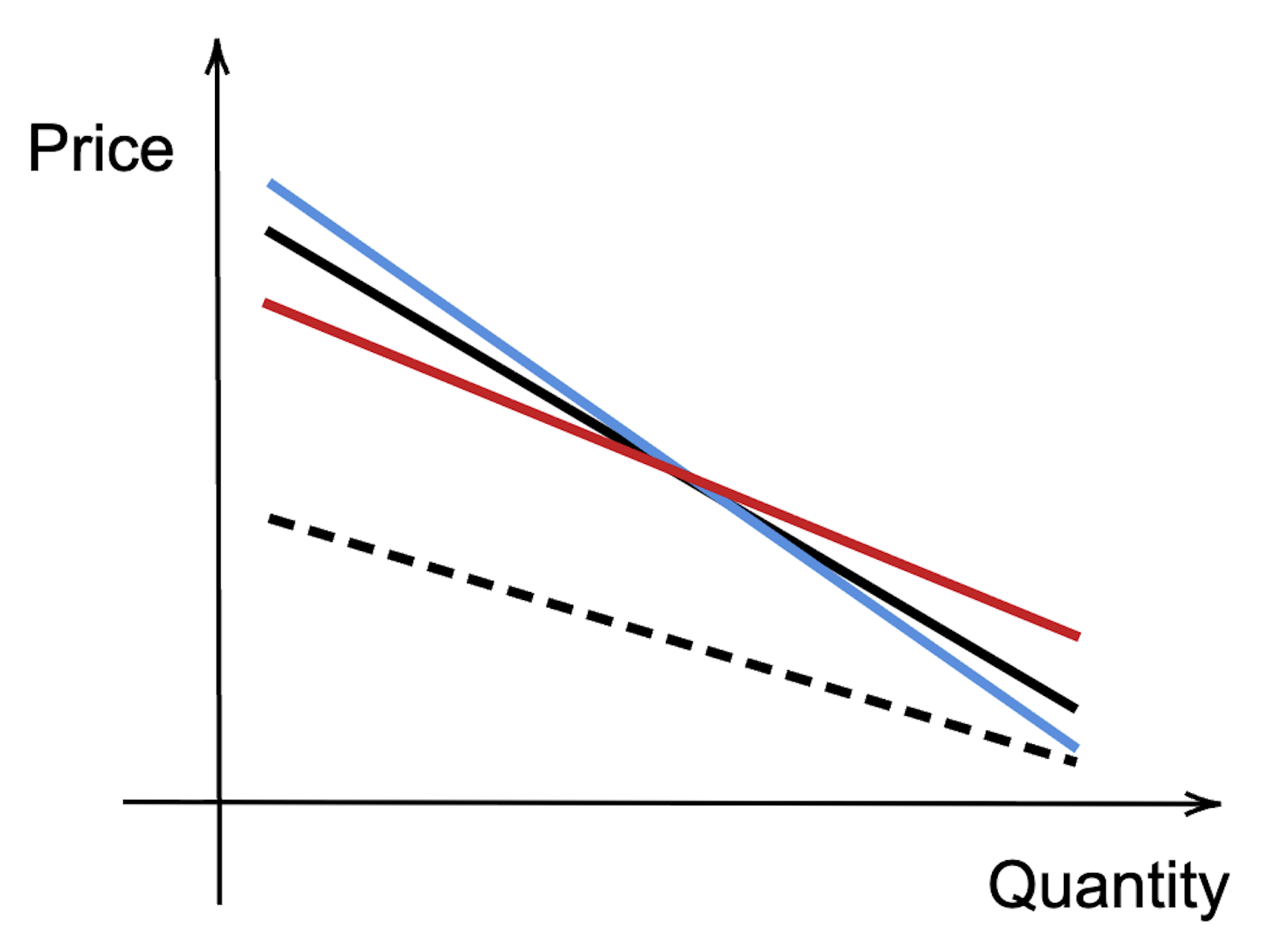}
    \caption{Illustration of Changes in Demand Curves. The dashed black line is the original demand curve. The solid lines are three demand curves that increase surplus but have different elasticity effects: Solid black line keeps the same elasticities (a \textit{scaling}); solid red line increases elasticities (a \textit{flattening}); solid blue line decreases elasticities (a \textit{steepening}).}
    \label{fig:demandcurves}
\end{figure}

\begin{ex}\label{ex:two-dimension}
Now, as another illustration, consider a monopolist who offers a product mix. Each product $x$ has two different dimensions of features, say \textit{\textbf{functionality}} and \textit{\textbf{design}}, $(\alpha, \beta) \in \mathcal{X} = [0, 1]^2$. For any consumer of type $t\in[1, 2]$, their value is given by $v\big((\alpha, \beta), t\big) = (\alpha + t)^{\beta}$. The monopolist wants to select a menu of these products to offer. Note that an increase in either feature increases the level of demand, but they have opposite effects on the elasticity of demand---an increase in the functionality $\alpha$ \textit{\textbf{flattens}} the demand curve (i.e., makes it more elastic) whereas an increase in the design $\beta$ \textit{\textbf{steepens}} the demand curve (i.e., makes it more inelastic).\footnote{Indeed, the demand curve for each product is simply $P(x, q) = (\alpha + F^{-1}(1-q))^{\beta}$, where $F$ is the distribution of $t$, and hence increasing $\beta$ rotates the demand curve \textit{clockwise}, whereas increasing $\alpha$ rotates the demand curve \textit{counterclockwise}. 
} \Cref{fig:demandcurves} illustrates. Thus, the surplus-elasticity frontier is $\X^\star = \{(1, \beta)\}_{\beta\in[0,1]}$. By \Cref{thm:main}, the monopolist should offer a menu of products that share full functionality but differ in the details of the design. Moreover, the incremental demand curves along this frontier also become more inelastic as the design $\beta$ increases. Therefore, by \Cref{thm:main2}, the menu $\X^\star = \{(1, \beta)\}_{\beta\in[0,1]}$ is optimal even if we allow for stochastic mechanisms, i.e., with the allocation space being $\Delta([0, 1]^2)$.  \qed
\end{ex}

In both \Cref{ex:one-good} and \Cref{ex:two-dimension}, the optimal prices depend on the distribution of consumer types, but the same menu $\X^\star$ (priced differently) is optimal for any type distribution. This turns out not to be a coincidence. As we discuss, the definition of a surplus-elasticity frontier is actually \textit{\textbf{agnostic}} to the type distribution. As a consequence, just like in the one-good case of \citet{Myerson1981}, our results provide a clean decomposition between menu design and optimal pricing: \textit{(i)} once the designer can identify the frontier from the demand system, the designer can already assemble an optimal menu of products just like in \Cref{ex:one-good} and \Cref{ex:two-dimension}, and \textit{(ii)} as the designer obtains more information about the type distribution (e.g., via price experimentation), the designer can further fine-tune the prices of the options in the menu. 

Perhaps also surprisingly, the surplus-elasticity frontier remains optimal even if the principal does not just care about revenue. All of our results turn out to hold identically even if the designer has redistributive preferences such as in optimal taxation or redistributive allocation problems, as long as the welfare weights are \textit{\textbf{redistributive}} in the sense that the designer weakly prefers transferring one dollar from a high type to a low type. Whenever a surplus-elasticity frontier exists, the \textit{\textbf{same}} frontier is optimal simultaneously for a broad set of objectives. For instance, in \Cref{ex:two-dimension}, the same menu $\X^\star = \{(1, \beta)\}_{\beta\in[0,1]}$ remains optimal even if the designer has redistributive preferences (regardless of the specific welfare weights) as well as access to stochastic mechanisms.

\begin{figure}[t]
    \centering
    \includegraphics[width=0.44\linewidth]{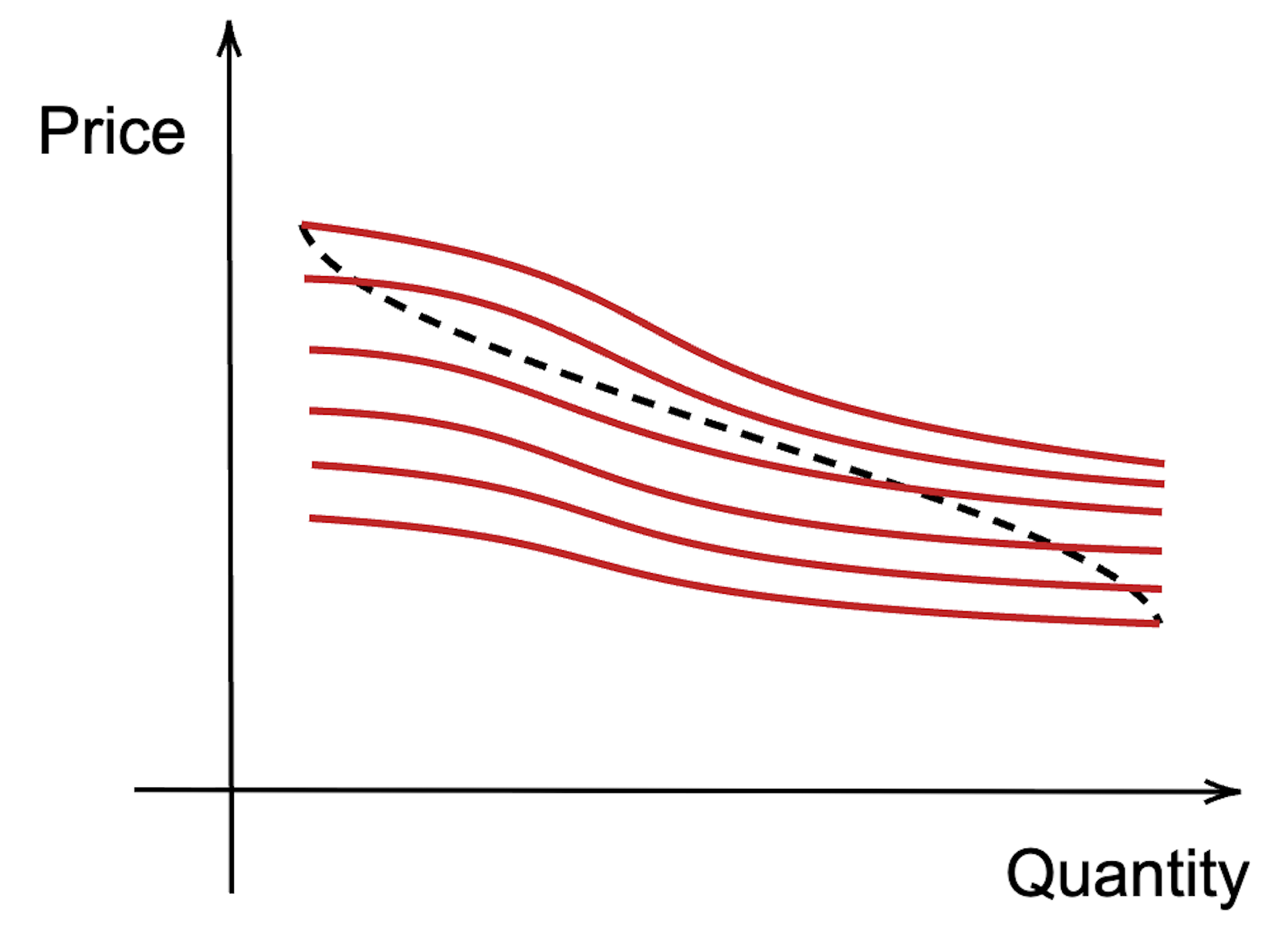}
    \caption{Illustration of Covering of a Demand Curve. The dashed black curve is the original demand curve. The collection of solid red demand curves \textit{covers} the original one by single-crossing it from below individually and covering it collectively. }
    \label{fig:covering}
\end{figure}

Both requirements of a surplus-elasticity frontier are needed for our results---in particular, as we show, without the second requirement (the ``anti-chain'' property), using only the ``undominated'' allocations can be strictly suboptimal (\Cref{ex:counterexample}). The two requirements, however, can be weakened to a generalized notion, which we show is not only sufficient but, under suitable conditions, necessary for such \textit{\textbf{robust optimality}}. We generalize by weakening the pointwise elasticity comparisons to single-crossing comparisons and allowing a collection of possibly randomized demand curves to ``dominate'' a given one. Say that a collection of randomized demand curves \textit{\textbf{covers}} a given demand curve if they single-cross the given one from below individually and cover its graph collectively. \Cref{fig:covering} illustrates. A menu of allocations $\X^\star$ is a \textit{\textbf{generalized frontier}} if its elements can be ordered, with an accompanying stochastically ordered set of lotteries $\A \subseteq \Delta(\X^\star)$, such that \textit{(i)} the demand curve of every allocation $x \in \X$ can be covered by a collection of demand curves from $\A$ and \textit{(ii)} the demand curves of the elements in $\X^\star$ satisfy increasing differences. Our proof of \Cref{thm:main} actually shows that any surplus-elasticity frontier is a generalized frontier and that any generalized frontier is optimal---thus, it suffices to find a generalized frontier (\Cref{thm:main3}). The conclusion of \Cref{thm:main2} also applies: if the incremental demand curves are ordered by elasticities along the generalized frontier, then it is optimal even among stochastic mechanisms. Importantly, under the same incremental-elasticity condition, a generalized frontier is not only sufficient but also necessary for profit maximization under every type distribution, and therefore \textit{\textbf{characterizes}} robust optimality (\Cref{thm:converse}).

Beyond the conceptual insights, our main results deliver robust conditions for ``when simple mechanisms are optimal'' across various applications. We now describe the applications in turn.

\paragraph{Optimal Bundling.}\hspace{-2mm}Consider a designer who allocates bundles of a finite set of goods, where the consumers are vertically ordered and can have non-additive values. An immediate application of \Cref{thm:main} and \Cref{thm:main2} identifies a new condition (\textit{\textbf{robust nesting condition}}) for when \textit{\textbf{nested bundling}}---where more expensive bundles include all the goods in the less expensive ones---is optimal (\Cref{prop:nesting}). As our result reveals, the key logic behind nested bundling is that the seller successively chooses smaller but more elastic bundles to span the surplus-elasticity frontier. Unlike the demand conditions in \citet{yang2023nested}, which depend on the type distribution and hold only for profit maximization, the robust nesting condition yields a clean decomposition between menu design and optimal pricing: once the seller can identify the frontier from the demand system, she can already assemble the optimal menu, regardless of the specific type distribution. Moreover, since the results hold beyond profit maximization, they open the door to applying results from the bundling literature to redistributive allocation problems. As a special case, we show that the \textit{\textbf{ratio-monotonicity condition}} of \citet{haghpanah2021pure} actually implies that pure bundling is optimal even if the designer is a social planner who has \textit{\textbf{redistributive}} preferences that favor agents who have lower values for the grand bundle; moreover, since this holds for all redistributive welfare weights, it enables us to perform comparative statics on the optimal mechanism (\Cref{cor:pure}). Perhaps surprisingly, the designer never introduces any smaller bundle for the poorer agents to consume, even though she cares about them more---under ratio-monotonicity, the grand bundle itself is the surplus-elasticity frontier and hence most effective at controlling richer agents' information rents per unit of surplus. The only instrument is thus the grand-bundle price, which the optimal mechanism \textit{\textbf{lowers}} if the welfare weights are pointwise increased but \textit{\textbf{raises}} if the welfare weights shift toward poorer agents in the majorization order.

\paragraph{Optimal Taxation and Ordeals.}\hspace{-2mm}Consider the optimal nonlinear taxation problem with quasilinear preferences (\citealt{diamond1998optimal}). Suppose the government has the ability to combine the \textit{\textbf{tax instruments}} with \textit{\textbf{costly screening}} to induce self-targeting in transfers. Would it be beneficial to do so? \citet{Nichols1982b} show that it is indeed possible for the government to use socially wasteful ordeals to improve the net welfare because the ordeals can help with screening. Applying our main results, we show that there are two key forces that determine whether an ordeal should be used in the presence of tax instruments (\Cref{prop:ordeal}): \textit{(i)} As anticipated by \citet{Nichols1982b}, ordeals cannot be helpful if the higher ability types have lower costs from performing the ordeals; \textit{(ii)} even if the direction of sorting is correct, ordeals still cannot be helpful if an appropriately defined \textit{\textbf{take-up elasticity}} with respect to a small tax cut becomes lower when an ordeal is imposed. Intuitively, the reduction in surplus by an ordeal must be compensated via an increase in the screening effectiveness, captured by the elasticity.

\paragraph{Sequential Screening.}\hspace{-2mm}Consider the sequential screening problem introduced in \citet{Courty2000}. A seller contracts with a buyer on the sale of a good after the buyer has some initial estimate $t$ of their value $\theta$ but before fully learning $\theta$. This is a canonical example of \textit{\textbf{dynamic mechanism design}}. In general, the optimal mechanism can be very sophisticated, involving a continuum of options with different up-front fees for different refund options (\citealt{Courty2000}). In practice, however, even in many markets with buyer uncertainty, the selling mechanisms tend to be simple, sometimes simply a single \textit{\textbf{nonrefundable posted price}}. When is such a simple mechanism optimal in sequential screening? Applying our main results, we show that a nonrefundable posted price is optimal if higher ex ante types not only have higher expected willingness to pay (in the sense of FOSD) but also have greater uncertainty about their ex post values in the sense of the \textit{\textbf{dispersive order on the log scale}} (\Cref{prop:sequential}). To see why, note that a refund option with an up-front fee actually \textit{\textbf{reduces}} the ex post surplus by inducing agents with lower ex post values to forgo consumption; by our main results, it can be justified only by increasing the screening effectiveness. Yet under the log-dispersion order---which captures the elasticity of how ex post values vary with ex ante types---the screening effectiveness cannot be increased; hence, a nonrefundable posted price is optimal. Even though sequential screening is a well-studied problem, we are not aware of any primitive condition under which a nonrefundable posted price is optimal. One reason is methodological: much of the dynamic mechanism design literature relies on the Myersonian approach that relaxes the global integral monotonicity constraint (\citealt*{Pavan2014}), and thus mechanisms with significant bunching, such as a nonrefundable posted price, can be difficult to obtain without explicitly tracking global incentive constraints. As we discuss, our proof method is non-Myersonian and deals with global incentive constraints.

\paragraph{Selling Information.}\hspace{-2mm}Consider a designer selling \textit{\textbf{Blackwell experiments}} to an agent who has a decision problem to solve, just like in \citet*{bergemann2018design}. However, unlike \citet*{bergemann2018design}, suppose that the agent's prior about the state is common knowledge but the decision problem is their private information. Recall that every decision problem can be represented as a \textit{\textbf{convex indirect utility function}} of the posterior beliefs; each agent of type $t$ is thus identified with a convex indirect utility function $u(\mu, t)$, with the types vertically ordered in that higher types value every nontrivial experiment more relative to no information. A class of signals commonly used for tractability is called \textit{\textbf{truth-or-noise signals}} (\citealt{lewis1994supplying})---such a signal reports the true state with some probability and otherwise is an independent draw from the prior. When are they optimal? Applying our main results, we show that in any symmetric environment with a uniform prior, if the agent's indirect utility function $u(\,\cdot\,, t)$ is \textit{\textbf{increasingly convex}} as the type increases (in the Arrow--Pratt sense)---so that a higher type tends to have decision problems that depend \textit{\textbf{more sensitively}} on the state---then it is optimal, among arbitrary Blackwell experiments, to offer a menu of truth-or-noise signals (\Cref{prop:info}). These signals span a generalized frontier where a higher signal-to-noise ratio creates more surplus but a less elastic demand curve. In the opposite case where $u(\,\cdot\,, t)$ is \textit{\textbf{increasingly concave}}, simply selling \textit{\textbf{full information}} is optimal since its demand curve is most elastic.

\paragraph{Monopoly Regulation.}\hspace{-2mm}Our last application studies the classic problem of monopoly regulation, just like in \citet{Baron1982RegulatingCosts}. However, unlike \citet{Baron1982RegulatingCosts}, consider a monopolist who has \textit{\textbf{rich data}} about the consumers and will engage in personalized pricing in the absence of regulation. Following \citet{strack2025non}, we model the data-rich monopolist as observing the consumers' values $\theta$ and designing any \textit{\textbf{pricing rule}} that specifies a randomized price for each consumer $\theta$. The monopolist privately observes its type $t$ which determines its cost $c(\theta, t)$ for serving any consumer $\theta$; as in \citet{strack2025non}, this allows for interdependent costs, which are very common in settings involving consumer data (e.g., credit markets and insurance markets). Consider a regulator who contracts with the monopolist on the pricing rules: as in \citet{Baron1982RegulatingCosts}, the regulator maximizes a weighted sum of consumer and producer surplus but does not observe the cost type $t$, and therefore posts a menu of pricing rules with associated subsidies or taxes. We say that a mechanism is a \textit{\textbf{uniform pricing regulation}} if the menu consists only of deterministic pricing rules that do not depend on the consumers' types $\theta$---under such a regulation, the determination of taxes/subsidies exactly collapses to the model of \citet{Baron1982RegulatingCosts}. Of course, absent the tax instrument, uniform pricing protects the consumers; with it, are uniform pricing regulations optimal? We show that if the total surplus $\theta - c(\theta, t)$ is \textit{\textbf{log-submodular}}, then uniform pricing regulations are optimal (\Cref{prop:regulation})---consumer data has no regulatory value, collapsing optimal regulation to the classic Baron--Myerson analysis. However, if the total surplus is strictly \textit{\textbf{log-supermodular}}, then uniform pricing regulations generally fail to be optimal; instead, \textit{\textbf{discriminatory pricing regulations}} are optimal---the regulator permits personalized pricing and designs a suitable tax schedule to transfer the surplus to the consumers. As we discuss, our conditions capture the \textit{\textbf{degree of adverse selection}}---in markets with advantageous selection or mild adverse selection, uniform pricing regulations are optimal; with severe adverse selection, optimal regulations involve discriminatory pricing. The log-modularity of the surplus is exactly the elasticity comparison with respect to the monopolist type that delivers a screening frontier in our sense.

\paragraph{Related Literature.}\hspace{-2mm}We view our main contributions as twofold. First, our conceptual contribution is to explicitly identify two intuitive channels, surplus and dispersion, measured by the levels and the elasticities of the sold-alone demand curves that together pin down the optimal menu in a general screening environment. Second, our methodological contribution is a dimension-reduction technique that shows how various screening problems with rich allocation spaces can be characterized by identifying a suitable screening frontier that balances creating surplus and controlling information rents. We build on a large literature on multidimensional screening, most of which focuses on the optimal bundling problem. The model is traditionally set up with additive values and multidimensional types (\citealt*{McAfee1989MultiproductValues,Rochet1998, mcafee1988multidimensional, manelli2006bundling, daskalakis2017strong}) and turns out to be analytically and computationally intractable (\citealt{Rochet2003}; \citealt{daskalakis2014complexity}; \citealt{lahr2024extreme}). A recent line of work relaxes the additivity requirement and explores the consequences of \textit{\textbf{comonotonic type spaces}} where the consumer types are totally ordered (\citealt{haghpanah2021pure}; \citealt{ghili2021characterization}; \citealt{yang2023nested}). Our model builds heavily on this line of literature by fully embracing the one-dimensional type space and fully relaxing the allocation space requirement.\footnote{As we show in \Cref{app:multidim}, such a model also approximates general multidimensional screening models with multidimensional types that have sufficiently high positive correlation (see \Cref{thm:positive-correlation}).} In contemporaneous work, \citet{haghpanah2025screening} also study a model with an arbitrary allocation space but restrict attention to binary types, which allows them to study environments where types are not ordered.  Relatedly, \cite{pernoud2025bundling} characterize optimal mechanisms with non-ordered linear type spaces assuming additive values and provide a microfoundation for comonotonic types using equilibrium learning; \cite*{frick2026multidimensional} study the monopolist's problem with precise consumer data and provide structural results with linear type spaces.  Our proof method builds on and further develops the non-Myersonian approach in \citet{yang2022costly}.\footnote{See \citet{yang2023nested} for a different, Myersonian approach that maximizes a suitably defined virtual value function pointwise, with intuition based on marginal revenue curves. As we explain in \Cref{subsec:bundling}, a key advantage of the present approach is that it yields conditions robust to both type distributions and specific objectives, thereby separating menu design from optimal pricing.} In particular, we generalize the \textit{\textbf{downward sufficiency theorem}} in \citet{yang2022costly} to allow for arbitrary redistributive welfare weights, which, as our applications show, opens the door to applying results from multidimensional screening to recent redistributive allocation problems as well as classic optimal taxation and monopoly regulation problems.\footnote{See also \citet{doligalski2025optimal} for applying multidimensional screening techniques to taxation. There is a substantial literature in public finance studying taxation with multiple instruments (e.g., \citealt*{ferey2024sufficient}) or multidimensional heterogeneity (e.g., \citealt*{golosov2025optimal}).}

The remainder of the paper proceeds as follows. \Cref{sec:model} presents the model. \Cref{sec:main} presents the main results. \Cref{sec:proof} sketches the main proofs. \Cref{sec:application} presents the applications. \Cref{sec:conclude} concludes. \Cref{app:proof} contains the omitted proofs.

\section{Model}\label{sec:model}

A principal screens an agent. The allocation space is a compact metric space $\mathcal{X} \ni x$. The agent has a one-dimensional type $t \in \mathcal{T}:=[\underline{t}, \overline{t}]$, distributed according to a continuously differentiable $F$ with a positive density $f$. The agent has quasilinear preferences, given by $v(x, t) - p$ where $p$ is the payment. The agent has access to an outside option $\emptyset$, with the payoff normalized to $0$. We assume that the agent's value function $v(x, t)$ is continuous on $\X \times \mathcal{T}$, strictly increasing and differentiable in $t$ with $v_t(x, t)$ continuous on $\mathcal{X} \times \mathcal{T}$.  To simplify notation, we augment $\X$ with the outside option $\emptyset$.

The principal wants to maximize a weighted sum of welfare and revenue. By the revelation principle, it is without loss of generality to consider direct-revelation mechanisms. A \textit{\textbf{(direct-revelation) mechanism}} is a measurable map $(x, p): \mathcal{T} \rightarrow \mathcal{X} \times \R$ that satisfies the incentive compatibility (IC) and individual rationality (IR) constraints: 
\begin{align*}
    v(x(t), t) - p(t) &\geq     v(x(\hat{t}), t) - p(\hat{t})  &&\text{for all $t, \hat{t} \in \mathcal{T}$\,;} \\
   v(x(t), t) - p(t) &\geq    0  &&\text{for all $t \in \mathcal{T}$}\,.  
\end{align*}
Formally, the principal's objective is given by a weighted sum of welfare and revenue: 
\[\E\big[\lambda (t) \big(v(x(t), t) - p(t)\big)\big] + \E\big[p(t)\big]\,,\]
where $\lambda(\,\cdot\,) \geq 0$ is a continuous \textit{\textbf{welfare weight function}}. If providing allocation $x$ incurs a production cost $C(x)$, the principal's objective becomes $\E\big[\lambda(t) \big(v(x(t), t) - p(t)\big)\big] + \E\big[p(t) - C(x(t))\big]$. For any welfare weights $\lambda$, production costs can be normalized to zero by working with cost-adjusted values and payments.\footnote{In particular, all of our demand curves are understood to be net of production costs.} For the problem to be bounded, we assume $\E[\lambda(t)] \leq 1$.\footnote{If $\E[\lambda(t)] > 1$, the principal can approach infinite value by scaling up a lump-sum~transfer~to~the~agent.} We assume that the welfare weights are \textit{\textbf{redistributive}} in the sense that $\lambda(t)$ is nonincreasing in $t$.\footnote{Since the agent's preferences are pinned down by $t$, the formulation is equivalent to having $(\lambda, t)$ jointly distributed with $\E[\lambda \mid t]$ being nonincreasing in $t$.} With constant $\lambda$, this objective captures any weighted average of surplus and revenue, with $\lambda \equiv 0$ corresponding to profit maximization.  With varying $\lambda$, it accommodates optimal taxation with quasilinear preferences\footnote{This corresponds to $\E[\lambda(t)]=1$ since revenue would be transferred as a lump sum to the agent.} and redistributive allocation problems in which lower types receive weakly higher welfare weights.

A \textit{\textbf{menu}} $\mathcal{Y}$ is any subset of $\mathcal{X}$ (which we assume also includes the outside option $\varnothing$).\footnote{To simplify notation, we omit the inclusion of $\emptyset$ from a menu whenever it is clear.} 
A menu $\mathcal{Y}$ is \textit{\textbf{optimal}} if there exists an optimal mechanism $(x, p)$ such that $x(t) \in \mathcal{Y}$ for all types $t$. Since the space of allocations $\mathcal{X}$ is abstract, we can allow for stochastic mechanisms by encoding them in $\mathcal{X}$. However, some of our results require only comparisons among the elements of $\mathcal{X}$ to imply optimality even if the allocation space is extended to $\Delta(\mathcal{X})$ (under expected utility). Thus, we explicitly say that a menu $\mathcal{Y}$ is \textit{\textbf{optimal among stochastic mechanisms}} if there exists a mechanism $(x, p)$ such that $x(t) \in \mathcal{Y}$ for all types $t$ and $(x, p)$ is optimal among all IC and IR mechanisms mapping from $\mathcal{T}$ to $\Delta(\mathcal{X}) \times \R$.

Let $P(x, q)$ be the \textit{\textbf{demand curve}} for allocation $x$ when $x$ is sold alone: 
\[P(x, q) := v(x, F^{-1}(1 - q))\,.\]
Let $\eta(x, q)$ be the corresponding \textit{\textbf{elasticity curve}} for allocation $x$:
\[\eta(x, q) := \Big|\frac{\d \log P(x, q)}{\d \log q}\Big|^{-1} \,.\]
Note that a higher $\eta$ represents a more elastic demand curve. For any $x, x' \in \mathcal{X}$, we write 
\[x \preceq_{\text{surplus}} x' \,\,\,\text{ if $P(x, q) \leq P(x', q)$ for all $q$}\,;\]
and 
\[x \preceq_{\text{elasticity}} x' \,\,\,\text{ if $\eta(x, q) \leq \eta(x', q)$ for all $q$}\,.\]
We write $\prec_{\text{surplus}}$ or $\prec_{\text{elasticity}}$ if the above inequalities hold strictly for all $q \in (0, 1)$. Note that for the above elasticity comparison to be well defined, we implicitly assume that $P(x, q) > 0$, i.e., $v(x, t) > 0$ for any $x \neq \emptyset$.\footnote{As a convention, for $x = \emptyset$, we take the elasticity to be pointwise infinity.} We will relax this assumption for our notion of generalized frontiers, which replaces elasticity comparisons with single-crossing comparisons of demand curves.

\section{Main Results}\label{sec:main}

\subsection{Optimality of Surplus-Elasticity Frontier}
A compact subset $\mathcal{X}^\star \subseteq \mathcal{X}$ is a \textit{\textbf{surplus-elasticity frontier}} if:
\begin{itemize}
    \item[\textit{(i)}] for any $x \not\in \mathcal{X}^\star$, there exists $x' \in \mathcal{X}^\star$ such that 
    \[\text{ $x \preceq_{\text{surplus}} x'$ and $x \preceq_{\text{elasticity}} x'$ } \,\,;\]
    \item[\textit{(ii)}] there exists an index set $S\subset \R$ such that $\X^\star = \{x_s\}_{s\in S}$ and for all $s < s'$ 
    \[x_{s} \prec_{\text{surplus}} x_{s'} \text{ and } x_{s} \succ_{\text{elasticity}} x_{s'}\,\,.\]
\end{itemize}
The first requirement is a ``domination'' condition; the second requirement is an ``anti-chain'' condition. Intuitively, when both the surplus order and the elasticity order are total orders, this definition is just like a Pareto frontier as in \Cref{fig:frontier}. 

Our first main result says that any surplus-elasticity frontier is an optimal menu. 
\begin{theorem}\label{thm:main}
Any surplus-elasticity frontier is an optimal menu. 
\end{theorem}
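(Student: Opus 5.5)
The plan is to show that any mechanism can be weakly improved by one that uses only allocations in $\X^\star$. I will use a downward-replacement argument in the spirit of the downward sufficiency theorem of \citet{yang2022costly}, extended here to redistributive weights.

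First, translate the two orders into type-space statements. Write $t = F^{-1}(1-q)$. Then $x \preceq_{\text{surplus}} x'$ is simply $v(x,t)\le v(x',t)$ for all $t$. The elasticity comparison $\eta(x,q)\le\eta(x',q)$ says that $\frac{d}{dt}\log v(x,t) \ge \frac{d}{dt}\log v(x',t)$, i.e. $v(x,t)/v(x',t)$ is nondecreasing in $t$. This is because $\eta^{-1}=|d\log P/d\log q|$, and the common factor $|d t/d\log q|$ cancels.

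From this I get a single-crossing structure. For any price $p$, let $p' := p\, v(x',\hat t)/v(x,\hat t)$, with $\hat t$ chosen so that type $\hat t$ is indifferent between buying $x$ at $p$ and buying $x'$ at $p'$. Then $v(x',t)-p' - (v(x,t)-p)$ changes sign at most once, from negative to positive. Concretely, higher types gain more from switching to $x'$ when its price is scaled proportionally. Similarly, the anti-chain condition (ii) gives that for $s<s'$ the ratio $v(x_{s'},t)/v(x_s,t)$ is strictly increasing in $t$. Hence $\log v(x_s,t)$ has increasing differences in $(s,t)$, and preferences over $\X^\star$ satisfy Spence--Mirrlees: the frontier problem is a standard one-dimensional screening problem with a well-defined optimum by compactness.

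Second, the key improvement step. Take an optimal (or any IC--IR) mechanism $(x,p)$ and an allocation $x(t)\notin\X^\star$ used by some types. Let $x'\in\X^\star$ dominate it. I would replace the interval of types receiving $x$ by a multiplicative rescaling: offer $x'$ at price $p'(t)$ defined so that $v(x',t)-p'$ matches the lowest assigned type's utility and the slope conditions are preserved. The goal is to show that the envelope utility $U(t)$ can be weakly lowered pointwise in a way that keeps IC, while revenue plus welfare rises. With $\lambda$ nonincreasing and $\E\lambda\le1$, the objective equals $\E[p]+\E[\lambda U]$. Write the objective as total surplus minus $\E[(1-\lambda)U]$. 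Surplus weakly rises since $v(x',\cdot)\ge v(x,\cdot)$. Information rents $U(t)=U(\underline t)+\int v_t$ change according to $v_t$, and the elasticity comparison controls $v_t/v$ rather than $v_t$. So I would argue via the proportional-pricing trick: selling $x'$ at $p\cdot v(x',\tau)/v(x,\tau)$ replicates the rent profile up to a factor, and the redistributive weights make the integration-by-parts weights $\int_t^{\bar t}(1-\lambda)\,dF$ behave monotonically. This is what makes the comparison go through for every $\lambda$.

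Finally, I would do the replacement globally rather than one allocation at a time. Define a relaxed problem over the frontier index $s$ and use the Spence--Mirrlees property to show that its solution, as a mechanism using only $\X^\star$, attains at least the value of any mechanism in the original problem. I expect the main obstacle to be handling global IC when several non-frontier allocations are replaced simultaneously by frontier items with different indices. The replaced types must not want to mimic other types, and this is exactly where the anti-chain condition (ii) is needed, in view of the warning in \Cref{ex:counterexample}. My first attempt would be an induction or limit argument: replace the allocations of the highest types first and proceed downward. At each step only downward IC constraints bind under the increasing-differences order, and the downward-sufficiency argument ensures that upward deviations are automatically unprofitable.
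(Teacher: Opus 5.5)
Your translation of the two orders and your observation that the frontier satisfies Spence--Mirrlees are correct, and downward sufficiency is indeed one of the ingredients. But the central step is missing: you never give a replacement that provably preserves incentive compatibility, and the one you sketch is the one the paper shows fails.

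Swapping a non-frontier $x$ for a dominating $x'\in\X^\star$, with the price adjusted to hold the assigned type's payoff fixed, is exactly the ``false proof strategy'' of \Cref{sec:proof}. In \Cref{ex:counterexample}, moving $t_M$ from $x_2$ to the dominating $x_3$ at price $29$ makes $t_H$ mimic $t_M$ and lowers revenue. More fundamentally, an argument built from such swaps uses only the domination condition. It would therefore also ``prove'' that $\{x_1,x_3\}$ is optimal in that example, which is false. You note that the anti-chain condition must enter, but never say where.

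Your single-crossing claim also fails. The proportional price $p'=p\,v(x',\hat t)/v(x,\hat t)$ leaves $\hat t$ indifferent only if $p=v(x,\hat t)$ or $v(x,\hat t)=v(x',\hat t)$. In any case, the direction you state (higher types gain more from switching to $x'$) is precisely the one that \emph{breaks} downward IC. Since $x\preceq_{\text{elasticity}}x'$ means $v(x,t)/v(x',t)$ is nondecreasing, the usable device scales the \emph{probability}, not the price. The lottery giving $x'$ with probability $v(x,\hat t)/v(x',\hat t)$ matches $x$ at $\hat t$ and is weakly worse than $x$ for every $t>\hat t$. Two further problems: rents are ``replicated up to a factor'' only when that ratio is constant, and your final ``relaxed problem over $s$'' step simply restates the theorem.

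The missing idea is the reconstruction of \Cref{lem:reconstruct}. Keep all payments fixed and replace $x(t)$ by a lottery over the two \emph{adjacent} frontier elements $x^-(t)\prec_{\text{surplus}}x^+(t)$ that bracket $v(x(t),t)$, weighted so that type $t$ is exactly indifferent. This is where the anti-chain condition enters. The elements $x^{\pm}(t)$ carry no more surplus than the dominating $x_k$, so they are weakly more elastic than $x_k$ and hence than $x(t)$. \Cref{lem:inequality} then shows that the weight $\frac{v(x,s)-v(x^-,s)}{v(x^+,s)-v(x^-,s)}$ is increasing in $s$ on the relevant interval. So the lottery single-crosses $x(t)$ from below at $t$: every downward deviation payoff weakly falls, and the objective is unchanged.

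This produces only a \emph{downward}-IC mechanism using stochastically ordered lotteries, so two further steps are needed:
\begin{enumerate}
\item \Cref{thm:downward}, to restore full IC with a weak improvement. Its extension to redistributive weights is not routine: raising payments can hurt when $\lambda>1$, so the paper needs the mirror-image monotonicity argument and a revenue-neutral perturbation.
\item \Cref{lem:purification}, a monotone-coupling argument, to replace the lotteries by deterministic frontier allocations.
\end{enumerate}
Your plan contains nothing that yields a downward-IC mechanism to feed into downward sufficiency, and it has no purification step. That is because it never recognizes that lotteries over frontier elements are needed to obtain single crossing exactly at the assigned type. For infinite $\X$ you would also need the finite-approximation argument (\Cref{lem:finite}) plus a Helly-selection step for existence.
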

The proof of \Cref{thm:main} is in the appendix. We sketch the proof in \Cref{sec:proof}.

We make a few remarks about this result here. 

First, note that 
\[x \preceq_{\text{surplus}} x'  \iff v(x, t) \leq v(x', t) \text{ for all $t$} \tag{1}\]
and 
\[\qquad \qquad \quad \,\, x \preceq_{\text{elasticity}} x'  \iff \frac{\d}{\d t} \log v(x, t) \geq \frac{\d}{\d t} \log v(x', t) \text{ for all $t$}\,.\footnote{Indeed, to see this, note that $x \preceq_{\text{elasticity}} x'$ if and only if $\frac{v(x, t)}{v_t(x, t)} \cdot \frac{f(t)}{1 - F(t)} \leq \frac{v(x', t)}{v_t(x', t)} \cdot \frac{f(t)}{1 - F(t)}$ for all $t$, and the common factor $\frac{f(t)}{1 - F(t)}$ cancels.}  \tag{2} \label{eq:log-d}\]
Thus, the definition of a surplus-elasticity frontier does not depend on the type distribution $F$. Moreover,  the definition of the frontier is also agnostic to the welfare weights $\lambda(t)$. Perhaps surprisingly,  \Cref{thm:main} asserts that the \textit{same} frontier is optimal across all type distributions $F$ and all weakly decreasing welfare weights $\lambda(t)$---we refer to this property as \textit{\textbf{robust optimality}}. Thus, just like in \citet{Myerson1981}, \Cref{thm:main} provides a clean decomposition between menu design and optimal pricing: \textit{(i)} once the designer can identify the frontier from the demand system, she can already assemble an optimal menu of products, and \textit{(ii)} as the designer obtains more information about the details of the environment (e.g., via price experimentation), she can fine-tune the prices.

Second, to further illustrate the intuition of the surplus-elasticity frontier, we connect the two orders to two stochastic orders. Let $\preceq_{\text{fosd}}$ denote the \textit{\textbf{first-order stochastic dominance order}}. Let $\preceq_{\text{disp}}$ denote the \textit{\textbf{dispersive order}} (\citealt{muller2002comparison}), i.e., two random variables $X \preceq_{\text{disp}} Y$ if $F^{-1}_X(q') - F^{-1}_X(q) \leq F^{-1}_Y(q') - F^{-1}_Y(q)$ for all $q < q' \in [0, 1]$. Let $V^x$ denote the random variable induced by $v(x, t)$. Note that 
\[x \preceq_{\text{surplus}} x'  \iff V^x \preceq_{\text{fosd}} V^{x'}  \tag{3}\]
and 
\[\qquad \, \, \,\, x \preceq_{\text{elasticity}} x'  \iff \log V^{x} \succeq_{\text{disp}} \log V^{x'}\,.  \tag{4}\]
In particular, the elasticity order coincides with the reverse dispersive order applied to log values. Intuitively, a more elastic demand curve corresponds to a less dispersed log-value distribution and hence results in lower information rents per unit of surplus.\footnote{Indeed, see also \citet*{yang2023comparison} who show that log-level dispersion is important in understanding the rent an agent can secure in one-dimensional linear screening problems.}

Third, along the surplus-elasticity frontier, note that the agent's preferences satisfy \textit{\textbf{increasing differences}}: For any $s < s'$, 
\[x_{s} \prec_{\text{surplus}} x_{s'} \text{ and } x_{s} \succ_{\text{elasticity}} x_{s'} \implies \text{$v(x_{s'}, t) - v(x_s, t)$ is strictly increasing in $t$}\,.\footnote{Indeed, for all $s < s'$, we can write $v_t(x_s, t) = \frac{v_t(x_s, t)}{v(x_s, t)} \cdot v(x_s, t) < \frac{v_t(x_{s'}, t)}{v(x_{s'}, t)} \cdot v(x_{s'}, t) = v_t(x_{s'}, t)$,
where the strict inequality uses that the menu $\X^\star$ is a surplus-elasticity frontier and that $0 < v(x_{s}, t) \leq v(x_{s'}, t)$. } \tag{5}\]
Thus, the optimal pricing problem along the frontier can always be solved via well-known one-dimensional screening methods. 

Fourth, a particular special case is when the frontier consists of only one element, say $\overline{x}$. That is, element $\overline{x}$ generates more surplus than any other element in $\mathcal{X}$ and has a more elastic demand curve than that of any other element in $\mathcal{X}$. \Cref{thm:main} says that it is optimal to simply offer this maximal-surplus element, i.e., no screening distortion is needed. This connects to the optimality of pure bundling as studied in \citet{haghpanah2021pure}. Indeed, note that by \eqref{eq:log-d}, we have 
\[x \preceq_{\text{elasticity}} x'  \iff v(x, t)/v(x', t) \text{ is nondecreasing in $t$}\,.  \tag{6}\]
The main result of \citet{haghpanah2021pure} establishes that if $v(b, t) / v(\overline{b}, t)$ is nondecreasing in $t$ for all bundles $b$, then selling only the grand bundle $\overline{b}$ is profit maximizing even if randomization is allowed. This can be seen from \Cref{thm:main} by taking $\mathcal{X}$ to be the space of lotteries over bundles, and observing that mixtures of deterministic bundles yield a randomized bundle with demand curve pointwise below that of the grand bundle, and pointwise less elastic than that of the grand bundle given the ratio-monotonicity condition---the surplus-elasticity frontier consists of only the grand bundle. Perhaps surprisingly, \Cref{thm:main} shows that the identical ratio-monotonicity condition of \citet{haghpanah2021pure} actually implies the optimality of pure bundling (or equivalently, no screening distortion besides exclusion) for arbitrary redistributive welfare weights. We discuss more applications to optimal bundling in \Cref{subsec:bundling}.

Fifth, in the above application to optimal bundling, we take the allocation space to be the space of lotteries over bundles, but it turns out that we may be able to compare only elements in $\mathcal{X}$ and conclude the optimality of the menu in a much larger allocation space $\Delta(\mathcal{X})$---\Cref{subsec:stochastic} develops a strengthening of the surplus-elasticity frontier that precisely gives such stochastic optimality. In general, a minimal optimal menu---one from which no option can be removed---would be a subset of the surplus-elasticity frontier, but as we show in \Cref{subsec:stochastic}, there are conditions under which the surplus-elasticity frontier is itself a minimal optimal menu for profit maximization.

Sixth, given the broad intuition behind the surplus order and the elasticity order, a curious reader may be tempted to conjecture that it is always optimal to use  the ``undominated'' elements according to the surplus order and the elasticity order. This conjecture, however, turns out to be false. To illustrate this subtlety, our next example shows that without the ``anti-chain'' property in our definition, the ``undominated'' elements, even if they satisfy the  increasing differences property, do \textit{not} suffice for optimality---the optimal menu may require two allocations where one allocation's demand curve is pointwise lower and pointwise less elastic than the other allocation's. 

\begin{ex}[``Anti-chain'' property cannot be dropped in \Cref{thm:main}]\label{ex:counterexample}
There are three types $\{t_L, t_M, t_H\}$ with probabilities given by $(0.45, 0.44, 0.11)$. (The three-type case can be approximated via a continuous type distribution.) There are three non-outside-option allocations $\{x_1, x_2, x_3\}$. The values of each type for each allocation are given by: 
\[
\begin{array}{c|ccc}
& t_L & t_M & t_H\\
\hline
x_3 & 1    &  30 & 200 \\
x_2 & 0.8 & 24 & 160 \\
x_1 & 12  & 13 & 140 
\end{array}
\]
Note that this example satisfies increasing differences. Moreover, $x_2$ is equivalent to a lottery over $\{x_3, \varnothing\}$ with probability $0.8$ on $x_3$ and probability $0.2$ on $\varnothing$. 

Consider the menu $\{x_1, x_3\}$. By construction, we have $x_2 \preceq_{\text{surplus}} x_3$ and $x_2 \preceq_{\text{elasticity}} x_3$. Thus, condition \textit{(i)} (the ``domination'' condition) holds for the menu $\{x_1, x_3\}$. However, we claim that the menu $\{x_1, x_3\}$ is not optimal. Since increasing differences hold on this menu, it is straightforward to verify that the optimal pricing of this menu would be to offer $x_3$ alone at a price of $200$, generating a revenue of $22$. Now, consider the menu 
$\{x_1, x_2, x_3\}$ with prices given by $p(x_1) = 12$, $p(x_2) = 23$, and $p(x_3)= 63$. Under this menu, type $t_L$ finds it optimal to choose $x_1$, type $t_M$ finds it optimal to choose $x_2$, and type $t_H$ finds it optimal to choose $x_3$.  Thus, the seller collects a revenue 
\[12 \cdot 0.45 + 23 \cdot 0.44 + 63 \cdot 0.11 = 22.45 > 22\,,\] strictly higher than the maximal revenue under the menu $\{x_1, x_3\}$. Moreover, it can be verified that the menu $\{x_1, x_2, x_3\}$ with the given prices is profit maximizing. \qed
\end{ex}

The two requirements of a surplus-elasticity frontier can, however, be suitably weakened; we develop this generalization in \Cref{subsec:generalized}. \Cref{ex:counterexample} is also instructive for illustrating the subtleties in the proof; we will return to this example in \Cref{sec:proof} when sketching the proof of \Cref{thm:main}.

\paragraph{Ordered Surplus and Elasticity.}\hspace{-2mm}In most of our applications, the existence of a surplus-elasticity frontier is shown by construction, which then implies optimality by \Cref{thm:main}. However, when both the surplus and elasticity comparisons are total orders, the existence of a surplus-elasticity frontier is guaranteed, just like a standard Pareto frontier:

\begin{cor}\label{cor:ordered}
Consider any finite $\X = \{(a_i, b_i)\}_{i=1}^n\subset \R^2$.  Suppose $(a, b) \preceq_{\emph{surplus}} (a', b')$ if $a \leq a'$ and strictly so if $a < a'$; $(a, b) \preceq_{\emph{elasticity}} (a', b')$ if $b \leq b'$ and strictly so if $b < b'$. Then, the surplus-elasticity frontier exists and is optimal. 
\end{cor}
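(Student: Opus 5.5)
The plan is to construct the frontier explicitly as the Pareto frontier of the finite point set and then invoke \Cref{thm:main}. Call $(a_i,b_i)$ \emph{Pareto-undominated} if there is no $j$ with $a_j \geq a_i$, $b_j \geq b_i$ and $(a_j,b_j)\neq(a_i,b_i)$. I will let $\X^\star$ consist of the Pareto-undominated points, after removing duplicates. Since $\X$ is finite, $\X^\star$ is finite, hence compact, and it is nonempty.

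First I verify the anti-chain condition \textit{(ii)}. Take two distinct undominated points. They cannot satisfy both $a\le a'$ and $b\le b'$, so after relabeling we have $a<a'$ and $b>b'$. If instead $a=a'$, then the points are comparable in $b$ and one would dominate the other, which is a contradiction; the case $b=b'$ is ruled out the same way. So ordering the frontier points by their first coordinate gives a strictly increasing sequence $a_{s_1}<\dots<a_{s_k}$ with strictly decreasing $b$. Index these by $S=\{1,\dots,k\}$. The hypotheses translate strict coordinate inequalities into strict orders, so for $s<s'$ we get $x_s\prec_{\text{surplus}}x_{s'}$ and $x_s\succ_{\text{elasticity}}x_{s'}$.

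Next I check the domination condition \textit{(i)}. Take any $x\notin\X^\star$. Either $x$ is dominated, or $x$ duplicates a frontier point; the duplicate case is trivial. If $x$ is dominated, consider the set of points that weakly dominate $x$ coordinatewise, and choose from it a point maximizing $a+b$. That point is Pareto-undominated, because anything dominating it would weakly dominate $x$ and have a strictly larger sum. Call it $x'$. Then $a\le a'$ and $b\le b'$, and the hypotheses give $x\preceq_{\text{surplus}}x'$ and $x\preceq_{\text{elasticity}}x'$.

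Finally, \Cref{thm:main} gives optimality. I do not expect a real obstacle here. The only point needing care is that the hypotheses assert the orders only in the direction ``coordinate inequality implies order''. That is exactly the direction used above, so no converse is required. The outside option $\emptyset$ also needs a word: it is handled by the paper's convention and is included in every menu.
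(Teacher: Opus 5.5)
Your proof is correct and follows the route the paper intends. The paper gives no separate proof of this corollary; it only remarks that when both comparisons are total orders the frontier exists ``just like a standard Pareto frontier.'' Your construction of the Pareto-undominated set, your check of the anti-chain and domination conditions, and your appeal to \Cref{thm:main} spell out exactly that argument.
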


\paragraph{Which Product Dimension to Distort?}\hspace{-2mm}As another illustration of \Cref{thm:main}, we generalize \Cref{ex:two-dimension} about the distortion of product features: 
\begin{prop}\label{prop:distortion}
Suppose $\mathcal{X} = [0, 1]^2$, $v(x_1, x_2, t)$ is strictly increasing in $(x_1, x_2)$, log-submodular in $(x_1, t)$, and strictly log-supermodular in $(x_2, t)$. Then, $\{(1, x_2)\}_{x_2 \in [0, 1]}$ is an optimal menu.   
\end{prop}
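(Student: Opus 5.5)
The plan is to apply \Cref{thm:main} by showing that $\X^\star := \{(1, x_2)\}_{x_2 \in [0,1]}$ is a surplus-elasticity frontier. It is compact, and it can be indexed by $s = x_2 \in S = [0,1]$ with $x_s := (1, s)$. I will work throughout with the distribution-free characterizations (1) and \eqref{eq:log-d}. Since every comparison is pointwise in $t$, $F$ plays no role.

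First I check the \emph{anti-chain} condition (ii). For $s < s'$, strict monotonicity of $v$ in $(x_1,x_2)$ gives $v(1,s,t) < v(1,s',t)$ for all $t$, so $x_s \prec_{\text{surplus}} x_{s'}$. Strict log-supermodularity in $(x_2,t)$ means $\log v(1,s',t) - \log v(1,s,t)$ is strictly increasing in $t$. I will read this as the strict derivative inequality
\[
\tfrac{\d}{\d t}\log v(1,s',t) > \tfrac{\d}{\d t}\log v(1,s,t)
\]
for interior $t$. By \eqref{eq:log-d}, this strict inequality for the log-derivatives (equivalently $\eta(x_s,q) > \eta(x_{s'},q)$ on $(0,1)$) gives $x_s \succ_{\text{elasticity}} x_{s'}$ in the strict sense.

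This is the one delicate point. Strict monotonicity of a difference only guarantees a nonnegative derivative that is not identically zero, so I will interpret ``strictly log-supermodular'' (in the differentiable setting of the model) as a strict cross-partial inequality for the log-derivative. Alternatively, I will note that the proof of \Cref{thm:main} only needs strict increasing differences (5), and (5) follows from strict monotonicity of the log-ratio combined with the level ordering. Concretely, take $g = \log v(x_{s'}, t) - \log v(x_s, t)$ and $h = \log v(x_s, t)$. Then
\[
v(x_{s'},t) - v(x_s,t) = e^{h}\,(e^{g} - 1),
\]
where $h$ is increasing because $v$ is increasing in $t$, and $g$ is strictly increasing and positive. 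Hence this difference is strictly increasing in $t$.

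Next I check the \emph{domination} condition (i). Take $x = (a, b)$ with $a < 1$, and set $x' = (1, b) \in \X^\star$. Monotonicity gives $v(a,b,t) \le v(1,b,t)$, so $x \preceq_{\text{surplus}} x'$. Log-submodularity in $(x_1, t)$ means $\log v(1,b,t) - \log v(a,b,t)$ is nonincreasing in $t$, which is the same as
\[
\tfrac{\d}{\d t}\log v(a,b,t) \ge \tfrac{\d}{\d t}\log v(1,b,t).
\]
By \eqref{eq:log-d}, this says $x \preceq_{\text{elasticity}} x'$. Positivity of $v$ on non-outside allocations, needed for the elasticities to be defined, is assumed in the model.

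With (i) and (ii) both established, \Cref{thm:main} yields that $\X^\star$ is an optimal menu. The only real obstacle is the strictness issue in (ii), which I handle as described above.
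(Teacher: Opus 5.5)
Your proposal is correct and follows the paper's proof essentially verbatim. It establishes domination via $(a,b)\mapsto(1,b)$ using monotonicity and log-submodularity in $(x_1,t)$, establishes the anti-chain condition via strict monotonicity and strict log-supermodularity in $(x_2,t)$, and then invokes \Cref{thm:main}. The strictness issue you flag is real: a strictly increasing log-ratio need not have a strictly positive derivative at every $t$, and the paper's proof passes over this silently. Your fallback is sound, but state it as ``weak elasticity comparisons plus strict increasing differences on the frontier suffice,'' not ``only strict increasing differences'': in the proof of \Cref{lem:reconstruct} the ratio $g$ need only be nondecreasing, which weak elasticity comparisons deliver, while the downward-sufficiency and purification steps use only strict increasing differences.
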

\begin{proof}
Since $v(x_1, x_2, t)$ is nondecreasing in $x_1$ and log-submodular in $(x_1, t)$, note that $(x_1, x_2) \preceq_{\text{surplus}} (1, x_2)$ and $(x_1, x_2) \preceq_{\text{elasticity}} (1, x_2)$. Thus, $\X^\star := \{(1, x_2)\}_{x_2 \in [0, 1]}$ satisfies the ``domination'' condition. Since $v(x_1, x_2, t)$ is strictly increasing in $x_2$ and strictly log-supermodular in $(x_2, t)$, note that for any $x_2 < x'_2$, we have $(1, x_2) \prec_{\text{surplus}} (1, x'_2)$ and $(1, x_2) \succ_{\text{elasticity}} (1, x'_2)$. Thus, $\X^\star$ also satisfies the ``anti-chain'' condition. So $\X^\star$ is a surplus-elasticity frontier, and thus optimal by \Cref{thm:main}. 
\end{proof}

Recall that in \Cref{ex:two-dimension}, $v(x_1, x_2, t)$ takes the form $(x_1 + t)^{x_2}$, which is log-submodular in $(x_1, t)$ and log-supermodular in $(x_2, t)$---these notions exactly generalize the idea that an increase in feature $x_1$ (e.g., basic functionality) \textit{\textbf{flattens}} the demand curve, while an increase in feature $x_2$ (e.g., product design) \textit{\textbf{steepens}} the demand curve, leading to no distortion of dimension $x_1$ in the surplus-elasticity frontier.\footnote{When $v(x_1, x_2, t)$ does not depend on $x_2$, \Cref{prop:distortion} itself nests one-dimensional screening results on the profitability of price discrimination (\citealt{johnson2003multiproduct,anderson2009price}).}

\subsection{Stochastic Optimality and Minimal Optimality}\label{subsec:stochastic}

To state our second result, for any $x \prec_{\text{surplus}} x'$, write the \textit{\textbf{incremental demand curve}} as 
\[P(x', q\mid x) := P(x', q) - P(x, q)\,,\]
and the \textit{\textbf{incremental elasticity curve}} as 
\[\eta(x', q \mid x) := \Big|\frac{\d \log P(x', q\mid x)}{\d \log q}\Big|^{-1}\,.\]
A menu $\Y = \{x_s\}_{s \in S}$, with its allocations ordered according to the surplus order, has the \textit{\textbf{ordered incremental elasticity property}} if the incremental demand curves are downward sloping and they are increasingly inelastic in that for all $s_1 < s_2 < s_3$, 
\[\eta(x_{s_2}, q \mid x_{s_1}) > \eta(x_{s_3}, q \mid x_{s_2})\, \text{ for all $q \in (0, 1)$}\,. \tag{7} \label{eq:incremental-ela}\]
 In the case of finite $S$, it is enough to check the above for \textit{\textbf{adjacent increments}}. Moreover, because the elasticity comparison is pointwise, by the same reasoning as in \eqref{eq:log-d}, this property is also \textit{agnostic} to the type distribution since it is equivalent to the pointwise comparison of the log-derivative of the incremental utility with respect to the type.\footnote{That is, $\frac{\d}{\d t} \log(v(x_i, t) - v(x_{i-1}, t)) < \frac{\d}{\d t} \log(v(x_{i+1}, t) - v(x_{i}, t))$ for all $t$ and all $i$.}

We say that a surplus-elasticity frontier is a \textit{\textbf{strong frontier}} if it has the ordered incremental elasticity property. This property strengthens the frontier's ``anti-chain'' condition: not only the original demand curves but also the incremental demand curves become increasingly inelastic along the surplus order. This simple strengthening turns out to be enough for optimality among all stochastic mechanisms:

\begin{theorem}\label{thm:main2}
Any strong frontier is optimal among stochastic mechanisms. 
\end{theorem}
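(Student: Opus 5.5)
The plan is to reduce \Cref{thm:main2} to \Cref{thm:main} applied to the enlarged allocation space $\Delta(\X)$. Concretely, let $\A^\star$ be the set of lotteries over the strong frontier $\X^\star=\{x_s\}_{s\in S}$ whose induced values are $\bar v(\ell,t)=\int v(x,t)\,\d\ell(x)$. The goal is to show that $\X^\star$, viewed inside $\Delta(\X)$, is still optimal. Since the paper's proof of \Cref{thm:main} proceeds through the generalized frontier (\Cref{thm:main3}), I would verify the two generalized-frontier conditions for $\X^\star$ in the space $\Delta(\X)$, using $\A^\star$ as the accompanying stochastically ordered family. Increasing differences on $\X^\star$ is inherited from the anti-chain property as in (5), so the only work is the covering condition for an arbitrary lottery $\mu\in\Delta(\X)$.

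\textbf{Step 1: replace $\mu$ by a lottery over the frontier.} By the domination condition, every $x$ in the support of $\mu$ has some $x'\in\X^\star$ with $x\preceq_{\text{surplus}}x'$ and $x\preceq_{\text{elasticity}}x'$. The proof of \Cref{thm:main} already shows that such a deterministic $x$ is covered by frontier demand curves: by (6), $v(x,t)/v(x',t)$ is nondecreasing and at most one, so $x$ behaves like a ``type-increasing scaling'' of $x'$. The issue is that covering is not obviously preserved under mixing. So instead I would work at the level of incremental utilities. Write any lottery $\ell$ over the ordered frontier through its upper-cumulative weights $G_\ell(s)=\ell(\{x_{s'}:s'\ge s\})$. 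Then $\bar v(\ell,t)=\int G_\ell(s)\,\d_s v(x_s,t)$, a mixture of incremental demand curves.

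\textbf{Step 2: covering via incremental elasticities.} For fixed $\mu$, define a type-dependent frontier lottery $\ell_t$ that matches $\mu$'s value and marginal value at $t$. Here (7) is used: the incremental curves $v(x_{s'},t)-v(x_s,t)$ are increasingly log-steep along $s$, so they form a single-crossing family in the Spence--Mirrlees sense. A lottery over the frontier with larger $G$ (in the stochastic order) then single-crosses a lower one from below. I would show that the map $t\mapsto\ell_t$ can be chosen monotone in the stochastic order. Each $\bar v(\ell_t,\cdot)$ touches $\bar v(\mu,\cdot)$ at $t$ and single-crosses it from below, which is exactly the covering requirement. The touching uses the domination pointwise, and the single crossing uses the log-derivative comparison of increments together with $\mu$'s support elements being less elastic than frontier points.

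\textbf{Step 3: conclude.} Having verified that $\X^\star$ is a generalized frontier in $\Delta(\X)$, I would invoke \Cref{thm:main3}. It yields an optimal mechanism supported on $\A^\star$. Because (7) makes the pricing problem on $\A^\star$ a one-dimensional problem with ordered incremental demands, the optimum over $\A^\star$ can be taken deterministic (pricing upgrades à la the demand-profile method). Intuitively, a lottery between $x_{s_1}$ and $x_{s_3}$ is dominated by $x_{s_2}$ once incremental elasticity is ordered, since the concavity of the incremental profile in the log scale rules out gains from interior randomization. This purification step and the monotone selection in Step 2 are the main obstacles. I would first attempt the purification by showing that the virtual surplus over lotteries on $\X^\star$ is linear in $G_\ell$, with coefficients whose sign changes at most once in $s$ under (7). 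Pointwise maximization then selects an extreme (deterministic) $G$, handling general redistributive $\lambda$ through the downward-sufficiency argument already used for \Cref{thm:main}.
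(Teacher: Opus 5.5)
Your reduction (show that $\X^\star$ is a generalized frontier of the enlarged space $\Delta(\X)$, then invoke \Cref{thm:main3}) can work. But the covering step, which carries all the content of \Cref{thm:main2}, is not supplied, and the construction you sketch for it does not work.

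\begin{itemize}
\item The definition needs one compact family, totally ordered by $\preceq_{\text{st}}$, that covers every $\mu\in\Delta(\X)$ at once. $\Delta(\X^\star)$ is not such a family: for $s_1<s_2<s_3$, the lotteries $\tfrac12\delta_{x_{s_1}}+\tfrac12\delta_{x_{s_3}}$ and $\delta_{x_{s_2}}$ are incomparable.
\item A $\mu$-specific monotone path $t\mapsto\ell_t$ does not help either. In a mechanism the lottery $\mu(t)$ itself varies with $t$, so the reconstructed lotteries $\ell^{\mu(t)}_t$ need not be ordered.
\item Choosing $\ell_t$ to match both the value and the marginal value of $\mu$ at $t$ is the wrong target. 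If $h:=\bar v(\ell_t,\cdot)-\bar v(\mu,\cdot)$ has $h(t)=h'(t)=0\neq h''(t)$, then $h$ has a strict local extremum at $t$ and does not change sign there. Covering requires a crossing with $\bar v_t(\ell_t,t)\le\bar v_t(\mu,t)$.
\item Mixing is not the real obstacle. If you replace each $x\in\operatorname{supp}\mu$ by its two-point reconstruction lottery at the same type $t$ (as in \Cref{lem:reconstruct}), the mixture still single-crosses $\bar v(\mu,\cdot)$ at $t$. The obstacle is that this mixture is an arbitrary element of $\Delta(\X^\star)$.
\item The single crossing among stochastically ordered frontier lotteries, which you attribute to \eqref{eq:incremental-ela}, needs only increasing differences.
\end{itemize}

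The missing idea is where \eqref{eq:incremental-ela} actually enters. It says that for $t<t'$ the ratios $\frac{v(x_{i+1},t')-v(x_i,t')}{v(x_{i+1},t)-v(x_i,t)}$ increase in $i$. Equivalently, on $\X^\star$ one has $v(\cdot,t')=g(v(\cdot,t))$ for a convex increasing $g$. Now replace any $\tilde a\in\Delta(\X^\star)$ by the lottery on two adjacent frontier elements that gives type $t$ the same value. This is a mean-preserving contraction in type $t$'s utility, so it weakly lowers every higher type's value. Applying the same fact to a lower type $t''<t$ (whose utility is a concave transform of type $t$'s), it weakly raises every lower type's value. This gives covering of all of $\Delta(\X)$ by the single family of adjacent-pair lotteries, which is totally ordered by $\preceq_{\text{st}}$. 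With that fix, your route coincides with the paper's proof: reconstruction realization by realization, contraction to adjacent pairs, then downward sufficiency and purification.

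The same point shows your Step 3 is misdirected.
\begin{itemize}
\item \Cref{thm:main3} already returns a deterministic mechanism with range in $\X^\star$, not one ``supported on $\A^\star$''.
\item The purification lemma uses only stochastic ordering and increasing differences, not \eqref{eq:incremental-ela}.
\item Your substitute, pointwise maximization of a virtual surplus linear in $G_\ell$, gives a deterministic maximizer type by type. It says nothing when that maximizer is not monotone in $t$ (irregular $F$, general redistributive $\lambda$). That ironing case is exactly where lotteries can help in one-dimensional problems (\citealt{Strausz2006}), so it is the case the theorem must handle.
\end{itemize}
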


The proof of \Cref{thm:main2} is in the appendix. We sketch the proof in \Cref{sec:proof}. 

\paragraph{Stochastic versus Deterministic Mechanisms.}\hspace{-2mm}It is well known that even with one-dimensional screening problems, stochastic mechanisms can outperform deterministic mechanisms (\citealt{Strausz2006}). An immediate consequence of \Cref{thm:main2} is a new condition for when deterministic mechanisms are optimal in one-dimensional screening problems: 

\begin{prop}\label{prop:stochastic}
Suppose $\mathcal{X} = [0, 1]$ and $v(x, t)$ is continuously differentiable and strictly increasing in $x$ with $v(0, t) = 0$. Then, if $v_x(x, t)$ is \emph{strictly log-supermodular} or \emph{weakly log-submodular}, then a deterministic mechanism is optimal. 
\end{prop}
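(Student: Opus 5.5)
The plan is to derive both cases from \Cref{thm:main2}, which says any strong frontier is optimal among stochastic mechanisms. In each case I will exhibit a strong frontier $\X^\star \subseteq [0,1]$ of deterministic allocations. Every comparison will be checked through the type-space characterizations (1), \eqref{eq:log-d} and (6) and their incremental analogues. Write $g(y,t) := v_x(y,t) > 0$, so that $v(x,t) = \int_0^x g(y,t)\,\d y$ by $v(0,t)=0$.

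\emph{Key lemma (a TP2 integration argument).} Suppose $g$ is log-supermodular in $(y,t)$. Fix disjoint consecutive intervals $I=[a,b]$ and $J=[b,c]$. Then the ratio $\int_I g(y,t)\,\d y \big/ \int_J g(y,t)\,\d y$ is nonincreasing in $t$, and strictly decreasing if $g$ is strictly log-supermodular. Under log-submodularity the inequality reverses.

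The proof of the lemma goes as follows. For $t<t'$, log-supermodularity gives $g(y,t)g(z,t') \ge g(y,t')g(z,t)$ whenever $y \le z$. Integrating over $y \in I$ and $z \in J$ yields the claim. Strictness follows because the strict inequality holds on a set of positive measure and $g$ is continuous and positive. Equivalently, the normalized densities $g(\cdot,t)$ rise in the likelihood-ratio order as $t$ increases, so they shift mass toward higher $y$. A standard consequence, using intervals $[0,x]$ and $[x,x']$, is that $v(x,t)/v(x',t)$ is strictly decreasing in $t$ for $x<x'$. By (6), this means $x \succ_{\text{elasticity}} x'$.

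\emph{Case 1: $v_x$ strictly log-supermodular.} Take $\X^\star=[0,1]$, indexed by $s=x$.
\begin{itemize}
\item The domination condition is vacuous.
\item Strict monotonicity of $v$ in $x$ gives $x\prec_{\text{surplus}}x'$, and the lemma gives $x\succ_{\text{elasticity}}x'$. So the anti-chain condition holds.
\item For ordered incremental elasticity with $x_1<x_2<x_3$, apply the lemma to $[x_1,x_2]$ and $[x_2,x_3]$. It shows $\int_{x_1}^{x_2}g \big/ \int_{x_2}^{x_3}g$ is strictly decreasing in $t$. This is exactly the strict comparison of log-derivatives in the footnote after \eqref{eq:incremental-ela}.
\item The incremental demand curves must also be downward sloping, i.e., $v(x_2,t)-v(x_1,t)$ must be strictly increasing in $t$. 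Apply the lemma to $[0,x_1]$ and $[x_1,x_2]$: the share $\int_{x_1}^{x_2}g \big/ v(x_2,t)$ is increasing in $t$. Since $v(x_2,t)$ is strictly increasing in $t$ by the model assumptions, their product, which is the increment, is strictly increasing.
\end{itemize}
Hence $[0,1]$ is a strong frontier. By \Cref{thm:main2} it is optimal among stochastic mechanisms, so a deterministic mechanism is optimal.

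\emph{Case 2: $v_x$ weakly log-submodular.} Take $\X^\star=\{1\}$ together with the outside option.
\begin{itemize}
\item Domination: for any $x<1$, monotonicity gives $x\preceq_{\text{surplus}}1$. The reversed lemma gives that $v(x,t)/v(1,t)$ is nondecreasing in $t$, so $x\preceq_{\text{elasticity}}1$ by (6).
\item Anti-chain: this is trivial. The only comparison is between $\varnothing$ and $1$, and it uses the convention that $\varnothing$ has infinite elasticity.
\item Incremental elasticity: the only increment is $v(1,t)-0$, which is downward sloping since $v$ is strictly increasing in $t$. There are no triples to compare.
\end{itemize}
\Cref{thm:main2} then yields that a posted price for $x=1$ is optimal among stochastic mechanisms.

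\emph{Main obstacle.} I expect the main difficulty to be establishing the lemma's strict inequalities uniformly for all $q\in(0,1)$, rather than only almost everywhere, and checking that the downward-sloping requirement on increments is genuinely implied. For the latter I would rely on the share argument above. Compactness of $\X^\star$ and the treatment of the outside option in the indexing should be routine.
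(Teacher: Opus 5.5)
Your proposal is correct and follows the paper's proof. In both cases the paper applies \Cref{thm:main2} with exactly your frontiers: $\X^\star=[0,1]$ under strict log-supermodularity, and $\X^\star=\{1\}$ under weak log-submodularity. Your consecutive-interval lemma and your check that the increments are downward sloping fill in steps the paper asserts in one line. The pointwise-strictness caveat you flag applies equally to the paper, which also passes from strictly monotone ratios to the strict elasticity conditions. It is harmless here: with the domination condition vacuous, the proof of \Cref{thm:main2} only uses the ratio-monotonicity (convexity) and strict increasing-differences properties, and you establish both.
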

\begin{proof}
Suppose $\log(v_x(x,t))$ is strictly supermodular. Then, 
$\frac{v_x(x_2, t)}{v_x(x_1, t)}$
is strictly increasing in $t$ for all $x_2 > x_1$, which implies that the demand curves generated by the marginal values $v_x(x, t)$ have ordered elasticities, with higher $x$ generating a more inelastic demand curve. This also implies that $\frac{v(x_2, t)}{v(x_1, t)}= \frac{\int_{0}^{x_2}v_x(s, t) \d s}{\int_{0}^{x_1}v_x(s, t) \d s}$ is strictly increasing in $t$ for all $x_2 > x_1$. Therefore, note that $\X$ itself is a strong frontier and hence optimal in $\Delta(\X)$ by \Cref{thm:main2}. 

Now, suppose $\log(v_x(x,t))$ is weakly submodular. Then, $\frac{v_x(x_2, t)}{v_x(x_1, t)}$ is weakly decreasing in $t$ for all $x_2 > x_1$, which implies that $\frac{v(x_2, t)}{v(x_1, t)}= \frac{\int_{0}^{x_2}v_x(s, t) \d s}{\int_{0}^{x_1}v_x(s, t) \d s}$ is weakly decreasing in $t$ for all $x_2 > x_1$. Therefore, note that $\X^\star = \{1\}$ is a surplus-elasticity frontier, and automatically a strong frontier since the ordered incremental elasticity property has no bite. Thus, offering the single option $x = 1$ with a suitable price is optimal among all stochastic mechanisms by \Cref{thm:main2}.
\end{proof}

In contrast to the known result on the optimality of deterministic mechanisms by \citet{Strausz2006}, this result does not depend on the type distribution or the redistributive welfare weights, which can certainly violate the Myersonian regularity conditions and lead to bunching.

\paragraph{Demand Profile and Minimal Optimality.}\hspace{-2mm}The pricing problem on a surplus-elasticity frontier can always be solved by one-dimensional methods. However, it turns out that if the frontier is a strong frontier and the objective is profit maximization, then the pricing problem can always be solved via a very simple method, the \textit{\textbf{demand profile method}}, which simply prices each upgrade against the incremental demand curve. To see this connection, suppose that the strong frontier is \textit{\textbf{differentiable}} in the sense that $P(x_s, q)$ is continuously differentiable in $s$. Without loss of generality, normalize the index $s \in [0, 1]$. Let $\Delta P(s, q) := \partial_s P(x_s, q)$ denote the marginal demand curve for the upgrade at $s$. Since the agent's preferences have strict increasing differences on the frontier, by the demand profile method (\citealt{wilson1993nonlinear}), an upper bound on the revenue of any mechanism with allocations restricted to $\X^\star$ can be obtained by consecutive \textit{\textbf{upgrade pricing}}, $\int \max_{q \in [0, 1]}(\Delta P(s, q) \cdot q) \d s$, where for each $s$, the inner maximization problem solves a separate upgrade pricing problem on the marginal upgrade. The above pointwise objective after taking log is $\log \Delta P(s, q) + \log q$, which is submodular in $(s, q)$ given that $\X^\star$ has the ordered incremental elasticity property. Thus, there exists a selection of pointwise optimizers $s \mapsto q^\star(s)$ that is nonincreasing and hence can be implemented via the profile of upgrade prices, i.e., the demand profile method is valid on the strong frontier. 

Now, say that the frontier is \textit{\textbf{interior}} if any monopoly quantity of the upgrade pricing problem is interior, $\argmax \{\Delta P(s, q) \cdot q\} \subset (0, 1)$, and say that it is \textit{\textbf{minimal optimal}} if for any closed subset $\X' \subset \X^\star$ that is also optimal, $\X' = \{x_{s}\}_{s \in S'}$ where $S'$ has Lebesgue measure $1$. Combining the above argument with \Cref{thm:main2}, we immediately obtain: 

\begin{cor}[Demand Profile and Minimal Optimality]\label{cor:profile}
Suppose the objective is profit maximization (i.e., $\lambda \equiv 0$). For any differentiable strong frontier $\X^\star$, the demand profile method applied to $\X^\star$ yields a mechanism that is optimal among all stochastic mechanisms, i.e., with allocation space being $\Delta(\X)$. Moreover, if such a frontier is interior, then it is minimal optimal.
\end{cor}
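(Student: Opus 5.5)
The argument has two parts: optimality of the demand profile mechanism among stochastic mechanisms, and minimality of the frontier when it is interior.

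\emph{Step 1 (reduction to the frontier).} By \Cref{thm:main2}, $\X^\star$ is optimal among stochastic mechanisms. Hence the optimal value over $\Delta(\X)$ equals the optimal value over mechanisms with allocations in $\X^\star$, and it suffices to solve the pricing problem on $\X^\star$.

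\emph{Step 2 (upper bound on $\X^\star$).} Index the frontier by $s\in[0,1]$; we may take $x_0$ to be the lowest element, with $\emptyset$ below it. Take any IC and IR mechanism with $x(t)=x_{\sigma(t)}$. Strict increasing differences (5) along the frontier force $\sigma$ to be nondecreasing. Write $v(x_{\sigma(t)},t)=\int_0^{\sigma(t)}\partial_s v(x_s,t)\,\d s$ plus the base term. Using the envelope formula and the monotonicity of $\sigma$, revenue decomposes as
\[
\int_0^1 R_s \,\d s ,
\]
where $R_s$ is the revenue from selling the marginal upgrade $s$ to the types $\{t:\sigma(t)\ge s\}$. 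That set is an upper interval of types with mass $q_s$, so its revenue is at most $\Delta P(s,q_s)\,q_s$. (This is the standard layer-cake decomposition of Wilson.) The revenue is therefore bounded by $\int_0^1\max_q \Delta P(s,q)q\,\d s$.

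\emph{Step 3 (attainment).} The ordered incremental elasticity property, in its differentiable form, says that $\log\Delta P(s,q)$ has decreasing differences in $(s,q)$, so $\log\Delta P(s,q)+\log q$ is submodular. By Topkis, there is a nonincreasing measurable selection $q^\star(s)$. Define $\sigma(t)=\sup\{s: 1-F(t)\le q^\star(s)\}$, which is nondecreasing in $t$. Set the upgrade prices $\Delta P(s,q^\star(s))$ and let the price of $x_s$ be their integral up to $s$. I will then check directly that each type $t$ prefers $x_{\sigma(t)}$: moving up past $s$ is worthwhile exactly when $\partial_s v(x_s,t)\ge\Delta P(s,q^\star(s))$, which holds if and only if $q^\star(s)\ge 1-F(t)$, because $\Delta P(s,\cdot)$ is downward sloping. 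Hence the mechanism attains the bound and is optimal among stochastic mechanisms by Step 1.

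\emph{Step 4 (minimality).} This is the main obstacle. Suppose a closed $\X'=\{x_s\}_{s\in S'}$ is optimal and $S'$ misses a set of positive measure. Since $S'$ is closed, it misses an open interval $(a,b)$. Any mechanism supported on $\X'$ satisfies the bound of Step 2 with the constraint that $q_s$ is constant on $(a,b)$: no type can stop inside the interval, so the set of types buying upgrade $s$ is the same for all $s\in(a,b)$.

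I would then show that the constrained maximum is strictly below the unconstrained one. Interiority, together with the strict ordered elasticity, makes the pointwise maximizers strictly decreasing in $s$ on $(a,b)$. To see this, note that the first-order condition $\eta(\cdot)=1$ at an interior optimum, combined with the strict elasticity ordering, moves the maximizer strictly. Consequently no constant $q$ maximizes $\Delta P(s,q)q$ for a positive-measure set of $s$ near both ends, and the loss is strictly positive. Continuity of $\Delta P$ (from the differentiability assumption) together with compactness turns this pointwise strict loss into a strictly positive integrated loss. That contradicts the optimality of $\X'$, since Step 3 shows the full value is attainable. The delicate point is strictness: weak submodularity only gives weak monotonicity of $q^\star$, so the argument relies on the strict inequality in (7) and on interiority.
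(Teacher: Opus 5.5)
Your Steps 1--3 are essentially the paper's argument: reduce to $\X^\star$ via \Cref{thm:main2}, bound revenue by Wilson's upgrade-pricing integral, and attain the bound with a nonincreasing Topkis selection $q^\star(s)$ implemented through upgrade prices.

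\textbf{The gap is in Step 4, at exactly the point you flag as delicate.} You claim that interiority plus ``the strict elasticity ordering'' makes the pointwise maximizers of $\Delta P(s,q)\,q$ strictly decreasing, via the first-order condition $\eta=1$. But hypothesis \eqref{eq:incremental-ela} strictly orders the elasticities of \emph{finite} increments $P(x_{s'},q\mid x_s)$, not of the marginal curves $\Delta P(s,\cdot)$. Letting $s'\downarrow s$ yields only a weak ordering of the marginal elasticities, which is all you used in Step 3. A weak ordering, as far as you have derived it, does not rule out the same $\bar q$ satisfying the first-order condition for a whole interval of $s$. The first-order condition for $\Delta P(s,\cdot)\,q$ also needs $\Delta P(s,\cdot)$ to be differentiable in $q$. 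The differentiability assumption ($P(x_s,q)$ being $C^1$ in $s$) does not provide this. So the strictness that drives your contradiction is asserted, not proved. It can be recovered under extra smoothness: the marginal log-slopes $\partial_q\log\Delta P(s,q)$ are weakly monotone in $s$, and strict \eqref{eq:incremental-ela}, applied to weighted averages of them, prevents them from being constant on any interval, so they are strictly monotone. But you would have to supply both that argument and the extra regularity.

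\textbf{How the paper closes it.} The paper avoids the limit by aggregating before differentiating. The Step 3 bound is attained and $q_s\equiv\bar q$ on the gap $(a,b)$. Optimality of $\X'$ therefore forces $\bar q\in\argmax_q \Delta P(s,q)\,q$ for almost every $s\in(a,b)$. Then for any $s_1<s_2<s_3$ in $(a,b)$,
\[
\max_q P(x_{s_2},q\mid x_{s_1})\,q=\max_q\int_{s_1}^{s_2}\Delta P(s,q)\,q\,\d s\le\int_{s_1}^{s_2}\max_q \Delta P(s,q)\,q\,\d s=P(x_{s_2},\bar q\mid x_{s_1})\,\bar q\,.
\]
So $\bar q$ maximizes revenue from the finite increment $P(x_{s_2},\cdot\mid x_{s_1})$, and likewise from $P(x_{s_3},\cdot\mid x_{s_2})$. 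These finite increments are differentiable in $q$, because each $P(x,\cdot)$ is, and $\bar q\in(0,1)$ by interiority. Both therefore have unit elasticity at $\bar q$, i.e., $\eta(x_{s_2},\bar q\mid x_{s_1})=\eta(x_{s_3},\bar q\mid x_{s_2})=1$. This contradicts the strict inequality in \eqref{eq:incremental-ela} directly, with no limiting argument. Replacing your strict-monotonicity claim with this step completes the proof, and your closing compactness/continuity remark becomes unnecessary.
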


\subsection{Generalized Frontier and Robust Optimality}\label{subsec:generalized}

Our third main result introduces a generalization of the frontier concept. This notion of a generalized frontier is not only sufficient but, under suitable conditions, also necessary for robust optimality. Intuitively, the generalization allows \textit{(i)} a collection of randomized demand curves to ``dominate'' a single demand curve and \textit{(ii)} the ``flatness'' of the demand curves to be measured via single crossings instead of pointwise elasticities. 

To state this generalization, we use the notation $P(a, q)$ to write the pointwise mixture of the demand curves induced by a lottery $a \in \Delta(\X)$. For a partially ordered set $(\Y, \preceq)$, recall that its induced \textit{\textbf{stochastic dominance order}} $\preceq_{\text{st}}$ is defined on $\Delta(\Y)$ by $a \preceq_{\text{st}} a'$ if $\E_{a}[h(x)]\leq \E_{a'}[h(x)]$ for all monotone functions $h$ on $\Y$. 

We say that $x$ is \textit{\textbf{covered}} by $\mathcal{A} \subset \Delta(\mathcal{X})$ if for any $q\in [0, 1]$, there exists $a \in \mathcal{A}$ such that $P(a, \,\cdot\,)$ weakly single-crosses $P(x, \,\cdot\,)$ from below at $q$ (recall \Cref{fig:covering}).\footnote{A function $g$ \textit{\textbf{weakly single-crosses}} $h$ \textit{\textbf{from below at}} $k$ if $g(s) \leq h(s)$ for all $s \leq k$ and $g(s) \geq h(s)$ for all $s \geq k$.} 
We say a compact subset $\mathcal{X}^\star \subset \mathcal{X}$ is a \textit{\textbf{generalized frontier}} if $\X^\star$ can be totally ordered by some $\preceq$ and there exists some compact $\mathcal{A} \subset \Delta(\X^\star)$ totally ordered by $\preceq_{\text{st}}$ such that 
\begin{itemize}
    \item[\textit{(i)}]  Every $x \in \X$ is covered by $\mathcal{A}$\,;
    \item[\textit{(ii)}] $P(x', q) - P(x, q)$ is strictly decreasing in $q$ for all $x \prec x' \in \X^\star$. 
\end{itemize}
A special case of a generalized frontier is when every element $x \in \X$ is covered by $\X^\star$ for which the agent's preferences satisfy increasing differences along some ordering (in which case we can simply take $\mathcal{A} \subset \Delta(\X^\star)$ to be the set of Dirac measures).

Note that, as with the surplus-elasticity frontier, the definition of a generalized frontier is agnostic to the type distribution and redistributive welfare weights. As we discuss in \Cref{sec:proof}, the proof of \Cref{thm:main} actually shows that \textit{(i)} a surplus-elasticity frontier is a generalized frontier and \textit{(ii)} a generalized frontier is sufficient for optimality. Thus, we immediately obtain the following result from the proofs of \Cref{thm:main} and \Cref{thm:main2}:

\begin{theorem}\label{thm:main3}
Any generalized frontier is an optimal menu. Moreover, if the generalized frontier has surplus-ordered demand curves satisfying the ordered incremental elasticity property, then it is optimal among stochastic mechanisms. 
\end{theorem}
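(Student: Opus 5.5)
The plan is a non-Myersonian ``downward sufficiency'' argument. We solve a relaxed problem that keeps only the downward IC constraints along the type order. We then show that its value is attained by a mechanism using only $\X^\star$, and that this mechanism satisfies all IC constraints.

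\textbf{Step 1 (relaxed problem).} Fix any IC--IR mechanism $(x,p)$ on $\X$. We bound its objective by that of some mechanism supported on $\X^\star$. Write $U(t)$ for the agent's utility. The envelope formula gives $U(t)=U(\underline t)+\int_{\underline t}^t v_t(x(s),s)\,\d s$. Since $\lambda$ is nonincreasing with $\E[\lambda]\le 1$, integrating by parts writes the objective as $\E[v(x(t),t)]$ minus a rent term of the form $\int U'(t)\,(1-F(t))\,\Lambda(t)\,\d t$. Here $\Lambda(t)=1-\E[\lambda\mid \cdot\ge t]\cdot$(normalization), which is nonnegative; this is where redistributive weights enter. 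Rather than maximize this pointwise, we use the quantile representation. Type $t$ corresponds to $q=1-F(t)$, and the mechanism is described by the path $q\mapsto x(q)$. Total surplus and rents are then functionals of the demand curves $P(x(q),\cdot)$.

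\textbf{Step 2 (replacing allocations via covering).} For each quantile $q$, condition (i) of the generalized frontier gives a lottery $a(q)\in\A$ whose demand curve weakly single-crosses $P(x(q),\cdot)$ from below at $q$. Single crossing from below at $q$ means $a(q)$ gives the same value to the marginal type $q$ and gives weakly lower values to types with smaller quantiles, that is, higher types. We may rescale or reprice so that equality holds at $q$. Replacing $x(q)$ by $a(q)$ therefore keeps type $q$'s surplus while lowering the value of deviating to this option for all higher types. This weakly relaxes the downward IC constraints and hence weakly lowers rents. Because $\A$ is totally ordered by $\preceq_{\text{st}}$, we then choose the selection $q\mapsto a(q)$ to be monotone (in the stochastic order) in $q$, using compactness of $\A$ and a measurable selection argument. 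The main obstacle is turning this pointwise replacement into a global bound when the original $x(\cdot)$ is not monotone. As Example~\ref{ex:counterexample} shows, the pointwise choice must respect the order, or the rent comparison fails. I would first handle finite step mechanisms, ironing by taking upper envelopes of the chosen lotteries' demand curves, and then pass to the limit.

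\textbf{Step 3 (sufficiency of downward IC on the frontier).} Condition (ii) says the demand curves have strict increasing differences along $\preceq$. Combined with the stochastic order on $\A$, this makes $v(a,t)-v(a',t)$ monotone in $t$ for $a\preceq_{\text{st}} a'$. So the Spence--Mirrlees condition holds on $\A$, and a monotone allocation with envelope payments satisfies all IC constraints. The objective of the constructed mechanism is thus at least that of $(x,p)$, because surplus is preserved pointwise and rents weakly fall, with $\Lambda\ge 0$ giving the sign. We then optimize over monotone maps into $\A$. Finally we show that randomization inside $\A$ can be removed: for a fixed monotone lottery path, the objective is linear in the lottery, and with increasing differences on $\X^\star$ one can decompose monotone lotteries into a monotone deterministic path (quantile coupling) with equal objective. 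This proves that $\X^\star$ is an optimal menu. That a surplus-elasticity frontier is a generalized frontier (taking $\A$ as Diracs, and using (5) together with domination implying single crossing) is already noted in the text.

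\textbf{Step 4 (stochastic part).} Extend the allocation space to $\Delta(\X)$. Every lottery over $\X$ is covered, by linearity of $P$ in the lottery, by the mixture of the covering lotteries, using convexity of single-crossing at a fixed point. Steps 1--3 then reduce the problem to monotone stochastic paths into $\Delta(\X^\star)$. Under the ordered incremental elasticity property, write the objective in terms of the increments $P(x_{s'},q\mid x_s)$ as in the demand-profile discussion before Corollary~\ref{cor:profile}. The submodularity of $\log$ incremental demand plus $\log q$ then implies that the optimal increment thresholds are nonincreasing, so they can be implemented deterministically. This shows that stochastic mechanisms do no better, with the same layer-by-layer argument extended to redistributive weights through the nonnegative rent factor $\Lambda$.
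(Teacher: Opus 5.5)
\textbf{Gap in Steps 2--3.} You claim the covering selection $q\mapsto a(q)$ can be chosen monotone in $\preceq_{\text{st}}$. This is false in general, and it is the crux of the proof. The covering lottery at $t$ is pinned down by $x(t)$, often uniquely, and IC of $(x,p)$ on an unstructured $\X$ does not make it monotone.

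Here is a two-type illustration, as in \Cref{ex:counterexample}. Take $t_L<t_H$ and frontier $\{\varnothing,y_1,y_2\}$ with values $(1,2)$ and $(2,6)$ for $(t_L,t_H)$. Add an allocation $z$ with values $(0.5,4.6)$; it is covered by $\tfrac12\delta_{y_1}+\tfrac12\delta_{\varnothing}$ at $t_L$ and by $0.65\,\delta_{y_2}+0.35\,\delta_{y_1}$ at $t_H$. Giving $(y_2,2)$ to $t_L$ and $(z,0.55)$ to $t_H$ is IC and IR. Yet any covering reconstruction must give $\delta_{y_2}$ to $t_L$ and a lottery strictly below $\delta_{y_2}$ to $t_H$. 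With the two-adjacent family, $t_L$ then gains $1.1>0$ by mimicking $t_H$.

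So after Step 2 you only have a downward-IC mechanism with a non-monotone rule, and envelope payments are unavailable. Your rent comparison is correct \emph{conditional} on monotonicity: it uses $v_t(a(t),t)\le v_t(x(t),t)$ and $\Lambda(t)=\int_t^{\overline t}(1-\lambda)\,\d F\ge 0$. But it never gets to apply.

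The missing ingredient is the paper's generalized downward sufficiency theorem (\Cref{thm:downward}), applied to the downward-IC program over a Spence--Mirrlees family:
\begin{itemize}
\item \Cref{lem:mon} obtains a monotone optimal solution by mirror-image swaps at the first decreasing adjacent pair. This is where strict increasing differences and $\lambda(t_i)\ge\lambda(t_{i+1})$ enter.
\item \Cref{lem:binding} makes every local downward IC constraint bind via a revenue-neutral perturbation, after which upward IC follows.
\item Approximation then extends the result to a continuum of types.
\end{itemize}
Your ``ironing by upper envelopes'' is not spelled out. It does not obviously give a weak improvement under redistributive weights.

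\textbf{Gap in Step 4.} Mixtures of covering lotteries leave $\mathcal{A}$, and $\Delta(\X^\star)$ is not totally ordered by $\preceq_{\text{st}}$. So Steps 1--3 cannot reduce the problem to monotone paths there.

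The paper uses ordered incremental elasticity exactly at this point. Write $d_i(t)=v(x_i,t)-v(x_{i-1},t)$. For $t<t'$ the ratios $d_i(t')/d_i(t)$ increase in $i$, so $v(\cdot,t')$ is an increasing convex transform of $v(\cdot,t)$ on $\X^\star$. Now replace $\tilde a(t)$ by the two-adjacent lottery giving type $t$ the same value. This is a mean-preserving contraction in type $t$'s utility, so it:
\begin{itemize}
\item weakly lowers every higher type's value;
\item preserves downward IC;
\item lands in a totally ordered family.
\end{itemize}
\Cref{thm:downward} and \Cref{lem:purification} then finish.

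Your demand-profile argument does not replace this step. The bound $\sum_i\max_q \Delta P\cdot q$ follows from the single-good argument only if each layer $Q_i(t)=\P_{a(t)}(x\succeq x_i)$ is monotone in $t$. That is the same unproven monotonicity. Given a $\preceq_{\text{st}}$-monotone path, \Cref{lem:purification} alone would conclude, leaving the incremental-elasticity condition idle.

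Moreover, with nonconstant $\lambda$ the value of a threshold $\tau$ on layer $i$ is $d_i(\tau)\Lambda(\tau)+\int_\tau^{\overline t}\lambda\, d_i\,\d F$. This is not of product form, so log-submodularity no longer delivers nested thresholds.
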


We next show that, under the same incremental-elasticity condition, the notion of a generalized frontier in fact \textit{\textbf{characterizes}} robust optimality, i.e., optimality under any type distribution and any redistributive welfare weights, and then describe two immediate applications of \Cref{thm:main3}.  

\paragraph{Characterization of Robust Optimality.}\hspace{-2mm}Intuitively, a generalized frontier uses single crossings instead of pointwise elasticities to measure the ``flatness'' of a demand curve. This generalized-frontier property turns out to be also necessary if we want a finite menu that satisfies the ordered incremental elasticity property (which, recall, is agnostic to the type distribution) to be profit-maximizing under every type distribution.  

\begin{theorem}[Characterization]\label{thm:converse}
Suppose $v(x, t) \geq 0$ for all $x$ and $t$, and $\mathcal{Y}$ is a finite surplus-ordered menu satisfying the ordered incremental elasticity property. Then, $\mathcal{Y}$ is optimal for profit maximization under every Borel type distribution if and only if $\mathcal{Y}$ is a generalized frontier. 
\end{theorem}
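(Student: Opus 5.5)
The plan is to prove the two directions separately. The ``if'' direction should follow from \Cref{thm:main3}, and the ``only if'' direction from a contrapositive construction of type distributions.

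\emph{Sufficiency.} Let $\mathcal{Y}=\{y_1,\dots,y_n\}$ be surplus-ordered with the ordered incremental elasticity property, and suppose it is a generalized frontier. Condition \textit{(ii)} of the definition is already implied by the requirement that incremental demand curves are downward sloping. \Cref{thm:main3} then gives optimality for every type distribution with a continuous positive density, even among stochastic mechanisms. To pass to an arbitrary Borel distribution $G$, I would approximate $G$ weakly by smooth full-support distributions $F_k$. Both sides of the comparison are continuous in $k$:
\begin{itemize}
\item the optimal revenue from any fixed mechanism over $\Delta(\mathcal{X})$ converges, because the values $v$ are continuous and bounded;
\item the optimal revenue restricted to $\mathcal{Y}$ converges, because $\mathcal{Y}$ is finite and its pricing problem is a finite-dimensional problem over prices with increasing differences.
\end{itemize}
A limit argument then shows that no mechanism beats $\mathcal{Y}$ under $G$. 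One could alternatively note that the proof of \Cref{thm:main3} (downward sufficiency plus covering) never uses the density. For profit maximization the revenue on $\mathcal{Y}$ equals the demand-profile value $\sum_i \max_q q\,P(y_i,q\mid y_{i-1})$, with nested optimal quantities by the submodularity argument preceding \Cref{cor:profile}. That formula makes the limit explicit.

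\emph{Necessity (contrapositive).} Fix the candidate chain $\mathcal{A}$ of ``interpolating'' lotteries along $\mathcal{Y}$:
\[
a_\theta=(1-\lambda)\,y_{i}+\lambda\, y_{i+1},\qquad \theta=i+\lambda,\quad y_0=\emptyset .
\]
This set is compact and totally ordered by $\preceq_{\text{st}}$. Moreover $P(a_\theta,q)$ is continuous and nondecreasing in $\theta$, and $P(a_\theta,\cdot)-P(a_{\theta'},\cdot)$ is decreasing in $q$ for $\theta>\theta'$. So $\mathcal{Y}$ is a generalized frontier as soon as every $x$ is covered by this $\mathcal{A}$, and the task is to show that an uncovered $x$ contradicts robust optimality.

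Suppose some $x\in\mathcal{X}$ is not covered at a quantile $q_0$. Let
\[
\theta^\ast=\min\{\theta:\ P(a_\theta,s)\ge P(x,s)\ \text{for all } s\ge q_0\}.
\]
This minimum exists by continuity and monotonicity in $\theta$, using the top element $a_{n}$ or the fact that $x$'s demand is bounded. Because $a_{\theta^\ast}$ fails to single-cross $P(x,\cdot)$ from below at $q_0$, there is some $s_1<q_0$ with $P(a_{\theta^\ast},s_1)>P(x,s_1)$. Minimality of $\theta^\ast$ and continuity then give some $s_2\ge q_0$ where $P(a_{\theta^\ast},s_2)=P(x,s_2)$, or where lowering $\theta$ breaks the inequality.

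I would then build a distribution supported on two or three types located at the quantiles $s_1<s_2$, with weights chosen so that both of the following hold:
\begin{itemize}
\item the optimal mechanism restricted to $\mathcal{Y}$, computed via the demand-profile formula, serves the high type a lottery equal or close to $a_{\theta^\ast}$ and the low type something weakly below;
\item replacing the high type's allocation by $x$ keeps the low type's IC slack, since $x$ is worth less than $a_{\theta^\ast}$ to the low type, while raising what can be charged at the top.
\end{itemize}
This strict improvement contradicts optimality of $\mathcal{Y}$ under that distribution.

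\emph{Main obstacle.} The hard step is the construction of this finitely supported distribution and the verification that the restricted optimum really sits at $a_{\theta^\ast}$. The available lotteries on $\mathcal{Y}$ are finite mixtures, and deterministic menus over $\mathcal{Y}$ may need randomization to reach $\theta^\ast$. I would first try a pure two-type distribution and use the ordered incremental elasticity property to pin down the optimal upgrade quantities exactly. If that fails, I would fall back on a duality formulation: write the revenue gain from adding $x$ as a linear functional of the type measure, and use a separating-hyperplane argument to show that ``$x$ covered'' is exactly the condition under which this functional is nonpositive for all measures.
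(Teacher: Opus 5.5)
Your sufficiency direction is fine if you take your second route: the paper argues exactly that the proof of \Cref{thm:main3} never uses a density, since the covering reconstruction, \Cref{thm:downward}, and \Cref{lem:purification} all hold for arbitrary Borel distributions. You should not rely on the weak-approximation route. Its first bullet is unjustified: payments of a fixed IC mechanism need not be continuous in $t$, so weak convergence of $F_k$ alone does not give convergence of expected revenue.

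The necessity direction has a genuine gap, because your deviation points the wrong way. Write $W(\theta,t):=v(a_\theta,t)$.
\begin{itemize}
\item Your $\theta^\ast$ construction does reach the right configuration: a lower type $t_L$ (at $s_2$) with $W(\theta^\ast,t_L)=v(x,t_L)$, and a higher type $t_H$ (at $s_1$) with $W(\theta^\ast,t_H)>v(x,t_H)$. This is the paper's pair with $r(t_L)>r(t_H)$.
\item Making $\theta^\ast$ well defined needs a preliminary step. Boundedness of $x$'s demand does not give that the top element $y_m$ of $\mathcal{Y}$ dominates $x$ pointwise. The paper gets $v(x,t)\le v(y_m,t)$ from optimality of $\mathcal{Y}$ under the point-mass prior at each $t$.
\item With two types, $t_H$ is optimally served $y_m$. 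Giving $t_H$ the allocation $x$, which $t_H$ values strictly less than $a_{\theta^\ast}$, can only lower what is charged at the top.
\item Your claim that $x$ is worth less than $a_{\theta^\ast}$ to the low type is also off: at $s_2$ they are equal.
\item What an uncovered $x$ actually offers is that it is worth the same as $a_{\theta^\ast}$ to $t_L$ but strictly less to $t_H$. So it must go to the \emph{low} type, where it shrinks the high type's information rent $v(x_L,t_H)-v(x_L,t_L)$.
\end{itemize}

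The missing idea, which also dissolves your ``main obstacle'', is a knife-edge prior. Let $d_k(t):=v(y_k,t)-v(y_{k-1},t)$ and take $i$ with $\theta^\ast\in(i-1,i]$. Put weight $\pi=d_i(t_L)/d_i(t_H)$ on $t_H$.
\begin{itemize}
\item The two-type revenue on $\mathcal{Y}$ is $\pi v(y_m,t_H)+\max_k[v(y_k,t_L)-\pi v(y_k,t_H)]$.
\item Ordered incremental elasticity makes $d_k(t_L)/d_k(t_H)$ strictly decreasing in $k$, so the maximizers are exactly $k\in\{i-1,i\}$.
\item By linearity, the restricted optimum therefore equals $\pi v(y_m,t_H)+W(\theta^\ast,t_L)-\pi W(\theta^\ast,t_H)$. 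No issue of randomization or of being ``close to'' $a_{\theta^\ast}$ arises.
\item Assigning $x$ to $t_L$ and $y_m$ to $t_H$ is implementable, because $v(x,t_H)-v(x,t_L)<W(\theta^\ast,t_H)-W(\theta^\ast,t_L)\le v(y_m,t_H)-v(y_m,t_L)$.
\item This assignment earns $\pi[W(\theta^\ast,t_H)-v(x,t_H)]>0$ more than the restricted optimum, contradicting optimality of $\mathcal{Y}$.
\end{itemize}
Your fallback separating-hyperplane argument is too unspecified to stand in for this construction.
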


The proof of \Cref{thm:converse} is in the appendix. In \Cref{thm:converse}, we extend the admissible type distributions to all Borel type distributions to allow for atoms, which does not change the sufficiency direction of \Cref{thm:main3}; we then construct type distributions with finite supports such that optimality under the profit maximization objective with respect to these type distributions implies that the original menu must be a generalized frontier, certifying the necessity direction. As a consequence of \Cref{thm:main3} and \Cref{thm:converse}, for any finite menu satisfying the ordered incremental elasticity property,\footnote{For a surplus-ordered menu, it can be shown that the ordered incremental elasticity property is itself necessary if the menu is \textit{minimal optimal} for profit maximization under a ``rich'' set of type distributions.} the following are equivalent: \textit{(i)} it is a generalized frontier, \textit{(ii)} it is an optimal menu for profit maximization under all type distributions, \textit{(iii)} it is an optimal menu for all redistributive welfare weights and all type distributions, and \textit{(iv)} it is optimal among stochastic mechanisms for all redistributive welfare weights and all type distributions. 

\paragraph{Screening with Contracts.}\hspace{-2mm}Many contracts can be viewed as one-dimensional functions (as we discuss in \Cref{sec:application}). Here, we describe an abstract setting where the allocation space is the space of bounded measurable functions, and the agent's utility is a linear functional. Consider the space of bounded measurable functions mapping from $[0, 1]$ to $[0, 1]$, endowed with the $L_1$ norm. Suppose that the valuation of each type $t$ for each function $x$ is linear: $v(x, t) := \int x(s) u(s, t) \d G(s)$, where $u(s, t)$ is a positive bounded measurable function that is continuously differentiable in $t$ with $u_t$ uniformly bounded, and $G$ is absolutely continuous with full support.
\begin{prop}\label{prop:function}
Suppose $u(s, t)$ is strictly increasing in $t$. If $u$ is log-supermodular, then $\{\1_{[0, k]}\}_k$ is an optimal menu. If $u$ is log-submodular, then $\{\1_{[k,1]}\}_k$ is an optimal menu. 
\end{prop}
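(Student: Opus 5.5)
The plan is to verify that $\X^\star := \{\1_{[0,k]}\}_{k\in[0,1]}$ is a generalized frontier, taking $\mathcal{A}$ to be the Dirac measures on $\X^\star$, and then invoke \Cref{thm:main3}. The log-submodular case follows by the change of variables $s \mapsto 1-s$. This reverses the sign of the cross-partial of $\log u$ and maps $\1_{[0,k]}$ to $\1_{[1-k,1]}$.

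First, I check the preliminaries. The map $k \mapsto \1_{[0,k]}$ is an isometry from $[0,1]$ (with the metric $|k-k'|$ scaled by $G$) into $L_1$. It is therefore continuous, so $\X^\star$ is compact and totally ordered by $k$. Because $u$ is bounded, $v(x,t)$ is continuous in $x$ in the $L_1$ norm, and the model's standing assumptions hold. Condition \textit{(ii)} is immediate: for $k<k'$,
\[P(\1_{[0,k']},q)-P(\1_{[0,k]},q)=\int_k^{k'} u(s,t)\,\d G(s), \qquad t = F^{-1}(1-q).\]
This is strictly increasing in $t$, since $u$ is strictly increasing in $t$ and $G$ has full support. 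It is therefore strictly decreasing in $q$.

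Second, the main step is covering, condition \textit{(i)}. Fix any $x$ and any $q_0$, and let $t_0 = F^{-1}(1-q_0)$. The map $k\mapsto \int_0^k u(s,t_0)\,\d G$ is continuous and runs from $0$ to $\int u(s,t_0)\,\d G \ge \int x\,u(s,t_0)\,\d G$. Hence there is a $k$ such that $\1_{[0,k]}$ and $x$ give equal value to type $t_0$. Set $h := \1_{[0,k]} - x$. Then $h\ge 0$ on $[0,k]$ and $h \le 0$ on $(k,1]$, so $h$ has a single sign change from $+$ to $-$. Define $D(t) := \int h(s)u(s,t)\,\d G(s)$, so that $D(t_0)=0$.

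The key claim is that $D(t)\ge 0$ for $t\le t_0$ and $D(t)\le 0$ for $t \ge t_0$. To prove it directly rather than cite variation diminishing, consider $t>t_0$ and write $u(s,t) = u(s,t_0)\,r(s)$ with $r(s) := u(s,t)/u(s,t_0)$. Log-supermodularity makes $r$ nondecreasing in $s$. Then
\[D(t) = \int h\,u(\cdot,t_0)\,r\,\d G \le r(k)\int h\,u(\cdot,t_0)\,\d G = 0.\]
The inequality holds because on $[0,k]$ we have $h\ge0$ and $r\le r(k)$, while on $(k,1]$ we have $h\le 0$ and $r \ge r(k)$. For $t<t_0$ the ratio $r$ is nonincreasing in $s$, and the same argument gives $D(t)\ge 0$.

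Translating through $q = 1-F(t)$, high types correspond to low $q$. So $P(\1_{[0,k]},\cdot) \le P(x,\cdot)$ for $q\le q_0$ and $\ge$ for $q \ge q_0$. That is, $P(\1_{[0,k]},\cdot)$ weakly single-crosses $P(x,\cdot)$ from below at $q_0$, and $x$ is covered.

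With \textit{(i)} and \textit{(ii)} established, \Cref{thm:main3} gives optimality. I expect the main obstacle to be the single-crossing step above. Getting the direction right (a $+\to-$ sign pattern in $s$ becomes $+\to-$ in $t$, which is exactly "from below" in $q$) is the crux. Minor care is needed when $u(s,t_0)$ vanishes, which cannot happen here since $u$ is positive, and at the endpoint cases $k \in \{0,1\}$, which the same argument covers.
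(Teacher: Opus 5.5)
Your proof is correct and takes essentially the same route as the paper. Like the paper, you show that $\{\1_{[0,k]}\}_k$ is a generalized frontier with $\mathcal{A}$ the Dirac measures: covering comes from the intermediate value theorem plus a single-crossing argument, and strict increasing differences come from $\int_k^{k'}u(s,t)\,\d G(s)$. You then apply \Cref{thm:main3} and handle the log-submodular case by symmetry. The one difference is that you prove the single-crossing step directly via the monotone ratio $u(s,t)/u(s,t_0)$, where the paper cites the variation-diminishing property of \citet{karlin1956theory}.

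One small correction: the full space of measurable functions is not compact in $L_1$, so the model's standing assumptions do not literally hold. \Cref{thm:main3} still applies because only $\X^\star$ needs to be compact, as the paper notes in a footnote.
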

\begin{proof}
Suppose $u$ is log-supermodular. Fix any function $x(s)$ and any type $t$. For each fixed $t$, consider $x^\star(s; t) = \1_{s \leq k(t)}$ where $k(t)$ is chosen such that $\int x^\star(s; t) u(s, t) \d G(s)= \int x(s) u(s, t) \d G(s)$. Such a $k(t)$ exists by the intermediate value theorem. Now observe that, by \citet{karlin1956theory}, since $x^\star(\,\cdot\,; t)$, by construction, weakly single-crosses $x(\,\cdot\,)$ from above at $k(t)$, and $u$ is log-supermodular, we have that \[\Delta(z):= \int \big(x^\star(s; t) - x(s)\big) u(s, z) \d G(s)\]
weakly single-crosses $0$ from above at $z = t$. Therefore, the demand curve resulting from the allocation $x^\star(\,\cdot\,; t)$ must weakly single-cross that resulting from $x(\,\cdot\,)$ from below at the quantile of type $t$, i.e., at $q = 1 - F(t)$. Since this holds for all $t$, the demand curve resulting from $x(\,\cdot\,)$ is covered by those of $\X^\star := \{\1_{[0, k]}\}_{k}$ according to our definition. Moreover, along the ordering given by $k$, we have that $\int^{k_2}_0 u(s, t) \d G(s) - \int^{k_1}_0 u(s, t) \d G(s)  = \int_{k_1}^{k_2} u(s, t) \d G(s)$, which is strictly increasing in $t$. Then, taking $\mathcal{A}$ as the collection of Dirac measures supported on each of $\{\1_{[0, k]}\}_{k}$ shows that $\X^\star$ is a generalized frontier. Thus, by \Cref{thm:main3}, $\{\1_{[0, k]}\}_{k}$ must be optimal.\footnote{Formally, our model assumes $\X$ is compact, but as the proof of \Cref{thm:main3} shows, this is not needed as long as $\X^\star$ is compact.} The case of $u$ being log-submodular is exactly symmetric, because we can construct $x^\star(s; t) = \1_{s \geq k(t)}$ where $k(t)$ is defined similarly. By the same single-crossing argument,  $\{\1_{[k, 1]}\}_k$ is a generalized frontier and hence optimal by \Cref{thm:main3}. 
\end{proof}

\paragraph{Ordeal Mechanisms.}\hspace{-2mm}Consider a utility function $u(c, y, t)$ where there exists some $y_0$ such that $u(c, y_0, t) \geq u(c, y, t) \geq 0$ for all $(c, y) \in \mathcal{X}$. When is it optimal to set $y = y_0$? This question can be interpreted as when we should expect \textit{\textbf{ordeal mechanisms}} to be used in either profit-maximizing contexts or, more generally, in contexts with redistributive concerns. We will discuss in detail an application to optimal taxation in \Cref{subsec:ordeal}, but let us first illustrate how to think about the problem using a generalized frontier.
\begin{prop}\label{prop:costly}
Suppose $u(c, y_0, t)$ is strictly supermodular and log-supermodular in $(c, t) \in [0, 1]^2$, and satisfies $u(0, y_0, t) = 0$. If for any $(c, y) \in \X$, $\eta(c, y, q) \leq \eta(c, y_0, q) \text{ for all $q$}$, then $\big\{(c, y_0)\big\}_{c}$ is an optimal menu (i.e., the optimal menu involves no costly screening).
\end{prop}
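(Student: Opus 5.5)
The plan is to verify that $\X^\star := \{(c, y_0)\}_{c \in [0,1]}$, ordered by $c$, is a generalized frontier, with $\mathcal{A}$ the set of Dirac measures on $\X^\star$, and then invoke \Cref{thm:main3}. Strict supermodularity of $u(c, y_0, t)$ in $(c,t)$ gives directly that $u(c', y_0, t) - u(c, y_0, t)$ is strictly increasing in $t$ for $c < c'$. Hence $P((c',y_0), q) - P((c,y_0), q)$ is strictly decreasing in $q$, which is condition \textit{(ii)}. Since Dirac measures on a totally ordered set are totally ordered by $\preceq_{\text{st}}$, only the covering condition \textit{(i)} remains. I will not try to verify the anti-chain condition of \Cref{thm:main}, since log-supermodularity in $(c,t)$ is only weak.

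For covering, fix $x = (c, y) \in \X$ and a quantile $q$, and let $t_q := F^{-1}(1-q)$. Since $0 = u(0, y_0, t_q) \le u(c, y, t_q) \le u(c, y_0, t_q)$, the intermediate value theorem in $c$ gives some $c' \le c$ with $u(c', y_0, t_q) = u(c, y, t_q)$. This uses continuity of $v$ on $\X \times \mathcal{T}$, restricted to the segment $\{(\cdot, y_0)\}$. I then claim that the demand curve of $(c', y_0)$ weakly single-crosses that of $(c,y)$ from below at $q$. To show this, I write, when the values are positive,
\[\frac{u(c', y_0, t)}{u(c, y, t)} = \frac{u(c', y_0, t)}{u(c, y_0, t)} \cdot \frac{u(c, y_0, t)}{u(c, y, t)}\,.\]
The first factor is nonincreasing in $t$ by log-supermodularity in $(c,t)$ and $c' \le c$. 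The second factor is nonincreasing in $t$ because the hypothesis $\eta(c,y,q) \le \eta(c,y_0,q)$ is, by \eqref{eq:log-d} and (6), equivalent to $u(c,y,t)/u(c,y_0,t)$ being nondecreasing. So the ratio is nonincreasing in $t$ and equals $1$ at $t_q$. The candidate therefore lies weakly above $x$ for $t \le t_q$, i.e. for quantiles at least $q$, and weakly below for quantiles at most $q$. This is exactly weak single-crossing from below at $q$, matching the convention used in the proof of \Cref{prop:function}. Because $q$ was arbitrary, $x$ is covered by $\mathcal{A}$.

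I expect the main obstacle to be the degenerate cases where values vanish, since the elasticity comparison presupposes positive values. If $u(c,y,t_q) = 0$, I take $c' = 0$ (or the outside option). Its demand curve is identically zero, and I need $u(c,y,t) \le 0$ for $t > t_q$ and $u(c,y,t) \ge 0$ for $t < t_q$. The first inequality follows because $u(c,y,\cdot) \ge 0$ is nondecreasing in $t$ when the ratio to $u(c, y_0, \cdot)$ is monotone appropriately; I would handle this via the paper's convention that $\emptyset$ has infinite elasticity, or by restricting to $y$ with positive values and treating the remaining $x$ separately. Similarly, if $u(c', y_0, t)$ vanishes on part of the domain, I will argue directly from the monotone ratios on the set where values are positive and extend by continuity. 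The case $x = \emptyset$ is covered trivially by $c' = 0$. Once covering is established, \Cref{thm:main3} gives that $\{(c,y_0)\}_c$ is an optimal menu.
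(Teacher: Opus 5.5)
Your proposal is correct and is essentially the paper's proof: Dirac measures on $\{(c,y_0)\}_c$, increasing differences from strict supermodularity, covering via the intermediate value theorem plus the same ratio factorization (yours is just the reciprocal of the paper's ratio), and then \Cref{thm:main3}. One small slip is in your zero-value case, where the required inequalities are reversed: weak single-crossing from below at $q$ needs $u(c,y,t)\le 0$ for $t<t_q$ (and $u(c,y,t)\ge 0$ for $t>t_q$, which is automatic). The former is exactly what nonnegativity together with the standing monotonicity of $u(c,y,\cdot)$ in $t$ gives; the paper itself simply restricts attention to types with $u(c,y,t)>0$.
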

\begin{proof}
Fix any $(c, y)$ and $t$ with $u(c, y, t) > 0$. By the intermediate value theorem, there exists $c^\star_t \in [0, c]$ such that $u(c, y, t) = u(c^\star_t, y_0, t)$.
Now, note that $\frac{u(c, y, s)}{u(c^\star_t, y_0, s)} = \frac{u(c, y_0, s)}{u(c^\star_t, y_0, s)} \cdot \frac{u(c, y, s)}{u(c, y_0, s)}$ is nondecreasing in $s$, since the first term is nondecreasing in $s$ by log-supermodularity of $u(c, y_0, s)$ in $(c, s)$ and the second term is nondecreasing in $s$ by the elasticity comparison given in the statement. It follows that the demand curve $P(c^\star_t, y_0, q)$ single-crosses $P(c, y, q)$ from below with the crossing point corresponding to the quantile of type $t$. Since this holds for all $t$, it follows immediately that any $(c, y)$ is covered by $\{(c^\star_t, y_0)\}_{t}$ and hence by $\{(c, y_0)\}_{c}$. Given that $u(c, y_0, t)$ has strict increasing differences in $(c, t)$, it then follows immediately that $\{(c, y_0)\}_{c}$ is a generalized frontier and hence optimal by \Cref{thm:main3}. 
\end{proof}

In monopolistic pricing contexts, this means that costly screening can only be profitable if, with an ordeal requirement, the demand for some good becomes more \textit{\textbf{elastic}}. In the optimal taxation context, as we discuss in \Cref{subsec:ordeal}, the demand elasticity comparison reduces to a \textit{\textbf{take-up elasticity}} comparison.

\section{Proof Sketches for the Main Results} \label{sec:proof}

In this section, we sketch the proof of \Cref{thm:main} assuming a finite allocation space $\mathcal{X}$, and then discuss the proofs of \Cref{thm:main2} and \Cref{thm:main3} at the end. The appendix provides details and generalizes the proofs to arbitrary $\mathcal{X}$ via approximation. 

\paragraph{A False Proof Strategy.}\hspace{-2mm}Before presenting the actual proof, it is instructive to sketch a more intuitive but false proof strategy.\footnote{I thank Nima Haghpanah for suggesting this discussion.} Let $\mathcal{X}^\star$ be the surplus-elasticity frontier. Fix any mechanism $(x, p)$. For every type $t$, consider \textit{(i)} replacing the allocation $x(t)$ with some $\tilde{x}(t) \in \mathcal{X}^\star$ where $x(t) \preceq_{\text{surplus}} \tilde{x}(t)$ and $x(t) \preceq_{\text{elasticity}} \tilde{x}(t)$, and  \textit{(ii)} replacing $p(t)$ with $\tilde{p}(t)$ such that 
\[v(x(t), t) - p(t) = v(\tilde{x}(t), t) - \tilde{p}(t)\,.\]
Since $\tilde{x}(t)$ generates more surplus than $x(t)$, this construction would yield $\tilde{p}(t) \geq p(t)$, leading to a higher revenue for the principal under truthful reporting. The hope is that the elasticity comparison would then imply that the modified mechanism would remain incentive compatible or at least would not introduce deviations that lower the total payment to the principal. However, this proof strategy would in general fail. In particular, note that it does not use the ``anti-chain'' property in our definition. 

To see the failure directly, let us return to \Cref{ex:counterexample}. Consider the menu 
$\{x_1, x_2, x_3\}$ with prices given by $p(x_1) = 12$, $p(x_2) = 23$, and $p(x_3)= 63$. The corresponding direct mechanism is given by $x(t_L) = x_1, p(t_L) = 12$, $x(t_M) = x_2, p(t_M) = 23$, and $x(t_H) = x_3, p(t_H) = 63$. Recall that $x_2$ is equivalent to a lottery over $\{x_3, \varnothing\}$ and hence dominated by $x_3$ in both the surplus order and the elasticity order. Now, consider replacing the allocation of type $t_M$ from $x_2$ to $x_3$ while keeping type $t_M$'s payoff the same. This would require $24 - p(t_M) = 30 - \tilde{p}(t_M)$ and hence $\tilde{p}(t_M) = 29$. Clearly, this would not result in an incentive-compatible mechanism---in particular, type $t_H$ would mimic $t_M$ to get the allocation $x_3$ at the lower price $29$ instead of $63$. The net change in the revenue would be 
\[0.44 \cdot (29 - 23) + 0.11 \cdot (29 - 63) = -1.1 < 0\,,\]
and hence this would not be a revenue-improving modification. 

\paragraph{The Actual Proof Strategy.}\hspace{-2mm}The actual proof strategy for \Cref{thm:main} is more involved, building on \citet{yang2022costly}. It consists of three steps.

\paragraph{\textbf{Step 1.}}\hspace{-2mm}Let $\X^\star$ be ordered by $\preceq_\text{surplus}$.  Fix any IC and IR mechanism $(x, p)$. We show that there exists a stochastic mechanism (which we refer to as a \textit{\textbf{reconstruction}}) $(a, p)$ where $a:\mathcal{T} \rightarrow \Delta(\X^\star)$ is a reconstructed allocation rule using only lotteries over elements in the frontier $\X^\star$, and $p$ is the same payment rule such that: 
\begin{itemize}
    \item $(a, p)$ is \textit{\textbf{downward incentive compatible}}, i.e., it satisfies all downward incentive constraints with $t > \hat{t} \in \mathcal{T}$ (and satisfies all the IR constraints)\,;
    \item $(a, p)$ generates \textit{\textbf{identical payoffs}} to each type as in $(x, p)$ (and hence also generates the same payoff to the principal) assuming truthful reporting by the agent\,;
    \item $\text{Ran}(a) \subseteq \Delta(\mathcal{X}^\star)$ consists of \textit{\textbf{stochastically ordered}} lotteries in the sense that $\text{Ran}(a)$ can be totally ordered by $\preceq_{\text{st}}$. 
\end{itemize}

\paragraph{\textbf{Step 2.}}\hspace{-2mm}We show that there exists another stochastic mechanism $(\tilde{a}, \tilde{p})$ where $\tilde{a}:\mathcal{T} \rightarrow \Delta(\X^\star)$ is a stochastic allocation rule further modifying the allocation rule $a$, and $\tilde{p}$ is a modified payment rule such that 
\begin{itemize}
    \item $(\tilde{a}, \tilde{p})$ is \textit{\textbf{fully incentive compatible}}\,;
    \item $(\tilde{a}, \tilde{p})$ generates a \textit{\textbf{weakly higher payoff}} for the principal compared to $(a, p)$\,;
    \item $\text{Ran}(\tilde{a}) \subseteq \Delta(\mathcal{X}^\star)$ continues to consist of \textit{\textbf{stochastically ordered}} lotteries. 
\end{itemize}
In particular, this step shows that the downward incentive constraints are \textit{\textbf{sufficient}} for the optimal solution. This step is of independent interest; as we discuss, it generalizes the \textit{\textbf{downward sufficiency theorem}} in \citet{yang2022costly} that holds for profit maximization in any one-dimensional screening problem to allow for redistributive welfare weights.

\paragraph{\textbf{Step 3.}}\hspace{-2mm}We show that there exists a \textit{\textbf{deterministic}} mechanism $(x^\dagger, p^\dagger)$ where $x^\dagger:\mathcal{T} \rightarrow \X^\star$ is a deterministic allocation rule that uses only elements in the frontier $\mathcal{X}^\star$, and $p^\dagger$ is a modified payment rule such that 
\begin{itemize}
    \item $(x^\dagger, p^\dagger)$ is \textit{\textbf{fully incentive compatible}}\,;
    \item $(x^\dagger, p^\dagger)$ generates a \textit{\textbf{weakly higher payoff}} for the principal compared to $(\tilde{a}, \tilde{p})$\,.
\end{itemize}
In particular, this step shows that in one-dimensional screening problems, the possibility of using lotteries that are ordered by stochastic dominance cannot improve the principal's objective. We refer to this result as the \textit{\textbf{purification lemma}}; its proof is by a monotone coupling argument.

\subsection{Reconstruction Lemma}
Fix any IC and IR mechanism $(x, p)$. For all $t$, let 
\[x^{+}(t):= \inf\Big\{x' \in \X^\star: v(x', t) \geq v(x(t), t)\Big\}\,,\quad  x^{-}(t):= \sup\Big\{x' \in \X^\star: v(x', t) < v(x(t), t)\Big\}\,,\]
where the $\inf$ and $\sup$ are defined with respect to the total order $\preceq_{\text{surplus}}$ on $\X^\star$, with the convention that when $x(t) = \varnothing$, we set $x^{+}(t) = x^{-}(t) = \varnothing$.

By construction, if $x(t)\in \X^\star$, then $x^{+}(t) = x(t)$. Moreover, for all $t$ where $x(t) \neq \varnothing$, 
\[v(x^{+}(t), t) \geq v(x(t), t) > v(x^{-}(t), t)\,.\]
Now, for all $t$ where $x(t) \neq \varnothing$, let 
\[\alpha(t) := \frac{v(x(t), t) - v(x^{-}(t), t)}{ v(x^{+}(t), t) -  v(x^{-}(t), t)} \in (0, 1]\,,\]
and for all $t$ where $x(t) = \varnothing$, let $\alpha(t) = 1$. The $(\X^\star, x, p)$-\textit{\textbf{reconstruction}} is a stochastic mechanism defined by: 
\begin{itemize}
    \item assigning each reported type $t$ a lottery over $\{x^{+}(t), x^{-}(t)\}$ with probability $\alpha(t)$ to allocation $x^{+}(t)$ and probability  $1- \alpha(t)$ to allocation $x^{-}(t)$\,;
    \item keeping the payment $p(t)$ for each reported type $t$ unchanged. 
\end{itemize}

By construction, the $(\X^\star, x, p)$-reconstruction yields the same payoff to each agent type $t$ as $(x, p)$ does (and hence the same payoff to the principal), assuming truthful reporting. The next lemma shows that the reconstruction preserves all downward IC constraints. 

\begin{lemma}[Reconstruction]\label{lem:reconstruct}
Let $\X^\star=\{\varnothing, x_1, \dots, x_m\}$ be a surplus-elasticity frontier. Let $(x, p)$ be any deterministic mechanism that satisfies all downward IC constraints. Let $(a, p)$ be the $(\X^\star, x, p)$-reconstruction. Then, $(a, p)$ satisfies all downward IC constraints: for all $\hat{t} < t \in \mathcal{T}$,
$v(a(t), t) - p(t) \geq v(a(\hat{t}), t) - p(\hat{t})$, where $v(a, t):= \E_{x\sim a}[v(x, t)]$. 
\end{lemma}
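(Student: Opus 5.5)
The plan is to reduce the claim to a single-crossing comparison between the lottery $a(\hat t)$ and the original allocation $x(\hat t)$, and then to get that comparison from the two frontier properties. Fix $\hat t < t$. By construction of the reconstruction, $v(a(t), t) = v(x(t), t)$. Downward IC of $(x,p)$ gives $v(x(t),t) - p(t) \geq v(x(\hat t), t) - p(\hat t)$. It therefore suffices to show
\[
v(a(\hat t), s) \leq v(x(\hat t), s) \quad \text{for all } s \geq \hat t,
\]
and to apply it at $s = t$. Equality holds at $s=\hat t$ by the choice of $\alpha(\hat t)$. So the task is to show that the lottery's value grows more slowly in the type than the value of the allocation it replaces. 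IR is untouched, since truthful payoffs are unchanged.

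First I would dispose of trivial cases. If $x(\hat t) = \varnothing$, then $a(\hat t)=\varnothing$ and there is nothing to prove. If $x(\hat t) \in \X^\star$, then $x^+(\hat t) = x(\hat t)$ and $\alpha(\hat t)=1$, so $a(\hat t)$ is degenerate at $x(\hat t)$.

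Otherwise write $y = x(\hat t) \notin \X^\star$. By the domination condition (i) there is $x' \in \X^\star$ with $y \preceq_{\text{surplus}} x'$ and $y \preceq_{\text{elasticity}} x'$. In particular $v(x',\hat t) \geq v(y,\hat t)$, so the set defining $x^+(\hat t)$ is nonempty; since $\X^\star$ is finite, the infimum is attained and $x^+(\hat t) \preceq_{\text{surplus}} x'$. Also $x^-(\hat t) \prec_{\text{surplus}} x^+(\hat t) \preceq_{\text{surplus}} x'$.

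The key step is to use the anti-chain condition (ii): elements lower in the surplus order are more elastic. Hence $x^\pm(\hat t) \succeq_{\text{elasticity}} x'$, with equality if $x^+(\hat t)=x'$. By \eqref{eq:log-d}/(6), $s \mapsto v(x^\pm(\hat t), s)/v(x', s)$ is nonincreasing; when $x^-(\hat t)=\varnothing$ this ratio is identically $0$. So the convex combination $v(a(\hat t), s)/v(x',s)$ is nonincreasing in $s$. Meanwhile $y \preceq_{\text{elasticity}} x'$ gives, by (6), that $v(y,s)/v(x',s)$ is nondecreasing in $s$. The two ratios coincide at $s=\hat t$. Thus for $s \geq \hat t$ we get $v(a(\hat t),s)/v(x',s) \leq v(y,s)/v(x',s)$, and multiplying by $v(x',s)>0$ gives the desired inequality.

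The main obstacle I anticipate is making sure the comparison runs through the single dominating element $x'$, and not through pairwise comparisons of $y$ with $x^\pm(\hat t)$ directly, which need not be ordered. It is precisely here that the anti-chain property is indispensable. Without it, $x^-(\hat t)$ could be less elastic than $x'$, and the argument breaks, as in \Cref{ex:counterexample}. The remaining care points are the conventions for $\varnothing$ (elasticity $+\infty$, value $0$) and the attainment of $x^\pm(\hat t)$, both of which are immediate in the finite case assumed in the lemma.
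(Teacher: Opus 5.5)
Your argument is correct, and it is a cleaner route than the appendix proof. You normalize by the single dominating element $x'$. The anti-chain condition makes $s\mapsto v(x^{\pm}(\hat t),s)/v(x',s)$ nonincreasing, domination makes $s\mapsto v(y,s)/v(x',s)$ nondecreasing, and the two agree at $s=\hat t$. Equivalently, the log-derivative of the mixture is a weighted average of those of $x^{\pm}(\hat t)$; each is at most that of $x'$, hence at most that of $y$. So the lottery's demand curve is pointwise more elastic than $y$'s and single-crosses it from below at $\hat t$.

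This is exactly the intuition the paper gives in \Cref{sec:proof}, but the formal appendix proof is organized differently:
\begin{itemize}
\item It handles $\hat x^-=\varnothing$ separately, by essentially your ratio argument normalized by $x_1$.
\item For $\hat x^-\neq\varnothing$, it disposes of the case $v(\hat x,t)\geq v(\hat x^+,t)$ directly.
\item It then uses the single-crossing implications \eqref{eq:sc}--\eqref{eq:sc2} to place both $\hat t$ and $t$ in the interval $\mathcal{T}'$ where $\hat x$ is strictly bracketed by $\hat x^-$ and $\hat x^+$.
\item Finally, it invokes the mediant inequalities of \Cref{lem:inequality} to show that the weight $g(s)=\frac{v(\hat x,s)-v(\hat x^-,s)}{v(\hat x^+,s)-v(\hat x^-,s)}$ is strictly increasing on $\mathcal{T}'$.
\end{itemize}
The paper's route gives a sharper, local statement about incremental values within the bracketing pair, but none of that is needed for the lemma. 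Yours avoids \Cref{lem:inequality}, the case split, and the interval bookkeeping.

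Two small remarks. First, state explicitly that $x'\neq\varnothing$, since $v(x',s)\geq v(y,s)>0$; this justifies dividing by $v(x',s)$. Second, your general argument already covers $x(\hat t)\in\X^\star$ (take $x'=x(\hat t)$), so the separate trivial case can be dropped. That is worth doing, because the claim $x^+(\hat t)=x(\hat t)$ can fail at $\hat t=\underline t$ when frontier values tie there. Strict surplus ordering is only required for $q\in(0,1)$, though the conclusion is unaffected.
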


The proof is in the appendix. The intuition can be understood as follows. Fix any type $t$ with $x(t) \not\in \X^\star$. Recall that, by construction, type $t$ is \textit{\textbf{indifferent}} between the binary lottery $a(t)$ supported on $\{x^{+}(t), x^{-}(t)\} \subseteq \mathcal{X}^\star$ and the original allocation $x(t)$. By the ``domination'' property of $\X^\star$, there exists $\hat{x} \in \X^\star$ that dominates $x(t)$ in both the surplus order and the elasticity order. Crucially, since $x^{+}(t)$ and $x^{-}(t)$ generate less surplus compared to $\hat{x}$, they are more elastic than $\hat{x}$ by the ``anti-chain'' property of $\X^\star$, and hence must also dominate $x(t)$ in terms of the elasticity order. This implies that the lottery $a(t)$ generates a demand curve that is more elastic than the demand curve generated by $x(t)$, and moreover crosses the demand curve of $x(t)$ exactly at the quantile corresponding to type $t$ (by the indifference of type $t$). Note that pointwise elasticity ordering implies \textit{\textbf{single-crossing}} in the demand curves, and hence the demand curve of $a(t)$ must single-cross that of $x(t)$ from below. Then, under this reconstruction, for any type $t' > t$, the deviation payoff of mimicking type $t$ must have weakly decreased---indeed, the higher type $t'$ is located toward the left on the quantity axis, where the demand curve of $a(t)$ lies weakly below that of $x(t)$. Therefore, the reconstruction continues to satisfy all downward IC constraints. 

Now, to see that the reconstruction indeed yields $\text{Ran}(a)$ that consists of stochastically ordered lotteries, note that 
\[\text{Ran}(a) \subseteq \mathcal{A} :=  \Big\{a\in \Delta(\X^\star): a \in \Delta\big(\{x_{j-1}, x_j\}\big) \text{ for some $x_j \in \X^\star$}\Big\}\,,\]
which is totally ordered by $\preceq_{\text{st}}$.

\subsection{Generalized Downward Sufficiency Theorem}

Note that for the allocations on the frontier $\X^\star$, the agent's preferences satisfy strict increasing differences. Moreover, since the set of lotteries we reconstructed in \textbf{Step 1} can be totally ordered by stochastic dominance $\preceq_{\text{st}}$, the agent's preferences continue to satisfy strict increasing differences: Indeed, for any $a \preceq_{\text{st}} a'$ and $a \neq a'$, we can write  
\[v_t(a, t) = \E_{x\sim a}[v_t(x, t)] < \E_{x\sim a'}[v_t(x, t)] = v_t(a', t)\,,\]
where the strict inequality follows from the fact that $v_t(x_s, t)$ is strictly increasing in $s$ given the strict increasing differences on the frontier. Therefore, we may view the set of lotteries $\mathcal{A}$ used in the reconstruction step as a one-dimensional allocation space.  

A key technical result, which may be of independent interest, is the following downward sufficiency theorem that generalizes that in \citet{yang2022costly} to allow for any redistributive welfare weights:

\begin{theorem}[Downward Sufficiency]\label{thm:downward}
Let $\mathcal{A}$ and $\mathcal{T}$ be compact subsets of $\R$, and suppose $v(a, t)$ is nondecreasing in $t$, and is continuous and has strict increasing differences on $\mathcal{A} \times \mathcal{T}$. Then, for any type distribution and any redistributive welfare weights, any mechanism $(a, p): \mathcal{T} \rightarrow \mathcal{A} \times \R$ satisfying the downward IC and IR constraints can be weakly improved upon by a mechanism $(\tilde{a}, \tilde{p}): \mathcal{T} \rightarrow \mathcal{A} \times \R$ that satisfies \emph{all} IC and IR constraints.
\end{theorem}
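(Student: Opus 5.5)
My plan is to construct the improvement explicitly: start from a downward-IC, IR mechanism $(a,p)$, repair its allocation so that it becomes monotone, and keep the payments tied to downward-binding constraints. The construction follows the ``upper envelope'' idea of \citet{yang2022costly}. Define the indirect utility under downward deviations,
\[
U(t) := \max_{\hat t \le t}\big\{v(a(\hat t),t) - p(\hat t)\big\},
\]
and note that downward IC says exactly $U(t) = v(a(t),t)-p(t)$. Because $v(a,\cdot)$ is nondecreasing, $U$ is nondecreasing. The candidate modified mechanism has each type $t$ choose, from the menu $\{(a(\hat t),p(\hat t))\}_{\hat t\le t}$ together with the outside option, its best item. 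Since this menu grows with $t$, it is convenient to work instead with the full menu $M := \{(a(\hat t),p(\hat t)) : \hat t\in\mathcal T\}\cup\{(\varnothing,0)\}$. Let each type pick $(\tilde a(t),\tilde p(t)) \in \argmax_{(a',p')\in M}\{v(a',t)-p'\}$, with ties broken toward the highest $a'$. Compactness and continuity ensure a maximizer exists, possibly after taking the closure of $M$. Any such menu-selection mechanism is fully IC and IR by construction. Strict increasing differences, via a Topkis-style argument, make $\tilde a$ nondecreasing.

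The substance is showing that the principal's payoff weakly rises. For each type $t$, downward IC implies that no item $(a(\hat t),p(\hat t))$ with $\hat t<t$ is strictly preferred to $(a(t),p(t))$. Hence a type that switches must switch \emph{upward}, to an item $(a(\hat t),p(\hat t))$ with $\hat t>t$ and $v(a(\hat t),t)-p(\hat t) \ge U(t)$. I then compare payments. If type $t$ strictly prefers the item of $\hat t > t$, then $\hat t$ must weakly prefer its own item to that of $t$ (downward IC for $\hat t$), which gives
\[
v(a(\hat t),\hat t) - v(a(t),\hat t) \ge p(\hat t) - p(t) > v(a(\hat t),t)-v(a(t),t).
\]
Strict increasing differences then force $a(\hat t) > a(t)$. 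Next I must show $p(\hat t) \ge p(t)$, i.e.\ that deviations upward collect weakly more revenue. My first attempt is to pass to a ``largest'' upward deviation: among the items type $t$ weakly prefers, take the one with the largest allocation. Applying the downward IC chain along with increasing differences should then show that its price is at least $p(t)$. If this pointwise comparison fails, the fallback is an iterated ``ironing'' of the allocation: replace $a$ by its running maximum $\bar a(t)=\sup_{\hat t\le t}a(\hat t)$ and let payments be determined by the envelope.

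With redistributive weights, revenue is not the whole objective. The objective can be written as $\E[\lambda(t)\tilde U(t)] + \E[\tilde p(t)]$, equivalently $\E[(\lambda(t)-1)\tilde U(t)] + \E[\text{surplus}]$. The new utility satisfies $\tilde U(t)\ge U(t)$, since types only gain from the larger menu. Utility gains are costly when $\lambda<1$, so the key step is to prove that the transfer $\tilde p - p$ offsets them: the total change $\E[(\lambda-1)(\tilde U - U)] + \E[\tilde v - v]$ must be nonnegative. I would handle this by writing $\tilde U(t)-U(t)$ through the envelope formula, $\int_{\underline t}^{t}(v_t(\tilde a(s),s)-v_t(a(s),s))\,ds$ plus a boundary term. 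Monotonicity of $\lambda$, via an integration-by-parts argument (Abel summation), then reduces the weighted comparison to the unweighted case $\lambda\equiv\E[\lambda]$ together with the IR constraint at $\underline t$. This lets me set the lowest type's utility to its original level by subtracting a constant from all payments.

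I expect the main obstacle to be that last step: controlling the welfare-weighted rent increase when $\lambda$ is nonconstant. Specifically, the argument must show that upward switches generate surplus gains that dominate the rent increases weighted by $1-\lambda(t)$. My first attempt will be to verify pointwise in $t$ that, for the tie-broken upward choice, the change in surplus is at least the change in rent. Failing that, I will prove the inequality in integrated form using monotone $\lambda$.
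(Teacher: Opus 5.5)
Your proposal has a genuine gap at its central step: showing that the repaired mechanism weakly improves the principal's objective. Neither of your constructions achieves this.

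\textbf{The upward-switch argument.} The inequality you derive from ``type $t$ strictly prefers the item of $\hat t>t$'' has the wrong sign. Strict preference gives $p(\hat t)-p(t)<v(a(\hat t),t)-v(a(t),t)$. Combined with downward IC for $\hat t$, this yields only two \emph{upper} bounds on $p(\hat t)-p(t)$. So neither $a(\hat t)>a(t)$ nor $p(\hat t)\ge p(t)$ follows.

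\textbf{Counterexample to the menu-choice construction.} Take $v(a,t)=at$, $\mathcal T=\{1,2\}$, with $(a,p)(1)=(1,1)$ and $(a,p)(2)=(1,\tfrac12)$. This mechanism is downward IC and IR. Under your menu selection, both types pick $(1,\tfrac12)$, and revenue falls from $\mu_1+\tfrac12\mu_2$ to $\tfrac12$. Your tie-breaking toward the highest $a'$ also breaks the claim that switches are upward: a type can move \emph{down} to a cheaper item on a tie.

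\textbf{Counterexample to the running-maximum fallback.} The theorem does not assume $v$ is monotone in $a$, and cost-adjusted values need not be. Take $\mathcal A=\{0,1\}$, $\mathcal T=\{1,2\}$, $v(0,t)=5$, $v(1,t)=t$, which has strict increasing differences. Give type $1$ the item $(1,1)$ and type $2$ the item $(0,4)$; this is downward IC and IR with revenue $\mu_1+4\mu_2$. The running maximum with envelope payments gives revenue $1$.

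\textbf{The welfare-weighted step.} Your proposed reduction to constant weights goes the wrong way. Rents are penalized by $1-\lambda(t)$, which is nondecreasing. Raising allocations typically produces a rent increase $\tilde U-U$ that is nondecreasing in $t$. In that case Chebyshev's inequality gives $\E[(1-\lambda)(\tilde U-U)]\ge \E[1-\lambda]\,\E[\tilde U-U]$. So the unweighted comparison is a lower bound on the cost you must control, not an upper bound.

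\textbf{The missing idea.} You should not try to repair an arbitrary downward-IC mechanism directly. The paper instead studies an \emph{optimal} solution of the relaxed (Downward-IC) program. For finite types, existence follows from an LP recession-direction argument that uses $\E[\lambda]\le 1$ and monotone $\lambda$.

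Optimality is what makes monotonization possible. Consider the first adjacent violation $a_i>a_{i+1}$.
\begin{itemize}
\item Giving $t_{i+1}$ the contract $(a_i,p_i)$ preserves downward IC, so by optimality it is not an improvement.
\item Strict increasing differences and $\lambda(t_i)\ge\lambda(t_{i+1})$ then imply that the mirror move, giving $t_i$ the contract $(a_{i+1},p_{i+1})$, is weakly improving and preserves downward IC.
\item Iterating leftward pushes the first violation later, which yields a monotone optimal allocation.
\end{itemize}

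Next, fix that allocation and, among optimal payment rules, take one minimizing total local downward slack. If some local constraint were slack, every constraint crossing it would be slack by a uniform margin. Then a revenue-neutral perturbation applies: charge the types above $t_i$ slightly more and rebate the amount lump-sum to all types. Abel summation with nonincreasing $\lambda$ shows this is weakly improving, and it strictly reduces the slack, a contradiction.

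Monotone allocation with binding local downward IC gives full IC. Hence the relaxed optimum, and therefore something weakly better than any downward-IC mechanism, is fully IC. The continuum case then follows by approximation.
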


The proof is in the appendix. We sketch the proof for the finite-type case here.

As discussed in \citet{yang2022costly}, the above result does not hold if one imposes only \textit{\textbf{local}} downward incentive constraints; however, perhaps surprisingly, the set of \textit{\textbf{global}} downward incentive constraints together is sufficient for determining the optimal solution in one-dimensional screening problems even with redistributive welfare weights.

Suppose $\mathcal{T} = \{t_1, \dots, t_n\}$. Let $a_i$ and $p_i$ denote the allocation and payment assigned to type $t_i$. Let $\mu(t_i)$ and $\lambda(t_i)$ denote the type distribution and welfare weights, respectively. Without loss, suppose that $\mu$ has full support. With finite types, our problem is: 
\[\max_{(a,\, p)}\sum_i \mu(t_i) \Big(\lambda(t_i) (v(a_i, t_i) - p_i) + p_i\Big)  \tag{\textbf{Downward-IC}}\]
subject to the downward IC constraints, $v(a_i, t_i) - p_i \geq v(a_j, t_i) - p_j$ for all $i > j$, and the IR constraints, $v(a_i, t_i) - p_i \geq 0$ for all $i$. The existence of a solution can be shown by compactness arguments (recall that $\E[\lambda(t)] \leq 1$ and $\lambda(t)$ is nonincreasing); see the appendix. As shown in \citet{yang2022costly}, every allocation rule $a \in \mathcal{A}^n$ is implementable by some transfers $p \in \R^n$ to be downward incentive compatible, i.e., feasible for the program (\textbf{Downward-IC}). Let IC$[k \rightarrow j]$ denote the IC constraint of type $t_k$ mimicking type $t_j$.  

We now prove \Cref{thm:downward} via two lemmas. 

\begin{lemma}\label{lem:mon}
For any type distribution $\mu$ and any redistributive welfare weights $\lambda$, there exists an optimal solution to (\emph{\textbf{Downward-IC}}) such that $a_1 \leq a_2 \leq \cdots \leq a_n$.
\end{lemma}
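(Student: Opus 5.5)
We take an optimal solution $(a,p)$ of (\textbf{Downward-IC}) and, among all optimal solutions, pick one that minimizes the number of inversions, i.e.\ pairs $j<k$ with $a_j>a_k$. Existence of such a minimizer is guaranteed by the existence result for (\textbf{Downward-IC}) already mentioned. Suppose toward contradiction that some inversion remains. We then choose a specific pair: let $k$ be the smallest index involved in an inversion as the higher type, and among $j<k$ with $a_j>a_k$ take the one maximizing $a_j$ (with ties broken by the largest index). The plan is to modify the pair so that the inversion is removed while feasibility and the objective are preserved. The key structural fact is strict increasing differences: if $a_j>a_k$ and $t_j<t_k$, then $v(a_j,t_k)-v(a_k,t_k)>v(a_j,t_j)-v(a_k,t_j)$. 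This means type $t_k$ values $a_j$ relative to $a_k$ more than $t_j$ does.

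We use IC$[k\to j]$, which reads $v(a_k,t_k)-p_k\ge v(a_j,t_k)-p_j$. Combining it with increasing differences gives $v(a_k,t_j)-p_k> v(a_j,t_j)-p_j$. So $t_j$ would strictly prefer $(a_k,p_k)$ to its own bundle. The proposed modification is to give type $t_j$ the bundle $(a_k,p_k')$, with the payment raised to $p_k' := p_j + v(a_k,t_j)-v(a_j,t_j)>p_k$, which leaves $t_j$'s utility unchanged. Equivalently, we replace $t_j$'s allocation by $a_k$ and adjust its payment to keep its utility fixed.

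We then have to check three things.
\begin{itemize}
\item[\textit{(i)}] The objective weakly rises. Type $t_j$'s utility is unchanged and its payment increases. Since $\lambda(t_j)\le$ some bound, the change in the objective for $t_j$ is $(1-\lambda(t_j))(p_k'-p_j)$. This is nonnegative provided $\lambda(t_j)\le 1$.
\item[\textit{(ii)}] Type $t_j$'s own downward IC and IR constraints still hold, because its utility is unchanged and the bundles it could mimic are unchanged.
\item[\textit{(iii)}] Higher types $t_i>t_j$ do not gain from mimicking $t_j$'s new bundle.
\end{itemize}

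Check \textit{(iii)} is the main obstacle. The comparison needed is $v(a_k,t_i)-p_k'$ versus the old mimicking value $v(a_j,t_i)-p_j$. The difference between them is
\[
[v(a_k,t_i)-v(a_j,t_i)]-[v(a_k,t_j)-v(a_j,t_j)]\le 0,
\]
which holds by increasing differences since $a_k<a_j$ and $t_i>t_j$. So every deviation payoff to $t_j$'s slot weakly falls, and downward IC is preserved.

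Now the issue of $\lambda(t_j)>1$. Because $\lambda$ is nonincreasing and $\E[\lambda]\le1$, this can only happen for low types. For such types we instead adjust by lowering payments of all types $\ge t_j$ uniformly: a lump-sum shift costs $\sum_{i\ge j}\mu_i(1-\lambda_i)\ge0$, using the monotonicity of $\lambda$ together with $\E\lambda\le1$, which gives nonnegative tail sums. Concretely, we combine the raise in $p_j$ with a redistribution so that the net effect is a weakly positive objective change. I expect this weight accounting, via the tail sums $\sum_{i\ge j}\mu_i(1-\lambda_i)\ge0$, to be the delicate part.

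Finally, the modification replaces $a_j$ by the smaller value $a_k$. We must make sure this strictly reduces the inversion count, or at least a lexicographic potential such as $\sum_i i\,a_i$ ordering. The choice of $j$ as the maximizer of $a_j$ is what prevents new inversions with indices between $j$ and $k$. Repeating the argument gives a monotone optimum.
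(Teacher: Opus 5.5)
\textbf{The gap is in step (i), which is the heart of the argument.} Your checks \textit{(ii)} and \textit{(iii)} are fine. What you actually showed is $p_k'>p_k$, but type $t_j$ previously paid $p_j$, not $p_k$, and
\[
p_k'-p_j=v(a_k,t_j)-v(a_j,t_j).
\]
Because $t_j$'s utility is held fixed, the objective changes by exactly $\mu(t_j)\,(p_k'-p_j)$. There is no factor $1-\lambda(t_j)$, so the detour about $\lambda(t_j)>1$ and tail sums is beside the point. The change is simply $\mu(t_j)$ times the change in $t_j$'s gross value.

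Whenever $v(\,\cdot\,,t)$ is increasing in the allocation, $a_k<a_j$ makes this change strictly negative. That is the case along a surplus-ordered frontier, which is where this lemma is applied. Lowering a type's allocation at constant utility just destroys surplus; with $\lambda\equiv 0$ it strictly lowers revenue. So your modified solution is not optimal, and no contradiction with your extremal choice arises. The natural alternative also fails: moving $t_k$ up to $a_j$ at constant utility raises higher types' payoffs from mimicking $t_k$ and can break downward IC.

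\textbf{The missing idea is to use the optimality of $(a,p)$ as a revealed-preference inequality rather than to build a direct improvement.} This is exactly where redistributiveness enters. Take the first violation, $a_1\le\cdots\le a_i>a_{i+1}$. Giving type $t_{i+1}$ the whole contract $(a_i,p_i)$ preserves all downward IC and IR constraints. So optimality yields $\lambda(t_{i+1})\,D\ge p_i-p_{i+1}$, where
\[
D:=v(a_{i+1},t_{i+1})-v(a_i,t_{i+1})+p_i-p_{i+1}\ge 0
\]
by IC$[(i+1)\to i]$. Replace $\lambda(t_{i+1})$ by $\lambda(t_i)\ge\lambda(t_{i+1})$, and use strict increasing differences: the analogous bracket at $t_i$ is larger. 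This shows that the mirror-image swap, giving $t_i$ the contract $(a_{i+1},p_{i+1})$, weakly improves the objective. That swap is also downward-IC feasible.

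Your potential function also fails. Your selection rule always picks $j=k-1$. Replacing $a_{k-1}$ by $a_k$ can then create inversions with indices below $j$: for example, $a=(5,5,5,1)\mapsto(5,5,1,1)$ raises the inversion count from $3$ to $4$. The paper instead propagates the contract $(a_{i+1},p_{i+1})$ leftward until the allocation is monotone through $t_{i+1}$. Its extremal choice is an optimum whose first-violation index is maximal.
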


The proof is in the appendix. The intuition is as follows. Suppose for contradiction that no optimal allocation rule is monotone. Fix an optimal solution whose first violation of monotonicity, where adjacent types $t_i < t_{i+1}$ are assigned options with $a_{i} > a_{i+1}$, occurs the latest. Downward IC and increasing differences imply that either type could feasibly be given the other's assignment while preserving all downward IC constraints. Now, optimality implies that modifying type $t_{i+1}$'s assignment to $(a_i, p_i)$ is not an improvement. Thus, the potential revenue gain $p_i - p_{i+1}$ cannot justify the weighted welfare \textit{loss} of type $t_{i+1}$ (the modification reduces type $t_{i+1}$'s payoff due to downward IC). 
But then, by increasing differences, the \textit{\textbf{mirror-image modification}} that moves type $t_i$'s assignment down to $(a_{i+1}, p_{i+1})$ would give type $t_i$ a larger payoff gain than the payoff loss incurred by type $t_{i+1}$ under the first modification; moreover, by redistributiveness, this gain carries a weakly higher welfare weight and hence offsets the potential revenue loss, making the modification weakly improving. Now, iterating it then propagates the contract $(a_{i+1}, p_{i+1})$ leftward until the allocation rule becomes monotone up to type $t_{i+1}$, yielding an optimal solution whose first violation of monotonicity occurs strictly later---a contradiction. 

Now, the second lemma shows that, fixing a monotone optimal allocation rule from \Cref{lem:mon}, the payments can be chosen so that every local downward IC constraint binds. 

\begin{lemma}\label{lem:binding}
For any type distribution $\mu$ and any redistributive welfare weights $\lambda$, there exists an optimal solution to (\emph{\textbf{Downward-IC}}) with monotone $a$ and binding \emph{IC}$[(i+1) \rightarrow i]$ for all $i$.
\end{lemma}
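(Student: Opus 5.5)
Start from an optimal solution $(a,p)$ to (\textbf{Downward-IC}) with $a_1 \leq \cdots \leq a_n$, which exists by \Cref{lem:mon}. Keep the allocation rule $a$ fixed and change only the payments. The aim is the vector $p^\star$ in which every local downward constraint binds, all at once:
\[
p^\star_1 = v(a_1,t_1)\cdot \1[\text{IR}_1 \text{ binds}]\text{-adjusted},\qquad p^\star_{i+1} = p^\star_i + v(a_{i+1},t_{i+1}) - v(a_i,t_{i+1}).
\]
More precisely, hold type $t_1$'s payment at whatever makes IR binding when that is optimal, and in general fix $p_1$ at its optimal value. Then raise each later payment as far as the local constraints allow.

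First I will check feasibility. Since $a$ is monotone, strict increasing differences on $\mathcal{A}\times\mathcal{T}$ let the binding local downward constraints chain together to give every global downward IC$[k\to j]$ for $k>j$. Write $U_i$ for type $t_i$'s utility under $p^\star$. Then $U_k - (v(a_j,t_k)-p^\star_j)$ is a sum of increments $[v(a_{l+1},t_l')-v(a_l,t_l')]$ evaluated at $t_k$ minus the same increments at $t_{l+1}$, for $l=j,\dots,k-1$. Each such term is nonnegative because $a_{l+1}\geq a_l$ and $t_k\geq t_{l+1}$. IR for every type follows from IR for $t_1$ together with downward IC and the monotonicity of $v$ in $t$.

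Second, I compare objectives. Under the original $p$ the local downward IC constraints give $U_{i+1}\geq U_i + v(a_i,t_{i+1})-v(a_i,t_i)$. Under $p^\star$ these hold with equality, and $U_1$ is unchanged, so $U^\star_i \leq U_i$ for all $i$. In other words, $p^\star$ lowers every type's rent by some $\delta_i=U_i-U^\star_i\geq 0$. The change in the objective is
\[
\sum_i \mu(t_i)\big(1-\lambda(t_i)\big)\delta_i .
\]
This is nonnegative if $\lambda\leq 1$ pointwise, but we only know $\E[\lambda]\leq 1$. Here the structure of $\delta$ matters: $\delta_i$ is nondecreasing in $i$, because the slack in each local constraint adds up as $i$ increases. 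Since $\lambda$ is nonincreasing and $\delta$ is nondecreasing, a Chebyshev sum inequality gives $\sum\mu\lambda\delta \leq (\sum\mu\lambda)(\sum\mu\delta)\leq \sum\mu\delta$. So the change is weakly positive and $p^\star$ is optimal too.

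Third, I handle $p_1$ and IR. If $U_1>0$, then lowering all rents by a constant $U_1$ (a uniform shift) changes the objective by $U_1(1-\E[\lambda])\geq 0$. So I may also take $U_1=0$. That shift is itself a nondecreasing $\delta$, so it fits the same argument, and I would simply fold it into $\delta$.

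The main obstacle is this second step. Because $\lambda$ can exceed $1$ for low types, lowering rents is not automatically weakly improving. The argument only works because the rent reductions are monotone in type while the weights decrease in type, and the Chebyshev/rearrangement inequality combines these with $\E[\lambda]\leq1$. I will check carefully that $\delta_i$ is indeed nondecreasing. It equals $\delta_1$ plus the accumulated slack $\sum_{l<i}$ (slack of IC$[(l+1)\to l]$ under $p$), which is nondecreasing since each slack is $\geq 0$.
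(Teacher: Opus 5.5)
Your proof is correct, and it takes a different route from the paper's.

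\textbf{The paper's route.} The paper argues by extremality. It fixes the monotone optimal $a$ and selects, among optimal payment rules, one that minimizes the total slack of the local downward constraints. Suppose some IC$[(i+1)\rightarrow i]$ were still slack. Then every crossing constraint IC$[k\rightarrow j]$ with $k>i\geq j$ has slack of at least some $\delta>0$. The paper uses this to make a revenue-neutral perturbation: types above $t_i$ pay $\delta/2$ more, and $q\delta/2$ is rebated to everyone. Abel summation with nonincreasing $\lambda$ shows the perturbation weakly improves the objective while strictly lowering total slack, which is a contradiction.

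\textbf{Your route.} You instead tighten all local constraints at once while holding $p^\star_1=p_1$. Your first display is garbled and should simply read $p^\star_1=p_1$; your third step is not needed for the lemma. You then verify global downward IC and IR directly from monotonicity of $a$ and increasing differences. You also observe that the rent reductions $\delta_i=\sum_{l<i}s_l$ are nonnegative and nondecreasing in $i$, where $s_l$ is the slack of IC$[(l+1)\rightarrow l]$. The objective change $\sum_i\mu(t_i)\big(1-\lambda(t_i)\big)\delta_i$ is therefore nonnegative. Your Chebyshev step is equivalent to the Abel-summation bound with tail sums $\sum_{i\geq k}\mu(t_i)\big(1-\lambda(t_i)\big)\geq 0$. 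This is exactly the inequality the paper itself uses to bound the objective along recession directions in its existence argument.

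\textbf{What each buys.} Your argument is constructive and gives the optimal payments explicitly. It avoids having to argue that a slack-minimizing optimal payment rule exists, and it does not need full support of $\mu$. The cost is that it uses $\E[\lambda]\leq 1$, because it raises revenue at the expense of total rent. The paper's revenue-neutral perturbation needs only that $\lambda$ is nonincreasing. This difference is harmless here, since $\E[\lambda]\leq 1$ is a maintained assumption.
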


The proof is in the appendix. If the objective is profit maximization only, then \Cref{lem:binding} is standard given \Cref{lem:mon}  because, given the monotonicity of $a$, any local downward IC constraint can be tightened while extracting weakly more revenue from
each type. With general welfare weights, however, raising payments can hurt: at any type with $\lambda(t_i) > 1$, the principal values the agent's payoff more than revenue. The proof is by a  \textit{\textbf{revenue-neutral perturbation}}. Fixing the monotone optimal allocation rule, consider, among the optimal payment rules, one that minimizes the sum of the slacks of all local downward IC constraints, and suppose for contradiction that some IC$[(i+1) \rightarrow i]$ remains slack. Then, given monotonicity of $a$ and increasing differences, every non-local downward IC constraint that crosses the pair $(i+1, i)$ must be slack by a uniform margin. Now, charge the types above $t_{i}$ slightly more and give the revenue back to all types as a lump sum. All downward IC constraints would be preserved under this revenue-neutral perturbation. Moreover, the net utility is shifted from higher types to lower types, which is a weak improvement by the redistributiveness of $\lambda$. The perturbed payment rule is thus also optimal. However, its total local slack is strictly smaller---a contradiction. 

Finally, by a standard argument, the mechanism $(a, p)$ from \Cref{lem:binding} must satisfy all upward IC constraints as well, concluding the proof sketch of \Cref{thm:downward}.

\subsection{Purification Lemma}

The final step is to purify the stochastic mechanism in \textbf{Step 2} to a deterministic mechanism. It turns out that whenever the stochastic mechanism uses ordered lotteries in the stochastic dominance sense and the principal has quasilinear preferences, there exists a purification. Formally, say that the principal has \textit{\textbf{generalized quasilinear preferences}} if the (ex post) objective value of assigning allocation $x$ to type $t$ with payment $p$ is given by 
$V(x, t) + W(t) \cdot p$ for some continuous $V$ and $W$. Clearly, our principal is a special case.

\begin{lemma}[Purification]\label{lem:purification}
Let $(\Y, \preceq)$ be any compact set of ordered allocations for which the agent's preferences satisfy strict increasing differences. Consider any IC and IR mechanism $(a, p): \mathcal{T}\rightarrow \mathcal{A} \times \R$ where $\mathcal{A} \subseteq \Delta(\Y)$ is totally ordered by $\preceq_{\emph{st}}$. Then, for any generalized quasilinear preferences of the principal, there exists a deterministic IC and IR mechanism $(x^\dagger, p^\dagger): \mathcal{T} \rightarrow \Y \times \R$ such that $(x^\dagger, p^\dagger)$ yields a weakly higher objective value than $(a, p)$. 
\end{lemma}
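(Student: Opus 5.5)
The plan is to purify by a monotone coupling. Since the agent's preferences on $(\Y,\preceq)$ satisfy strict increasing differences and $\mathcal{A}$ is totally ordered by $\preceq_{\text{st}}$, the map $a \mapsto v(a,t)$ has increasing differences on $\mathcal{A}$. I first reduce to a monotone allocation rule: by IC and strict increasing differences, $a(t)$ is $\preceq_{\text{st}}$-nondecreasing in $t$, at least after modification on a null set of ties, which does not affect the objective. Next, I fix a single uniform random variable $U \sim \mathrm{Unif}[0,1]$. For each lottery $a \in \mathcal{A}$, let $G_a^{-1}(u)$ be its quantile function with respect to the order $\preceq$ on $\Y$; this is well defined because $\Y$ is a compact totally ordered set. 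Stochastic ordering of $\mathcal{A}$ then gives $G_{a(t)}^{-1}(u)$ nondecreasing in $t$ for every fixed $u$.

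For each realization $u$, I define the deterministic allocation rule $x^u(t) := G^{-1}_{a(t)}(u) \in \Y$. This rule is monotone in $t$, so by strict increasing differences on $\Y$ it is implementable. I take the envelope payments
\[
p^u(t) := v(x^u(t),t) - \int_{\underline{t}}^{t} v_t(x^u(s),s)\,\d s - U_0,
\]
where $U_0 := v(a(\underline t),\underline t) - p(\underline t) \ge 0$ is the lowest type's original utility. Standard one-dimensional arguments (monotonicity plus the envelope formula) give full IC, and IR follows from $U_0 \ge 0$ and $v_t \ge 0$. Averaging over $u$, the expected allocation of each type is exactly the lottery $a(t)$. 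By Fubini, the expected agent utility of type $t$ is $U_0 + \int_{\underline t}^t \E_u[v_t(x^u(s),s)]\,\d s = U_0+\int_{\underline t}^t v_t(a(s),s)\,\d s$, which equals the utility under $(a,p)$ by the envelope theorem applied to the original IC stochastic mechanism. Hence $\E_u[p^u(t)] = p(t)$ for every $t$.

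Because the principal's preferences are generalized quasilinear, the objective $V(x,t)+W(t)p$ is linear in the pair (lottery, payment). Averaging over $u$ therefore reproduces exactly the objective value of $(a,p)$, under the natural extension $V(a,t)=\E_{x\sim a}V(x,t)$. So some realization $u^\star$ yields a weakly higher value, and I set $(x^\dagger,p^\dagger) := (x^{u^\star},p^{u^\star})$.

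The step I expect to be the main obstacle is making the envelope and measurability steps rigorous on an abstract compact ordered $\Y$. This involves three things: that $u \mapsto$ (objective of the $u$-mechanism) is measurable and integrable so that a maximizing realization exists, that the envelope representation holds for the original stochastic mechanism (which requires absolute continuity of $t\mapsto v(a(t),t)-p(t)$, obtainable from the uniform bound on $v_t$ on the compact domain), and that $x^u$ is measurable. If the envelope route is delicate, the fallback is to approximate by finite type grids, where IC reduces to adjacent constraints and the coupling argument is purely algebraic, and then pass to the limit using compactness of $\Y$ and continuity of $v$, $V$, and $W$.
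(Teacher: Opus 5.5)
Your proposal is correct and follows the paper's route: couple the stochastically monotone allocation rule into monotone deterministic rules, implement each with envelope payments that keep the lowest type's utility, use linearity of the principal's objective to match the original value on average, and select a realization at least as good as the average; your quantile coupling is just the totally ordered case of the Strassen coupling the paper cites. Two small tightenings: strict increasing differences on $\Y$ carry over to distinct $\preceq_{\text{st}}$-ordered lotteries (a monotone coupling puts positive mass on $Y \prec Y'$), so IC makes $a(\cdot)$ monotone at every type and no null-set modification is needed; and the quantile function is well defined because $\psi(x) := v(x,\overline{t}) - v(x,\underline{t})$ embeds $(\Y,\preceq)$ order-homeomorphically onto a compact subset of $\R$.
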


The proof is in the appendix. We sketch it here. By the Envelope theorem, the payments under a fully IC mechanism are pinned down by the allocation rule up to a constant. Then, since the principal has generalized quasilinear preferences, the objective is a \textit{\textbf{linear}} functional of the stochastic allocation rule (up to a constant). Since the lotteries in $\mathcal{A}$ are totally ordered by $\preceq_{\text{st}}$ and the agent's preferences have strict increasing differences, incentive compatibility implies that $a(\,\cdot\,)$ must be stochastically monotone as the type increases. By \citet{Strassen1965}, this stochastic allocation rule admits a \textit{\textbf{monotone coupling}}, i.e., it can be written as a mixture of deterministic monotone allocation rules. The objective is then equal to the \textit{\textbf{average}} of the objective values of a collection of deterministic monotone allocation rules. Since some realization must attain at least the average, taking that realized monotone allocation rule, which is implementable, then gives a weakly improving deterministic mechanism (with payments again given by the Envelope theorem).

\subsection{Proof Sketches for \Cref{thm:main2} and \Cref{thm:main3}}

\paragraph{Proof Sketch for \Cref{thm:main2}.}\hspace{-2mm}For the proof sketch, we assume a finite $\X$. Fix any stochastic mechanism $(a, p): \mathcal{T} \rightarrow \Delta(\X) \times \R$. Without loss, we can write $x(t, \varepsilon) \sim a(t)$ for all $t$, where $\varepsilon$ is a randomization device. We reconstruct an alternative stochastic mechanism $(\tilde{a}, \tilde{p}): \mathcal{T} \rightarrow \Delta(\X^\star) \times \R$ as follows. We keep the transfer rule unchanged. Fixing any $t$, for each $\varepsilon$, we construct $\tilde{a}(t, \varepsilon)$ from $x(t, \varepsilon)$ exactly as in \Cref{lem:reconstruct}, and then assign type $t$ the compound lottery resulting from the lottery $\tilde{a}(t, \varepsilon)$ and randomization $\varepsilon$. Since, for every realized $\varepsilon$, the reconstruction preserves each type's payoff from truthful reporting and weakly lowers each type's payoff from downward deviations, the mechanism $(\tilde a, \tilde p)$ must satisfy all downward IC constraints (averaging over $\varepsilon$). However, this stochastic allocation rule need not use stochastically ordered lotteries. Thus, we perform another round of modification as follows. For each type $t$, we replace $\tilde{a}(t)$ with a lottery $\hat{a}(t) \in \mathcal{A}$ supported on two adjacent elements of $\X^\star$ such that $v(\hat{a}(t), t) = v(\tilde{a}(t), t)$. Such a lottery exists by the intermediate value theorem. Note that the lottery $\hat{a}(t)$ is a \textit{\textbf{mean-preserving contraction}} of $\tilde{a}(t)$ in the \textit{\textbf{utility}} space of type $t$. Under the ordered incremental elasticity condition, the utility function of any higher type over the frontier elements is a \textit{\textbf{convex}} transformation of the utility function of type $t$. Thus, the replacement weakly decreases the deviation payoffs of all higher types; in particular, the demand curve generated by $\hat{a}$ single-crosses that generated by $\tilde{a}$ from below at the quantile of type $t$. Then, just like in the proof of \Cref{thm:main}, all downward IC constraints continue to hold under the modified mechanism $(\hat{a}, \tilde{p})$. Now, the newly modified mechanism uses lotteries in $\mathcal{A}$, which is totally ordered by $\preceq_{\text{st}}$. The rest of the proof follows identically to that of \Cref{thm:main}.

\paragraph{Proof Sketch for \Cref{thm:main3}.}\hspace{-2mm}To see why a generalized frontier is an optimal menu, fix any mechanism $(x, p)$ and fix any type $t$. By the definition of a covering, there exists some lottery $\hat{a}(t) \in \mathcal{A}$ whose demand curve weakly single-crosses that generated by the assigned allocation $x(t)$ from below at the quantile of type $t$---in particular, type $t$ is indifferent between $\hat{a}(t)$ and $x(t)$, whereas the deviation payoff of any higher type weakly decreases. Then, as in the proof of \Cref{thm:main}, the mechanism $(\hat{a}, p)$ is downward incentive compatible and individually rational, and it uses only lotteries that are stochastically ordered and supported on the elements in $\X^\star$. It follows immediately by \textbf{Step 2} and \textbf{Step 3} of the proof of \Cref{thm:main} that it can be further weakly improved by a deterministic mechanism $(\tilde{x}, \tilde{p})$ that is fully incentive compatible and uses only allocations in the generalized frontier $\X^\star$. Now, to see why the same ordered incremental elasticity condition implies stochastic optimality, note that just like in the proof of \Cref{thm:main2}, we can reconstruct a stochastic mechanism for each realization of the randomization device $\epsilon$ in the original mechanism---since at each type $t$, for each $\epsilon$, the reconstructed demand curve single-crosses the original demand curve at the \textit{\textbf{same}} crossing point (the quantile of type $t$), the pointwise mixture of the reconstructed demand curves must also single-cross the pointwise mixture of the original demand curves (mixing over $\epsilon$) at the quantile of type $t$. The rest then follows immediately by the proof of \Cref{thm:main2}.

\section{Applications}\label{sec:application}

\subsection{Nested Bundling: Robustness and Redistribution}\label{subsec:bundling}

Suppose that there are $n$ goods $\{1, \dots, n\} =: G$. A \textit{\textbf{bundle}} $b \in 2^{G}$ is a set of goods. Let $\mathcal{X} = 2^{G}$ be the allocation space. A  \textit{\textbf{menu}} $B \subseteq 2^{G}$ is then a set of bundles. A consumer's valuation $v(b, t)$ is monotone in $b$ in the set-inclusion order and strictly increasing in $t$. 

A menu is \textit{\textbf{nested}} if it can be totally ordered by set inclusion. As argued in \citet{yang2023nested}, offering a nested menu (\textit{\textbf{nested bundling}}) where more expensive bundles include all the goods of the less expensive ones appears to be common in practice.\footnote{Examples include tiered pricing for consumer AI services such as ChatGPT and Claude.} An immediate consequence of \Cref{thm:main} and \Cref{thm:main2} is the following new condition for when nested bundling is optimal: 
\begin{prop}[Robust Nesting]\label{prop:nesting}
For any nested menu $B$, if: 
\begin{itemize}
    \item[(i)] for any nonempty $b_1, b_2 \in B$ with $b_1 \subset b_2$
    \[\frac{\d}{\d t} \log v(b_1, t) < \frac{\d}{\d t} \log v(b_2, t) \text{ for all $t$} \]
    \item[(ii)] for any $b_1 \notin B$, there exists $b_2 \in B$ such that $b_1 \subset b_2$, and  
    \[\frac{\d}{\d t} \log v(b_1, t) \geq \frac{\d}{\d t} \log v(b_2, t) \text{ for all $t$} \]
\end{itemize}
then the menu $B$ is optimal for all $F$ and redistributive $\lambda$. If, in addition, we have: 
\begin{itemize}
    \item[(i')] Let $B = \{b_i\}$ where $\varnothing = b_0 \subset b_1  \subset \cdots \subset b_m$. For any $i = 1, \dots, m-1$, 
    \[\frac{\d}{\d t} \log \Big(v(b_{i}, t) - v(b_{i-1}, t)\Big) < \frac{\d}{\d t} \log\Big(v(b_{i+1}, t) - v(b_{i}, t)\Big)  \text{ for all $t$}\]
\end{itemize}
then the menu $B$ is optimal among stochastic mechanisms for all $F$ and redistributive $\lambda$.
\end{prop}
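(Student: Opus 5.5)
The plan is to verify that $B$, with the empty bundle removed (or kept as the outside option), is a surplus-elasticity frontier, and then apply \Cref{thm:main}. Under (i') it is moreover a strong frontier, so \Cref{thm:main2} applies. The translation relies on the equivalences (1) and \eqref{eq:log-d}. Monotonicity of $v$ in the set-inclusion order gives $b_1 \subseteq b_2 \Rightarrow b_1 \preceq_{\text{surplus}} b_2$. A pointwise inequality between the log-derivatives in $t$ is equivalent to the reverse pointwise elasticity comparison.

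\emph{Anti-chain condition.} Write $B\setminus\{\varnothing\} = \{b_1 \subset \cdots \subset b_m\}$ and index by $s = i$. For $i<j$, condition (i) gives $\frac{\d}{\d t}\log v(b_i,t) < \frac{\d}{\d t}\log v(b_j,t)$ for all $t$. By \eqref{eq:log-d}, this is $b_i \succ_{\text{elasticity}} b_j$ strictly on $(0,1)$, since the factor $f/(1-F)$ is positive there. For the strict surplus order $b_i \prec_{\text{surplus}} b_j$, weak monotonicity alone is not enough. I would argue as follows: $\log v(b_j,\cdot) - \log v(b_i,\cdot)$ is strictly increasing and $v(b_j,\underline t) \ge v(b_i,\underline t)$, so the inequality is strict for every $t > \underline t$, i.e.\ for every $q \in (0,1)$. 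This is the only point requiring care with the strictness conventions in the definition.

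\emph{Domination condition.} For $b \notin B$, condition (ii) directly supplies $b' \in B$ with $b \subset b'$, hence $b \preceq_{\text{surplus}} b'$. It also supplies the log-derivative inequality, which is $b \preceq_{\text{elasticity}} b'$ by \eqref{eq:log-d}. The set $B$ is finite, hence compact. So $B$ is a surplus-elasticity frontier, and \Cref{thm:main} yields optimality. Since the frontier definition makes no reference to $F$ or $\lambda$, the conclusion holds for every $F$ and every redistributive $\lambda$. The positivity of $v(b,t)$ for $b\neq\varnothing$ is implicit in the hypotheses, since logs are taken.

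\emph{Stochastic optimality under (i').} I would check the ordered incremental elasticity property, which by the footnote to \eqref{eq:incremental-ela} is equivalent to the pointwise log-derivative comparison of consecutive increments. Condition (i') states exactly this for adjacent increments, and for finite $S$ adjacent increments suffice. It remains to check that the incremental demand curves are downward sloping, i.e.\ that $v(b_{i},t)-v(b_{i-1},t)$ is strictly increasing in $t$. This follows from the increasing-differences implication (5) of the anti-chain property. For $i=1$ the increment is $v(b_1,t)$, which is strictly increasing in $t$ by assumption. Hence $B$ is a strong frontier, and \Cref{thm:main2} gives optimality among stochastic mechanisms, again for all $F$ and redistributive $\lambda$.

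There is no real obstacle here beyond bookkeeping with strict versus weak inequalities. The mild subtlety is strictness of the surplus order, handled above.
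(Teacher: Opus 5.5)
Your proposal is correct and follows the paper's proof: show that $B$ is a surplus-elasticity frontier and apply \Cref{thm:main}, then show that under (i') it is a strong frontier and apply \Cref{thm:main2}. The paper states this in two lines. You additionally spell out two points it leaves implicit: the strictness of the surplus order (a strictly increasing log-ratio combined with weak monotonicity at $\underline t$), and the downward slope of the increments (via increasing differences (5)).
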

\begin{proof}
Under conditions \textit{(i)} and \textit{(ii)}, $B$ is a surplus-elasticity frontier and hence optimal by \Cref{thm:main}. If in addition we have condition \textit{(i')}, then $B$ is a strong frontier, and hence optimal among all stochastic mechanisms by \Cref{thm:main2}.    
\end{proof}

To illustrate \Cref{prop:nesting}, consider the following example: 
\begin{ex}\label{ex:bundling}
Suppose there are two goods $\{1, 2\}$. The bundle value is given by 
\[v(\{1, 2\}, t) = \kappa \cdot \Big(v(\{1\}, t) + v(\{2\}, t)\Big)\,,\]
where $\kappa > 0$ is a constant such that $v(\{1, 2\}, t) > \max\big\{v(\{1\}, t),\, v(\{2\}, t)\big\}$ for all $t$. The two goods are complements if $\kappa > 1$, substitutes if $\kappa < 1$, and additive if $\kappa = 1$. 
Suppose that good $2$ has a pointwise more elastic demand curve than good $1$, i.e., $\frac{\d}{\d t}\log v(\{2\}, t) < \frac{\d}{\d t}\log v(\{1\}, t)$ for all $t$. By the mediant inequality, we have that for all $t$ 
\[\frac{\d}{\d t}\log v(\{2\}, t) < \frac{\d}{\d t} \log v(\{1, 2\}, t) < \frac{\d}{\d t}\log v(\{1\}, t)\,.\]
Therefore, the item $\{1\}$ always has a less elastic demand curve than the bundle $\{1, 2\}$. It follows immediately by \Cref{prop:nesting} that nested bundling is optimal with an optimal menu given by $\X^\star = \big\{\{2\}, \{1, 2\}\big\}$, regardless of the parameter $\kappa$ (i.e., whether the two goods are complements or substitutes), regardless of type distributions $F$, and regardless of redistributive welfare weights $\lambda$. Moreover, it can be verified that $\X^\star = \big\{\{2\}, \{1, 2\}\big\}$ is a strong frontier, i.e., it satisfies condition $(i')$ of \Cref{prop:nesting}---therefore, $\X^\star =  \big\{\{2\}, \{1, 2\}\big\}$ is optimal even among all stochastic mechanisms.\footnote{In fact, by the mediant inequality, any surplus-elasticity frontier containing two elements (besides the outside option) is always a strong frontier and hence optimal among stochastic mechanisms by \Cref{thm:main2}.} \qed 
\end{ex}

The \textit{\textbf{robust nesting condition}} in \Cref{prop:nesting}, compared to the nesting condition derived in \citet{yang2023nested}, provides two additional conceptual insights.  

First, the robust nesting condition, by virtue of \Cref{thm:main}, shows that once there exists an appropriate frontier, the menu must be optimal for \textit{all} type distributions and \textit{all} redistributive welfare weights. Thus, unlike in \citet{yang2023nested} where the demand conditions depend on the type distributions and are specific to profit maximization, \Cref{prop:nesting} provides a clean decomposition between menu design and optimal pricing analogous to the one-good case of \citet{Myerson1981}: \textit{(i)} once the seller can identify the frontier from the demand system, she can already design an optimal menu just like in \Cref{ex:bundling}, and \textit{(ii)} as the seller obtains more information about the type distribution (e.g., via price experimentation), she can further refine the prices of different bundles in the menu. This is especially important in markets with new products---the sellers often lack detailed demand information, and often find it much easier to do price experimentation than menu experimentation.\footnote{For such markets with new products, \citet{pernoud2025bundling} provide a \textit{microfoundation} for why comonotonic type distributions as assumed by our model arise endogenously via consumer learning.}

Second, the robust nesting condition, as an immediate consequence of \Cref{thm:main}, also illuminates the nesting condition in \citet{yang2023nested} by organizing the key economic forces in the comparison of surplus and elasticity. \citet{yang2023nested} considers the partial order defined by \textit{(i)} set inclusion and \textit{(ii)} sold-alone quantities, and shows that if the undominated elements with respect to this partial order are nested, then this nested menu is optimal for profit maximization under quasi-concavity assumptions. \Cref{prop:nesting} shows that the dominance criterion can be broadly viewed as comparing demand curves in terms of their levels and their elasticities---whenever an appropriate frontier exists, it becomes optimal. This unifying perspective helps explain the perhaps surprising feature that the complementarity/substitutability patterns over the items for a given consumer are \textit{not} crucial for the optimality of nested bundling as shown in \Cref{ex:bundling}---these patterns are already taken into account by the surplus comparisons. It also helps explain why the sold-alone quantities are important in \citet{yang2023nested}: A higher sold-alone quantity corresponds to a demand curve with a larger elastic region; the seller only offers smaller bundles that are more elastic in order to target the lower end of the consumers.

\paragraph{Redistributive Bundling.}\hspace{-2mm}Another benefit of \Cref{prop:nesting}, relative to all of the bundling literature, is that it covers settings with redistributive preferences. To illustrate this, we consider a special case where $B = \{\overline{b}\}$ and $\overline{b} := \{1, \dots, n\}$, i.e., pure bundling. Then, \Cref{prop:nesting} immediately recovers the ratio-monotonicity condition of \citet{haghpanah2021pure} and shows that their condition actually gives the optimality of pure bundling even with redistributive concerns. More generally, we can relax the one-dimensional types here as in \citet{haghpanah2021pure}. Say that \textit{\textbf{stochastic ratio-monotonicity}} holds if $\frac{1}{v^{\overline{b}}} \mathbf{v}$ is stochastically nondecreasing in $v^{\overline{b}}$ for the random vector $\mathbf{v}:=(v^{b})_{b \in 2^{G}}$ of bundle values. Let the welfare weights and bundle values $(\lambda, \mathbf{v})$ be jointly distributed. 
\begin{cor}\label{cor:pure}
Suppose $v^{\overline{b}}$ is a sufficient statistic for redistribution in that $\E[\lambda \mid \mathbf{v}] = \E[\lambda \mid v^{\overline{b}}]$ which is continuous, nonincreasing in $v^{\overline{b}}$. Then, under stochastic ratio-monotonicity, we have: 
\begin{itemize}
    \item[(i)] Pure bundling is optimal among all stochastic mechanisms\,;
    \item[(ii)] The optimal mechanism satisfies the following comparative statics: Letting $\lambda(s) := \E[\lambda \mid q = s]$ where $q$ is the quantile of grand-bundle value, if the welfare weights are modified according to the following weak majorization order: $\int_{z}^1 \lambda(s) \d s \leq \int_{z}^1 \widetilde{\lambda}(s) \d s$ for all $z \in [0, 1]$,
    then the optimal mechanism lowers the grand-bundle price in the strong set order\,. 
\end{itemize}
\end{cor}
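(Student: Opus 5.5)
Part (i) reduces to a one-dimensional problem indexed by the grand-bundle value $w := v^{\overline{b}}$, and the plan is to apply the machinery behind \Cref{thm:main3} to it. Part (ii) then becomes a comparative-statics exercise for a single posted price.

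\textbf{Step 1: reduction to $w$.} Because $\E[\lambda\mid\mathbf{v}]$ depends only on $w$, the principal's objective in any mechanism can be written with weight $\lambda(w)$ in place of $\lambda$. The agent's type is the whole vector $\mathbf{v}$, not just $w$. So I would work with the conditional law of the normalized vector $\mathbf{v}/w$ given $w$.

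\textbf{Step 2: covering by the grand bundle.} Fix any stochastic mechanism. For a type with grand-bundle value $w$ who receives a lottery $a$ over bundles, let $r(a,\mathbf{v}) := \E_{b\sim a}[v^b]/w \in [0,1]$. I would replace the assignment by the lottery that gives $\overline{b}$ with probability $\E[r(a(\mathbf{v}),\mathbf{v})\mid \cdot\,]$, keeping payments fixed, so that utility from truthful reporting is preserved in expectation.

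For downward IC, consider a higher type $\mathbf{v}'$ with $w'>w$ deviating to $\mathbf{v}$'s report. Its payoff from the original lottery is $w'\,\E_{b\sim a}[v'^b/w']$. Stochastic ratio-monotonicity lets me couple $\mathbf{v}'/w'\geq \mathbf{v}/w$ componentwise. Under that coupling the deviation payoff from the original lottery is at least $w'\,r(a,\mathbf{v})$, which is exactly the payoff from the replacement. So the replacement weakly lowers downward-deviation payoffs, and this is the covering/single-crossing property.

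I expect the main obstacle to be making the reporting space one-dimensional. Types sharing the same $w$ may receive different lotteries, and a type may also deviate to a type with the same $w$. I would handle this by first averaging each mechanism over the conditional distribution of $\mathbf{v}$ given $w$. This yields a mechanism indexed by $w$ alone, with allocation $\alpha(w)\in[0,1]$ equal to the probability of $\overline{b}$ and payment $p(w)$, and it satisfies all downward IC and IR constraints. The values $v(\alpha,w)=\alpha w$ have strict increasing differences. Applying \Cref{thm:downward} and then \Cref{lem:purification} with $\mathcal{A}=\{\alpha\delta_{\overline b}+(1-\alpha)\delta_\varnothing\}$, which is totally ordered by $\preceq_{\text{st}}$, gives a deterministic IC mechanism offering only $\overline{b}$, i.e., a posted price. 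This establishes (i).

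\textbf{Step 3: comparative statics for (ii).} With a posted price selling to quantile mass $z$ (the top $z$ quantiles of $w$), the objective is
\[
\Phi(z;\lambda) = z\,P(z) + \int_0^z \lambda(s)\big(P(s)-P(z)\big)\,\d s ,
\]
where $P(s)=F_w^{-1}(1-s)$. Equivalently, $\Phi(z;\lambda)=\int_0^z\lambda P + P(z)\big(z-\Lambda(z)\big)$ with $\Lambda(z)=\int_0^z\lambda$. The goal is to show that $\Phi$ has increasing differences in $(z,\lambda)$ under the stated majorization order; Topkis then gives that $z^\star$ rises, i.e., the price falls in the strong set order.

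Take $\delta=\widetilde\lambda-\lambda$, so that $\int_z^1\delta\geq0$ for all $z$. For $z<z'$, compute
\[
\Phi(z';\widetilde\lambda)-\Phi(z';\lambda)-\big[\Phi(z;\widetilde\lambda)-\Phi(z;\lambda)\big].
\]
The aim is to rewrite this by integration by parts as an integral of $\int_s^1\delta$ against nonnegative measures built from the decreasing function $P$. This amounts to writing $\int_0^z\delta(s)\big(P(s)-P(z)\big)\,\d s=\int_0^z \big(D(s)-D(z)\big)\,\d(-P)(s)$ with $D(s):=\int_s^1\delta$ (using $D(0)-D(z)=\int_0^z\delta$). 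The sign of the resulting expression is the delicate part, and I would verify it there, possibly using $\E[\lambda]\le1$ together with monotonicity of $\lambda$.
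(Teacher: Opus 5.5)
Your argument for (i) is sound, and it takes a different route from the paper. The paper uses one Strassen representation $\mathbf{v}\eqid\big(t,(v(b,t;\varepsilon))_{b\neq\overline b}\big)$ with an independent $\varepsilon$. It then solves the relaxed problem in which the principal observes $\varepsilon$; each slice is a one-dimensional problem where $\{\overline b\}$ is a strong frontier, so \Cref{prop:nesting}/\Cref{thm:main2} applies. Finally it notes that the optimal posted price does not depend on $\varepsilon$.

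You instead pool all types with the same $w$. IC holds for every pair of reports, so a pairwise coupling of the conditional laws at $w<w'$ with $\mathbf v'/w'\ge \mathbf v/w$ shows that the averaged mechanism $(\alpha(w),\E[p\mid w])$ is downward IC. The comparison holds only along the coupling, not pair by pair, so the averaging you perform is essential. \Cref{thm:downward} and \Cref{lem:purification} on $\{\alpha\delta_{\overline b}+(1-\alpha)\delta_\varnothing\}$ then finish. Your version needs only pairwise couplings and no relaxed problem. The paper's is shorter because it reuses \Cref{thm:main2} as a black box.

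The gap is in (ii), and it is exactly the step you leave open. There are two problems.

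\textbf{The hypothesis is in the wrong orientation.} $\lambda(s)$ is indexed by the CDF quantile $q=F(w)$, as in the paper's proof, so $s$ near $1$ means high grand-bundle values. In your demand-quantile coordinate $P(s)=F_w^{-1}(1-s)$, the hypothesis therefore reads $\int_0^r\delta(s)\,\d s\ge0$ for all $r$, not $\int_z^1\delta\ge 0$. Under your orientation the claim is false. Take $w\sim U[0,1]$ and $\lambda\equiv\tfrac12$, so the optimal price is $\tfrac13$, and set $\widetilde\lambda(w)=\tfrac12+\epsilon(\tfrac12-w)$. In your coordinate $\int_z^1\delta=\epsilon z(1-z)/2\ge0$. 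Yet the derivative of the objective at $p=\tfrac13$ becomes $\epsilon/9>0$, so the price strictly rises.

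\textbf{Your integration by parts is wrong.} Writing $P(s)-P(z)=\int_s^z\d(-P)$ and applying Fubini gives
\[
\Phi(z;\widetilde\lambda)-\Phi(z;\lambda)=\int_0^z\delta(s)\big(P(s)-P(z)\big)\,\d s=\int_0^z\Big(\int_0^r\delta(s)\,\d s\Big)\,\d(-P)(r).
\]
So the integrand is $D(0)-D(r)$, not $D(s)-D(z)$. The latter has no definite sign under either orientation, which is why you got stuck.

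With the correct orientation the integrand is nonnegative and $\d(-P)\ge 0$. The difference is therefore nondecreasing in $z$ simply because the domain of integration grows. Neither $\E[\lambda]\le 1$ nor monotonicity of $\lambda$ is needed. Topkis (Milgrom--Shannon) then gives a larger optimal quantity, i.e., a lower price in the strong set order. This is exactly the paper's computation, $\mathcal W(\widetilde\lambda,k)-\mathcal W(\lambda,k)=\int_k^1\big[\int_q^1(\widetilde\lambda-\lambda)\,\d s\big]\,\d F^{-1}(q)$, which is nonincreasing in the threshold $k=1-z$.
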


\Cref{cor:pure} broadens the scope to which \citet{haghpanah2021pure}'s condition applies and shows that in many redistributive contexts, it could be optimal to simply offer \textit{\textbf{broad access}} to resources as a single option without further screening of the recipients. This is perhaps surprising because the principal cares more about exactly the types who tend to have \textit{\textbf{lower}} values for the grand bundle, and yet the principal does not provide any smaller bundle for them to consume---by \Cref{cor:pure}, the only instrument the principal uses is to lower the price for the grand bundle and expand the market access to help as the welfare weights $\lambda$ increase pointwise.  Perhaps also surprisingly, fixing the average welfare weight $\E[\lambda]$, \Cref{cor:pure} asserts that the grand-bundle price \textit{\textbf{increases}} as the designer shifts the welfare weights from the high types to the low types. The reason is that giving more rents to the lower types under a pure-bundling mechanism necessarily implies more rents to the higher types given a single posted price, working against the redistributive objective. Instead, the designer raises the price for the all-access bundle to obtain a higher revenue at the expense of more exclusion. Nevertheless, the designer never introduces any smaller bundle for the poorer agents to consume, because the grand bundle itself is the surplus-elasticity frontier under the ratio-monotonicity condition.

\subsection{Optimal Taxation and Ordeals}\label{subsec:ordeal}

Consider the optimal taxation problem with quasilinear preferences (\citealt{diamond1998optimal}). In addition to the tax instrument, suppose that the government can also utilize ordeal mechanisms to induce self-targeting. As shown in \citet{Nichols1982b}, ordeals that purely destroy surplus can increase the total welfare achievable under nonlinear taxation by helping with screening. When should we expect an ordeal to be useful? We apply the main results to study this question. 

In this environment, an agent of type $t$ has the payoff given by $l - \psi(l, y, t) -  T$, where $l$ is the \textit{\textbf{labor income}}, $y$ is the \textit{\textbf{ordeal}}, and $T$ is the \textit{\textbf{tax}}. The function $\psi$ is the \textit{\textbf{disutility}} function, which we assume is strictly decreasing in the ability type $t$. The government maximizes the weighted welfare objective
\[\E[\lambda (t) (l(t) - \psi(l(t), y(t), t) -  T(t))]\]
subject to IC and the \textit{\textbf{budget balance}} constraint: $ \E\big[T(t)\big] \geq E$, where $E \geq 0$ is an exogenous public expenditure.  We assume that $\lambda(t)$ is redistributive in the sense that $\lambda(t)$ is weakly decreasing in the ability type $t$. 

To see how this problem is nested in our framework, note that for any IC mechanism, uniformly shifting the tax by a constant preserves IC.  Therefore, as a normalization, for any IC mechanism, we can shift the tax by a constant so that all types have weakly positive utility, i.e., the net utility $U(t) \geq 0$ for all $t$. Thus, we can equivalently optimize over IC and IR mechanisms with the objective being $\E[\lambda (t) U(t)] - \E[\lambda (t)] \cdot  \big(E - \E[T(t)]\big)$, where the additional term comes from uniformly shifting the tax back so that the budget constraint binds. Now, the above objective, after dropping the constant $\E[\lambda (t)] E$, is equivalent to $\E[\tilde{\lambda}(t) U(t)] + \E[T(t)]$,
where $\tilde{\lambda}(t) = \lambda(t) / \E[\lambda(t)]$. Thus, this is a special case of our main model with the average welfare weight equal to $1$. In this case, the allocation space $\mathcal{X} \subset \R^2_+$ consists of elements $(l, y)$, and we assume that $\{(l, 0)\}_{l \in [\underline{l}, \overline{l}]} \subset \mathcal{X}$ and $\psi(l, 0, t) \leq \psi(l, y, t)$ for all $(l, y) \in \X$, so $y = 0$ represents the absence of costly screening. We assume that $\psi(l, 0, t)$ is strictly submodular in $(l, t)$, continuous in $l$, and satisfies $\psi(0, 0, t) = 0$.

Our next result shows that there are two key tests that determine whether an ordeal should be used in the presence of tax instruments: \textit{(i)} it should sort agents in the right direction, and \textit{(ii)} even when the sorting direction is right, the ordeal should increase the screening effectiveness of the tax instrument. If either fails, nonlinear taxes alone are optimal. Passing the second test, as we discuss, is equivalent to increasing the take-up elasticity with respect to a small reduction in tax liability, in a sense made precise below. 

\begin{prop}\label{prop:ordeal}
Suppose that either of the following two conditions holds: 
\begin{itemize}
    \item[(a)] Either, $\psi(l, y, t) - \psi(l, 0, t)$ is nonincreasing in $t$; 
    \item[(b)] or, $l-\psi(l, 0, t)$ is log-supermodular in $(l, t)$ and $l - \psi(l, y, t) \geq 0$ with $\frac{l - \psi(l, y, t)}{l - \psi(l, 0, t)} $ nondecreasing in $t$ for all $(l, y) \in \mathcal{X}$. 
\end{itemize}
Then, $\X^\star = \{(l, 0)\}_{l\in [\underline{l}, \overline{l}]}$ is an optimal menu.
\end{prop}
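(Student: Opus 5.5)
We verify that $\X^\star=\{(l,0)\}_{l\in[\underline{l},\overline{l}]}$ is a generalized frontier, so that \Cref{thm:main3} applies. The reduction preceding the proposition already casts the taxation problem as a special case of the main model. In that model the allocation is $x=(l,y)$, the agent's value is $v((l,y),t)=l-\psi(l,y,t)$, and the welfare weights $\tilde\lambda=\lambda/\E[\lambda]$ are redistributive with mean $1$. The definition of a generalized frontier does not depend on $F$ or $\lambda$, so it is enough to check its two defining conditions, ordering $\X^\star$ by $l$ and taking $\mathcal{A}$ to be the Dirac measures on $\X^\star$. We may also append lotteries between $\varnothing$ and $(\underline{l},0)$ to $\mathcal{A}$; this preserves the total order under $\preceq_{\text{st}}$.

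For condition \textit{(ii)} (increasing differences), take $l<l'$. Then $v((l',0),t)-v((l,0),t)=(l'-l)-[\psi(l',0,t)-\psi(l,0,t)]$. This is strictly increasing in $t$ because $\psi(\,\cdot\,,0,\,\cdot\,)$ is strictly submodular. Hence $P((l',0),q)-P((l,0),q)$ is strictly decreasing in $q$, as required.

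Condition \textit{(i)} (covering) is the main step. Fix $(l,y)\in\X$ and a type $t$. Since $\psi(l,0,t)\le\psi(l,y,t)$, we have $v((l,y),t)\le v((l,0),t)$. By continuity of $\psi(\,\cdot\,,0,t)$ and the intermediate value theorem, there is $l^\star_t\le l$ with $v((l^\star_t,0),t)=v((l,y),t)$. If $v((l,y),t)$ lies below $v((\underline{l},0),t)$, we instead use a lottery between $\varnothing$ and $(\underline{l},0)$ that gives type $t$ the same utility. It remains to show that $D(s):=v((l,y),s)-v((l^\star_t,0),s)$ weakly single-crosses zero from below at $s=t$. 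Then, translating types to quantiles via $q=1-F(s)$, the demand curve of $(l^\star_t,0)$ weakly single-crosses that of $(l,y)$ from below at the quantile of $t$. Since $t$ is arbitrary, $(l,y)$ is covered by $\mathcal{A}$.

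In case \textit{(a)}, write
\[D(s)=\big[v((l,y),s)-v((l,0),s)\big]+\big[v((l,0),s)-v((l^\star_t,0),s)\big].\]
The first bracket is $-[\psi(l,y,s)-\psi(l,0,s)]$, which is nondecreasing in $s$ by \textit{(a)}. The second bracket is nondecreasing in $s$ by the increasing differences just established, because $l^\star_t\le l$. So $D$ is nondecreasing with $D(t)=0$, which is exactly the required single crossing.

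In case \textit{(b)}, we follow the proof of \Cref{prop:costly} and work with ratios instead of differences, using that all values are nonnegative. Write
\[\frac{v((l,y),s)}{v((l^\star_t,0),s)}=\frac{v((l,0),s)}{v((l^\star_t,0),s)}\cdot\frac{v((l,y),s)}{v((l,0),s)}.\]
The first factor is nondecreasing in $s$ by log-supermodularity of $l-\psi(l,0,t)$, and the second is nondecreasing by the ratio assumption in \textit{(b)}. The product equals $1$ at $s=t$, so $D$ again changes sign from $\le0$ to $\ge0$ at $t$.

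The main obstacle is making the covering rigorous at the edges rather than the core single-crossing logic. First, where $v((l^\star_t,0),s)=0$ (for instance $l^\star_t=0$) the ratio argument must be replaced by a direct sign argument. Second, the lottery with $\varnothing$ used at low utility levels must itself single-cross correctly. A mixture $\alpha\,(\underline{l},0)$ has value $\alpha\,v((\underline{l},0),s)$, and comparing it with $v((l,y),s)$ reduces to the same ratio and difference monotonicity. With $\X^\star$ established as a generalized frontier, \Cref{thm:main3} yields that $\X^\star$ is an optimal menu.
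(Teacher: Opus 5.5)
Your case (b) is fine and follows the paper's route: it is \Cref{prop:costly} applied with $u(l,0,t)=l-\psi(l,0,t)$. The gap is in case (a). Your covering step needs, at each type $t$, a lottery over $\X^\star\cup\{\varnothing\}$, built from elements with $l^\star\le l$, that gives type $t$ exactly $v((l,y),t)=l-\psi(l,y,t)$. Such lotteries never give type $t$ less than $\min\{0,\min_{l'\le l}v((l',0),t)\}$. Your fallback to lotteries between $\varnothing$ and $(\underline{l},0)$ only covers values between $0$ and $v((\underline{l},0),t)$.

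Nonnegativity of $l-\psi(l,y,t)$ is assumed only in (b). Under (a), the ordeal can push $v((l,y),t)$ below the values of all frontier lotteries, and then covering fails for every choice of $\mathcal{A}\subseteq\Delta(\X^\star)$, not just for your construction. Take $\psi(l,y,t)=l^2/(2t)+y(2-t/4)$ with $l\in[0,1]$, $y\in\{0,1\}$, $t\in[1,2]$. Condition (a) holds, $\psi(\,\cdot\,,0,\,\cdot\,)$ is strictly submodular, and $\psi(0,0,t)=0$. Every lottery over $\X^\star\cup\{\varnothing\}$ gives every type a value $\ge 0$, because $l'-l'^2/(2t)\ge 0$ on this domain. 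But $(0,1)$ gives $-(2-t/4)<0$. So $(0,1)$ is not covered, $\X^\star$ is not a generalized frontier, and \Cref{thm:main3} cannot be invoked. The case is not vacuous: $(0,1)$ can be assigned in an IC and IR mechanism with a subsidy. The paper's own additive-ordeal example under (a) has the same feature, since $v((0,y),t)=-\alpha y/t^\beta<0$.

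The fix is to reconstruct on the payment side, as the paper does; this needs no intermediate-value step and no sign condition. Given an IC mechanism $t\mapsto(l(t),y(t),T(t))$, replace it with $(l(t),0,\widetilde T(t))$, where $\widetilde T(t)=T(t)+\psi(l(t),y(t),t)-\psi(l(t),0,t)$. Truthful payoffs are unchanged and taxes weakly rise, so IR holds and the objective weakly rises. For $s>\hat t$, write $(l,y)=(l(\hat t),y(\hat t))$. Type $s$'s payoff from reporting $\hat t$ changes by $[\psi(l,y,s)-\psi(l,0,s)]-[\psi(l,y,\hat t)-\psi(l,0,\hat t)]$, which is $\le 0$ by (a). The modified mechanism is therefore downward IC and IR and uses only $\X^\star$, where preferences have strict increasing differences. \Cref{thm:downward} then delivers a fully IC mechanism on $\X^\star$ that is weakly better. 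Your allocation-side argument is a valid alternative for (a) only under an extra range condition, such as $l-\psi(l,y,t)\ge 0$.
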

\begin{proof}
Under condition (a), note that fixing any mechanism $t \mapsto (l(t), y(t), T(t))$, we can replicate the same truth-telling payoff by setting $\tilde{y}(t) = 0$ and $\widetilde{T}(t) = T(t) + \big(\psi(l(t), y(t), t) - \psi(l(t), 0, t)\big)$. Moreover, under condition (a), note that after this reconstruction, the deviation payoff of any type $t$ misreporting a lower type  $\hat{t} < t$ must weakly decrease. Thus, the modified mechanism must be downward incentive compatible (as defined in \Cref{sec:proof}). Then, by the generalized downward sufficiency theorem (\Cref{thm:downward}), it follows immediately that there exists a fully IC mechanism that involves no costly screening and weakly improves on the original mechanism, proving the result. Under condition (b), this is an immediate consequence of \Cref{prop:costly}. 
\end{proof}

To illustrate \Cref{prop:ordeal}, consider the following example: 

\begin{ex}
For the baseline utility function, consider the standard quasilinear specification with isoelastic disutility of labor $l - \frac{(l/t)^{1 + 1/\varepsilon}}{1 + 1/\varepsilon}$, where $\varepsilon > 0$ is a constant. Consider an additive ordeal $l - \frac{(l/t)^{1 + 1/\varepsilon}}{1 + 1/\varepsilon} - \frac{\alpha y}{t^\beta}$, 
where $\alpha, \beta > 0$ are constants. Then, $\psi(l, y, t) - \psi(l, 0, t) = \alpha y / t^\beta$ is nonincreasing in $t$. Here the sorting is incorrect; condition (a) of \Cref{prop:ordeal} applies, and nonlinear taxes alone are optimal.  Now, consider a multiplicative ordeal $\exp\big(-\frac{\alpha y}{t^\beta}\big)\big(l - \frac{(l/t)^{1 + 1/\varepsilon}}{1 + 1/\varepsilon}\big)$, where $\alpha, \beta > 0$ are constants. Here the ordeal lowers surplus but does not increase the screening effectiveness; in particular, condition (b) of \Cref{prop:ordeal} applies, so nonlinear taxes alone remain optimal. By contrast, under the reversed multiplicative distortion $\exp\big(-\alpha y \cdot t^\beta\big)\big(l - \frac{(l/t)^{1 + 1/\varepsilon}}{1 + 1/\varepsilon}\big)$, ordeals can be useful; indeed, one can construct a two-type example here in which nonlinear taxes alone are suboptimal. \qed 
\end{ex}

\paragraph{Take-up Elasticity.}\hspace{-2mm}The comparison in condition (b) of \Cref{prop:ordeal} can be equivalently stated using take-up elasticities with respect to a small tax reduction. Fix an income level $l$ and focus on the extensive margin, at which the marginal agents move from unemployment to 
earning $l$. First, absent an ordeal, consider a small uniform reduction in tax liability for all income at least $l$, which increases the fraction of the population reaching $l$. Second, repeat the thought experiment while requiring the ordeal $y$ for earning at least $l$. Formally, let
\[\eta^{\text{take-up}}(l, y, q) := -\frac{\d}{\d \log T} \log \P\big(l - \psi(l, y, t) \geq T\big) \Big|_{T = T^\star(l, y, q)}\,,\]
where $T^\star(l, y, q)$ is defined by $\P\big(l - \psi(l, y, t) \geq T^\star(l, y, q)\big) = q$. Then, note that
\[\frac{l - \psi(l, y, t)}{l - \psi(l, 0, t)} \text{ is nondecreasing in $t$} 
\iff 
\eta^{\text{take-up}}(l, y, q) \leq \eta^{\text{take-up}}(l, 0, q) \text{ for all $q$}\,,\]
i.e., starting from the same take-up rate, the same small proportional tax reduction 
produces a smaller extensive-margin response with the ordeal than without it. \Cref{prop:ordeal} shows that for ordeals to be \textit{useful}, we should expect two properties: \textit{(i)} the additional disutility resulting from an ordeal should be increasing in the ability type, and \textit{(ii)} the ordeal should also \textit{\textbf{increase}} the take-up elasticity. The first property is consistent with the takeaway from \citet{yang2022costly} that an ordeal is only useful if the agents' preferences over the ordeal and the productive labor supply are suitably negatively correlated. \Cref{prop:ordeal} shows that this conclusion holds even in the optimal taxation context. Moreover, \Cref{prop:ordeal} shows that, for an ordeal to be useful, it is not enough for the agents' preferences to be negatively correlated---the reduction in surplus must be compensated via an increase in the screening effectiveness, captured by an appropriate elasticity measure.\footnote{The elasticity derived here is related to, but distinct from, that in \citet*{yang2023comparison}: it concerns the extensive-margin labor response to a small tax cut. If the reform instead took the form of a \textit{cash transfer} conditional on completing the ordeal, a \textit{low} take-up elasticity with respect to the cash benefit would be helpful. These observations are consistent: such inelasticity arises when ordeal costs increase steeply with ability, which makes utility net of screening costs increase slowly with ability and hence makes the extensive-margin labor response elastic with respect to a tax cut. Importantly, unlike \citet*{yang2023comparison}, the analysis here allows monetary transfers  and ordeals to be used \textit{\textbf{jointly}} as screening instruments.}

\subsection{Sequential Screening}\label{subsec:sequential}

Consider the sequential screening problem introduced in \citet{Courty2000}. There is one good. The agent has an \textit{\textbf{ex ante type}} $t$, which is correlated with their \textit{\textbf{ex post type}} $\theta$, the agent's realized value from consuming the good. The contracting happens at the stage where the agent observes their ex ante type $t$ but does not observe their ex post type $\theta$. Let $g(\theta \mid t)$ denote the conditional density of the ex post type $\theta$ given the ex ante type $t$. As in \citet{Courty2000}, we assume that $\theta \mid t$ is \textit{\textbf{FOSD-increasing}} in $t$. By the revelation principle, the principal can design a \textit{\textbf{direct dynamic mechanism}} $(t, \theta) \mapsto \big(q(t, \theta), p(t, \theta) \big)$, where $q$ is the allocation probability and $p$ is the payment, satisfying: 
\begin{align*}
\theta q(t, \theta) - p(t, \theta) &\geq  \theta q(t, \hat{\theta}) - p(t, \hat{\theta}) &&\text{ for all $t, \theta, \hat{\theta}$\,;} \\ 
\int \Big( \theta q(t, \theta) 
- p(t, \theta) \Big) g(\theta \mid t)  \d \theta &\geq \int \Big( \theta  q(\hat{t}, \theta) 
- p(\hat{t}, \theta)\Big) g(\theta \mid t)  \d \theta &&\text{ for all $t, \hat{t}$\,;}   \\  
\int \Big( \theta q(t, \theta) 
- p(t, \theta) \Big) g(\theta \mid t)  \d \theta &\geq 0 &&\text{ for all $t$\,,}  
\end{align*}
where the first constraint is the second-period IC constraint, the second constraint is the first-period IC constraint, and the last constraint is the first-period IR constraint.

In practice, although consumers are often uncertain about their values, a common mechanism is to simply offer the good at a posted price $p$ with no refund. We call such a mechanism a \textit{\textbf{nonrefundable posted price}}. When is it optimal? Applying our main results, we show that nonrefundable posted prices are optimal whenever the consumers who have higher expected willingness to pay also tend to have greater uncertainty about their ex post values on the log scale. Recall that $\preceq_{\text{disp}}$ denotes the \textit{\textbf{dispersive order}}, i.e., $X \preceq_{\text{disp}} Y$ if $F^{-1}_X(q') - F^{-1}_X(q) \leq F^{-1}_Y(q') - F^{-1}_Y(q)$ for all $q < q' \in [0, 1]$. We say that the types have \textit{\textbf{increasing dispersion}} if for all $t < t'$ 
\[\log(\theta) \mid t \preceq_{\text{disp}} \log(\theta) \mid t' \,.\]

\begin{prop}\label{prop:sequential}
Under increasing dispersion, a nonrefundable posted price is optimal. 
\end{prop}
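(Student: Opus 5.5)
The plan is to recast the sequential screening problem as a one-dimensional screening problem in the ex ante type $t$, with an abstract allocation space. Then I would show that the nonrefundable full allocation is a one-element surplus-elasticity frontier and invoke \Cref{thm:main}.

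\textbf{Step 1 (reduction).} Fix any direct dynamic mechanism. For each reported $\hat t$, the second-period constraints say that $\theta\mapsto (q(\hat t,\theta),p(\hat t,\theta))$ is an IC single-good mechanism for linear values. By the envelope theorem, $q(\hat t,\cdot)$ is nondecreasing in $\theta$, and the ex post utility equals $\int_0^\theta q(\hat t,s)\d s+U_{\hat t}(0)$. Hence an ex ante type $t$ reporting $\hat t$ obtains
\[
v\big(q(\hat t,\cdot),t\big)-\big(-U_{\hat t}(0)\big), \qquad v(x,t):=\int_0^{\overline\theta} x(s)\,\big(1-G(s\mid t)\big)\d s .
\]
Thus the dynamic problem is exactly an instance of the model of \Cref{sec:model}. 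The allocation space $\X$ is the set of nondecreasing $x:[0,\overline\theta]\to[0,1]$, which is compact in $L_1$ by Helly's theorem. The up-front payment is $-U_{\hat t}(0)$, and the first-period constraints are exactly the IC and IR constraints of our model.

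Each such $x$ is a mixture, with respect to the measure $\d x$ together with an atom $x(0)$ at zero, of the ``option'' allocations $\1_{[r,\overline\theta]}$. Type $t$ values such an option at $v(\1_{[r,\overline\theta]},t)=\E[(\theta-r)^+\mid t]$, so each $x$ is a call option with a random strike. The nonrefundable posted price corresponds to $\overline x:=\1_{[0,\overline\theta]}$, with $v(\overline x,t)=\E[\theta\mid t]$. FOSD-monotonicity gives that $v$ is increasing in $t$.

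\textbf{Step 2 (frontier).} I would show that $\X^\star=\{\overline x\}$ is a surplus-elasticity frontier. Surplus domination is immediate because $x\le \overline x$ pointwise. The anti-chain condition is vacuous for a single element. For elasticity, by (6) it suffices that $v(x,t)/v(\overline x,t)$ is nondecreasing in $t$. Since $v(x,t)$ is a $t$-independent mixture of the option values and the denominator is common, it suffices to show that for every $r$
\[
t\mapsto \frac{\E[(\theta-r)^+\mid t]}{\E[\theta\mid t]} \quad\text{is nondecreasing.}
\]
Options with $v=0$ for low $t$ (namely $r$ near the top of the support) are handled by the paper's convention, or more cleanly through the generalized frontier of \Cref{thm:main3}. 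In that case the covering lottery is a mixture of $\overline x$ and $\varnothing$, and the single-crossing follows from the same ratio monotonicity.

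\textbf{Step 3 (key inequality; the main obstacle).} Fix $t<t'$, and let $\theta$ and $\theta'$ be the corresponding ex post values, coupled by quantiles. Increasing dispersion means $\log\theta'=\phi(\log\theta)$ with $\phi(y)-y$ nondecreasing. Equivalently, $\theta'=h(\theta)$ with $h(z)/z$ nondecreasing, i.e., $\theta\le_*\theta'$ in the star order. The star order is scale invariant and implies the Lorenz order. So after normalizing $X=\theta/\E\theta$ and $Y=\theta'/\E\theta'$, both with mean $1$, we get $X\le_{\text{cx}}Y$, and hence $\E(Y-c)^+\ge\E(X-c)^+$ for all $c$. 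The ratio of interest is $\E(Y-c')^+$ with $c'=r/\E\theta'$ for $t'$, and $\E(X-c)^+$ with $c=r/\E\theta$ for $t$. FOSD gives $\E\theta'\ge\E\theta$, so $c'\le c$. Therefore
\[
\E(Y-c')^+\ge \E(Y-c)^+\ge \E(X-c)^+,
\]
which is the required monotonicity.

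I expect the delicate points to be two. The first is verifying the star$\Rightarrow$Lorenz$\Rightarrow$convex implication rigorously under the quantile coupling; I would prove it directly via a single-crossing argument for the quantile functions of $X$ and $Y$. The second is the compactness and positivity technicalities. Given Steps 1–3, \Cref{thm:main} (or \Cref{thm:main3}) yields that the menu $\{\overline x\}$ is optimal. This means charging an up-front fee for unconditional delivery, which is a nonrefundable posted price.
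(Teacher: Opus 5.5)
Your Step~1 has a genuine gap. The reduction is exact for the agent's incentives but not for the principal's payoff. You take the option value $v(x,t)=\int_0^{\overline\theta}x(s)\big(1-G(s\mid t)\big)\d s$ as the value and the up-front fee $-U_{\hat t}(0)$ as the payment. But the revenue actually collected from a truthful type $t$ is
\[
\E\big[p(t,\theta)\mid t\big]=-U_t(0)+\Big(\E\big[\theta\,q(t,\theta)\mid t\big]-v\big(q(t,\cdot),t\big)\Big),
\]
where the bracketed term is the expected strike payment, e.g.\ $r\,\P(\theta\geq r\mid t)$ for $x=\1_{[r,\overline\theta]}$. This term is zero for $\overline x$ but positive for genuine options, and it depends on both the allocation and the \emph{true} type. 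So your static problem carries a type-dependent (negative) cost and is not an instance of the model in \Cref{sec:model}; the paper sets aside this strike-price formulation for exactly this reason.

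As a result, showing that $\{\overline x\}$ maximizes the up-front-fee-only objective gives no bound on the revenue of option menus, because you have dropped their strike revenue. Concretely, replacing an option by the lottery over $\{\overline x,\varnothing\}$ that gives type $t$ the same option value keeps the agent's payoff but forfeits the strike revenue. Your Step~3 (star order $\Rightarrow$ Lorenz order $\Rightarrow$ monotone call-option ratio) is correct, but it certifies the wrong ratio.

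The missing idea is the paper's relaxation. Write $\theta=G^{-1}(\varepsilon\mid t)$ with $\varepsilon\sim U[0,1]$ independent of $t$. Keep only those first-period constraints in which a type misreporting $\hat t$ then reports her quantile $\varepsilon$ truthfully; these are implied by the original constraints, since they are particular deviations. Three things then follow:
\begin{itemize}
\item The expected payment $\tilde p(\hat t)=\E_\varepsilon[p(\hat t,\varepsilon)]$ no longer depends on the true type.
\item The value becomes the total expected surplus $\int G^{-1}(\varepsilon\mid t)\,q(\hat t,\varepsilon)\d\varepsilon$.
\item A monotone $q(\hat t,\cdot)$ is a mixture of $\varepsilon$-thresholds, with threshold values $u(k,t)=\int_{1-k}^1G^{-1}(\varepsilon\mid t)\d\varepsilon$.
\end{itemize}
Increasing dispersion is exactly weak log-submodularity of $u_k(k,t)=G^{-1}(1-k\mid t)$. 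This makes $u(k,t)/u(1,t)$ nondecreasing in $t$, so $\{1\}$ is a (strong) frontier of the relaxed problem. The nonrefundable posted price is feasible in the original problem and attains the relaxed optimum.
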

\begin{proof}
We formulate this problem in our framework in the following way. As in \citet{Eso2007}, note that we can write $\theta = G^{-1}(\varepsilon \mid t)$ where $\varepsilon$ is a uniform $[0, 1]$ random variable independent of $t$ and $G(\theta \mid t)$ is the conditional CDF. In particular, the model is equivalent to the one where the second-period type is indexed by $\varepsilon$ instead of $\theta$. Under this reparameterization, the ex post utility function is given by $G^{-1}(\varepsilon \mid t) q(t, \varepsilon) - p(t, \varepsilon)$. In this parameterization, note that the second-period IC implies that $q(t, \,\cdot\,)$ is monotone in $\varepsilon$. The first-period IC constraints would involve double deviations under this parameterization, but we keep the following subset of IC and IR constraints: 
\begin{align*}
\int G^{-1}(\varepsilon \mid t) q(t, \varepsilon) \d \varepsilon - \tilde{p}(t) &\geq \int G^{-1}(\varepsilon \mid t) q(\hat{t}, \varepsilon) \d \varepsilon - \tilde{p}(\hat{t}) \, && \text{ for all $t, \hat{t}$;}\\
\int G^{-1}(\varepsilon \mid t) q(t, \varepsilon) \d \varepsilon - \tilde{p}(t) &\geq 0 && \text{ for all $t$,}
\end{align*}
where $\tilde{p}(t) = \E_\varepsilon[p(t, \varepsilon)]$. We relax all other constraints and hence directly choose $(q, \tilde{p})$, keeping $q(t, \,\cdot\,)$ monotone, to solve the above problem. Now, note that since $q(t, \,\cdot\,)$ is monotone, up to a measure-zero set of $\varepsilon$'s, we can write $q(t, \varepsilon) = \int \1_{\varepsilon \geq 1-k} \mu_t(\d k)$, where $\mu_t \in \Delta([0, 1])$. Then, 
\[\int G^{-1}(\varepsilon \mid t) q(t, \varepsilon) \d \varepsilon = \int_0^1 \Big(\int^{1}_{1-k} G^{-1}(\varepsilon \mid t)\d \varepsilon\Big) \mu_t(\d k) \,.\]
Letting $u(k, t):= \int^{1}_{1 - k} G^{-1}(\varepsilon \mid t)\d \varepsilon$, we can write the above as \[\int_0^1 u(k, t) \mu_t(\d k)\,.\] Therefore, $\mu_t$ is equivalent to a lottery over the allocations $k \in [0, 1]$ with the utility function given by $u(k, t)$, which is strictly increasing in $k$ and weakly increasing in $t$ (since $\theta \mid t$ is FOSD-increasing in $t$). Thus, the relaxed problem is equivalent to a screening problem with allocation space $\X = [0, 1]$, utility function $u(k, t)$, and randomization allowed.\footnote{Alternatively, one can write the \textit{original} problem---rather than the relaxed problem---as a static screening problem with randomization, using ``strike prices'' as the pure allocations (\citealt{krahmer2017sequential}) instead of the ``$\varepsilon$-thresholds'' used here. However, that formulation does not directly fit our framework: for any given contract, the expected payment varies with the agent's \textit{true} type, which effectively introduces a type-dependent cost term into the principal's objective.} The result follows from \Cref{prop:stochastic} by noting that $\log u_{k}(k, t) = \log G^{-1}(1 - k \mid t)$ is weakly submodular if $\log(\theta) \mid t$ is weakly increasing in the dispersive order. Indeed, by the proof of \Cref{prop:stochastic}, in the weakly submodular case, $\X^\star := \{1\} \subseteq [0, 1]$ is a strong frontier and hence offering $k=1$ alone, at some price $p^\star$ (depending on the type distribution and welfare weights), is optimal in the relaxed problem, even if lotteries in $\Delta([0, 1])$ are allowed.\footnote{Formally, \Cref{prop:stochastic} assumes that the value function is differentiable and strictly increasing in $t$, but neither is needed in the weakly log-submodular case by inspecting the proof.} Offering a nonrefundable posted price $p^\star$ yields the same expected payoff for every type $t$ and the same revenue as in the relaxed problem---thus, it is optimal. 
\end{proof}

By virtue of our main results, \Cref{prop:sequential} holds for \textit{all} type distributions of $t$ and \textit{all} redistributive welfare weights. Even though sequential screening is a well-studied problem, we are not aware of any primitive condition under which a nonrefundable posted price is optimal. One reason is methodological: much of the dynamic mechanism design literature relies on the Myersonian approach that relaxes the global integral monotonicity constraint (\citealt*{Pavan2014}), and therefore mechanisms with significant bunching, such as a nonrefundable posted price, can be difficult to obtain without explicitly tracking the global incentive constraints.\footnote{For recent work that does confront global incentive constraints in dynamic settings, see \citet{battaglini2019optimal}, \citet{krasikov2026dynamic}, and \citet{li2025stochastic}.} In contrast, as shown in \Cref{sec:proof}, our proof method is non-Myersonian and deals with global incentive constraints.

\Cref{prop:sequential} asserts that nonrefundable ticket sales are optimal when higher ex ante types not only have higher expected willingness to pay (FOSD) but also have greater uncertainty about their ex post values in the sense of the log-dispersion order. Perhaps surprisingly, this holds even though the principal could instead offer refunds to better target high types and extract surplus through a higher up-front fee. Intuitively, the key friction is leakage: since ex post values are FOSD-increasing in ex ante types, a refund would attract lower types, leading to too many refunds and eroding revenue. Importantly, a refund with a high up-front fee actually \textit{\textbf{reduces}} the ex post surplus by inducing agents with lower ex post values to forgo consumption; by our main results, it can be justified only by increasing the screening effectiveness. Yet under the log-dispersion order---which captures the elasticity of how ex post values vary with ex ante types---the screening effectiveness cannot be increased; hence, a nonrefundable posted price is optimal. 

To illustrate our log-dispersion condition, consider the following example: 

\begin{ex}\label{ex:sequential}
Let $\varepsilon$ be some random variable independent of $t$ and 
\[\log \theta = \alpha \cdot t  + \sigma(t) \cdot \varepsilon\,, \tag{\textbf{Log-linear}}\]
where $\alpha > 0$ and $\sigma(t) \geq 0$. It can be easily verified that the types have increasing dispersion if $\sigma(t)$ is nondecreasing in $t$. 
As another example, consider  
\[\theta = \alpha \cdot t  + \sigma(t) \cdot \varepsilon \,, \tag{\textbf{Linear}}\]
where $\alpha > 0$ and $\sigma(t) \geq 0$.  It can be easily verified that the types have increasing dispersion if $\sigma(t)/t$ is nondecreasing in $t$ since the dispersion comparison in \Cref{prop:sequential} measures dispersion on the log scale.  We note that one of the most commonly used specifications in the dynamic mechanism design literature is the above linear form with $\sigma(t)$ equal to a constant---in that case, the types have \textit{\textbf{decreasing}} dispersion instead of \textit{\textbf{increasing}} dispersion on the log scale. In that case, it is well known that under a monotone hazard rate assumption on the distribution of $t$, the optimal mechanism is a continuum menu of option contracts (\citealt{Courty2000}). \qed 
\end{ex}
\begin{figure}[t]
    \centering
    \begin{minipage}{0.49\linewidth}
        \centering
        \includegraphics[width=0.82\textwidth]{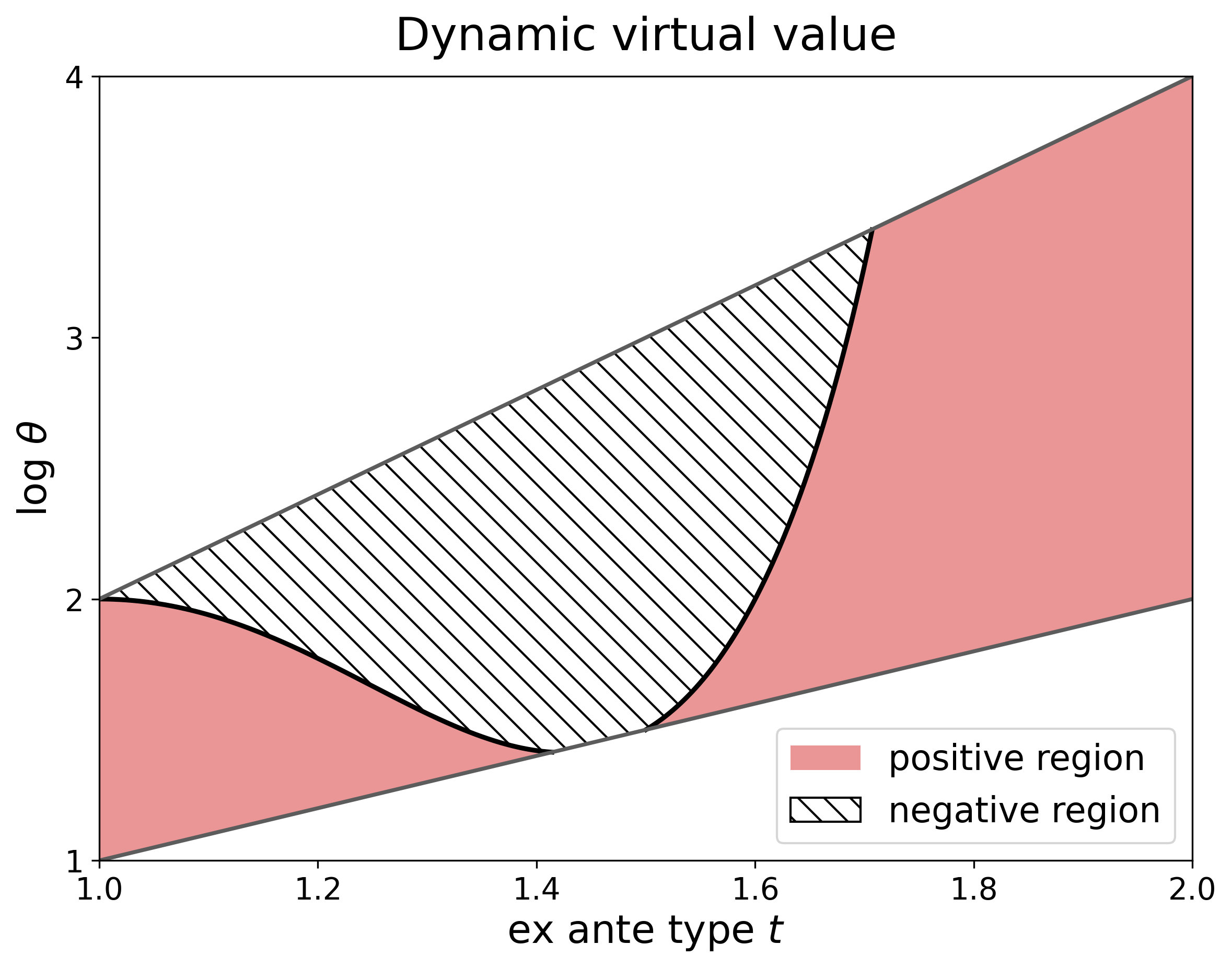}
        \subcaption{Sign of the Dynamic Virtual Value\label{fig:sequentiala}}
    \end{minipage}
    \hfill
    \begin{minipage}{0.49\linewidth}
        \centering
        \includegraphics[width=0.82\textwidth]{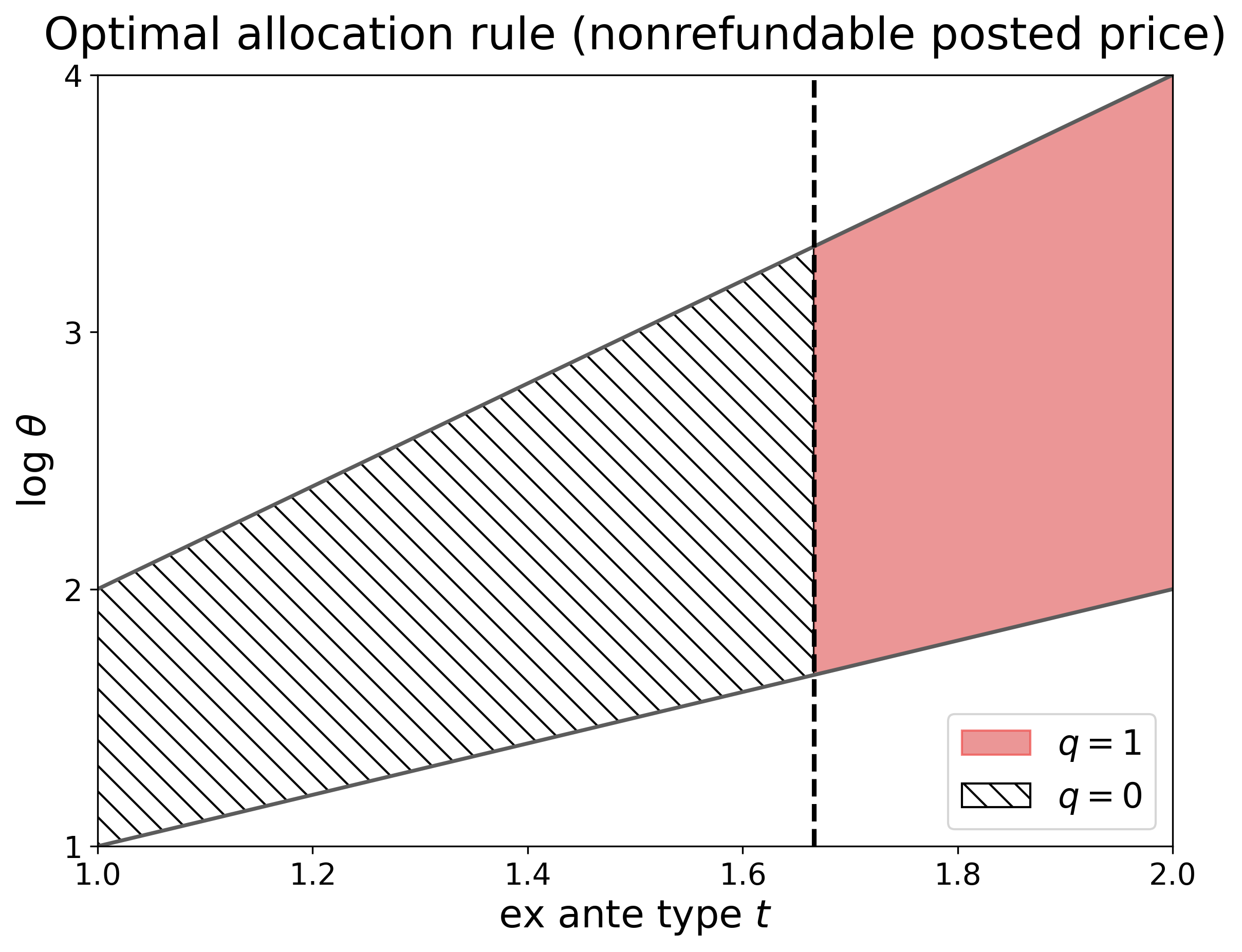}
        \subcaption{Optimal Allocation Rule\label{fig:sequentialb}}
    \end{minipage}
    \caption{Illustration of the Failure of the Myersonian Approach. Panel (a) depicts the sign of the dynamic virtual value for the log-linear example, which shows that the pointwise solution cannot be implemented. Panel (b) depicts the allocation rule under the optimal mechanism, which is a nonrefundable posted price by \Cref{prop:sequential}. \label{fig:sequential}}
\end{figure}
\vspace{-5mm}
\paragraph{Comparison to the Myersonian Approach.}\hspace{-2mm}To further illustrate how our approach differs from the Myersonian approach in the dynamic mechanism design literature, consider a log-linear specification with increasing dispersion in \Cref{ex:sequential}, $\log(\theta) = t + t \cdot \varepsilon$, where $\varepsilon \sim U[0, 1]$ and $t$ is supported on $[1, 2]$ with density $f(t) = 6(t-1.5)^2 + 0.5$. Suppose that the objective is profit maximization. Recall that the \textit{\textbf{dynamic virtual value}} is given by $\phi(t, \theta) := \theta + \frac{1 - F(t)}{f(t)} \frac{\frac{\partial}{\partial t}G(\theta \mid t)}{g(\theta \mid t)}$. The Myersonian approach maximizes the virtual surplus $\phi(t, \theta) q(t, \theta)$ pointwise, i.e., allocates exactly when $\phi(t, \theta) \geq 0$, and then checks whether the resulting allocation rule is implementable (\citealt*{Pavan2014}). \Cref{fig:sequentiala} depicts the sign of $\phi(t, \theta)$ and shows that the pointwise solution \textit{fails} the usual monotonicity conditions and cannot be implemented.\footnote{In fact, it fails both the second-period monotonicity and integral monotonicity conditions.} \Cref{fig:sequentialb} depicts the allocation rule under the optimal mechanism, which is a nonrefundable posted price by \Cref{prop:sequential} and has significant bunching. Despite the failure of the Myersonian approach and the absence of applicable ironing methods in this setting, our approach identifies the optimal mechanism using an appropriate screening frontier.

\subsection{Selling Information}\label{subsec:info}

Consider a finite state space $\Omega$.  As in \citet*{bergemann2018design}, the seller designs and sells \textit{\textbf{Blackwell experiments}} about the state. Here, however, the prior $\mu_0\in\Delta(\Omega)$ is commonly known and assumed uniform, while the buyer’s private information is their decision problem, indexed by type $t$. Without loss of generality, we identify an experiment with a \textit{\textbf{posterior distribution}} $\gamma \in \Delta(\Delta(\Omega))$  that satisfies Bayes plausibility $\int \mu \d \gamma (\mu) = \mu_0$; we identify a decision problem with a convex \textit{\textbf{indirect utility function}} of the posterior belief $u(\,\cdot\,, t): \Delta(\Omega) \rightarrow \R$. A type-$t$ buyer's value for an experiment $\gamma$ is then $v(\gamma, t) = \int u(\mu, t) \d \gamma (\mu) - u(\mu_0, t)$. Without loss of generality, we normalize $u(\mu_0, t) = 0$. 

We assume that types are vertically ordered in that higher types value every nontrivial experiment strictly more. We say that the environment is \textit{\textbf{symmetric}} if $u(\mu, t)$ is permutation symmetric with respect to $\mu$. In a symmetric environment with a uniform prior, vertical ordering is equivalent to $u(\mu, t)$ being strictly increasing in $t$ for every $\mu \neq \mu_0$.\footnote{In general, vertical ordering is weaker than the monotonicity of $u(\mu, \,\cdot\,)$---see \citet{whitmeyer2026making} for a characterization of when one decision problem has ``greater value for information'' than another.}

An experiment is a \textit{\textbf{truth-or-noise}} experiment if the signal reports the true state with probability $\alpha$ and otherwise is an independent draw from the prior $\mu_0$ (\citealt{lewis1994supplying}): formally, for any $\omega \in \Omega$ and $\alpha \in [0, 1]$, let $\mu^\omega_{\alpha} := (1 - \alpha) \mu_0 + \alpha \delta_\omega$ and $\gamma_\alpha := \sum_{\omega} \mu_0(\omega) \delta_{\mu^\omega_\alpha}$. The truth-or-noise experiments are very simple: indexed by the truth probability $\alpha$, Blackwell-monotone in $\alpha$, and sparsely supported. When are they optimal?

We now identify the frontiers in this screening problem in two cases. It turns out that, in a symmetric environment, \textit{(i)} if a higher type has a decision problem that depends ``less sensitively'' on the state, then selling full information is optimal; \textit{(ii)} in contrast, if a higher type has a decision problem that depends ``more sensitively'' on the state, then a menu of truth-or-noise signals is optimal. More precisely, we compare how convex the indirect utility functions are in the Arrow--Pratt sense: $u(\,\cdot\,, t)$ is \textit{\textbf{increasingly convex}} in $t$ if, for any $ t< t'$, there exists a nondecreasing convex function $g$ such that $u(\,\cdot\,, t') = g \circ u(\,\cdot\,, t)$, and  \textit{\textbf{strictly increasingly convex}} if $g$ is strictly convex; similarly, $u(\,\cdot\,, t)$ is \textit{\textbf{increasingly concave}} in $t$ if the same holds with $g$ nondecreasing and concave.

\begin{prop}\label{prop:info}
Consider a symmetric environment. If $\text{$u(\,\cdot\,, t)$ is increasingly \emph{concave} in $t$}$, then selling full information is optimal. If $\text{$u(\,\cdot\,, t)$ is strictly increasingly \emph{convex} in $t$}$, then selling truth-or-noise experiments $\{\gamma_\alpha\}_{\alpha\in[0, 1]}$ is optimal. 
\end{prop}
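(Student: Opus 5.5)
The plan is to verify in each case that the proposed menu is a (generalized) frontier, and then invoke \Cref{thm:main} or \Cref{thm:main3}. The allocation space is the set of Bayes-plausible posterior distributions $\gamma$, and $v(\gamma,t)=\int u(\mu,t)\,\d\gamma(\mu)$ is linear in $\gamma$.

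\textbf{Preliminary facts from symmetry and the uniform prior.}
\begin{itemize}
\item Averaging $\mu$ over all permutations of $\Omega$ gives $\mu_0$. Convexity and symmetry of $u(\,\cdot\,,t)$ together with $u(\mu_0,t)=0$ therefore give $u(\mu,t)\ge 0$ for all $\mu$.
\item The maximum of the convex function $u(\,\cdot\,,t)$ is attained at the vertices $\delta_\omega$, which all share a common value $M_t$. Hence $v(\gamma,t)\in[0,M_t]$ and $v(\gamma_1,t)=M_t$.
\item Each $\gamma_\alpha$ puts all its mass on the posteriors $\mu^\omega_\alpha$, which are permutations of one another. So $v(\gamma_\alpha,t)=u(\mu^\omega_\alpha,t)=:c_\alpha(t)$, which is continuous and nondecreasing in $\alpha$ (by convexity along the segment from $\mu_0$ to $\delta_\omega$), with $c_0=0$ and $c_1=M_t$.
\item Fix $t<t'$ and write $u(\,\cdot\,,t')=g\circ u(\,\cdot\,,t)$. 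Vertical ordering forces $g(0)=0$ and $g(x)>x$ for $x>0$ in the relevant range.
\end{itemize}

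\textbf{Concave case.} Let $g$ be concave with $g(0)=0$. Then the chord inequality $g(x)\ge \frac{g(M_t)}{M_t}\,x$ holds on $[0,M_t]$. Integrating against $\gamma$ gives
\[
\frac{v(\gamma,t')}{v(\gamma_1,t')}\ \ge\ \frac{v(\gamma,t)}{v(\gamma_1,t)}\,.
\]
So $v(\gamma,\cdot)/v(\gamma_1,\cdot)$ is nondecreasing, which means $\gamma\preceq_{\text{elasticity}}\gamma_1$ by (6). We also have $\gamma\preceq_{\text{surplus}}\gamma_1$ because $v(\gamma,t)\le M_t$. Thus $\{\gamma_1\}$ is a surplus-elasticity frontier, and \Cref{thm:main} yields optimality.

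Two small technical points remain. Trivial experiments have $v\equiv 0$; these can be treated as equivalent to $\varnothing$, or handled through the covering formulation with $\mathcal A=\Delta(\{\varnothing,\gamma_1\})$. The requirement that $\X$ be compact is met by the weak$^*$-compactness of Bayes-plausible $\gamma$.

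\textbf{Convex case: covering.} I would show that $\X^\star=\{\gamma_\alpha\}_{\alpha\in[0,1]}$ is a generalized frontier, using Dirac measures as $\mathcal A$. Fix $\gamma$ and a type $t$. By the intermediate value theorem, choose $\alpha$ with $c_\alpha(t)=v(\gamma,t)$. In type $t$'s utility space, $\gamma_\alpha$ induces the degenerate distribution at the mean of the distribution of $u(\mu,t)$ under $\gamma$, so it is the maximal mean-preserving contraction.
\begin{itemize}
\item For $t'>t$, $g$ is convex, and Jensen gives $v(\gamma,t')=\int g(u)\,\d\gamma\ \ge\ g(c_\alpha(t))=v(\gamma_\alpha,t')$.
\item For $t''<t$, $u(\,\cdot\,,t'')=h\circ u(\,\cdot\,,t)$ with $h=g^{-1}$ concave, and the reverse inequality holds.
\end{itemize}
Hence $P(\gamma_\alpha,\cdot)$ weakly single-crosses $P(\gamma,\cdot)$ from below at $q=1-F(t)$. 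Since this holds for every $t$, every $\gamma$ is covered.

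\textbf{Convex case: increasing differences.} Condition (ii) requires that $c_{\alpha'}(t)-c_\alpha(t)$ be strictly increasing in $t$ for $\alpha<\alpha'$. Write $c=c_\alpha(t)<c'=c_{\alpha'}(t)$. Strict convexity with $g(0)=0$ and $g(x)>x$ gives $g'(c)\ge g(c)/c>1$ when $c>0$, and hence $g(c')-g(c)>c'-c$. When $c=0$, the inequality $g(c')>c'$ suffices.

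This step is where I expect the main care to be needed. One must check that $c_\alpha(t)$ is strictly increasing in $\alpha$, so that distinct $\alpha$ give distinct values; this follows from strict monotonicity of $u$ in $t$ and convexity along rays. One must also handle $g$ being defined only on the range of $u(\,\cdot\,,t)$ and possibly non-differentiable, which I would do using one-sided derivatives. With covering and increasing differences in hand, \Cref{thm:main3} delivers optimality of $\{\gamma_\alpha\}$.
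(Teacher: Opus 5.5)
Your proposal is correct and follows essentially the same route as the paper. In the concave case, the chord bound $g(z)\ge \frac{g(M_t)}{M_t}z$ (equivalently, $g(z)/z$ nonincreasing) gives ratio-monotonicity against full information, so the singleton $\{\gamma_1\}$ is a surplus-elasticity frontier and \Cref{thm:main} applies; in the convex case, the intermediate value theorem plus Jensen's inequality gives a covering by Dirac measures on $\{\gamma_\alpha\}$, and convexity of $g$ with $g(0)=0$ and $g(z)>z$ gives strict increasing differences, so \Cref{thm:main3} applies, with your explicit handling of the lower-type direction (via the concave inverse $g^{-1}$) and of the degenerate cases simply filling in details the paper treats as clear.
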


\begin{proof} 
Let $\X$ be the set of all Bayes-plausible posterior distributions. To prove the first part, suppose $u$ is increasingly concave in $t$. We show that the fully informative experiment $\{\gamma^\star\}$ is the surplus-elasticity frontier. Clearly, it generates a demand curve that is pointwise maximal. Now we check whether the demand curve it generates is more elastic compared to the one generated by any other nontrivial experiment $\gamma$.  Fix any two types $t_1 < t_2$. Then, 
$u(\mu, t_2) = g(u(\mu, t_1))$ for a concave function $g$. Moreover, since $u(\mu_0, t) = 0$ by our normalization, $g(0) = 0$. Together, these imply that $g(z)/z$ is nonincreasing in $z$ for all $z > 0$.  By symmetry, $u(\delta_\omega, t) = u(\delta_{\omega'}, t)$ for all $\omega, \omega'$. Therefore, the fully informative experiment yields a deterministic payoff for type $t_1$, say $u^\star_1$. Clearly, $u^\star_1 \geq u(\mu, t_1)$ for all $\mu$. Then, by the previous observation, $(g(u^\star_1) / u^{\star}_1) \cdot  u(\mu, t_1) \leq g(u(\mu, t_1))$.\footnote{Note that by symmetry and Jensen's inequality, we also have $u(\mu, t) \geq 0$ for all $\mu$ and $t$.} Thus, 
\[\frac{v(\gamma, t_2)}{v(\gamma, t_1)} =  \frac{\int g(u(\mu, t_1)) \d \gamma }{\int u(\mu, t_1) \d \gamma}  \geq  \frac{g(u^\star_1) /u^\star_1 \int u(\mu, t_1) \d \gamma }{\int u(\mu, t_1) \d \gamma} = \frac{g(u^\star_1)}{u^\star_1} = \frac{v(\gamma^\star, t_2)}{v(\gamma^\star, t_1)}\,,\]
which proves the result by \Cref{thm:main}.  

To prove the second part, suppose $u$ is strictly increasingly convex in $t$. We show that $\X^\star := \{\gamma_\alpha\}$ is a generalized frontier. Note that for any truth-or-noise experiment $\gamma_\alpha$,  we have by construction $v(\gamma_\alpha, t) =\sum_{\omega}\mu_0(\omega) u\big( (1 - \alpha) \mu_0 + \alpha \delta_\omega, t\big)$. By symmetry, for any $\omega, \omega'$, 
\[u\big( (1 - \alpha) \mu_0 + \alpha \delta_\omega, t\big) = u\big( (1 - \alpha) \mu_0 + \alpha \delta_{\omega'}, t\big) =: h(\alpha, t)\,.\]
Note that $h(\alpha, t)$ is strictly increasing in $\alpha$, and in $t$ for $\alpha > 0$, with strict increasing differences: for $s > t$, there exists $g$ with $g(z) \geq z$ and $g(z)/z$ strictly increasing such that 
\begin{align*}
  h(\alpha, s) - h(\alpha, t) &= \big(g(h(\alpha, t))/h(\alpha, t) - 1\big) \cdot h(\alpha, t)   \\
  &< \big(g(h(\alpha', t))/h(\alpha', t) - 1\big) \cdot h(\alpha', t) = h(\alpha', s) - h(\alpha', t)  
\end{align*}
for any $\alpha' > \alpha > 0$. Now, fix any $\gamma$ and fix any type $t$. By the intermediate value theorem, there exists some $\alpha^\star$ such that $v(\gamma_{\alpha^\star}, t) = h(\alpha^\star,t) = \int u(\mu, t) \d \gamma(\mu)$. Then, note that for any $s > t$, since $u$ is increasingly convex, there exists some convex $g$ such that 
\[v(\gamma_{\alpha^\star}, s) = h(\alpha^\star, s) = g(h(\alpha^\star, t))\leq \int g(u(\mu, t)) \d \gamma(\mu) = v(\gamma, s)\,, \]
where the inequality uses that $g$ is convex. Clearly, the opposite inequality holds for $s < t$. Since this holds for all $t$, it implies that $\X^\star$ is a generalized frontier. Indeed, any $\gamma$ is covered by $\mathcal{A}$ defined as the collection of Dirac measures supported on each of $\{\gamma_\alpha\}_\alpha$. This completes the proof by \Cref{thm:main3}. 
\end{proof}

As the proof of \Cref{prop:info} shows, in the increasingly concave case, full information both maximizes surplus and generates the most elastic demand curve, so the frontier is a singleton; in the increasingly convex case, the truth-or-noise signals span a generalized frontier, where a higher truth probability creates more surplus but a less elastic demand curve. To illustrate \Cref{prop:info}, consider the following example: 
\begin{ex}\label{ex:info}
Fix any finite state space $\Omega$. There are two primitive decision problems represented by two symmetric convex functions $u_A(\mu)$ and $u_B(\mu)$, with $u_A(\mu_0) = u_B(\mu_0) = 0$. Suppose $u_B \geq u_A$ (and strictly so for $\mu \neq \mu_0$), i.e., problem $B$ has a higher stake than $A$. The agent will face one of the two decision problems and privately knows the probability $t \in [0, 1]$ of facing the higher-stake problem $B$, independently of the state. So a type-$t$ agent's indirect utility is $u(\mu, t) = (1 - t) \cdot u_A(\mu) + t \cdot  u_B(\mu)$. Consider two cases: 
\begin{itemize}
    \item[$(a)$] $u_B$ is a concave transformation of $u_A$. Then $u(\,\cdot\,,t)$ is increasingly concave in $t$. By \Cref{prop:info}, selling full information is optimal. To illustrate, consider $|\Omega| = 2$ and so $\mu \in [0, 1]$, with $u_A = |2\mu - 1|^2$ and $u_B = 2 \cdot |2\mu - 1|$. \Cref{fig:infoa} depicts $u_A$ and $u_B$. 
    \item[$(b)$] $u_B$ is a strictly convex transformation of $u_A$. Then $u(\,\cdot\,,t)$ is strictly increasingly convex in $t$. By \Cref{prop:info}, selling a menu of truth-or-noise signals is optimal. To illustrate, consider $|\Omega| = 2$ and so $\mu \in [0, 1]$, with $u_A = |2\mu - 1|$ and $u_B = |2\mu - 1| + |2\mu - 1|^2$. \Cref{fig:infob} depicts $u_A$ and $u_B$. More concretely, suppose the objective is profit maximization and the type distribution is uniform. Then, the optimal mechanism uses the continuum of truth-or-noise signals $\{\gamma_\alpha\}_{\alpha\in[0.5,\, 1]}$, with allocation rule
    \[\alpha^\star(t) = \begin{cases}
       \frac{1}{2(1-2t)} &\text{ if $t \in [0,\, 0.25]$}\\
       1 &\text{ if $t \in (0.25,\, 1]$\,,}
    \end{cases}\]
    where $\alpha^\star(t)$ is the truth probability of the signal for type $t$. \qed
  \end{itemize}
\end{ex}

\begin{figure}[t]
    \centering
    \begin{minipage}{0.49\linewidth}
        \centering
        \includegraphics[width=0.9\textwidth]{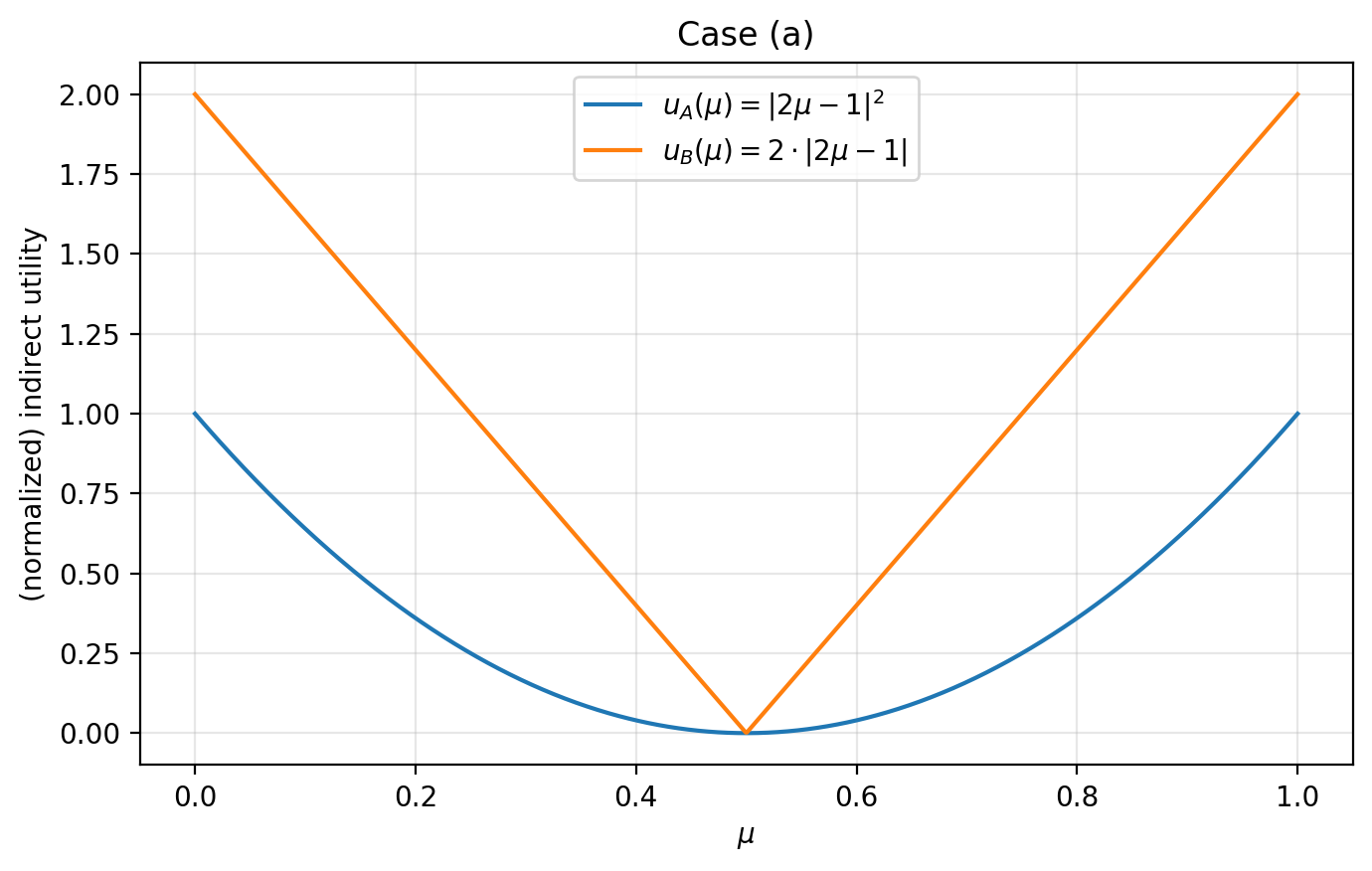}
        \subcaption{Higher-stake problem is less sensitive\label{fig:infoa}}
    \end{minipage}
    \hfill
    \begin{minipage}{0.49\linewidth}
        \centering
        \includegraphics[width=0.9\textwidth]{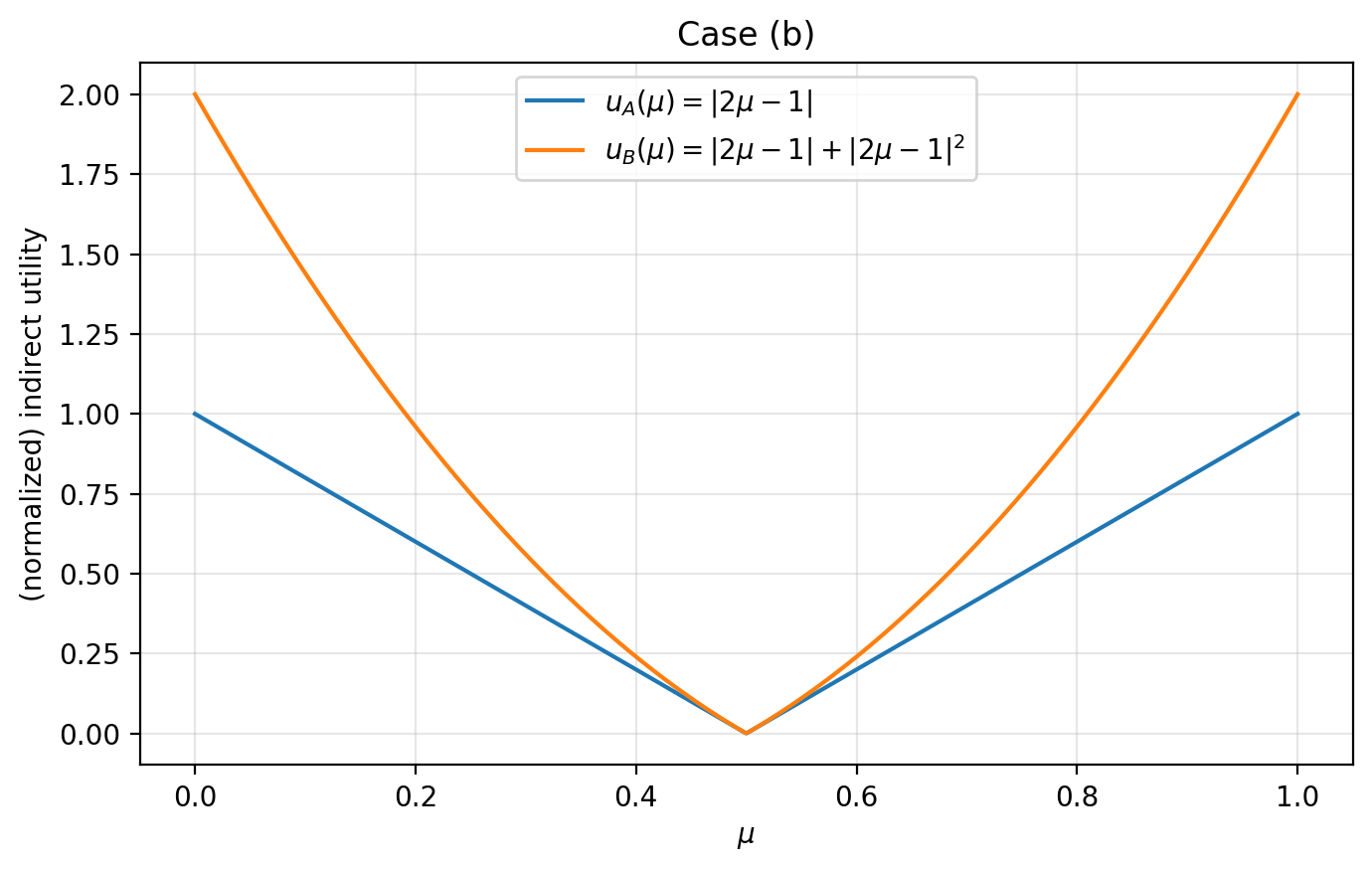}
    \subcaption{Higher-stake problem is more sensitive\label{fig:infob}}
    \end{minipage}
    \caption{Illustration of Cases (a) and (b) in \Cref{ex:info}. In Case (a), the higher-stake problem depends less sensitively on the state and hence selling full information is optimal by \Cref{prop:info}; in Case (b), the higher-stake problem depends more sensitively on the state and hence selling a menu of truth-or-noise signals is optimal by \Cref{prop:info}.\label{fig:info}}
\end{figure}

As illustrated in \Cref{ex:info}, the optimal pricing and allocation rule of the truth-or-noise signals depend on the details of the environment, but the same menu is optimal for \textit{all} type distributions and redistributive welfare weights.

\subsection{Regulating a Data-Rich Monopolist}\label{subsec:regulation}

Our last application considers a model of monopoly regulation just like in \citet{Baron1982RegulatingCosts}, but allows the monopolist to observe rich consumer data. 

There is a monopolist and a unit mass of consumers. Consumers are indexed by their \textit{\textbf{values}} $\theta \in [\underline{\theta}, \overline{\theta}]$ for the monopolist's good, distributed according to $G$. The monopolist has a private \textit{\textbf{cost type}} $t \in [\underline{t}, \overline{t}]$, distributed according to $F$, independent of $\theta$. The cost for the monopolist to serve consumer type $\theta$ can be \textit{\textbf{interdependent}}, denoted by $c(\theta, t)$. We assume $c(\theta, t)$ is strictly decreasing in $t$ so that a  higher $t$ represents a lower cost type. Following \citet{strack2025non}, we model the  \textit{\textbf{data-rich monopolist}} as observing the consumers' valuations $\theta$ and engaging in personalized pricing in the absence of any regulation; in practice, $\theta$ can be thought of as a rich enough tag such that the residual uncertainty over consumer valuations is small conditional on the tag. In the absence of any regulation, the monopolist would extract the \textit{\textbf{total surplus}} $S(\theta, t):= \theta - c(\theta, t)$,
which we assume is positive and continuously differentiable. As in \citet{strack2025non}, the monopolist can use any \textit{\textbf{pricing rule}} $p(\theta, r)$ that specifies a posted price for consumer $\theta$ depending on the realization of a randomization device $r$. Given a pricing rule $p$, the monopolist's profit is $\Pi(p, t) := \E_{\theta, r}\big[\big(p(\theta, r) - c(\theta, t) \big) \1_{\theta \geq p(\theta, r)} \big]$, and the consumer $\theta$'s surplus is 
$\text{CS}(p, \theta) := \E_r\big[\big(\theta - p(\theta, r)\big)_+\big]$. Let $\mathcal{P}$ denote the set of all pricing rules. 

A regulator contracts with the monopolist on the pricing rules. The regulator does not observe the monopolist's cost type $t$ and therefore posts a menu of pricing rules and associated subsidies/taxes $\{(p, T)\}$. By the revelation principle, without loss of generality, the regulator uses a  \textit{\textbf{(direct, incentive-compatible) mechanism}} $(p, T): [\underline{t}, \overline{t}] \rightarrow \mathcal{P} \times \R$ that satisfies the IC and IR constraints of the monopolist
\begin{align*}
 \Pi(p(t), t) - T(t)  &\geq  \Pi(p(\hat{t}), t) - T(\hat{t})  &&\text{ for all $t, \hat{t}$\,;} \\
 \Pi(p(t), t) - T(t)  &\geq  0  &&\text{ for all $t$\,.} 
\end{align*}
Following \citet{Baron1982RegulatingCosts}, the regulator maximizes a weighted sum of consumer and producer surplus: $\E[\text{CS}(p(t), \theta) + T(t)] + \alpha \cdot \E[\Pi(p(t), t) - T(t)]$,
where $\alpha \in [0, 1]$. This objective balances the budget as in \citet{Baron1982RegulatingCosts} since any tax imposed on the monopolist is rebated to the consumers and vice versa. 

We say that a mechanism is a \textit{\textbf{uniform pricing regulation}} if the assigned pricing rules $p(\theta, r; t)$ do not depend on any consumer data $\theta$ or randomization device $r$ for any $t$. That is, the regulator simply imposes uniform pricing and taxes/subsidizes the monopolist depending on the uniform price. Note that if the regulator were to use a uniform pricing regulation, then the determination of the optimal taxes/subsidies exactly collapses to the model of \citet{Baron1982RegulatingCosts}. However, when are uniform pricing regulations optimal? Our next result answers this question. Before stating it, we introduce another class of mechanisms. We say that a mechanism is a \textit{\textbf{discriminatory pricing regulation}} if the assigned pricing rules are deterministic and specify either full price discrimination or no trade, i.e., $p(\theta, r; t) \in \big\{\theta,\,\, \overline{\theta} + 1\big\}$, for all types $t$. In this case, the consumer surplus is entirely generated via lump-sum transfers by the regulator. 

Our next result shows that, in the presence of tax instruments, the optimal regulation of consumer data depends on the log modularity of the total surplus function, which, as we discuss, captures the severity of adverse selection in the market.
\begin{prop}\label{prop:regulation}
If $S(\theta, t)$ is log-submodular, then uniform pricing regulations are optimal. If $S(\theta, t)$ is log-supermodular, then discriminatory pricing regulations are optimal. 
\end{prop}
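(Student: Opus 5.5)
The plan is to recast the regulation problem as an instance of \Cref{prop:function} (screening with contracts). The monopolist is the agent, with type $t$. The allocation is the \emph{trade-probability function} induced by a pricing rule. The consumer surplus is absorbed into a redefined transfer.

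\textbf{Step 1: reduce pricing rules to trade-probability functions.} Given a pricing rule $p$, let $x(\theta) := \P_r\big(\theta \geq p(\theta, r)\big) \in [0,1]$. Then the total surplus generated is
\[\Pi(p,t) + \E_\theta[\text{CS}(p,\theta)] = \int S(\theta,t)\, x(\theta) \d G(\theta) =: v(x,t).\]
Conversely, every measurable $x:[\underline\theta,\overline\theta]\to[0,1]$ arises from some pricing rule. For instance, set $p(\theta,r)=\theta$ with probability $x(\theta)$ and $p(\theta,r)=\overline\theta+1$ otherwise. Many pricing rules induce the same $x$; they differ only in how the surplus is split between consumers and the monopolist.

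\textbf{Step 2: absorb consumer surplus into the transfer.} Define $T'(t) := T(t) + \E_\theta[\text{CS}(p(t),\theta)]$. Then the monopolist's payoff is unchanged:
\[\Pi(p(t),t) - T(t) = v(x(t),t) - T'(t).\]
Hence the IC and IR constraints depend only on $(x, T')$. The regulator's objective becomes
\[\E[\text{CS} + T] + \alpha\,\E[\Pi - T] = \E[T'(t)] + \alpha\,\E\big[v(x(t),t) - T'(t)\big].\]
This is exactly the paper's objective with constant welfare weight $\lambda\equiv\alpha\in[0,1]$. That weight is redistributive and satisfies $\E[\lambda]\le 1$. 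So the problem is the screening problem with allocation space equal to the measurable functions into $[0,1]$ and linear valuation $v(x,t)=\int x(\theta)S(\theta,t)\d G(\theta)$. This is the setting of \Cref{prop:function}, after reparametrizing $\theta$ by its quantile $G(\theta)\in[0,1]$ so that the integrating measure is as there. The standing assumptions also hold. $S$ is positive, and continuously differentiable on a compact set, so $S_t$ is bounded. $S$ is strictly increasing in $t$ because $c$ is strictly decreasing in $t$. Log-modularity is preserved under the monotone reparametrization of $\theta$.

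\textbf{Step 3: apply \Cref{prop:function} and translate back.}

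\emph{Log-submodular case.} \Cref{prop:function} gives the optimal menu $\{\1_{\theta\ge k}\}_k$. Each such $x$ is implemented by the deterministic uniform price $p\equiv k$, which does not depend on $\theta$ or $r$. We set $T = T' - \E[\text{CS}]$. This is a uniform pricing regulation.

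\emph{Log-supermodular case.} The optimal menu is $\{\1_{\theta\le k}\}_k$. We implement it by $p(\theta)=\theta$ for $\theta\le k$ and $p(\theta)=\overline\theta+1$ otherwise. This pricing rule is deterministic with prices in $\{\theta,\overline\theta+1\}$, so it is a discriminatory pricing regulation. Consumer surplus from trade is then zero, and all consumer surplus comes from the transfer.

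In both cases, any optimal $(x,T')$ maps back to a mechanism $(p,T)$ with identical IC/IR payoffs and identical objective value. Any original mechanism maps forward to a feasible $(x,T')$. Therefore the optimal values coincide.

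\textbf{Main obstacle.} I expect the main work to be verifying the equivalence in Steps 1–2 cleanly. In particular, I need to check that choosing among pricing rules with the same $x$ is irrelevant to both the monopolist's incentives and the regulator's objective, which is the point of absorbing CS into $T'$. I also need to check measurability and compactness: the footnote to \Cref{prop:function} already notes that only compactness of $\X^\star$ is needed, and $\X^\star$ here is a one-parameter family of threshold indicators.
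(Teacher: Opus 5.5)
Your proposal is correct and follows essentially the same route as the paper. Like the paper, you absorb expected consumer surplus into the transfer, which yields a screening problem with constant weight $\alpha$ and linear value $\int S(\theta,t)\,x(\theta)\,\d G(\theta)$. You then invoke \Cref{prop:function} and implement the resulting threshold allocations either by a uniform price or by prices in $\{\theta,\overline{\theta}+1\}$. Your quantile reparametrization of $\theta$ is the only addition, and it is a harmless way to match the hypotheses of \Cref{prop:function} that the paper leaves implicit.
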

\begin{proof}
Fix any mechanism $(p, T)$. Let $\widetilde{T}(t) = \E_{\theta}[\text{CS}(p(t), \theta)] + T(t)$. 
Then, note that the mechanism using the modified transfer rule satisfies all IC and IR constraints with respect to the monopolist's modified utility that takes into account the consumer surplus: $\Pi(p, t) + \E_{\theta}[\text{CS}(p, \theta)] - \widetilde{T}$. After this relabeling, the objective then becomes 
\[\E\Big[\text{CS}(p(t), \theta) + T(t)\Big] + \alpha \cdot \E\Big[\Pi(p(t), t) - T(t)\Big] = \E\Big[\widetilde{T}(t)\Big] + \alpha \cdot \E\Big[\Pi(p(t), t) + \E_{\theta}[\text{CS}(p(t), \theta)]  - \widetilde{T}(t)\Big]\,,\]
which is a special case of the objective in our main model with weight $\alpha \in [0, 1]$ on the agent's net utility (i.e., the monopolist's modified utility). Now, note that 
\[\Pi(p, t) + \E_{\theta}[\text{CS}(p, \theta)]  = \int \big(\theta - c(\theta, t) \big) \E_r[\1_{\theta \geq p(\theta, r)}]  \d G(\theta) =\int S(\theta, t) x(\theta)  \d G(\theta)\,,\]
where $x:[\underline{\theta}, \overline{\theta}] \rightarrow [0, 1]$ is defined by $x(\theta) := \E_r[\1_{\theta \geq p(\theta, r)}]$. Moreover, conversely, any $x:[\underline{\theta}, \overline{\theta}] \rightarrow [0, 1]$ can be written as $\E_r[\1_{\theta \geq p(\theta, r)}]$ for some pricing rule $p$. Therefore, by \Cref{prop:function}, the menu $\X^\star =\big \{\1_{[k, \overline{\theta}]}\big\}_{k}$ is optimal if $S$ is log-submodular, and $\X^\star =\big \{\1_{[\underline{\theta}, k]}\big\}_{k}$ is optimal if $S$ is log-supermodular. The first part of the result follows immediately by noting that in the log-submodular case, the menu is equivalent to a uniform pricing regulation (with the $k$'s being the uniform prices). Now, to see the second part, let $k(t)$ be the assignment for type $t$ given the optimal prices for the menu $\X^\star =\big \{\1_{[\underline{\theta}, k]}\big\}_{k}$, and write $p(\theta, r; t) = \1_{\theta \leq k(t)} \theta + \1_{\theta > k(t)} (\overline{\theta} + 1)$, which then gives a discriminatory pricing regulation that implements the menu $\X^\star$.
\end{proof}

To illustrate \Cref{prop:regulation}, consider the following example:
\vspace{-2mm}
\begin{ex}\label{ex:regulation}
Suppose $c(\theta, t) = h(\theta)w(t)$. Then, $\partial_{\theta t} \log S(\theta, t) =\frac{\theta h'(\theta) - h(\theta)}{(\theta - h(\theta) w(t))^2} \cdot (-w'(t))$\,.
Since $-w'(t) > 0$, the above expression has the same sign as $\theta h'(\theta) - h(\theta)$, i.e., whether $h(\theta)/\theta$ is increasing or decreasing in $\theta$. By \Cref{prop:regulation}, if $h(\theta)/\theta$ is decreasing in $\theta$, then uniform pricing regulations are optimal; if $h(\theta)/\theta$ is increasing in $\theta$, then discriminatory pricing regulations are optimal---regardless of the type distribution $F$, the value distribution $G$, and the welfare weight $\alpha$. Moreover, note that $h(\theta)/\theta$ would be decreasing---and hence uniform pricing regulations would be optimal---if either $h(\theta)$ is decreasing or $h(\theta)$ does not increase as fast as $\theta$. 
\qed
\end{ex}

\vspace{-6mm}
\paragraph{Data Regulation and Severity of Adverse Selection.}\hspace{-2mm}As \Cref{ex:regulation} illustrates, \Cref{prop:regulation} shows that uniform pricing regulations are optimal in markets with either \textit{\textbf{advantageous selection}} (where $h(\theta)$ is decreasing) or \textit{\textbf{mild adverse selection}} (where $h(\theta)$ is increasing but $h(\theta)/\theta$ is decreasing). In such markets, consumer data has no regulatory value: the optimal regulation coincides with the classic analysis of \citet{Baron1982RegulatingCosts}, even though the monopolist could personalize prices. In markets with \textit{\textbf{severe adverse selection}} (where $h(\theta)/\theta$ is increasing), uniform pricing regulations generally fail to be optimal: in the strict log-supermodular case, the optimal pricing rules identified by \Cref{prop:regulation}, if nondegenerate, serve low-value consumers while excluding high-value ones, and hence cannot be implemented via any uniform pricing regulation.\footnote{In contrast, for any uniform pricing regulation, there exists an outcome-equivalent discriminatory pricing regulation. However, without the conditions identified in \Cref{prop:regulation}, even discriminatory pricing regulations need not be optimal since the optimum can require randomized pricing rules.}

More concretely, consider \Cref{ex:regulation} with $w(t) = 1 - 0.15t + 0.1t^2 - 0.75t^3$, $t \sim U[0, 1]$, $\theta \sim U[0, 1]$, and $\alpha = 0.4$.  First, consider $h(\theta)= \theta - 0.1\theta^2$: since $h(\theta)/\theta$ is decreasing, uniform pricing regulation is optimal by \Cref{prop:regulation}; \Cref{fig:regulationa} depicts the optimal allocation rule. Second, consider $h(\theta)= \theta^{1.05}$: since $h(\theta)/\theta$ is increasing, discriminatory pricing regulation is optimal by \Cref{prop:regulation}; \Cref{fig:regulationb} depicts the optimal allocation rule. In the second case, even though the regulator could impose uniform pricing and subsidize the monopolist to prevent unraveling (as the surplus is always positive), doing so is costly precisely because the supermodularity of surplus is at the log level, capturing how the monopolist's information rents per unit of surplus rise with consumer value. Since a uniform price necessarily serves the highest-value consumers, the regulator instead lets the monopolist utilize the consumer data to trade with the low-value, low-cost consumers, curtailing the monopolist's information rents at minimal surplus cost.\footnote{In fact, if $S(\theta, t)$ is log-supermodular and also decreasing in $\theta$, then the optimal discriminatory pricing regulation can be implemented by a nonlinear tax schedule depending only on \textit{total trade volume}, allowing the monopolist to fully utilize consumer data to both price discriminate and decide whom to serve.}

\begin{figure}[t]
    \centering
    \begin{minipage}{0.49\linewidth}
        \centering
        \includegraphics[width=0.82\textwidth]{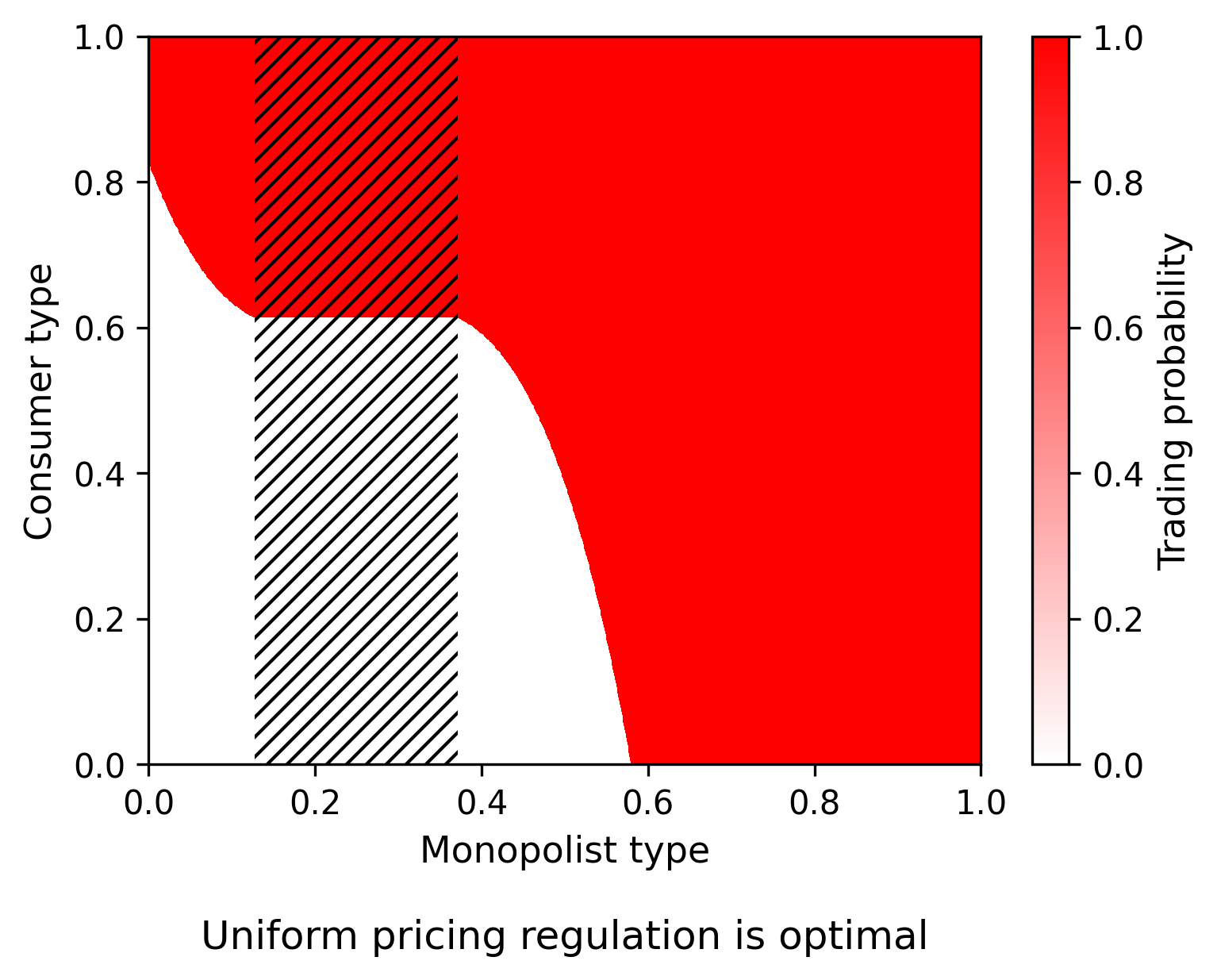}
        \subcaption{Uniform pricing regulation is optimal\label{fig:regulationa}}
    \end{minipage}
    \hfill
    \begin{minipage}{0.49\linewidth}
        \centering
        \includegraphics[width=0.82\textwidth]{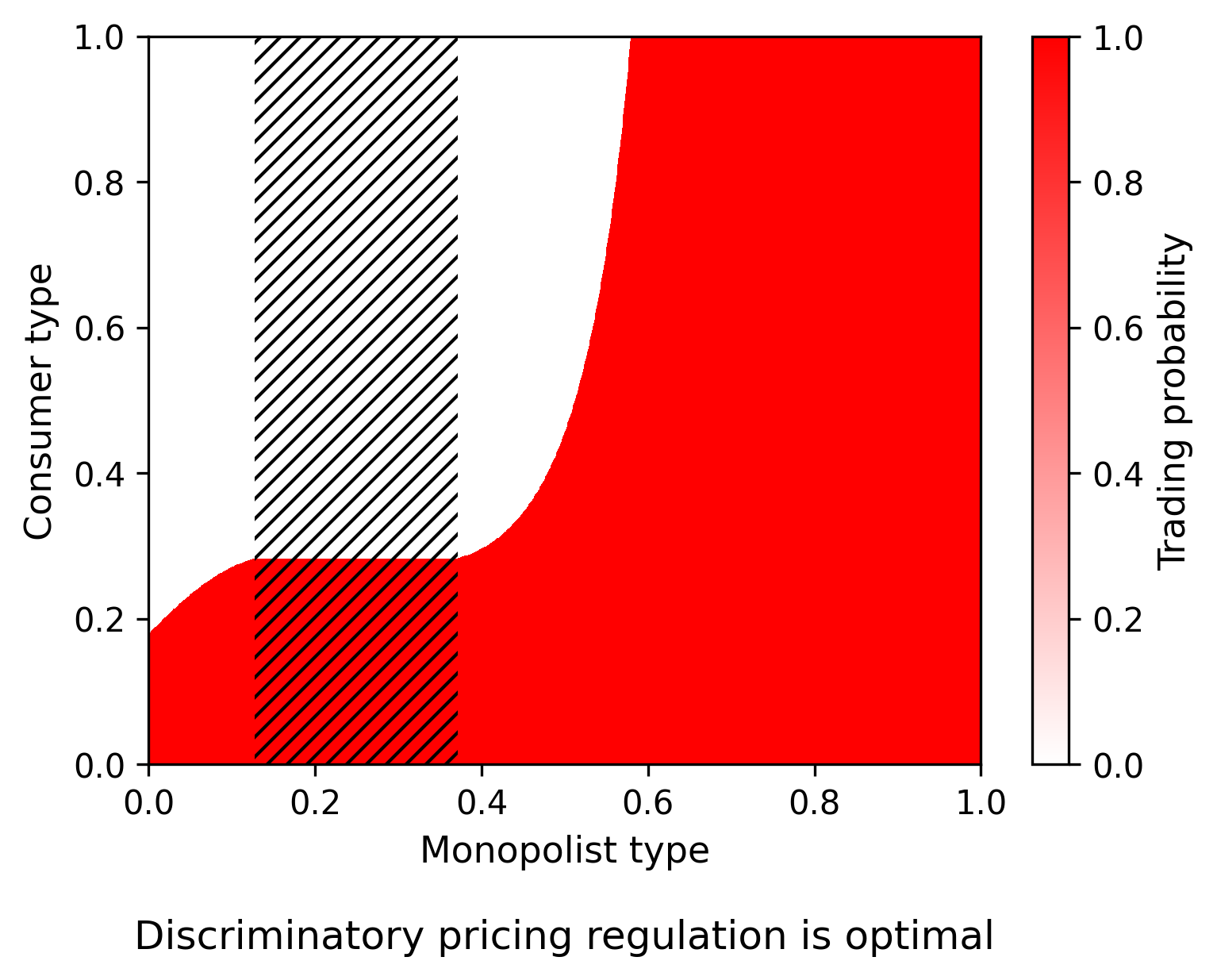}
       \subcaption{Discriminatory pricing regulation is optimal\label{fig:regulationb}}
    \end{minipage}
     \caption{Illustration of Optimal Allocation Rules in \Cref{ex:regulation}. Panel (a) depicts a \textit{mild adverse selection} case ($h(\theta)= \theta - 0.1\theta^2$), where uniform pricing regulation is optimal; panel (b) depicts a \textit{severe adverse selection} case ($h(\theta)= \theta^{1.05}$), where discriminatory pricing regulation is optimal (\Cref{prop:regulation}). The colored regions depict the seller and buyer types for which trade occurs; the shaded regions indicate bunching in the allocation rule.\label{fig:regulation}}
\end{figure}

\section{Conclusion}\label{sec:conclude}
We study a general screening problem with an arbitrary allocation space and comonotonic values. A surplus-elasticity frontier---balancing available surplus against information rents per unit of surplus---is optimal for every type distribution and redistributive welfare weight, separating menu design from optimal pricing. Under an incremental-elasticity condition, this result holds among stochastic mechanisms, and a generalized notion of the frontier characterizes robust optimality. Across applications to bundling, taxation, sequential screening, selling information, and monopoly regulation, the same surplus-elasticity logic yields robust conditions for the optimality of simple mechanisms.

\renewcommand*{\bibfont}{\linespread{0.97}\scriptsize\selectfont}
\setlength\bibsep{0pt}
\bibliographystyle{ecta} 
\bibliography{references}

\newpage 
\appendix
\crefalias{section}{appendix}
\linespread{1.2}\selectfont

\makeatletter
\g@addto@macro\normalsize{%
  \setlength\abovedisplayskip{5pt plus 2pt minus 2pt}%
  \setlength\belowdisplayskip{4pt plus 2pt minus 2pt}%
  \setlength\belowdisplayshortskip{3pt plus 1pt minus 1pt}%
}
\makeatother
\normalsize

\noindent\title{\qquad \qquad \centering \LARGE \textbf{Online Appendix to ``Screening Frontiers''}}

\[\text{\centering \Large Frank Yang}\]

\section{Omitted Proofs}\label{app:proof}
\subsection{Proof of \texorpdfstring{\Cref{thm:main}}{}}

The proof follows closely the outline in \Cref{sec:proof}. We divide the proof into five steps. \Cref{subsec:prelim} derives some preliminary inequalities. \Cref{subsec:reconstruct} proves the reconstruction lemma. \Cref{subsec:downward} proves the generalized downward sufficiency theorem. \Cref{subsec:lottery} proves the purification lemma. \Cref{subsec:complete} completes the proof first for a finite allocation space and then extends the result to an arbitrary allocation space. 

\subsubsection{Preliminary Inequalities}\label{subsec:prelim}

\begin{lemma}\label{lem:inequality}
Let $\X^\star =\{\varnothing, x_1, \dots, x_m\}$ be a surplus-elasticity frontier, ordered by the surplus order. For any $x \not \in \X^\star$, let $x_k \in \X^\star$ be such that $x_k \neq \varnothing$ and 
\[x \preceq_{\emph{elasticity}} x_k \,.\]
Then, for all $1 \leq i < j \leq k$, we have 
\[  \frac{\d}{\d t}   \log(v(x, t) - v(x_i, t))  >  \frac{\d}{\d t}   \log(v(x_j, t) - v(x_i, t))\]
for all 
\[t \in \mathcal{T}':=\big\{s: v(x_j, s) >  v(x, s) > v(x_i, s)\big\}\,.\]
Moreover, if $\mathcal{T}' \neq \emptyset$, then it must be an interval. 
\end{lemma}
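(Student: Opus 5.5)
The plan is to reduce everything to pointwise comparisons of the log-derivatives $\rho_y(t) := \frac{\d}{\d t}\log v(y,t)$. By \eqref{eq:log-d}, $y \preceq_{\text{elasticity}} y'$ is equivalent to $\rho_y \geq \rho_{y'}$ pointwise.

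\textbf{Step 1: ordering the log-derivatives.} First I will collect the elasticity relations available from the hypotheses. Since $1 \leq i < j \leq k$, the anti-chain property \textit{(ii)} of the frontier gives $x_i \succ_{\text{elasticity}} x_j$, so $\rho_{x_i} < \rho_{x_j}$ strictly. It also gives $x_j \succeq_{\text{elasticity}} x_k$, with equality when $j=k$. Combining this with the hypothesis $x \preceq_{\text{elasticity}} x_k$ and transitivity of the pointwise order yields $x \preceq_{\text{elasticity}} x_j$ and $x \preceq_{\text{elasticity}} x_i$. Hence
\[\rho_x(t) \geq \rho_{x_j}(t) > \rho_{x_i}(t) \quad \text{for all } t.\]
Moreover $v(x_i,t)>0$, because $i\geq 1$ means $x_i\neq\varnothing$.

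\textbf{Step 2: the inequality on $\mathcal{T}'$.} Fix $t\in\mathcal{T}'$ and write $a=v(x,t)$, $b=v(x_i,t)$, $c=v(x_j,t)$, so that $c>a>b>0$. Using $v_t(y,t)=\rho_y v(y,t)$, I will rewrite each side as
\[\frac{a\rho_x - b\rho_{x_i}}{a-b} = \rho_{x_i} + (\rho_x-\rho_{x_i})\frac{a}{a-b}, \qquad \frac{c\rho_{x_j} - b\rho_{x_i}}{c-b} = \rho_{x_i} + (\rho_{x_j}-\rho_{x_i})\frac{c}{c-b}.\]
The map $z\mapsto z/(z-b)$ is strictly decreasing on $(b,\infty)$ when $b>0$, and $a<c$, so $\frac{a}{a-b} > \frac{c}{c-b} > 0$. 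By Step 1, $\rho_x-\rho_{x_i} \geq \rho_{x_j}-\rho_{x_i} > 0$. Multiplying these two comparisons of positive quantities gives the strict inequality claimed in the statement.

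\textbf{Step 3: $\mathcal{T}'$ is an interval.} By (6), $x\preceq_{\text{elasticity}} x_j$ means $v(x,t)/v(x_j,t)$ is nondecreasing. Therefore $\{t: v(x_j,t)>v(x,t)\} = \{t: v(x,t)/v(x_j,t)<1\}$ is a lower interval of $\mathcal{T}$. Likewise $x\preceq_{\text{elasticity}} x_i$ makes $v(x,t)/v(x_i,t)$ nondecreasing, so $\{t: v(x,t)>v(x_i,t)\}$ is an upper interval. If $\mathcal{T}'$ is nonempty, it is the intersection of a lower interval and an upper interval, hence an interval.

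I expect the only delicate point to be Step 1. One must check that the anti-chain property, and not just domination, is what delivers $x \preceq_{\text{elasticity}} x_j$ for every $j\leq k$ from the hypothesis on $x_k$ alone. Once that is in place, the rest is algebra.
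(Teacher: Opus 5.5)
Your proof is correct. Steps 1 and 3 are essentially the paper's argument: the same chain $\rho_x \geq \rho_{x_k} \geq \rho_{x_j} > \rho_{x_i}$ from the anti-chain property, and the same ``lower set intersected with upper set'' argument showing $\mathcal{T}'$ is an interval.

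The difference is in Step 2. The paper proves the key inequality with a chain of mediant inequalities. It first shows
\[\frac{\d}{\d t}\log\big(v(x_j,t)-v(x_i,t)\big) > \rho_{x_j} \geq \frac{\d}{\d t}\log\big(v(x_j,t)-v(x,t)\big).\]
It then notes that $v(x_j)-v(x_i) = (v(x_j)-v(x)) + (v(x)-v(x_i))$. So the log-derivative of the increment from $x_i$ to $x_j$ is a mediant of the log-derivatives of the increments from $x_i$ to $x$ and from $x$ to $x_j$. This forces the increment from $x_i$ to $x$ to have the largest log-derivative.

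Your decomposition $\rho_{x_i} + (\rho_y-\rho_{x_i})\, v(y,t)/\big(v(y,t)-v(x_i,t)\big)$ reaches the same conclusion in a single comparison. It also shows exactly where each hypothesis enters:
\begin{itemize}
\item $v(x_i,t)>0$, i.e.\ $i\geq 1$, makes $z\mapsto z/(z-b)$ strictly decreasing; with $b=0$ you would only get a weak inequality.
\item The strict anti-chain gap $\rho_{x_j}>\rho_{x_i}$ is what turns this into strictness.
\item $a<c$ is precisely the upper constraint defining $\mathcal{T}'$.
\end{itemize}
The paper's route yields, as a by-product, the full three-term ordering that includes $\frac{\d}{\d t}\log(v(x_j,t)-v(x,t))$. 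Only the first inequality of that ordering is used later, in the reconstruction lemma.
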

\begin{proof}

Since $\X^\star$ is a surplus-elasticity frontier, for all $i < j \leq k $, we have 
    \[\frac{\d}{\d t}  \log(v(x, t))\geq \frac{\d}{\d t}  \log(v(x_{k}, t)) \geq  \frac{\d}{\d t}  \log(v(x_j, t)) > \frac{\d}{\d t}  \log(v(x_i, t)) \text{ for all $t$}\,.\]
By the mediant inequality, this implies the following two inequalities: \textit{(i)} for all $i < j \leq k$,
\[ \frac{\d}{\d t}  \log(v(x_j, t) - v(x_i, t)) > \frac{\d}{\d t}  \log(v(x_j, t)) > \frac{\d}{\d t}  \log(v(x_i, t)) \geq 0\text{ for all $t$\,;}\]
and \textit{(ii)} for all $i < j \leq k$,
\[ \frac{\d}{\d t}  \log(v(x, t)) \geq \frac{\d}{\d t}  \log(v(x_j, t)) \geq  \frac{\d}{\d t}  \log(v(x_j, t) - v(x, t)) \text{ for all $t$ s.t. $v(x_j, t) > v(x, t)$ \,.}\]
Combining these two inequalities gives \textit{(iii)} for all $i < j \leq k$ 
\[ \frac{\d}{\d t}  \log(v(x_j, t) - v(x_i, t))   > \frac{\d}{\d t}  \log(v(x_j, t)) \geq  \frac{\d}{\d t}   \log(v(x_j, t) - v(x, t)) \text{ for all $t$ s.t. $v(x_j, t) > v(x, t)$\,.} \]
Finally, by combining the mediant inequality and inequality \textit{(iii)}, we have 
\[  \frac{\d}{\d t}   \log(v(x, t) - v(x_i, t)) > \frac{\d}{\d t}  \log(v(x_j, t) - v(x_i, t))  >  \frac{\d}{\d t}   \log(v(x_j, t) - v(x, t))\]
for all
\[t \in \mathcal{T}'=\big\{s: v(x_j, s)>  v(x, s)> v(x_i, s)\big\}\,,\]
proving the inequality. Now, note that 
\[\frac{\d}{\d t}  \log(v(x, t))\geq \frac{\d}{\d t}  \log(v(x_{j}, t)) \text{ for all $t$}\]
implies that for all $t < t'$, 
\[ v(x, t) \geq v(x_{j}, t) \implies  v(x, t') \geq v(x_{j}, t') \,.\]
Therefore, the set $\{s: v(x_j, s)>  v(x, s)\}$ is of the form $[\underline{t}, t')$ or $[\underline{t}, \overline{t}]$. Similarly, the set $\{s: v(x, s)>  v(x_i, s)\}$ is of the form $(t'', \overline{t}]$ or $[\underline{t}, \overline{t}]$. Therefore, if $\mathcal{T}'\neq\emptyset$, then $\mathcal{T}'$ must be an interval. 
\end{proof}

\subsubsection{Proof of \Cref{lem:reconstruct}}\label{subsec:reconstruct}
Note that by construction, for all $t$ where $x(t) \neq \varnothing$, we have 
\begin{align*}
v(a(t), t) &= \frac{v(x(t), t) - v(x^{-}(t), t)}{ v(x^{+}(t), t) -  v(x^{-}(t), t)} \cdot  v(x^{+}(t), t) + \Big[1 - \frac{v(x(t), t) - v(x^{-}(t), t)}{ v(x^{+}(t), t) -  v(x^{-}(t), t)}\Big] \cdot v(x^{-}(t), t)    \\
&=v(x(t), t)\,.
\end{align*}
Thus, by construction $v(a(t), t) - p(t) = v(x(t), t) - p(t)$ for all $t$. Hence, if type $t$ reports truthfully, then its payoff is unchanged in the reconstruction compared to that in $(x, p)$. Because the mechanism $(x, p)$ satisfies the downward IC constraints, to show that $(a, p)$ satisfies the downward IC constraints, it suffices to show that the deviating payoff of type $t$ misreporting to be $\hat{t} < t$ is weakly lower under $(a, p)$ compared to the deviating payoff under $(x, p)$. 

Since the payment rule is unchanged, it suffices to show that for $\hat{t} < t$, 
\[v(a(\hat{t}), t) = \alpha(\hat{t}) v(x^{+}(\hat{t}), t) + (1 - \alpha(\hat{t})) v(x^{-}(\hat{t}), t)  \leq v(x(\hat{t}), t)\,.\label{eq:dIC}\tag{A.1}\]
To ease notation, fix $\hat{t} < t$ and write $\hat{x}$, $\hat{x}^+$, and $\hat{x}^-$ for $x(\hat{t})$, $x^{+}(\hat{t})$, and $x^{-}(\hat{t})$ respectively. Note that if $\hat{x} \in \X^\star$, then $\hat{x}=\hat{x}^+$ and $\alpha(\hat{t}) = 1$, and hence \eqref{eq:dIC} clearly holds by construction. From now on, suppose $\hat{x} \not\in \X^\star$ (and in particular, $\hat{x} \neq \varnothing$). Then, by construction, $\hat{x}^- \prec_{\text{surplus}} \hat{x}^+$. We consider two cases. First, consider the case where $\hat{x}^- = \varnothing$. Then, $\hat{x}^+ = x_1$ and
\[\alpha(\hat{t}) = \frac{v(\hat{x}, \hat{t})}{v(x_1, \hat{t})}\,.\]
By property \textit{(i)} of the surplus-elasticity frontier, there exists
$x_k \in \X^\star$ such that  $x_k \neq \varnothing$ and
\[
\frac{\d}{\d s} \log v(\hat{x}, s) \geq \frac{\d}{\d s} \log v(x_k, s)\,.
\]
By property \textit{(ii)} of the surplus-elasticity frontier, we have $x_1 \preceq_{\text{surplus}} x_k$ and 
\[
\frac{\d}{\d s} \log v(x_k, s) \geq \frac{\d}{\d s}\log v(x_1, s)\,.
\]
Thus, $\frac{v(\hat{x}, \,\cdot\,)}{v(x_1, \,\cdot\,)}$ is nondecreasing. Since $\hat{t} < t$, we have 
\begin{align*}
v(a(\hat{t}), t) &= \alpha(\hat{t}) v(x_1, t) \\
                 &= \frac{v(\hat{x}, \hat{t})}{v(x_1, \hat{t})} v(x_1, t)\\
                 &\leq \frac{v(\hat{x}, t)}{v(x_1, t)} v(x_1, t) =v(\hat{x}, t)\,,
\end{align*}
proving \eqref{eq:dIC} for this case.

Now, consider the case where $\hat{x}^- \neq \varnothing$. Then, we can write the deviating payoff as 
\begin{align*}
\alpha(\hat{t}) v(x^{+}(\hat{t}), t) + (1 - \alpha(\hat{t})) v(x^{-}(\hat{t}), t) &= \alpha(\hat{t})\big(v(\hat{x}^+, t) - v(\hat{x}^-, t)\big) +v(\hat{x}^-, t)    \\
 &= \frac{v(\hat{x}, \hat{t}) - v(\hat{x}^{-}, \hat{t})}{ v(\hat{x}^{+}, \hat{t}) -  v(\hat{x}^{-}, \hat{t})}\big(v(\hat{x}^+, t) - v(\hat{x}^-, t)\big) +v(\hat{x}^-, t)\,.   
\end{align*}
Then, by properties \textit{(i)} and \textit{(ii)} of a surplus-elasticity frontier and the definition of $\hat{x}^+$, we must have $\hat{x}^+ \preceq_{\text{surplus}} x_k \in \X^\star$ for some $x_k \succeq_{\text{surplus}} \hat{x}$ such that  
\[\frac{\d}{\d s}\log v(\hat{x}, s) \geq \frac{\d}{\d s}\log v(x_k, s) \geq \frac{\d}{\d s} \log v(\hat{x}^+, s) > \frac{\d}{\d s} \log v(\hat{x}^-, s)\text{ for all $s$}\,, \]
where the first inequality follows from $\hat{x} \preceq_{\text{elasticity}} x_k$, the second inequality follows from $x_k \preceq_{\text{elasticity}} \hat{x}^+$, and the last inequality follows from $\hat{x}^+ \prec_{\text{elasticity}} \hat{x}^-$. This implies that for all $s < s'$, we have 
\[v(\hat{x}, s)\geq v(\hat{x}^+, s) \implies v(\hat{x}, s')\geq v(\hat{x}^+, s')\,. \label{eq:sc} \tag{A.2}\]
Note that if $v(\hat{x}, t) \geq v(\hat{x}^+, t)$, then \eqref{eq:dIC} clearly holds as 
\[\alpha(\hat{t})\big(v(\hat{x}^+, t) - v(\hat{x}^-, t)\big) +v(\hat{x}^-, t)\leq v(\hat{x}^+, t) \leq v(\hat{x}, t)\,.\]
Henceforth, suppose $v(\hat{x}, t) < v(\hat{x}^+, t)$. Then, by \eqref{eq:sc}, we have $v(\hat{x}, \hat{t}) < v(\hat{x}^+, \hat{t})$. By definition of $\hat{x}^-$, we have $v(\hat{x}^-, \hat{t})< v(\hat{x}, \hat{t}) < v(\hat{x}^+, \hat{t})$. But for any $s < s'$ we also have 
\[v(\hat{x}, s) > v(\hat{x}^-, s) \implies v(\hat{x}, s')> v(\hat{x}^-, s')\,. \label{eq:sc2} \tag{A.3}\]
Thus, we have $v(\hat{x}^-, t) < v(\hat{x}, t) < v(\hat{x}^+, t)$. Therefore, we have 
\[\{\hat{t}, t\} \subseteq \mathcal{T}':=\big\{s\in \mathcal{T}: v(\hat{x}^-, s) < v(\hat{x}, s) < v(\hat{x}^+, s)\big\}\,.\]
Since $\mathcal{T}' \neq \emptyset$, by \Cref{lem:inequality}, $\mathcal{T}'$ must be an interval. Moreover, by \Cref{lem:inequality}, we have 
\[ \frac{\d}{\d s}   \log(v(\hat{x}, s) - v(\hat{x}^-, s))  >  \frac{\d}{\d s}   \log(v(\hat{x}^+, s) - v(\hat{x}^-, s))\]
for all $s \in \mathcal{T}'$, which implies that 
\[ \frac{\d}{\d s}   \log\Bigg[\frac{v(\hat{x}, s) - v(\hat{x}^-, s)}{v(\hat{x}^+, s) - v(\hat{x}^-, s)}\Bigg]  >  0\]
for all $s \in \mathcal{T}'$. Thus, the function 
\[g(s) := \frac{v(\hat{x}, s) - v(\hat{x}^-, s)}{v(\hat{x}^+, s) - v(\hat{x}^-, s)}\]
is strictly increasing on the interval $\mathcal{T}'$. But then since $\hat{t} < t \in \mathcal{T}'$, we have 
\begin{align*}
\alpha(\hat{t}) v(x^{+}(\hat{t}), t) + (1 - \alpha(\hat{t})) v(x^{-}(\hat{t}), t) 
 &= \frac{v(\hat{x}, \hat{t}) - v(\hat{x}^{-}, \hat{t})}{ v(\hat{x}^{+}, \hat{t}) -  v(\hat{x}^{-}, \hat{t})}\big(v(\hat{x}^+, t) - v(\hat{x}^-, t)\big) +v(\hat{x}^-, t)\ \\
 &\leq \frac{v(\hat{x}, t) - v(\hat{x}^{-}, t)}{ v(\hat{x}^{+}, t) -  v(\hat{x}^{-}, t)}\big(v(\hat{x}^+, t) - v(\hat{x}^-, t)\big) +v(\hat{x}^-, t) \\
  &= v(\hat{x}, t) - v(\hat{x}^{-}, t) +v(\hat{x}^-, t) = v(\hat{x}, t)\,, 
\end{align*}
which proves \eqref{eq:dIC}. The claim follows. 

\subsubsection{Proof of \Cref{thm:downward}}\label{subsec:downward}

\paragraph{Proof of \Cref{lem:mon}.}\hspace{-2mm}For any optimal solution $a$ to the (\textbf{Downward-IC}) program, let $o(a) \in \{1, \dots, n\}$ be the first index at which $a_i > a_{i+1}$; if no such index exists, we set $o(a) = n$. This is the first index at which the allocation rule starts to decrease. We claim that there must exist an optimal solution $a$ with $o(a) = n$. Suppose for contradiction that all optimal solutions satisfy $o(a) < n$. 

Fix any optimal solution $a$ with the largest index $o(a)$ among those. Let $i = o(a) < n$. First, we consider what happens if we replace $(a_{i+1}, p_{i+1})$ that is assigned to type $t_{i+1}$ with $(a_i, p_i)$ while keeping everything else fixed. Call the new allocation and payment rules $(\hat{a}, \hat{p})$. By construction, we have 
\[a_1 \leq a_2 \leq \cdots \leq a_i \text{ and } a_i > a_{i+1}\,.\]
By definition $\{a_j: j < i\}$ were assigned to types $t_j$ below type $t_i$, and hence obtainable by type $t_i$; the fact that $t_i$ finds it downward incentive compatible to consume $(a_i, p_i)$ implies that type $t_i$ must prefer  $(a_i, p_i)$ over $(a_j, p_j)$ for all $j < i$.  At the same time, because $a_j \leq a_i$, by the increasing difference property of $v(a, t)$, for all $j < i$, we have 
\[v(a_i, t_{i+1}) - v(a_j, t_{i+1}) \geq v(a_i, t_{i}) - v(a_j, t_{i}) \geq p_i - p_j \,, \]
where the second inequality is due to IC$[i \rightarrow j]$ under $(a, p)$. It follows that IC$[(i+1) \rightarrow j]$ constraints are satisfied for all $j < i+1$ under the modified allocation and payment rule $(\hat{a}, \hat{p})$. Moreover, note that for any $k > (i+1)$, 
\[v(a_k, t_k) - p_k \geq v(a_{i}, t_k) - p_i = v(\hat{a}_{i+1}, t_k) - \hat{p}_{i+1}\,,\]
where the first inequality is due to IC$[k \rightarrow i]$ under $(a, p)$ and the equality holds by construction. Therefore, IC$[k \rightarrow (i+1)]$ constraints are satisfied for all $i+1 < k$ under the modified allocation and payment rule $(\hat{a}, \hat{p})$. Since we have only changed the allocation and payment of type $t_{i+1}$, it follows that $(\hat{a}, \hat{p})$ is downward incentive compatible (it also satisfies all IR constraints given that $v(a, t)$ is nondecreasing in $t$). Then, the fact that $(a, p)$ is optimal for (\textbf{Downward-IC}) implies that $(\hat{a}, \hat{p})$ cannot yield a strictly higher objective value for the principal. Thus, we must have 
\[\lambda(t_{i+1})\big(v(a_{i+1}, t_{i+1}) - p_{i+1}\big) + p_{i+1} \geq  \lambda(t_{i+1})\big(v(a_{i}, t_{i+1}) - p_{i}\big) + p_{i} \,,\]
or equivalently 
\[\lambda(t_{i+1})\Big(v(a_{i+1}, t_{i+1}) - v(a_{i}, t_{i+1}) + p_i - p_{i+1}\Big)  \geq   p_{i} -  p_{i+1} \,,\]
which implies that 
\[\lambda(t_{i})\Big(v(a_{i+1}, t_{i+1}) - v(a_{i}, t_{i+1}) + p_i - p_{i+1}\Big)  \geq   p_{i} -  p_{i+1} \,,\]
since $\lambda(t_i) \geq \lambda(t_{i+1})$ by assumption and  $v(a_{i+1}, t_{i+1}) - v(a_{i}, t_{i+1}) + p_i - p_{i+1} \geq 0$ given IC$[(i+1)\rightarrow i]$ under $(a, p)$.

Then, since $a_{i+1} < a_i$, by the strict increasing difference property of $v(a, t)$, we have 
\[v(a_{i+1}, t_{i+1}) - v(a_{i}, t_{i+1}) < v(a_{i+1}, t_{i}) - v(a_{i}, t_{i})\,.\]
Thus, since $\lambda(t_i) \geq 0$, we must have 
\[\lambda(t_{i})\Big(v(a_{i+1}, t_{i}) - v(a_{i}, t_{i}) + p_i - p_{i+1}\Big)  \geq   p_{i} -  p_{i+1} \,,\]
or equivalently 
\[\lambda(t_{i})\Big(v(a_{i+1}, t_{i}) - p_{i+1}\Big) + p_{i+1} \geq  \lambda (t_i) \big(v(a_{i}, t_{i}) - p_i\big) + p_{i}  \,,\]
and therefore, the principal's objective must have weakly increased when changing the original $(a, p)$ to $(\tilde{a}, \tilde{p})$ where the new mechanism is identical to the original mechanism except that we replace the allocation and payment $(a_i, p_i)$ of type $t_i$ with $(a_{i+1}, p_{i+1})$. We argue that the mechanism $(\tilde{a}, \tilde{p})$ satisfies all IR and all downward IC constraints. First, note that by the above arguments, we have 
\[v(a_{i+1}, t_{i}) - v(a_{i}, t_{i}) + p_i - p_{i+1} \geq v(a_{i+1}, t_{i+1}) - v(a_{i}, t_{i+1}) + p_i - p_{i+1} \geq 0\,, \]
and thus 
\[v(a_{i+1}, t_{i}) - p_{i+1} \geq v(a_{i}, t_{i}) - p_i \geq 0\,. \]
Therefore, $(\tilde{a}, \tilde{p})$ satisfies all IR constraints. Moreover, for any $j < i$, we have 
\[v(a_{i+1}, t_{i}) - p_{i+1} \geq v(a_{i}, t_{i}) - p_i \geq v(a_{j}, t_{i}) - p_j \,, \]
where the second inequality uses IC$[i \rightarrow j]$ under $(a, p)$. Therefore, IC$[i \rightarrow j]$ constraints hold for all $j < i$ under $(\tilde{a}, \tilde{p})$. For any $k > i$, we have 
\[v(a_{k}, t_{k}) - p_{k} \geq v(a_{i+1}, t_{k}) - p_{i+1} = v(\tilde{a}_{i}, t_{k}) - \tilde{p}_{i} \,, \]
where the first inequality uses IC$[k \rightarrow (i+1)]$ under $(a, p)$, and the equality follows by construction. Therefore, the mechanism $(\tilde{a}, \tilde{p})$ is downward incentive compatible, and hence a feasible solution to the (\textbf{Downward-IC}) program. Moreover, as we have argued, $(\tilde{a}, \tilde{p})$ weakly improves the principal's objective---therefore, $(\tilde{a}, \tilde{p})$ must also be an optimal solution to (\textbf{Downward-IC}).

Now, let 
\[r := \min \big\{j: a_j > a_{i+1} \big\} \,.\]
By construction $r \leq i$, and 
\[a_1 \leq a_2 \leq \cdots \leq a_{r-1} \leq a_{i+1} < a_{r} \leq \cdots \leq a_i\,.\]
If $r < i$, then we must have that $o(\tilde{a}) = i-1$. The identical argument above applies to $(\tilde{a}, \tilde{p})$, and yields an optimal solution $(\tilde{a}^{(1)}, \tilde{p}^{(1)})$ to program (\textbf{Downward-IC}) with $\tilde{a}^{(1)}_{i-1} = \tilde{a}^{(1)}_{i} = a_{i+1}$. Repeating this process $(i - r)$ times yields an optimal solution $(\tilde{a}^{(i-r)}, \tilde{p}^{(i-r)})$ to program (\textbf{Downward-IC}) with 
\[\tilde{a}^{(i-r)}_{r} = \tilde{a}^{(i-r)}_{r+1} = \cdots = \tilde{a}^{(i-r)}_{i} = a_{i+1}\,,\]
with $\tilde{a}^{(i-r)}_{j} = a_j$ for all $j \not \in \{r, \dots, i\}$. 

Then, it follows immediately that for all $1 \leq j \leq i$, 
\[ \tilde{a}^{(i-r)}_j \leq  \tilde{a}^{(i-r)}_{j+1}\,,\]
where we use the notation $\tilde{a}^{(0)} = \tilde{a}$. Therefore, 
\[o(\tilde{a}^{(i-r)}) \geq i + 1 > o(a)\,,\]
contradicting that, by construction, $o(a)$ is the largest such index among all optimal solutions to program (\textbf{Downward-IC}). 

\paragraph{Proof of \Cref{lem:binding}.}\hspace{-2mm}By \Cref{lem:mon}, without loss of optimality,  consider solutions to the \textbf{(Downward-IC)} program with monotone $a$. Suppose for contradiction that there exists no such optimal $(a, \tilde{p})$ such that local downward IC binds at every type. Then, for any optimal solution $(a, \tilde{p})$ to (\textbf{Downward-IC}), where $a$ is monotone, we have 
\[\Delta[\tilde{p}] := \sum_{i=1}^{n-1} \Big[\big(v(a_{i+1}, t_{i+1}) - \tilde{p}_{i+1}\big) - \big(v(a_{i}, t_{i+1}) - \tilde{p}_{i}\big)\Big] > 0\,.\]
Fix the optimal allocation rule $a$. Fix a payment rule $p$ such that $(a,p)$ is optimal and the above sum is minimized among all such $\tilde{p}$. Clearly, the above sum must be strictly positive for payment rule $p$ as well.  

Fix some $i$ such that IC$[(i+1) \rightarrow i]$ is slack under $(a, p)$. That is, 
\[\big(v(a_{i+1}, t_{i+1}) - p_{i+1}\big) - \big(v(a_{i}, t_{i+1}) - p_{i}\big) > 0\,.\]
We claim that IC$[k \rightarrow j]$ must be slack for all $k > i$ and all $j \leq i$. Indeed, by  IC$[k \rightarrow (i+1)]$, we have 
\[\big(v(a_{k}, t_{k}) - p_{k}\big) - \big(v(a_{i+1}, t_{k}) - p_{i+1}\big) \geq 0\,.\]
Moreover, by the increasing differences property of $v(a, t)$ and the slack of IC$[(i+1) \rightarrow i]$, we have 
\[\big(v(a_{i+1}, t_{k}) - p_{i+1}\big) - \big(v(a_{i}, t_{k}) - p_{i}\big) > 0\,.\]
Similarly, by the increasing differences property of $v(a, t)$ and IC$[i \rightarrow j]$, we have 
\[\big(v(a_{i}, t_{k}) - p_{i}\big) - \big(v(a_{j}, t_{k}) - p_{j}\big) \geq 0\,.\]
Now, summing the previous three inequalities gives 
\[\underbrace{\big(v(a_{k}, t_{k}) - p_{k}\big) - \big(v(a_{j}, t_{k}) - p_{j}\big)}_{=: \delta_{kj}} > 0\,, \]
and hence IC$[k \rightarrow j]$ holds with slack. Now, let 
\[\delta := \min\big\{\delta_{kj}: k > i,\, j \leq i\big\} > 0\,.\]
Let 
\[q := \sum_{k > i} \mu(t_k) \in (0, 1)\,. \]
Consider a modified mechanism $(a, \tilde{p})$ where for all $k$, 
\[
    \tilde{p}_k = 
\begin{cases}
 p_k + \delta/2 - q \cdot \delta/2 & \text{ if $k > i$}\,;\\ 
 p_k - q \cdot \delta/2 & \text{ otherwise}\,.
\end{cases}
\]
We first note that $(a, \tilde{p})$ is downward incentive compatible. Indeed, by construction, it suffices to check IC$[k \rightarrow j]$ where $k > i$ and $j \leq i$. But for any such IC constraint, as shown above, there exists at least $\delta$ slack, and hence lowering the deviation payoff gap of type $t_k$ mimicking $t_{j}$ by $\delta/2$ preserves IC$[k \rightarrow j]$.  Moreover, $(a, \tilde{p})$  also satisfies all IR constraints because IC$[j \rightarrow 1]$ constraints hold for all $j$, IR$[1]$ holds, and $v(a, t)$ is nondecreasing in $t$. Now, we claim that $(a, \tilde{p})$ yields a weakly higher objective for the principal. Indeed, by construction 
\[\sum_k \mu(t_k) \tilde{p}_k  =  - q \cdot \delta/2 + \sum_{k \leq i} \mu(t_k) p_k + \sum_{k > i}\mu(t_k) (p_k + \delta/2) = \sum_k \mu(t_k) p_k\,. \]
Moreover, since $q < 1$ by construction, note that for all $k$, we must have 
\[\sum_{j=k}^n \mu(t_j) p_j \leq \sum_{j=k}^n \mu(t_j) \tilde{p}_j \,.\]
The change in the principal's objective is given by 
\[\sum_k \Big(\mu(t_k)\tilde{p}_k  - \mu(t_k)p_k\Big) (1 - \lambda(t_k))\,.\]
Note that $1 - \lambda(t_k)$ is nondecreasing in $k$, and therefore, using summation by parts (Abel's identity), we have
\begin{align*}
 \sum_k \mu(t_k) p_k (1 - \lambda(t_k)) &=   (1 - \lambda(t_1)) \sum_{k=1}^n \mu(t_k)p_k + \sum_{k=2}^n \Big(\sum_{j=k}^n  \mu(t_j) p_j\Big) ( \lambda(t_{k-1}) - \lambda(t_k)) \\
 &\leq (1 - \lambda(t_1)) \sum_{k=1}^n \mu(t_k) \tilde{p}_k + \sum_{k=2}^n \Big(\sum_{j=k}^n \mu(t_j) \tilde{p}_j  \Big) ( \lambda(t_{k-1}) - \lambda(t_k))\,\\
 &=  \sum_k \mu(t_k) \tilde{p}_k (1 - \lambda(t_k))\,,
\end{align*}
where the inequality is due to \textit{(i)} $\sum_k \mu(t_k) \tilde{p}_k = \sum_k \mu(t_k) p_k$ and \textit{(ii)} $\sum_{j=k}^n \mu(t_j) p_j \leq \sum_{j=k}^n \mu(t_j) \tilde{p}_j$ for all $k$. Therefore, $(a, \tilde{p})$ must also be an optimal solution to (\textbf{Downward-IC}). Moreover, 
\[\Delta[\tilde{p}] = \Delta[p] - \delta/2 < \Delta[p]\,,\]
which is a direct contradiction since $p$ is chosen to minimize $\Delta[p]$ to begin with.

\paragraph{Completion for the finite-type case.}\hspace{-2mm}To complete the proof of \Cref{thm:downward} in the finite-type case, it suffices to show that the finite (\textbf{Downward-IC}) program admits a solution. Fix any $a \in \mathcal{A}$. We consider the problem of maximizing over $p \in \R^n$, or equivalently over $u \in \R^n$ where $u_i$ is the net utility of type $i$. For fixed $a$, the problem of maximizing over $u$ is a linear program. Once we show that there always exists a solution for every $a$, it follows that the value function over $a$ would be continuous and admit an optimal solution by a standard compactness argument. For any $i > j$, let  \[\Delta_{ji}(a) := v(a_j, t_i) - v(a_j, t_j) \geq 0\,.\]
Then IC$[i \rightarrow j]$ can be equivalently written as 
\[u_i \geq u_j + \Delta_{ji}(a)\,.\]
IR$[i]$ is simply $u_i \geq 0$. Let $\mathcal{U}(a)$ denote the feasible polyhedron. The objective can be written as 
\[\sum_i \mu_i (\lambda_i u_i + v(a_i, t_i) - u_i)  = \sum_i \mu_i v(a_i, t_i) + \sum_i \mu_i (\lambda_i- 1) u_i \,.\]
Clearly the LP over $u$ is feasible. By the fundamental theorem of LP, to show that it admits an optimal solution, it suffices to show that it cannot be unbounded. To show that, it suffices to show that the objective is bounded along any recession direction. Fix any $u \in \mathcal{U}(a)$. Consider any $d \in \R^n$ such that $u + \alpha d \in \mathcal{U}(a)$ for all $\alpha \geq 0$. We claim that $d_j \leq d_i$ for all $j < i$. Indeed, IC$[i \rightarrow j]$ gives that 
\[(u_i - u_j) + \alpha (d_i - d_j) \geq \Delta_{ji}(a)\,.\]
For this to hold for all $\alpha \geq 0$, we must have $d_i \geq d_j$. Moreover, IR constraints imply that $d_i \geq 0$ for all $i$. Now, note that for any recession direction $d$ with $d_i$ nondecreasing in $i$, by Abel's identity, we have 
\[\sum_i \mu_i (\lambda_i- 1) d_i = \sum_{k} \big(\sum_{i=k}^n \mu_i (\lambda_i- 1)\big) (d_{k} - d_{k-1})  \,,\]
with $d_0 := 0$. Note that since $\E[\lambda] \leq 1$ and $\lambda_i$ is nonincreasing in $i$, we must have 
\[\sum_{i=k}^n \mu_i (\lambda_i- 1) \leq 0\]
for all $k$. Therefore, since $d_i$ is nondecreasing in $i$, we have 
\[\sum_i \mu_i (\lambda_i- 1) d_i = \sum_{k} \big(\sum_{i=k}^n \mu_i (\lambda_i- 1)\big) (d_{k} - d_{k-1}) \leq 0 \,.\]
It follows that the objective value must be nonincreasing along any recession direction $d$. Thus, the objective value is bounded and hence the LP admits an optimal solution.  

\paragraph{Approximation argument for general case.}\hspace{-2mm}We prove the generalized downward sufficiency theorem for the general case where $\mathcal{T}, \mathcal{A}$ are any compact subsets of $\R$. Since $\lambda(t)$ is continuous in $t$, and $v(a, t)$ is continuous on $\mathcal{A} \times \mathcal{T}$, the approximation arguments are identical to those given in Appendix A.1.3 of \citet{yang2022costly}, provided that one shows the full IC program for the general-type case admits a solution with uniformly bounded transfers. The existence proof is identical to that given in Lemma 7 of \citet{yang2022costly} via Helly's selection theorem, except that we need to show that even with welfare weights $\lambda$, the transfers can be without loss of optimality restricted to satisfy a uniform upper and lower bound. 

We argue that it is without loss of optimality to restrict attention to payment rules $p: \mathcal{T} \rightarrow [\underline{v} - (\overline{v} - \underline{v}), \overline{v}]$, where 
\[\underline{v}:= \min_{(a, t) \in \mathcal{A} \times \mathcal{T}} v(a, t)\qquad \overline{v}:= \max_{(a, t) \in \mathcal{A} \times \mathcal{T}} v(a, t)\,.\]
Note that by IR constraints, we immediately have
\[p(t) \leq v(a(t), t) \leq \overline{v}\,.\]
Now to see the lower bound, we first claim that for the lowest type $\underline{t}$ we must have 
$p(\underline{t}) \geq \underline{v}$. Indeed, suppose otherwise. Then, $p(\underline{t}) < \underline{v}$, which means that the IR constraint for the lowest type must be slack. Let $\varepsilon := \underline{v} - p(\underline{t})$. Change the original payment rule $p$ to $\tilde{p}$ where $\tilde{p} = p + \varepsilon$ pointwise. Then, this new mechanism is clearly IC, and satisfies the IR constraint for the lowest type, which together also implies that it satisfies all IR constraints because $v(a, t)$ is nondecreasing in $t$. Note that the change in the objective is given by $\varepsilon - \E[\lambda] \varepsilon \geq 0$ since $\E[\lambda] \leq 1$, proving the claim. 

Now using the IC constraint of type $\underline{t}$ mimicking type $t$, we have 
\[v(a(\underline{t}), \underline{t}) - p(\underline{t}) \geq v(a(t), \underline{t}) - p(t)\]
and thus 
\[p(t) \geq v(a(t), \underline{t}) - v(a(\underline{t}), \underline{t}) + p(\underline{t}) \geq (\underline{v} - \overline{v}) + \underline{v}\,,\]
proving the claim. Therefore, it is without loss of optimality to restrict attention to payment rules $p: \mathcal{T} \rightarrow [\underline{v} - (\overline{v} - \underline{v}), \overline{v}]$. Thus, the full IC program with strict increasing differences preferences always admits a solution by the proof of Lemma 7 in \citet{yang2022costly}. By the same approximation arguments given in Appendix A.1.3 of \citet{yang2022costly}, it follows immediately that \Cref{thm:downward} holds.  

\subsubsection{Proof of \Cref{lem:purification}}\label{subsec:lottery}

First, note that for any $a \preceq_{\text{st}} a' \in \mathcal{A} \subseteq \Delta(\Y)$ where $a \neq a'$, and any $t < t'$, we have 
\[ \E_{x \sim a}[v(x, t') - v(x, t)] < \E_{x \sim a'}[v(x, t') - v(x, t)]\,,\]
given that $v(x, t)$ has strict increasing differences on $\mathcal{Y} \times \mathcal{T}$. Indeed, to see the above must hold with strict inequality, note that by \citet{Strassen1965}, $a'$ can be obtained from $a$ via a coupling $(Y, Y')$ where $Y \preceq Y'$ and $\P(Y \prec Y') > 0$.  Thus, $v(a, t)$ has strict increasing differences on $\mathcal{A} \times \mathcal{T}$. Then, as $(a, p)$ is incentive compatible, $a(\,\cdot\,)$ is stochastically monotone. 

Note that we can identify the compact set $\mathcal{Y}$ as a compact subset of $\R$: given the strict increasing differences on $(\Y, \preceq)$, the map $\psi(x) := v(x, \overline{t}) - v(x, \underline{t})$ is continuous and strictly increasing along $\preceq$, and hence defines a homeomorphism from the compact $\Y$ onto $\psi(\Y) \subset \R$ that preserves the order and thus the induced stochastic dominance order.

Now, by \citet{Strassen1965} again (see Lemma 1 of \citealt{yang2022costly}), there exists some random variable $\epsilon \in \mathcal{E}$ that is independent of $t$, where $ \mathcal{E}$ is some measurable space, and some jointly measurable function $x: \mathcal{T} \times  \mathcal{E} \rightarrow \Y$ such that \textit{(i)} $x(\,\cdot\,, \epsilon)$ is deterministically monotone according to $\preceq$ for all $\epsilon$, and \textit{(ii)} for every $t \in \mathcal{T}$, we have 
\[x(t, \epsilon) \sim  a(t) \in \Delta(\Y)\,. \]

Then, writing $\widetilde{V}(a, t) := \E_{x \sim a}\big[V(x, t) + W(t) v(x, t)\big]$, by the Envelope theorem, under the mechanism $(a, p)$, the principal's objective is given by
\begin{align*}
&\E\Bigg[\widetilde{V}(a(t), t) - W(t) \int_{\underline{t}}^t v_t(a(z), z) \d z \Bigg] - \E[W(t)] U(\underline{t}) \\
&=\E_{t, \epsilon}\Bigg[\widetilde{V}(x(t, \varepsilon), t) - W(t) \int_{\underline{t}}^t v_t(x(z, \varepsilon), z) \d z \Bigg] - \E[W(t)] U(\underline{t}) \\ 
& = \E_{\epsilon}\Big[\underbrace{\E_{t}\big[\widetilde{V}(x(t, \varepsilon), t) - W(t) \int_{\underline{t}}^t v_t(x(z, \varepsilon), z) \d z \big]}_{=:\,K(\varepsilon)}\Big] - \E[W(t)] U(\underline{t})\,,
\end{align*}
where $U$ is the agent's indirect utility. Note that there must exist some $\varepsilon^\dagger \in \mathcal{E}$ such that
\[K(\varepsilon^\dagger) \geq \E_{\varepsilon}[K(\varepsilon)]\,.\]
Thus, the objective is bounded from above by $K(\varepsilon^\dagger) - \E[W(t)] U(\underline{t})$. Let $x^\dagger(\,\cdot\,) = x(\,\cdot\,, \varepsilon^\dagger)$. Since $x^\dagger(\,\cdot\,)$ is monotone and $v(x, t)$ has increasing differences, using the transfer rule implied by the Envelope theorem with its level adjusted so that the type $\underline{t}$ gets a net payoff of $U(\underline{t})$, we can implement $x^\dagger(\,\cdot\,)$ to achieve the upper bound $K(\varepsilon^\dagger) - \E[W(t)] U(\underline{t})$, yielding an IC and IR deterministic mechanism $(x^\dagger, p^\dagger)$ that weakly improves on $(a, p)$.

\subsubsection{Completion of the Proof via Approximation}\label{subsec:complete}

By the argument in the main text and preceding lemmas, we have proved \Cref{thm:main} for the case where $\mathcal{X}$ is a finite set. We now generalize the result to the case where $\mathcal{X}$ is any compact metric space. 

Let the optimal value of the principal's objective be denoted by
\[V(\Y):= \sup_{(x, p)\,:\,\text{IC, IR, Ran}(x) \subseteq \Y} \text{Objective}(x, p)\,,\]
for any $\Y \subset \X$. Since $\E[\lambda] \leq 1$, it is easy to see that $V(\Y)$ is well defined for any $\Y$.

Building on \citet{madarasz2017sellers}, we prove the following finite approximation lemma, which would be the key to generalizing the result: 
\begin{lemma}[Finite Approximation]\label{lem:finite}
For any compact metric $\X$, there exists a sequence $\X^{(n)} \subseteq \X$ where each $\X^{(n)}$ is finite such that $\displaystyle \lim_{n\rightarrow \infty}V(\X^{(n)}) = V(\X)$.
\end{lemma}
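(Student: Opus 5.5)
Since $V(\Y)$ is monotone in $\Y$ (a larger menu can only help), it suffices to construct finite $\X^{(n)}$ with $\liminf_n V(\X^{(n)}) \geq V(\X)$, because $V(\X^{(n)}) \leq V(\X)$ holds automatically. The natural candidates are finite $1/n$-nets of the compact metric space $\X$, always including the outside option $\varnothing$. Fix $\epsilon>0$ and an IC and IR mechanism $(x,p)$ on $\X$ whose objective is within $\epsilon$ of $V(\X)$. The goal is to build, for large $n$, an IC and IR mechanism using only allocations in $\X^{(n)}$ whose objective is within $O(\epsilon)$ of that of $(x,p)$.

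\textbf{Step 1 (uniform continuity).} Since $v$ is continuous on the compact set $\X\times\mathcal{T}$, it is uniformly continuous. Choose $\delta_n \to 0$ such that $d(x,x')\le 1/n$ implies $|v(x,t)-v(x',t)|\le \delta_n$ for all $t$. Let $\pi_n:\X\to\X^{(n)}$ be a measurable nearest-point projection with $\pi_n(\varnothing)=\varnothing$.

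\textbf{Step 2 (the menu viewpoint, following \citet{madarasz2017sellers}).} Identify $(x,p)$ with the menu $M=\{(x(t),p(t))\}_t$, with $\varnothing$ at price $0$ added. Replace it with the menu $M_n=\{(\pi_n(x(t)),\, p(t)-\delta_n)\}$, keeping $(\varnothing,0)$. Each type then chooses optimally from $M_n$, which is automatically IC. It is also IR, since the empty option is always available.

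The key estimate is the Madarász--Prat argument. Offering a discount proportional to the size of an upgrade ensures that a type's choice cannot move to an option with much lower payment. Concretely, I would use the perturbed prices $p'(y)=(1-\eta)\,p - \delta_n$ for the projected option $y=\pi_n(x)$. Type $t$ is nearly indifferent across the menu (within $2\delta_n$), so if it switches from its original item $(x(t),p(t))$ to some $(x(s),p(s))$, a comparison of utilities gives $\eta\,(p(t)-p(s)) \lesssim 2\delta_n$. Hence $p(s)\ge p(t)-2\delta_n/\eta$, and the revenue loss per type is at most $\eta\bar v + 2\delta_n/\eta + \delta_n$. This tends to $0$ if I take $\eta=\sqrt{\delta_n}$.

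\textbf{Step 3 (welfare weights).} Besides revenue, the objective includes $\E[\lambda(t)U(t)]$. The type's utility under $M_n$ is at least its original utility minus $\delta_n$, because its original projected option is still available. This welfare part only helps, up to $O(\delta_n)$, since $\lambda\ge0$. Combining $\E[\lambda U]+\E[p]$ with the bounds from Step 2, whose payment bound holds pointwise in $t$, the objective loss is $o(1)$ as $n\to\infty$. Taking $\epsilon\to0$ then gives $\lim V(\X^{(n)})=V(\X)$, using monotonicity of $V$ in the menu.

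\textbf{Main obstacle.} The main obstacle is the revenue-leakage control in Step 2. Small payoff perturbations can cause types to jump to much cheaper options, and the $\eta$-discount trick must be set up so that utility comparisons are carried out correctly with both the projection error and the discount. A secondary subtlety is measurability of the choice rule from the finite menu, which is handled by tie-breaking toward the higher payment.
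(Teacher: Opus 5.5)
Your proof is correct, but it takes a different route from the paper's.

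\textbf{How the two routes differ.} The paper discretizes the \emph{type} space. It solves the screening problem on the grid $\mathcal{T}^{(n)}$ with the induced distribution $\mu^{(n)}$, which requires an existence argument for the finite-type program (LP plus compactness). It then defines $\X^{(n)}$ as the range of an optimal finite-type allocation rule. Finally, it invokes \Cref{lem:local} twice: once to get $\liminf_n W(\mathcal{T}^{(n)},\mu^{(n)})\ge V(\X)$, and once to transport the finite-type optimum back to the continuum, giving $\liminf_n V(\X^{(n)})\ge \liminf_n W(\mathcal{T}^{(n)},\mu^{(n)})$. The welfare term is controlled by a modulus of continuity of $\lambda u$ that is uniform in $n$.

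You instead discretize the \emph{allocation} space with $1/n$-nets. You apply the proportional-discount idea of \citet{madarasz2017sellers} to a perturbation of valuations (projection error at most $\delta_n$ uniformly in $t$), not to a perturbation of the type distribution. Your computation is exactly what is needed:
\begin{itemize}
\item $\eta\,(p(s)-p(t))\ge -2\delta_n$ for any item $s$ that type $t$ switches to;
\item hence a per-type revenue loss of at most $\eta\overline{v}+2\delta_n/\eta+\delta_n$;
\item a type choosing the outside option must have had $p(t)\le 0$, so it costs nothing;
\item the welfare term is handled by $U_{\text{new}}(t)\ge U(t)+\eta p(t)$.
\end{itemize}

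\textbf{What each route buys.} Your route is more elementary. It needs only uniform continuity of $v$ on $\X\times\mathcal{T}$ and no existence result for a finite-type program. The nets are universal (independent of $F$, $\lambda$, and the mechanism), and the loss rate is explicit, $O(\sqrt{\delta_n})$. It also adapts directly to the stochastic case by pushing lotteries forward through $\pi_n$, which would bypass the Bauer/Winkler extreme-point argument used for \Cref{lem:stoc-finite}. The paper's route reuses a single black box, \Cref{lem:local}, which it needs anyway for \Cref{thm:positive-correlation}. For the completion of \Cref{thm:main}, however, only finiteness of $\X^{(n)}$ and the limit matter, so your sequence works equally well.

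\textbf{Bookkeeping to tighten.}
\begin{itemize}
\item The menu $\{(\pi_n(x(s)),(1-\eta)p(s)-\delta_n)\}_s$ attaches a continuum of prices to each $y\in\X^{(n)}$. Replace these by the infimum price for each $y$, so the agent's choice set is genuinely finite and a maximizer exists. The leakage bound passes to the infimum by an $\epsilon'$-approximation.
\item When $p(t)<0$, the welfare bound $U_{\text{new}}(t)\ge U(t)+\eta p(t)$ needs a lower bound on payments. For your fixed mechanism this follows from the lowest type's IC constraint, $p(t)\ge v(x(t),\underline{t})-U(\underline{t})$.
\item The undiscounted menu you first write down, with prices $p(t)-\delta_n$, would not control leakage. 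Only the discounted version works, so the proof should state that version directly.
\end{itemize}
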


We delay the proof of this lemma to the end. Assuming the above lemma, we now extend \Cref{thm:main} to any compact metric $\X$.

\paragraph{Completion of the proof of \Cref{thm:main}.}\hspace{-2mm}We first show that 
\[V(\X^\star) = V(\X)\,,\]
and then show that there exists an optimal solution. Fix the approximating sequence from \Cref{lem:finite}. Fix some $n$. Let $m$ enumerate the options in $\X^{(n)}$. For any $x_m \in \X^{(n)} \setminus \X^\star$, by construction of $\X^\star$, there exists some $\tilde{x}_m \in \X^\star$ such that 
\[x_m \preceq_{\text{surplus}} \tilde{x}_m \,\text{ and }\, x_m \preceq_{\text{elasticity}} \tilde{x}_m\,.\]
For any $x_m \in \X^{(n)} \cap \X^\star$, let $\tilde{x}_m = x_m$. Now, consider the collection
\[\X^\star_n := \{\tilde{x}_m\}_m \subseteq \X^\star\,.\]
Note that, by construction, $\X^\star_n$ is a surplus-elasticity frontier of 
\[\tilde{\X}^{(n)}= \X^{(n)} \cup \{\tilde{x}_m\}_m\]
which is finite. Moreover, by construction, we also have 
$V(\tilde{\X}^{(n)}) \geq V(\X^{(n)}) $ 
and hence by \Cref{lem:finite}, 
\[V(\X) \geq \displaystyle\lim_{n\rightarrow \infty}V(\tilde{\X}^{(n)}) \geq \displaystyle\lim_{n\rightarrow \infty}V(\X^{(n)}) = V(\X)\,,\] 
showing that $V(\tilde{\X}^{(n)})$ converges to $V(\X)$ as well. Now, apply \Cref{thm:main} to  $\tilde{\X}^{(n)}$. By \Cref{thm:main}, we immediately have that 
\[V(\X^\star_n) = V(\tilde{\X}^{(n)})\,,\]
and therefore 
\[ V(\X) \geq V(\X^\star) \geq \lim_{n\rightarrow \infty}V(\X^\star_n) = V(\X)\,, \]
proving the claim. 

We now show that an optimal mechanism using allocations in $\X^\star$ exists. By the proof given in \Cref{subsec:downward}, for any $n$, since the agent's preferences have strict increasing differences on $\X^\star_n$, there exists an optimal mechanism for the full IC program when the allocation space is restricted to $\X^\star_n$. Let $\big(x^{(n)}, p^{(n)}\big)$ be a sequence of the optimal mechanisms. Let $\preceq$ denote the linear order on $\X^\star$ indexed by $s$ (i.e., the surplus order on the frontier). By construction, $x^{(n)}: \mathcal{T} \rightarrow \X^\star$ is a monotone function with respect to $\preceq$. Since $\X^\star$ is compact, using Helly's selection theorem for
monotone functions on linearly ordered sets (\citealt{fuchino1999theorem}, Theorem 7), we can obtain a subsequence $x^{(n_j)}$ that converges pointwise as $j$ increases.\footnote{In particular, note that the strict surplus order on the frontier $\X^\star$ can be equivalently represented by $\psi(x) := v(x, t_0)$ for some fixed $t_0$, which defines a continuous bijection $\psi: \X^\star \rightarrow \psi(\X^\star)$. Since $\X^\star$ is compact, $\psi$ is a homeomorphism. Thus, $\X^\star$ can be equivalently viewed as a subset $\Y \subset \R$, which combined with Helly's selection theorem and the continuous inverse $\psi^{-1}$, implies a pointwise convergence subsequence.}  Let 
\[u^{(n)}(t) = v(x^{(n)}(t), t) - p^{(n)}(t)\]
be the equilibrium payoff of type $t$. Clearly, $u^{(n)}$ is also a monotone function for all $n$. Moreover, as we have shown before, the payment rule can be without loss of optimality restricted to a uniform upper bound and lower bound (independent of $n$). Therefore, $u^{(n)}$ are uniformly bounded monotone functions. Applying the same Helly's selection theorem again to the sequence $u^{(n_j)}$, we immediately obtain a subsequence $(x^{(n_{j_k})}, u^{(n_{j_k})})$ that converges pointwise. Equivalently, the sequence $(x^{(n_{j_k})}, u^{(n_{j_k})}, p^{(n_{j_k})})$ converges pointwise to some $(x, u, p)$. Since $v(x, t)$ is continuous on $\X \times \mathcal{T}$, it follows immediately that $(x, p)$ satisfies all IC and IR constraints. Finally, note that the objective value of this mechanism is given by 
\[\E\big[\lambda(t) u(t) + p(t)\big] = \lim_{k \rightarrow \infty }\E\big[\lambda(t) u^{(n_{j_k})}(t) + p^{(n_{j_k})}(t)\big] = \lim_{k \rightarrow \infty} V(\X^\star_{n_{j_k}}) = V(\X)\,,\]
where the first equality follows by the dominated convergence theorem. Thus, $(x, p)$ is an optimal mechanism that uses allocations only in the surplus-elasticity frontier $\X^\star$, completing the proof of \Cref{thm:main}.

\paragraph{Proof of \Cref{lem:finite}.}\hspace{-2mm}We explicitly construct a sequence of finite allocation spaces $\X^{(n)}$ that satisfies the desired property. For any $n \in \N$, partition $\mathcal{T}$ into $n$  subintervals:
\[
\underline{t}=t_0<t_1<\cdots<t_{n}=\overline{t},\qquad t_{k+1}-t_k=\frac{\overline{t} - \underline{t}}{n} =: \Delta_n\,.
\]
Let $\mathcal{T}^{(n)} := \{t_1,\dots,t_n\}$ and let $\mu^{(n)} \in \Delta(\mathcal{T}^{(n)})$ be defined by: for all $k$
\[
\mu^{(n)}_k := \P\big(t \in [t_{k-1}, t_{k}) \big)=F(t_{k})-F(t_{k-1})\,.
\]
Recall that $F$ is the continuous CDF of $t$. Let $\mu \in \Delta(\mathcal{T})$ be the distribution given by $F$.  Let $\lambda_k^{(n)}:=\lambda(t_k)$ for all $k$.

For any $n$, consider the same screening problem we have, but with the type space and type distribution given by $(\mathcal{T}^{(n)}, \mu^{(n)})$. In particular, the objective is given by 
\[\sum_k \mu^{(n)}_k  \Big(\lambda^{(n)}_k\big(v(x^{(n)}(t_k),t_k) -p^{(n)}(t_k)\big) + p^{(n)}(t_k)\Big)\,. \]
Fixing any allocation rule $x^{(n)}$, the optimization problem for $p^{(n)}$ is a finite-dimensional LP, which by the same argument in \Cref{subsec:downward}, admits an optimal solution. The LP value function is continuous in $x^{(n)} \in \X^n$, which combined with the continuity property of $v(x, t)$ implies that an optimal solution exists for this program by a standard compactness argument. Let $(x^{(n)}, p^{(n)})$ be an optimal solution to this program. Now define 
\[ \X^{(n)} := \text{Ran}(x^{(n)}) \subseteq \X\,,\]
which is a finite subset of $\X$ for any $n$. We claim that 
\[\lim_{n \rightarrow \infty} V(\X^{(n)}) = V(\X)\,.\]

To prove it, let $W(\mathcal{T}^{(n)}, \mu^{(n)})$ be the optimal value for the same screening problem under the type space $\mathcal{T}^{(n)} \subseteq \mathcal{T}$ and the type distribution $\mu^{(n)} \in \Delta(\mathcal{T}^{(n)})$. For two distributions $\mu_1, \mu_2 \in \Delta(\mathcal{T})$, $\mu_1$ and $\mu_2$ are said to be \textit{\textbf{$\delta$-close}} if $\mathcal{T}$ can be partitioned into disjoint measurable sets $S_1,\dots, S_k$ such that $||t- t'||_{\sup} < \delta$ for any $t, t'$ in the same cell $S_j$ and $\mu_1(S_j) = \mu_2(S_j)$ for each $S_j$ (\citealt{madarasz2017sellers}). Note that by construction, $\mu^{(n)}$ and $\mu$ are $\Delta_n$-close for any $n$.

Now, we make use of the following result:
\begin{lemma}[\citealt{madarasz2017sellers}; \citealt{Carroll2017}]\label{lem:local}
For any compact metric type space $\mathcal{T}$ and any $\epsilon >0$, there exists $\delta > 0$ such that for any mechanism $(x, p)$, there exists another mechanism $(\tilde{x}, \tilde{p})$ such that \textit{(i)} for any two $\delta$-close distributions $\mu, \mu'$, $\E_{\mu'}[\tilde{p}(t)] > \E_{\mu}[p(t)] -\epsilon$, and \textit{(ii)} $\{\tilde{x}(t)\}_{t} \subseteq \emph{Closure}(\{x(t)\}_{t})$, and  \textit{(iii)} $\tilde{u}(t) \geq u(t)$ for all $t \in \mathcal{T}$ where $\tilde{u}, u$ are the indirect utility functions given $(\tilde{x}, \tilde{p})$ and $(x, p)$ respectively.
\end{lemma}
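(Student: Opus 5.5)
The plan is to follow the price-discounting construction of \citet{madarasz2017sellers}. I would replace the menu by a uniformly discounted version, show that a small discount stops nearby types from switching to much cheaper options, and then integrate this bound cell by cell using the $\delta$-closeness partition.

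\textbf{Preliminaries.} First, by compactness of $\X\times\mathcal{T}$ and continuity of $v$, I would take the modulus of continuity
\[
\omega(\delta):=\sup\big\{|v(x,t)-v(x,t')| : x\in\X,\ |t-t'|<\delta\big\},
\]
which satisfies $\omega(\delta)\to 0$ as $\delta\to0$. Second, for any IC mechanism $(x,p)$, the two IC constraints between types $t$ and $t'$ give
\[
|p(t)-p(t')|\le \overline{v}-\underline{v}=:D .
\]
So, with $c:=\inf_t p(t)$, every price satisfies $p(t)-c\in[0,D]$, uniformly over mechanisms.

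\textbf{Construction.} Given $\eta\in(0,1)$, let $M$ be the closure of the menu graph $\{(x(t),p(t))\}_t$ in $\X\times[c,c+D]$; it is compact. Discount each price to $\tilde p=p-\eta(p-c)$. Let each type $t'$ choose an item from $M$ that maximizes $v(y,t')-\tilde p$; a maximizer exists by compactness and continuity, and a measurable selection is standard. This defines $(\tilde x,\tilde p)$, which is IC by construction. It satisfies (ii) because its allocations lie in the projection of $M$, which is the closure of $\{x(t)\}_t$. It satisfies (iii) because $M$ contains $(x(t),p(t))$ and discounts are nonnegative, so $\tilde u(t)\ge u(t)$; IR then follows from (iii).

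\textbf{Key estimate.} Fix types $t,t'$ with $|t-t'|<\delta$, and suppose $t'$ chooses $(y,p_y)\in M$. Revealed preference under the discounted menu against $(x(t),p(t))\in M$ gives
\[
v(y,t')-(1-\eta)p_y\ \ge\ v(x(t),t')-(1-\eta)p(t).
\]
Passing from $t'$ to $t$ costs at most $2\omega(\delta)$. Original IC for type $t$, extended to $M$ by continuity, gives
\[
v(y,t)-v(x(t),t)\le p_y-p(t).
\]
Combining these three facts yields $\eta\,(p(t)-p_y)\le 2\omega(\delta)$. Hence
\[
\tilde p(t')\ \ge\ p(t)-\eta D-\frac{2\omega(\delta)}{\eta}.
\]
Choosing $\eta=\sqrt{\omega(\delta)}$ makes the loss at most $(D+2)\sqrt{\omega(\delta)}$, and I pick $\delta$ so that this is below $\epsilon$. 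Crucially, $\delta$ depends only on $\epsilon$, $v$, and $\mathcal{T}$, not on the mechanism.

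\textbf{Integration.} For $\delta$-close $\mu,\mu'$ with partition $S_1,\dots,S_k$, the equal cell masses allow a coupling $\pi$ of $\mu$ and $\mu'$ supported on $\bigcup_j S_j\times S_j$. Every coupled pair $(t,t')$ is then within $\delta$ of each other. Integrating the key estimate against $\pi$ gives
\[
\E_{\mu'}[\tilde p]>\E_\mu[p]-\epsilon,
\]
which is (i).

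\textbf{Main obstacle.} The only delicate point is uniformity: the price-loss bound must be independent of the mechanism. This is handled by the a priori price-range bound $D$, obtained from IC rather than IR, since payments could be very negative. The remaining details, measurability of the selection and of the coupling, are routine.
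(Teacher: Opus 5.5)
Your proof is correct. It is the price-discounting (profit-participation) argument of \citet{madarasz2017sellers}, which the paper invokes by citation without reproducing. Discounting affinely around $c=\inf_t p(t)$, rather than multiplicatively, is the right adaptation here: with welfare weights payments can be negative, and a proportional discount would then raise those prices and could break (iii).

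One small fix to the parameter choice. Take $\eta\in(0,1)$ first with $\eta D<\epsilon/2$, and then choose $\delta$ with $2\omega(\delta)/\eta<\epsilon/2$, instead of setting $\eta=\sqrt{\omega(\delta)}$. Otherwise, when $\omega(\delta)=0$ (e.g.\ for a finite $\mathcal{T}$), the construction becomes an undiscounted re-selection from the menu, and tie-breaking toward cheaper items can then lose revenue.
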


Now, fix any $\varepsilon > 0$. Let $\delta > 0$ be given by \Cref{lem:local}.  Fix any mechanism $(x, p)$ defined on the type space $\mathcal{T}$. Let $(\tilde{x}, \tilde{p})$ be given by \Cref{lem:local}. Fix any $n$ such that $\Delta_n \leq \delta$. By the Envelope theorem, $u(t)$ is continuous and hence $\lambda(t)u(t)$ is continuous on the compact set $[\underline{t}, \overline{t}]$, which by the Heine--Cantor theorem implies that $\lambda(t)u(t)$ is uniformly continuous. Let $\omega(\,\cdot\,)$ denote its modulus of continuity. By \Cref{lem:local}, we have   
\[\E_{\mu^{(n)}}[\tilde{p}(t)] > \E_{\mu}[p(t)] - \varepsilon\]
and for all $t \in \mathcal{T}$, 
\[\tilde{u}(t) \geq u(t)\,.\]
The restriction of $(\tilde{x}, \tilde{p})$ to the type space of $\mathcal{T}^{(n)}$ defines a valid mechanism for the type space $\mathcal{T}^{(n)}$ with the objective value given by exactly 
\begin{align*}
    \E_{\mu^{(n)}}\Big[\tilde{p}(t) + \lambda(t) \tilde{u}(t) \Big] &>  \E_{\mu}\Big[p(t)\Big] - \varepsilon + \E_{\mu^{(n)}}\Big[\lambda(t) \tilde{u}(t) \Big]\\
    &\geq  \E_{\mu}\Big[p(t)\Big] - \varepsilon + \E_{\mu^{(n)}}\Big[\lambda(t) u(t) \Big]\\
    &\geq  \E_{\mu}\Big[p(t)\Big] - \varepsilon + \E_{\mu}\Big[\lambda(t) u(t) \Big] - \omega(\Delta_n)\\
    & =  \E_{\mu}\Big[p(t) + \lambda(t) u(t) \Big] - \varepsilon - \omega(\Delta_n)\,.
\end{align*}
Therefore, 
\[W(\mathcal{T}^{(n)}, \mu^{(n)}) \geq  \E_{\mu}\Big[p(t) + \lambda(t) u(t) \Big] - \varepsilon - \omega(\Delta_n)\,.\]
For the fixed mechanism $(x,p)$, we have $\omega(\Delta_n) \to 0$ as $n \rightarrow \infty$. Therefore,
\[ \liminf_{n \to \infty} W(\mathcal{T}^{(n)}, \mu^{(n)}) \geq \E_\mu[p(t) + \lambda(t) u(t)] - \varepsilon\,.
\]
Since $(x,p)$ was arbitrary, taking the supremum over all mechanisms then gives 
\[
\liminf_{n \to \infty} W(\mathcal{T}^{(n)}, \mu^{(n)}) \geq V(\mathcal{X})-\varepsilon\,.
\]
Taking $\varepsilon \rightarrow 0$ then gives 
\[\liminf_{n\rightarrow\infty} W(\mathcal{T}^{(n)}, \mu^{(n)}) \geq  V(\X) \,.\]

Now we claim that 
\[\liminf_{n\rightarrow \infty} V(\X^{(n)}) \geq \liminf_{n\rightarrow \infty}W(\mathcal{T}^{(n)}, \mu^{(n)})\,.\]
Fix any $\varepsilon > 0$. Let $\delta > 0$ be given by \Cref{lem:local}.  Fix any $n$ such that $\Delta_n \leq \delta$. Fix any optimal mechanism $(x^{(n)}, p^{(n)})$ defined on $\mathcal{T}$ with type distribution given by $\mu^{(n)} \in \Delta(\mathcal{T})$, such that it also satisfies $\text{Ran}(x^{(n)}) \subseteq \X^{(n)}$. Since $\lambda(t)$ is continuous on the compact set $\mathcal{T}$, $\lambda$ is also uniformly continuous with some modulus of continuity $\omega_\lambda(\,\cdot\,)$. Indeed, by the Envelope theorem we have that for all $|t - s| \leq \Delta$, 
\[|\lambda(t) u^{(n)}(t) - \lambda(s) u^{(n)}(s)| \leq ||\lambda||_{\infty} \sup_{x, t}|v_t(x, t)|\Delta +  ||u^{(n)}||_\infty\sup_{|t - s| \leq \Delta }|\lambda(t) - \lambda(s)| \]
Since $\sup_{x, t}|v_t(x, t)| \leq M$ for some constant $M$, and $||u^{(n)}||_\infty \leq M'$ for some $M'$ (since, as we argued, we can take $u(\underline{t}) = 0$ without loss and the derivative is uniformly bounded), we can further bound the above by 
\[|\lambda(t) u^{(n)}(t) - \lambda(s) u^{(n)}(s)| \leq \underbrace{||\lambda||_{\infty} M \Delta +  M' \omega_\lambda(\Delta)}_{=:\omega(\Delta)}\,.\]
Then, $\omega(\Delta)$ is a modulus of continuity of $\lambda u^{(n)}$ that holds uniformly over $n$. Let $(\tilde{x}, \tilde{p})$ be given by \Cref{lem:local} against $(x^{(n)}, p^{(n)})$. 
 By \Cref{lem:local}, we have   
\[\E_{\mu}[\tilde{p}(t)] > \E_{\mu^{(n)}}[p^{(n)}(t)] - \varepsilon\]
and for all $t \in \mathcal{T}$, 
\[\tilde{u}(t) \geq u^{(n)}(t)\,,\]
and $\text{Ran}(\tilde{x}) \subseteq \text{Closure}(\text{Ran}(x^{(n)})) \subseteq \X^{(n)}$. Therefore, by construction, we have  
\begin{align*}
    V(\X^{(n)}) &\geq \E_{\mu}\big[\tilde{p}(t) + \lambda(t) \tilde{u}(t)\big]\\
    &\geq  \E_{\mu^{(n)}}[p^{(n)}(t)] - \varepsilon + \E_{\mu}[\lambda(t) u^{(n)}(t)]\\
      &\geq  \E_{\mu^{(n)}}[p^{(n)}(t)] - \varepsilon + \E_{\mu^{(n)}}[\lambda(t) u^{(n)}(t)] - \omega(\Delta_n)\\
        &=  W(\mathcal{T}^{(n)}, \mu^{(n)})- \varepsilon - \omega(\Delta_n)\,.
\end{align*}
Therefore, we immediately have  
\[ \liminf_{n\rightarrow \infty} V(\X^{(n)}) \geq \liminf_{n\rightarrow \infty}W(\mathcal{T}^{(n)}, \mu^{(n)})- \varepsilon\,,\]
and hence taking $\varepsilon \rightarrow 0$ and using our previous conclusion gives  
\[\liminf_{n\rightarrow \infty} V(\X^{(n)}) \geq \liminf_{n\rightarrow \infty}W(\mathcal{T}^{(n)}, \mu^{(n)}) \geq V(\X)\,.\]
Evidently, $V(\X^{(n)}) \leq V(\X)$ and thus $\limsup_{n} V(\X^{(n)}) \leq V(\X)$. Thus, 
\[\lim_{n \rightarrow \infty} V(\X^{(n)}) = V(\X)\]
proving \Cref{lem:finite}.

\subsection{Proof of \Cref{thm:main2}}

Similar to the proof of \Cref{thm:main}, we prove \Cref{thm:main2} first assuming $\X$ is finite and then generalize it to an arbitrary compact metric $\X$ via approximation.    

\subsubsection{Proof under a Finite Allocation Space}

Suppose $\X$ is finite. Let $\X^\star$ be a strong frontier. Let $\X^\star = \{x_0, x_1, \dots, x_m\}$ where $x_i \prec_{\text{surplus}} x_j$ for all $i < j$ with $x_0 = \emptyset$ being the outside option. Note that for any $i \in \{1, \dots, m-1\}$,  
\[\frac{v(x_{i+1}, t) - v(x_i, t)}{v(x_{i}, t) - v(x_{i-1}, t)}\]
is strictly increasing in $t$ by the ordered incremental elasticity property, which is equivalent to: for any $t_1 < t_2$ we have 
\[\frac{v(x_{i+1}, t_2) - v(x_i, t_2)}{v(x_{i+1}, t_1) - v(x_{i}, t_1)}\]
is strictly increasing in $i$. We claim that for any $t_1 < t_2 \in (\underline{t}, \overline{t})$, there exists a convex function $g$ such that 
\[v(x, t_2) = g(v(x, t_1)) \text{ for all $x \in \X^\star$}\,.\]
To see this, fix any $t_1 < t_2 \in (\underline{t}, \overline{t})$. Let $z_i = v(x_i, t_1)$ and $y_i = v(x_i, t_2)$ for all $i$. Since the elements in $\X^\star$ are strictly ordered by the surplus order, we know that both $z_i$ and $y_i$ are strictly increasing in $i$. Define $g(z_i) = y_i$ on $\{z_i\}_{i=0}^m$ and extend it via linear interpolation on $[z_0, z_m]$. Then we have $0 = y_0 = g(z_0) = g(0)$, and on each segment $[z_i, z_{i+1}]$, 
\[g'(z) = \frac{g(z_{i+1}) - g(z_{i})}{z_{i+1} - z_{i}} = \frac{v(x_{i+1}, t_2) - v(x_{i}, t_2)}{v(x_{i+1},t_1) - v(x_i, t_1)} > 0 \,.\]
Since the above slope is increasing in $i$, it follows immediately that $g$ is convex, proving the claim. 

Now, fix any stochastic mechanism. Without loss, we can write it as $t \mapsto (x(t, \varepsilon), p(t, \varepsilon))$ where $x(t, \varepsilon) \in \mathcal{X}, p(t, \varepsilon) \in \R$ and $\varepsilon$ is a randomization device following an independent uniform distribution. Now, for each $\varepsilon$, applying the construction in \Cref{lem:reconstruct} to the deterministic mechanism defined by $(x(\,\cdot\,, \varepsilon), p(\,\cdot\,, \varepsilon))$, we obtain a reconstructed stochastic mechanism $(a(\,\cdot\,, \varepsilon), p(\,\cdot\,, \varepsilon))$ where $a(\,\cdot\,, \varepsilon) \in \Delta(\X^\star)$ that keeps the same objective value, maintains the truth-telling payoff for each type, and weakly decreases any downward deviation payoff. Now, since the original mechanism is downward IC (when averaging over $\varepsilon$), it follows immediately that for all $\hat{t} < t$, we have 
\begin{align*}
    \E_{\varepsilon}\Big[v(a(t, \varepsilon), t) - p(t, \varepsilon)\Big] &= \E_{\varepsilon}\Big[v(x(t, \varepsilon), t) - p(t, \varepsilon)\Big]  \\
    &\geq \E_{\varepsilon}\Big[v(x(\hat{t}, \varepsilon), t) - p(\hat{t}, \varepsilon)\Big]   \\
    &\geq \E_{\varepsilon}\Big[v(a(\hat{t}, \varepsilon), t) - p(\hat{t}, \varepsilon)\Big]\,,
\end{align*}
i.e., the downward IC constraints continue to hold under the compound stochastic mechanism $t \mapsto (a(t, \varepsilon), p(t, \varepsilon))$. Note that for each $t$, the compound lottery $a(t, \varepsilon)$ is equivalent to a simple lottery $\tilde{a}(t) \in \Delta(\X^\star)$. Let $\tilde{p}(t) = \E_\varepsilon[p(t, \varepsilon)]$ for all $t$. Then it follows immediately that $t \mapsto (\tilde{a}(t), \tilde{p}(t)) \in \Delta(\X^\star) \times \R$ satisfies all downward IC and all IR constraints and gives the same expected payoff to all types $t$ and to the principal, assuming truthful reporting. 

Now, we perform one more round of reconstruction. Fix any $t \in (\underline{t}, \overline{t})$. The lottery $\tilde{a}(t) \in \Delta(\X^\star)$ induces a lottery in the utility space $\{z_i\}_{i=0}^m$ where $z_i = v(x_i, t)$. Let $\Bar{z} := \E_{x \sim \tilde{a}(t)}[v(x, t)]$. Then $\Bar{z} \in [z_i, z_{i+1}]$ for some $i$. In particular, there exists a unique lottery $\alpha(t) \in \Delta(\{x_i, x_{i+1}\})$ such that $\E_{x \sim \alpha(t)}[v(x, t)] = \Bar{z}$. Moreover, it is easy to see that $v^{\alpha} \preceq_{\text{mps}} v^{\tilde{a}}$ where $v^{\alpha}, v^{\tilde{a}}$ are the utility lotteries for $t$ from $\alpha$ and $\tilde{a}$ respectively, and $\preceq_{\text{mps}}$ is the mean-preserving spread order. For any $t' > t \in (\underline{t}, \overline{t})$, by our previous observation, we can write 
\[v(x, t') = g(v(x, t)) \text{ for all $x \in \X^\star$}\,,\]
where $g$ is a convex function. It follows immediately that 
\[\E_{x\sim\alpha(t)}[v(x, t')] = \E_{x\sim\alpha(t)}[g(v(x, t))] = \E[g(v^\alpha)] \leq \E[g(v^{\tilde{a}})] = \E_{x\sim\tilde{a}(t)}[v(x, t')]\,, \]
where the inequality holds because $v^{\alpha} \preceq_{\text{mps}} v^{\tilde{a}}$ and $g$ is convex. 

It remains to define the reconstruction at the boundary types. For $t = \overline{t}$, let $\alpha(\overline{t})$ be any lottery supported on some $\{x_i, x_{i+1}\}$ such that $\E_{x \sim \alpha(\overline{t})}[v(x, \overline{t})] = \E_{x \sim \tilde{a}(\overline{t})}[v(x, \overline{t})]$. For $t = \underline{t}$, take any sequence $t_n \downarrow \underline{t}$ along which $\big(\alpha(t_n), \tilde{p}(t_n)\big)$ converges, and assign type $\underline{t}$ the limit. After redefining $\big(\alpha(\underline{t}), \tilde{p}(\underline{t})\big)$ according to this limit, note that the mechanism $t \mapsto \big(\alpha(t), \tilde{p}(t)\big)$ satisfies all downward IC constraints and all IR constraints, and gives the same payoff to almost all types and hence the same objective value to the designer, assuming truthful reporting.\footnote{This holds since $F$ has no atom at $\underline{t}$; however, even when $F$ has an atom at $\underline{t}$, an exact construction at $\underline{t}$ without the limiting argument can handle it.}

Now, observe that for any $t$, we have 
\[\alpha(t) \in \mathcal{A}:= \Big\{a\in \Delta(\X^\star): a \in \Delta\big(\{x_{j-1}, x_j\}\big) \text{ for some $x_j \in \X^\star$}\Big\}\,,\]
which is the totally ordered set of lotteries used in the proof of \Cref{thm:main}. Applying the identical arguments in the proof of  \Cref{thm:main}---using \Cref{thm:downward} and \Cref{lem:purification}---we immediately obtain \Cref{thm:main2} for any finite $\X$.

\subsubsection{Completion of the Proof via Approximation}

The argument is similar to the approximation argument given for \Cref{thm:main}. Let the optimal value of the principal's objective be denoted by
\[V(\Delta(\Y)):= \sup_{(a, p)\,:\,\text{IC, IR, Ran}(a) \subseteq \Delta(\Y)} \text{Objective}(a, p)\,,\]
for any $\Y \subset \X$. Since $\E[\lambda] \leq 1$, it is easy to see that $V(\Delta(\Y))$ is well defined for any $\Y$.

Parallel to the finite approximation lemma (\Cref{lem:finite}), we will prove a stochastic finite approximation lemma: 
\begin{lemma}[Stochastic Finite Approximation]\label{lem:stoc-finite}
For any compact metric $\X$, there exists a sequence $\X^{(n)} \subseteq \X$ where each $\X^{(n)}$ is finite such that $\displaystyle \lim_{n\rightarrow \infty}V(\Delta(\X^{(n)})) = V(\Delta(\X))$.
\end{lemma}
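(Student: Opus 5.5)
The plan is to mirror the proof of \Cref{lem:finite}: discretize the type space, solve the discretized stochastic problem, and take as $\X^{(n)}$ the finite set of allocations charged positive mass by the optimal discretized mechanism. For each $n$, partition $\mathcal{T}$ into $n$ cells of width $\Delta_n$, and let $(\mathcal{T}^{(n)},\mu^{(n)})$ and $\lambda^{(n)}_k=\lambda(t_k)$ be as before. On the finite type space, consider the problem with allocation space $\Delta(\X)$. Fixing the stochastic allocation rule $a^{(n)}\in\Delta(\X)^n$, the payment problem is the same finite LP as in \Cref{subsec:downward}, with $v(a,t)=\E_{x\sim a}[v(x,t)]$. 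Its value is bounded along recession directions because $\E[\lambda]\le1$ and $\lambda$ is nonincreasing, so an optimum exists. The LP value is continuous in $a^{(n)}$ under the weak topology, since $v$ is continuous on the compact $\X\times\mathcal{T}$. The set $\Delta(\X)^n$ is weakly compact, so an optimal $(a^{(n)},p^{(n)})$ exists.

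To make $\X^{(n)}$ finite, I will use a Carathéodory step. The value of each lottery $a^{(n)}_k$ to the $n$ types enters only through the vector $(v(a^{(n)}_k,t_j))_{j=1}^n\in\R^n$. That vector lies in the convex hull of $\{(v(x,t_j))_j : x\in\X\}$, so it can be replicated by a lottery on at most $n+1$ points of $\X$. The replacement leaves every IC, IR and objective term unchanged. Then $\X^{(n)}:=\bigcup_k\operatorname{supp}(a^{(n)}_k)$ is finite.

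The comparison of values then follows \Cref{lem:finite} with \Cref{lem:local} applied to stochastic mechanisms. I treat $\Delta(\X)$ as a compact metric allocation space with continuous utility $v(a,t)$, so \Cref{lem:local} applies verbatim. First, any mechanism on the continuum $\mathcal{T}$ with range in $\Delta(\X)$ can be perturbed and restricted to $\mathcal{T}^{(n)}$. This gives $\liminf_n W_\Delta(\mathcal{T}^{(n)},\mu^{(n)})\ge V(\Delta(\X))$, using the uniform continuity of $\lambda u$. Second, the optimal discretized mechanism can be extended back to $\mathcal{T}$ via \Cref{lem:local}. The resulting $\tilde a$ takes values in the closure of $\{a^{(n)}_k\}_k\subseteq\Delta(\X^{(n)})$. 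This gives $V(\Delta(\X^{(n)}))\ge W_\Delta(\mathcal{T}^{(n)},\mu^{(n)})-\varepsilon-\omega(\Delta_n)$, where $\omega$ is the uniform modulus of continuity of $\lambda u^{(n)}$. That modulus comes from the envelope bound $\sup|v_t|$, which holds equally for lotteries, and from the uniform payment bounds. Combining the two inequalities with the trivial $V(\Delta(\X^{(n)}))\le V(\Delta(\X))$ yields the limit.

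The main obstacle is checking that \Cref{lem:local} and the envelope-based continuity estimates carry over to the lottery space $\Delta(\X)$. The delicate points are the closure property (ii), because $\Delta(\X^{(n)})$ is closed so this is fine, and the uniform bound on $u^{(n)}$. If \Cref{lem:local} needs a metric allocation space, I would endow $\Delta(\X)$ with the Prokhorov metric, under which it is compact and $v(\cdot,t)$ is continuous.
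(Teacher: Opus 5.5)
Your proof is correct. It shares the paper's overall architecture: discretize the types, obtain an optimal discretized mechanism with finitely supported lotteries, then run the \Cref{lem:local} comparison exactly as in \Cref{lem:finite} with $\X$ replaced by $\Delta(\X)$.

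Where you genuinely differ is in how you make the supports finite, which is the one new ingredient relative to \Cref{lem:finite}. The paper enlarges the choice variable to a joint lottery $\beta\in\Delta((\X\times[-K,K])^n)$ over allocations and payments across reported types. It notes that the objective is linear and that the $n^2+n$ IC and IR constraints are moment constraints. It then applies Bauer's maximum principle and Winkler's theorem on extreme points of moment-constrained sets of measures to get a finitely supported optimal $\beta$.

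You instead observe that on $\mathcal{T}^{(n)}$ a lottery enters only through its utility vector $(v(a,t_j))_{j=1}^n\in\R^n$, and you apply Carath\'eodory's theorem. Strictly, the barycenter lies a priori only in the closed convex hull of the image of $\X$. That image is compact, so in $\R^n$ its convex hull is already closed.

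Your route has several advantages. It is finite-dimensional and gives an explicit bound of $n+1$ support points per type. It does not need the a priori payment bound $[-K,K]$ to get weak-$^*$ compactness of an enlarged space. It also works for near-optimal as well as optimal discretized mechanisms. The paper's extreme-point route is heavier, but it would extend to objectives or constraints that depend on lotteries through more than the $n$ utility moments.

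One minor inaccuracy, which the paper's own \Cref{lem:finite} shares: the payment LP's value is not continuous in $a^{(n)}$ on all of $\Delta(\X)^n$. It is infeasible (value $-\infty$) for allocation rules that are not cyclically monotone. It is, however, upper semicontinuous. The right-hand sides $v(a_j,t_i)-v(a_j,t_j)$ vary continuously with $a$, and the feasibility domain is closed. By duality, the value on that domain is a minimum of finitely many linear functions of those right-hand sides. Upper semicontinuity is enough for an optimum to exist on the compact $\Delta(\X)^n$.
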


Again, we delay the proof of this lemma to the end. Assuming the above lemma, we now extend \Cref{thm:main2} to any compact metric $\X$.

\paragraph{Completion of the proof of \Cref{thm:main2}.}\hspace{-2mm}We show that 
\[V(\X^\star) = V(\Delta(\X))\,.\]
Note that by \Cref{thm:main}, we already have the existence of an optimal mechanism among all deterministic mechanisms. Fix the approximating sequence from \Cref{lem:stoc-finite}. The argument proceeds as before. Fix some $n$. Let $m$ enumerate the options in $\X^{(n)}$. For any $x_m \in \X^{(n)} \setminus \X^\star$, by construction of $\X^\star$, there exists some $\tilde{x}_m \in \X^\star$ such that 
\[x_m \preceq_{\text{surplus}} \tilde{x}_m \,\text{ and }\, x_m \preceq_{\text{elasticity}} \tilde{x}_m\,.\]
For any $x_m \in \X^{(n)} \cap \X^\star$, let $\tilde{x}_m = x_m$. Now, consider the collection
\[\X^\star_n := \{\tilde{x}_m\}_m \subseteq \X^\star\,.\]
Note that, by construction, $\X^\star_n$ is a surplus-elasticity frontier of 
\[\tilde{\X}^{(n)}= \X^{(n)} \cup \{\tilde{x}_m\}_m\]
which is finite. Moreover, since $\X^\star_n \subseteq \X^\star$,  $\X^\star_n$ continues to have the ordered incremental elasticity property and hence is a strong frontier of $\tilde{\X}^{(n)}$. Now, applying \Cref{thm:main2} to $\tilde{\X}^{(n)}$, we obtain 
\[V(\X^\star_n) = V(\Delta(\tilde{\X}^{(n)}))\,.\]
Moreover, by construction, we also have 
$V(\Delta(\tilde{\X}^{(n)})) \geq V(\Delta(\X^{(n)})) $ 
and hence by \Cref{lem:stoc-finite}
\[V(\Delta(\X)) \geq \displaystyle\lim_{n\rightarrow \infty}V(\Delta(\tilde{\X}^{(n)})) \geq \displaystyle\lim_{n\rightarrow \infty}V(\Delta(\X^{(n)})) = V(\Delta(\X))\,,\] 
showing that $V(\Delta(\tilde{\X}^{(n)}))$ converges to $V(\Delta(\X))$ as well. Therefore, we have 
\[ V(\Delta(\X)) \geq V(\X^\star) \geq \lim_{n\rightarrow \infty } V(\X^\star_n) = \lim_{n\rightarrow \infty }V(\Delta(\tilde{\X}^{(n)}))  =  V(\Delta(\X))\,,\]
proving \Cref{thm:main2}. 

\paragraph{Proof of \Cref{lem:stoc-finite}.}\hspace{-2mm}The high-level argument is similar to that of \Cref{lem:finite}. We explicitly construct a sequence of finite allocation spaces $\X^{(n)}$ that satisfies the desired property. For any $n \in \N$, partition $\mathcal{T}$ into $n$  equal-length subintervals as before. Let $\Delta_n$ denote the length of each subinterval. Let $\mathcal{T}^{(n)} := \{t_1,\dots,t_n\}$ and let $\mu^{(n)} \in \Delta(\mathcal{T}^{(n)})$ be defined by: for all $k$
\[
\mu^{(n)}_k := \P\big(t \in [t_{k-1}, t_{k}) \big)=F(t_{k})-F(t_{k-1})\,.
\]
Let $\mu \in \Delta(\mathcal{T})$ be the distribution given by $F$.  Let $\lambda_k^{(n)}:=\lambda(t_k)$ for all $k$.

For any $n$, consider the same screening problem with stochastic allocation space $\Delta(\X)$, but with the type space and type distribution given by $(\mathcal{T}^{(n)}, \mu^{(n)})$. In particular, the objective is given by 
\[\sum_k \mu^{(n)}_k  \Big(\lambda^{(n)}_k\big(v(a^{(n)}(t_k),t_k) -p^{(n)}(t_k)\big) + p^{(n)}(t_k)\Big)\,. \]
Note that since we are allowing for stochastic mechanisms, the above program is a linear program in $(a^{(n)},  p^{(n)}) \in \Delta(\X)^n \times \R^n$. As argued before, without loss of optimality we can restrict attention to $p^{(n)} \in [-K, K]^{n}$ for a large enough universal constant $K$. We claim that there exists an optimal solution $(a^{(n)}, p^{(n)})$ to the above program where $\{a^{(n)}(t)\}_{t \in \mathcal{T}^{(n)}} \subseteq \Delta(\X^{(n)})$ where $\X^{(n)}$ is a finite set. 

To see this, we enrich the space of optimizing variables: Consider choosing a joint lottery across allocations and payments and across types: 
\[\beta \in \Delta((\X \times [-K, K])^{n})\,,\]
where the $i$-th dimension marginal $\beta_i$ prescribes the stochastic allocation and stochastic payment for a reported $t_i$ type. Evidently, the above objective is a linear functional of $\beta$ with constraints given by $n \times n$ many IC constraints on $\beta$, and $n$ many IR constraints on $\beta$. Each IC constraint is a moment constraint on $\beta$; each IR constraint is also a moment constraint on $\beta$. Since $(\X \times [-K, K])^{n}$ is compact, we have that $\Delta((\X \times [-K, K])^{n})$ is compact in the weak-$^*$ topology. Then, by Bauer's maximum principle, there exists an optimal solution $\beta^{(n)}$ that is an extreme point of the set of measures constrained by $n^2 + n$ moment constraints. By \citet{winkler1988extreme}, it follows immediately that $\beta^{(n)}$ is finitely supported---in particular, it implies that $\beta^{(n)}_i$ is finitely supported for each type $t_i$.  For each type $t_i$, take the allocation-marginal $a^{(n)}_i$ of $\beta^{(n)}_i$ and let $\X^{(n)}_i \subseteq \X$ be its finite support. Also take the expectation $p^{(n)}_i$ of the payment-marginal of $\beta^{(n)}_i$. It then follows that $(a^{(n)}, p^{(n)})$ is an optimal solution to our finite-type program. Moreover, let 
\[\X^{(n)} = \bigcup_{i=1}^n \X^{(n)}_i\]
which is finite. It then follows immediately that $\{a^{(n)}(t)\}_{t \in \mathcal{T}^{(n)}} \subseteq \Delta(\X^{(n)})$, proving the claim. 

Now, we claim that 
\[\lim_{n \rightarrow \infty} V(\Delta(\X^{(n)})) = V(\Delta(\X))\,.\]
This step is very similar to that given in \Cref{lem:finite} and in particular relies on \Cref{lem:local}. In particular, let $W_{\Delta}(\mathcal{T}^{(n)}, \mu^{(n)})$ be the optimal value for the stochastic screening problem under the type space $\mathcal{T}^{(n)} \subseteq \mathcal{T}$ and the type distribution $\mu^{(n)} \in \Delta(\mathcal{T}^{(n)})$. Since any stochastic mechanism $(a, p)$ defined on $\mathcal{T}$ also induces a uniformly continuous indirect utility function $u$, replacing $\X$ with $\Delta(\X)$ in the argument given in \Cref{lem:finite}, we immediately obtain 
\[\liminf_{n\rightarrow\infty} W_\Delta(\mathcal{T}^{(n)}, \mu^{(n)}) \geq  V(\Delta(\X)) \,.\]
Moreover, since $\sup_{x, t}|v_t(x, t)| \leq M$ for some constant $M$, we have 
\[\sup_{a \in \Delta(\X),\, t \in \mathcal{T}} \big|\E_{x \sim a}[v_t(x, t)]\big| \leq M \]
as well. Then, fix any optimal mechanism $(a^{(n)}, p^{(n)})$ defined on $\mathcal{T}$ with type distribution given by $\mu^{(n)} \in \Delta(\mathcal{T})$, such that it also satisfies $\text{Ran}(a^{(n)}) \subseteq \Delta(\X^{(n)})$. Note that the induced indirect utility function $u^{(n)}$ is uniformly continuous with a modulus of continuity $\omega(\,\cdot\,)$ that holds uniformly across $n$. Replacing $\X$ with $\Delta(\X)$ in the argument given in \Cref{lem:finite}, we also immediately obtain 
\[\liminf_{n \rightarrow \infty} V(\Delta(\X^{(n)})) \geq \liminf_{n\rightarrow\infty} W_\Delta(\mathcal{T}^{(n)}, \mu^{(n)})\,.\]
Therefore, it follows immediately that 
\[V(\Delta(\X)) \geq \limsup_{n \rightarrow \infty} V(\Delta(\X^{(n)})) \geq \liminf_{n \rightarrow \infty} V(\Delta(\X^{(n)})) \geq \liminf_{n\rightarrow\infty} W_\Delta(\mathcal{T}^{(n)}, \mu^{(n)}) \geq  V(\Delta(\X)) \,,\]
showing that they are all equal and hence proving \Cref{lem:stoc-finite}.

\subsection{Proof of \Cref{cor:profile}}

The demand profile method is valid by the argument given in the main text. We show that $\X^\star$ is minimal optimal if $\argmax \{\Delta P(s, q) \cdot q\} \subset (0, 1)$. Suppose for contradiction that there exists a closed subset $S \subset [0, 1]$ such that $[0, 1]\setminus S$ has a positive Lebesgue measure and $\{x_s\}_{s \in S}$ is an optimal menu. Let $(x, p)$ be the optimal mechanism supporting this optimal menu. Equivalently, we can represent $(x, p)$ via a nonincreasing function $s \mapsto q(s)$ by 
\[q(s) := \P\big(x(t) \geq s\big)\,,\]
and write the revenue obtained by $(x, p)$ as
\[\int \big(\Delta P(s, q(s)) \cdot q(s)\big) \d s\,.\]
Since $S$ has measure strictly less than $1$, there exists an open interval $(s_1, s_2) \subseteq [0,1]\setminus S$. Then, by definition of $q$, there exists a constant $\bar{q}$ such that $q(s) = \bar{q}$ for all $s \in (s_1, s_2)$. Since we have argued that the upper bound via pointwise upgrade pricing can be obtained: 
\[\int \max_{q \in [0, 1]}(\Delta P(s, q) \cdot q) \d s\,,\]
it follows that for almost all $s \in [0, 1]$, we have 
\[q(s) \in \argmax_{q \in [0, 1]}\Big(\Delta P(s, q) \cdot q\Big)\,,\]
because otherwise $(x, p)$ cannot be optimal. Since the frontier is interior, it follows that $\bar{q} \in (0,1)$. Fix any
$a < b < c \in (s_1,s_2)$. Since $\bar{q}$ maximizes $\Delta P(s,q) \cdot q$ for almost all $s \in (s_1,s_2)$, we have
\[\max_q P(x_b, q \mid x_a) \cdot q 
= \max_q \int_a^b \Delta P(s,q)\cdot q \d s
\leq \int_a^b \max_q \big(\Delta P(s, q) \cdot q \big) \d s
= P(x_b, \bar{q} \mid x_a) \cdot \bar{q}\,,\]
and hence $\bar{q}$ maximizes $P(x_b,q \mid x_a)\cdot q$. Likewise, $\bar{q}$ maximizes $P(x_c,q \mid x_b)\cdot q$. Since the two incremental demand curves are positive and differentiable in $q$, the first-order conditions then imply that $\eta(x_b, \bar{q} \mid x_a) = \eta(x_c, \bar{q} \mid x_b) = 1$, contradicting the
ordered incremental elasticity property.   

\subsection{Proof of \Cref{thm:main3}}

\paragraph{Optimality.}\hspace{-2mm}We first show that a generalized frontier $\X^\star$ is an optimal menu. Since $\X^\star$ is a generalized frontier, there exists an order $\preceq$ on $\X^\star$ and a set of lotteries $\mathcal{A} \subseteq \Delta(\X^\star)$ that are totally ordered by $\preceq_{\text{st}}$ such that the agent's preferences have strict increasing differences on $\X^\star$ according to the order $\preceq$ and every element $x$ can be covered by $\mathcal{A}$. Now, fix any mechanism $(x, p)$. Fix any type $t$. By the definition of a covering, we know that there exists some $a \in \mathcal{A}$ such that 
\[v(a, t) = v(x(t), t)\,,\]
since the point $\big(1 - F(t), v(x(t), t)\big)$ is in the graph of the demand curve $P(x(t), \,\cdot\,)$ and hence must be in the graph of the demand curve $P(a, \,\cdot\,)$ for some $a \in \mathcal{A}$. Moreover, by the definition of a covering, we can take such an $a$ with $P(a, \,\cdot\,)$ weakly single-crossing $P(x(t), \,\cdot\,)$ from below, and hence for all $t' > t$, we have 
\[v(a, t') \leq v(x(t), t')\,.\]
We will use $a(t)$ as the reconstructed allocation rule following this procedure for all $t$, but before that we first ensure that there exists a measurable selection $t \mapsto a(t)$ following this procedure. Formally, let 
\[\Gamma(t):= \Big\{a \in \mathcal{A}: v(a, t) = v(x(t), t) \text{ and } v(a, t') \leq v(x(t), t') \text{ for all $t' \geq t$} \Big\}\,.\]
By the earlier argument, we observe that for any $t$, $\Gamma(t) \subseteq \Delta(\X^\star)$ is a non-empty compact set.  Moreover, the graph of $\Gamma$,  
\[\text{Graph}(\Gamma) = \Big\{(t, a) \in \mathcal{T} \times \mathcal{A}: v(a, t) = v(x(t), t) \text{ and } \max_{t' \in [t, \overline{t}]} \big\{v(a, t') - v(x(t), t')\big\} \leq 0  \Big\}\,,\]
is Borel measurable given that $x(\,\cdot\,)$ is Borel measurable. Then, by the Arsenin--Kunugui uniformization theorem (\citealt{kechris2012classical}, Theorem 18.18), there exists a measurable selection $t \mapsto a(t)$ such that $a(t) \in \Gamma(t)$ for all $t$. Fix this measurable selection and let the reconstruction be defined by $(a, p)$. Then, $(a, p)$ satisfies all IR and all downward IC constraints, and yields the same objective value to the principal, assuming truthful reporting. 

Note that $\{a(t)\}_{t \in \mathcal{T}} \subseteq \mathcal{A}$ which is totally ordered by $\preceq_{\text{st}}$. By the same argument as in the proof of \Cref{thm:main}, we know that for any $a \preceq_{\text{st}} a' \in \mathcal{A}$ where $a \neq a'$, we have 
\[ \E_{x \sim a}[v(x, t_2) - v(x, t_1)] < \E_{x \sim a'}[v(x, t_2) - v(x, t_1)]\,,\]
given that $v(x, t)$ has strict increasing differences on $\mathcal{X}^\star \times \mathcal{T}$.

Therefore, as in the proof of \Cref{thm:main}, we can apply \Cref{thm:downward} to $(a, p)$ and obtain a fully IC and IR mechanism $(\tilde{a}, \tilde{p})$ where $\text{Ran}(\tilde{a}) \subseteq \mathcal{A}$ and $(\tilde{a}, \tilde{p})$ yields a weakly higher objective than $(a, p)$.\footnote{Formally, given the strict increasing differences of $v(a, t)$ on  $\mathcal{A} \times \mathcal{T}$, the lotteries in $\mathcal{A}$ can be 
equivalently ordered by $\psi(a) := v(a, \overline{t}) - v(a, \underline{t})$, which is strictly increasing along $\preceq_{\text{st}}$ and continuous, and hence defines a homeomorphism $\psi: \mathcal{A} \rightarrow \psi(\mathcal{A})$ onto a compact subset of $\R$ since $\mathcal{A}$ is compact. Relabeling allocations by $\psi$, we can then apply \Cref{thm:downward} with the allocation space $\psi(\mathcal{A}) \subset \R$.} Then, as in the proof of \Cref{thm:main}, we can apply \Cref{lem:purification} to $(\tilde{a}, \tilde{p})$ and obtain a fully IC and IR deterministic mechanism $(x^\dagger, p^\dagger)$, where $\text{Ran}(x^\dagger) \subseteq \mathcal{X}^\star$, that further improves the objective. This completes the proof. 

\paragraph{Stochastic Optimality.}\hspace{-2mm}Now, suppose that the generalized frontier $\X^\star$ has pointwise ordered demand curves that satisfy the ordered incremental elasticity property. We show that the frontier is optimal among all stochastic mechanisms. Fix any stochastic mechanism. As in the proof of \Cref{thm:main2}, we can write it equivalently as $t \mapsto (x(t, \varepsilon), p(t))$ where $\varepsilon$ is a randomization device drawn independently of $t$. For each realization of $\varepsilon$, we can apply the same reconstruction argument just like before. Then, as in the proof of \Cref{thm:main2}, we can obtain a modified stochastic mechanism $t \mapsto (a(t, \varepsilon), p(t))$---in particular, by the same measurable selection argument as above, $a(t, \varepsilon)$ can be made jointly measurable in $(t, \varepsilon)$---that satisfies all IR constraints and all downward IC constraints (when averaged over $\varepsilon$) and weakly improves the principal's objective, under truthful reporting.  Writing out the compound lotteries as simple lotteries over $\X^\star$, we can equivalently write the downward-IC stochastic mechanism as $t \mapsto (\tilde{a}(t), p(t))$ where $\tilde{a}(t) \in \Delta(\X^\star)$ for all $t$. 

Now, consider $\X^\star$ itself as an allocation space. Clearly, given the incremental elasticity property, $\X^\star$ is a strong frontier for $\X^\star$ itself. By inspection, the proof of \Cref{thm:main2} not only establishes that any strong frontier is optimal among all stochastic mechanisms, but in fact proves a stronger claim: This holds among all downward-IC stochastic mechanisms---formally, for any strong frontier $\Y^\star$ of $\Y$, there exists a full-IC, deterministic mechanism $(x, p)$ such that $\text{Ran}(x) \subseteq \Y^\star$ that is optimal among all downward-IC stochastic mechanisms (i.e., that use elements in $\Delta(\mathcal{Y})$).\footnote{The finite case follows by the identical proof of \Cref{thm:main2}. For general compact $\Y^\star$, round every realized allocation monotonically down to a finite ordered grid in $\Y^\star$ and reduce each type's payment to keep truthful reporting payoff the same. Then, given increasing differences, all downward IC constraints are preserved. Compactness and continuity make the objective loss vanish as the grid becomes dense, so the claim follows from the finite case.} Since $\X^\star$ is a strong frontier of $\X^\star$, we immediately obtain that $\X^\star$ is optimal among all downward-IC stochastic mechanisms---in particular, there exists a full-IC, deterministic mechanism $(x^\dagger, p^\dagger)$ where $\text{Ran}(x^\dagger) \subseteq \X^\star$ such that the objective value under $(x^\dagger, p^\dagger)$ is weakly higher than that under $(\tilde{a}, p)$ and hence weakly higher than the original stochastic mechanism $t \mapsto (x(t, \varepsilon), p(t))$. Since the original stochastic mechanism was arbitrary, this then completes the proof of \Cref{thm:main3}.

\subsection{Proof of \Cref{thm:converse}}

Fix a finite menu $\mathcal{Y} = \{\varnothing, y_1, \dots, y_m\}$ where $y_i$'s are ordered according to the surplus order. Suppose the menu also satisfies the ordered incremental elasticity property which is agnostic to the type distribution and equivalent to that for all $i = 1, \dots, m-1$, we have 
\[0 \leq \frac{\d}{\d t} \log \Big(v(y_{i}, t) - v(y_{i-1}, t)\Big)  < \frac{\d}{\d t} \log \Big(v(y_{i+1}, t) - v(y_{i}, t)\Big)\,, \tag{A.4} \label{eq:incremental}\]
where $y_0 = \varnothing$. Then, writing $d_i(t) := v(y_{i}, t) - v(y_{i-1}, t)$, and 
\[\rho_i(t_L, t_H) := \frac{d_i(t_L)}{d_i(t_H)}\]
for any $t_L < t_H \in \mathcal{T}$, we have 
\[0 < \rho_m(t_L, t_H) < \cdots < \rho_1(t_L, t_H) < 1\,.\]

\paragraph{Sufficiency.}\hspace{-2mm}This is an immediate consequence of the proof of \Cref{thm:main3}. Indeed, suppose $\Y$ is a generalized frontier and let $\mathcal{A} \subseteq \Delta(\mathcal{Y})$ be the covering family of lotteries. Fix any IC and IR mechanism $(x, p)$. Let 
\[\Gamma(t):= \Big\{a \in \mathcal{A}: v(a, t) = v(x(t), t) \text{ and } v(a, t') \leq v(x(t), t') \text{ for all $t' \geq t$} \Big\}\,.\]
By the definition of a covering, for any $t$, $\Gamma(t) \subseteq \Delta(\Y)$ is a non-empty compact set.  Moreover, the graph of $\Gamma$,  
\[\text{Graph}(\Gamma) = \Big\{(t, a) \in \mathcal{T} \times \mathcal{A}: v(a, t) = v(x(t), t) \text{ and } \max_{t' \in [t, \overline{t}]} \big\{v(a, t') - v(x(t), t')\big\} \leq 0  \Big\}\,,\]
is Borel measurable given that $x(\,\cdot\,)$ is Borel measurable. Then, by the Arsenin--Kunugui uniformization theorem (\citealt{kechris2012classical}, Theorem 18.18), there exists a measurable selection $t \mapsto a(t)$ such that $a(t) \in \Gamma(t)$ for all $t$. This results in a downward-IC and IR mechanism $(a, p)$ that attains the same objective value as $(x, p)$. The rest is identical to the proof of \Cref{thm:main3} since the downward sufficiency theorem (\Cref{thm:downward}) holds for all Borel type distributions and the purification lemma (\Cref{lem:purification}) also holds for all Borel type distributions.

\paragraph{Necessity.}\hspace{-2mm}We now prove the necessity direction. Suppose $\mathcal{Y}$ is a profit-optimal menu under every Borel type distribution. As in the proof of \Cref{thm:main}, let 
\[\mathcal{A} :=  \Big\{a\in \Delta(\mathcal{Y}): a \in \Delta\big(\{y_{j-1}, y_j\}\big) \text{ for some $y_j \in \mathcal{Y}$}\Big\}\]
which is a collection of lotteries that can be totally ordered by stochastic dominance with respect to the order on $\mathcal{Y}$. Moreover, by \eqref{eq:incremental}, we also know that every incremental demand curve is strictly decreasing. Thus, to show that $\mathcal{Y}$ is a generalized frontier, it suffices to show that every element $x \in \mathcal{X}$ is covered by $\mathcal{A}$. 

Suppose for contradiction that there exists some $x \in \mathcal{X}$ that is not covered by $\mathcal{A}$. For any $r \in [0, m]$, write $r = i - 1 + \alpha$ for some $i \in \{1, \dots, m\}$ and $\alpha \in [0,1]$, and define
\[
a_r := (1-\alpha) \delta_{y_{i-1}} + \alpha \delta_{y_i} \,.
\]
Write $W(r, t):= v(a_r, t)$. Note that for each $t$, $W(\,\cdot\,, t)$ is continuous and strictly increasing from $0$ to $v(y_m, t)$. 

Consider the degenerate prior concentrated on type $t$. Clearly, when restricted to $\mathcal{Y}$, the optimal revenue is $v(y_m, t)$. For menu $\mathcal{Y}$ to be optimal for all such type distributions, we must have 
\[v(x, t) \leq v(y_m, t)\]
for all types $t$. Since $v(x, t) \geq 0$, this then implies that for any $t \in \mathcal{T}$, there exists a unique $r(t) \in [0, m]$ such that 
\[W(r(t), t) = v(x, t)\,.\]

Note that if $r(t)$ were nondecreasing in $t$, then $x$ would be covered by $\mathcal{A}$. Indeed, at any type $t_0$, the lottery $a_{r(t_0)}$ matches the value of $x$ for type $t_0$ by construction. For any $t < t_0$, we have 
\[W(r(t_0), t) \geq W(r(t), t) = v(x, t)\]
and for any $t > t_0$, we have 
\[W(r(t_0), t) \leq W(r(t), t) = v(x, t)\,.\]
Thus, $W(r(t_0), \,\cdot\,)$ weakly single-crosses $v(x, \,\cdot\,)$, with a crossing type being $t_0$. Since this holds for all $t_0 \in \mathcal{T}$, $x$ would be covered by $\mathcal{A}$ if $r(t)$ were nondecreasing in $t$.  

Therefore, since $x$ is not covered by $\mathcal{A}$, there must exist some $t_L < t_H$ such that 
\[r_L := r(t_L) > r(t_H) =: r_H\,.\]
Choose $i$ such that $r_L \in (i-1, i]$. Consider the binary prior $\mu$ that assigns probability 
\[\pi:= \rho_i(t_L, t_H) = \frac{d_i(t_L)}{d_i(t_H)} \in (0, 1)\]
to type $t_H$ and probability $1 - \pi$ to type $t_L$. 

We first calculate the maximal revenue under this prior $\mu$ and menu $\mathcal{Y}$. Since strict increasing differences hold on $\mathcal{Y}$, this follows by standard one-dimensional characterization. Indeed, the high type $t_H$ must be allocated $y_m$ and the downward IC constraint from $t_H$ to $t_L$ must be binding. Thus, the maximal revenue is given by 
\begin{align*}
    \Pi_\mathcal{Y}(\mu) &= \max_{0 \leq k \leq m} \Big( \pi \cdot  \big(v(y_m, t_H)  - v(y_k, t_H) +  v(y_k, t_L)\big)  +  (1 - \pi) \cdot v(y_k, t_L) \Big)\\
    &= \pi \cdot v(y_m, t_H) + \max_{0 \leq k \leq m}\Big(v(y_k, t_L) - \pi \cdot v(y_k, t_H) \Big)\,. \tag{A.5} \label{eq:1dopt}
\end{align*}
Now, writing $\phi_k = v(y_k, t_L) - \pi \cdot v(y_k, t_H)$, we have 
\begin{align*}
    \phi_k - \phi_{k-1} &= \big( v(y_k, t_L) - \pi \cdot v(y_k, t_H) \big) - \big( v(y_{k-1}, t_L) - \pi \cdot v(y_{k-1}, t_H)\big) \\
    &= d_k(t_L) - \pi \cdot d_k(t_H) = d_k(t_H) \cdot \big[\rho_k(t_L, t_H) - \pi\big]\,.
\end{align*}
Since $d_k(t_H) > 0$, the sign of $\phi_k - \phi_{k-1}$ is the same as the sign of $\rho_k(t_L, t_H) - \pi = \rho_k(t_L, t_H) - \rho_i(t_L, t_H)$, which is strictly positive if $k < i$, zero if $k = i$, and strictly negative if $k > i$ since 
\[\rho_m(t_L, t_H) < \cdots < \rho_1(t_L, t_H)\,.\]
Consequently, the optimizers to \eqref{eq:1dopt} are given by $k \in \{i, i-1\}$, which implies that assigning any lottery supported on $\{y_i, y_{i-1}\}$ to type $t_L$ would give the same maximal revenue. Now, since $r_L \in (i -1, i]$ by construction, this implies that 
\[\Pi_\Y(\mu) = \pi \cdot v(y_m, t_H) + W(r_L, t_L) - \pi \cdot W(r_L, t_H)\,.\]

Now, consider instead assigning $x$ to type $t_L$ and $y_m$ to type $t_H$. This allocation is implementable because 
\[v(x, t_H) - v(x, t_L) = W(r_H, t_H) - W(r_L, t_L) < W(r_L, t_H) - W(r_L, t_L) \leq v(y_m, t_H) - v(y_m, t_L)\,,\]
where the strict inequality follows from $r_H < r_L$ by construction, and the weak inequality follows because $v(y_m, t) - W(r_L, t) = \E_{y \sim a_{r_L}}[v(y_m, t) - v(y, t)]$ is nondecreasing in $t$. Given the above increasing differences, the revenue-maximizing transfers for this allocation are:
\[p_L := v(x, t_L) \text{ and } p_H := v(x, t_L) + v(y_m, t_H) - v(x, t_H)\,.\]
In particular, the revenue under this mechanism is 
\[\Pi_{\{x, y_m\}}(\mu) = \pi \cdot v(y_m, t_H) + v(x, t_L) - \pi \cdot v(x, t_H) = \pi \cdot v(y_m, t_H) + W(r_L, t_L) - \pi \cdot W(r_H, t_H)\,.\]
Therefore, we have 
\[\Pi_{\{x, y_m\}}(\mu) - \Pi_\Y(\mu)  =  \pi \cdot \Big[ W(r_L, t_H)  - W(r_H, t_H) \Big] > 0\,,\]
where the strict inequality uses again that $r_L > r_H$ by construction, contradicting that $\mathcal{Y}$ is an optimal menu for profit maximization under type distribution $\mu$, as desired. 

\subsection{Proof of \Cref{cor:pure}}
Note that for any mechanism $(a, p): \mathbf{v} \mapsto (a(\mathbf{v}), p(\mathbf{v}))$, the objective is given by 
\begin{align*}
 \E\Big[\lambda\big(a(\mathbf{v}) \cdot \mathbf{v} - p(\mathbf{v}) \big) + p(\mathbf{v}) \Big] &= \E\Big[\E[\lambda\mid \mathbf{v}]\big(a(\mathbf{v}) \cdot \mathbf{v} - p(\mathbf{v}) \big) + p(\mathbf{v}) \Big]   \\
 & = \E\Big[\E[\lambda \mid v^{\overline{b}}]\big(a(\mathbf{v}) \cdot \mathbf{v} - p(\mathbf{v}) \big) + p(\mathbf{v}) \Big] \,.
\end{align*}
Let $t$ be a random variable following the same distribution as $v^{\overline{b}}$. Also write $\lambda(t) := \E[\lambda \mid v^{\overline{b}} = t]$ which is a continuous nonincreasing function by assumption. By Strassen's theorem (see Lemma 1 in \citealt{yang2022costly}), under the stochastic ratio-monotonicity, we can write  
\[\mathbf{v} \eqid \Big(t, \big(v(b, t; \varepsilon)\big)_{b \neq \overline{b}}\Big)\,,\]
for some independent random variable $\varepsilon$ and jointly measurable functions $v(b, t; \varepsilon)$ such that $v(b, t; \varepsilon)/v(\overline{b}, t; \varepsilon)$ is nondecreasing in $t$. Now, for any $\varepsilon$, in the relaxed problem where the principal observes the realized $\varepsilon$, by \Cref{prop:nesting}, pure bundling is optimal among all stochastic mechanisms (formally, \Cref{prop:nesting} assumes differentiability but ratio-monotonicity suffices by the proof of \Cref{thm:main2}). Moreover, for any $\varepsilon$, the optimal price is determined by maximizing the objective 
\[\E[\lambda(t) \big(x(t) t - p(t)\big) + p(t)]\,,\]
which does not depend on $\varepsilon$ (since $\varepsilon$ is independent of $t$). Thus, there exists  a single pure-bundling mechanism that does not depend on $\varepsilon$ and solves the principal's relaxed problem when observing $\varepsilon$. This mechanism is implementable even without observing $\varepsilon$, and hence it must be optimal in the original problem, completing the proof. 

It remains to prove the comparative statics result. Given the first part, note that the optimal mechanism is always pure bundling with a single posted price. Thus, the designer solves the following problem: 
\[\max_{p}\Big\{\Big(\E[\lambda(t)(t-p) \mid t \geq p] + p\Big) \cdot (1 - F(p))\Big\}\,,\]
where $F$ is the distribution of $t$ (which follows the same distribution as the grand bundle value). Writing this problem in the quantile space $q := F(t)$, we have 
\[\max_{k \in [0, 1]}\Big\{\Big(\E\big[\lambda(q)\big(F^{-1}(q)-F^{-1}(k)\big) \mid q \geq k\big] + F^{-1}(k)\Big) \cdot (1 - k)\Big\}\,.\]
Letting $\mathcal{W}(\lambda, k)$ denote the above objective, for any $\widetilde{\lambda}, \lambda$ where $\lambda$ is weakly majorized by $\widetilde{\lambda}$, we claim that $\mathcal{W}$ has decreasing differences: 
\[\mathcal{W}(\widetilde{\lambda}, k) - \mathcal{W}(\lambda, k)\]
is nonincreasing in $k$. To see this, using integration by parts, we have 
\begin{align*}
    \mathcal{W}(\widetilde{\lambda}, k) - \mathcal{W}(\lambda, k) &= \E\Big[\big(\widetilde{\lambda}(q) - \lambda(q)\big) \cdot \big(F^{-1}(q)-F^{-1}(k)\big) \mid q \geq k\Big] \cdot (1 - k)\\
    &= \int_{k}^1 \big(\widetilde{\lambda}(q) - \lambda(q)\big) \cdot \big(F^{-1}(q)-F^{-1}(k)\big) \d q \\
    &= \Big[\int_{k}^q \big(\widetilde{\lambda}(s) - \lambda(s)\big) \d s\Big] \cdot \big(F^{-1}(q) - F^{-1}(k)\big) \Bigm|^1_{k} -\int_{k}^1 \Big[\int_{k}^q \big(\widetilde{\lambda}(s) - \lambda(s)\big) \d s\Big] \d F^{-1}(q)\\
    &= \int_{k}^1 \Big[\int_{k}^1 \big(\widetilde{\lambda}(s) - \lambda(s)\big) \d s \Big] \d F^{-1}(q) -\int_{k}^1 \Big[\int_{k}^q \big(\widetilde{\lambda}(s) - \lambda(s)\big) \d s\Big] \d F^{-1}(q) \\
    &= \int_{k}^1 \Big[\int_{q}^1 \big(\widetilde{\lambda}(s) - \lambda(s)\big) \d s \Big] \d F^{-1}(q)\,.
\end{align*}
Since by assumption $\int_{q}^1 \big(\widetilde{\lambda}(s) - \lambda(s)\big) \d s \geq 0$ for all $q \in [0, 1]$, it follows immediately that the above is weakly decreasing in $k$, as desired. Since $\mathcal{W}$ has decreasing differences, the optimizer sets $K^\star(\lambda)$ satisfy monotone comparative statics by the Milgrom--Shannon theorem---in particular, the optimizer sets $K^\star(\lambda)$ are ordered by the strong set order as $\lambda$ moves to $\widetilde{\lambda}$. Since the optimal prices are given by $p^\star = F^{-1}(k^\star)$ for any optimal quantile threshold $k^\star$ and $F^{-1}$ is nondecreasing, it follows immediately that the optimal prices are also ordered by the strong set order as $\lambda$ moves to $\widetilde{\lambda}$.

\section{Approximation to Positively Correlated Environments}\label{app:multidim}

In this appendix, we show that our results are locally robust to multidimensional types---indeed, we show that the one-dimensional-type model we study approximates general screening environments with preferences that are sufficiently positively correlated. Throughout the appendix, we assume a finite $\X$. For example, in the optimal bundling context, $\X = 2^{G}$ where $G$ is the set of goods. 

Without loss of generality, we can embed the type space in $\R^{|\X|}$, where $\mathbf{v} = (v^1,\dots, v^{|\X|})$ describes the value of the agent for each allocation in $\X$. Let $\gamma \in \Delta(\R^{|\X|})$ be the distribution of $\mathbf{v}$ supported on some compact product set $\mathcal{K}$. A mechanism $(x, p)$ is defined as 
\[(x, p):  \mathcal{K} \rightarrow \X \times \R\]
that satisfies the IC and IR constraints, and a stochastic mechanism $(a, p)$ is defined as 
\[(a, p):  \mathcal{K} \rightarrow \Delta(\X) \times \R\]
that satisfies the IC and IR constraints. The designer evaluates any stochastic mechanism using some objective within the same class of objectives described in the main model: 
\[\E\big[\lambda(\mathbf{v}) \big(\mathbf{v} \cdot a(\mathbf{v}) - p(\mathbf{v})\big)\big] + \E\big[p(\mathbf{v})\big]\,,\]
where $\E[\lambda(\mathbf{v})] \leq 1$ and $\lambda(\mathbf{v})$ is continuous and nonincreasing in $\mathbf{v}$.

Importantly, note that all the screening-frontier concepts defined in the main text, i.e., the surplus-elasticity frontier, the strong frontier, and the generalized frontier, are well-defined for any joint distribution $\gamma$. Indeed, these concepts only depend on the \textit{\textbf{sold-alone demand curves}}---and hence only the \textit{\textbf{marginal distributions}} of $\gamma$. We now show that these frontiers are approximately optimal for any distribution $\gamma$ that is sufficiently positively correlated in a suitably defined sense. Suppose that $\gamma$ has bounded continuous marginal distributions. For any such joint distribution $\gamma$, there exists a unique \textit{\textbf{comonotonic distribution}} $\gamma^{\text{mon}} \in \Delta(\mathcal{K})$ that shares the same marginal distributions with $\gamma$ but is maximally positively correlated, given by the \textit{\textbf{Fréchet--Hoeffding coupling}}. To describe the sense in which our model is locally robust to multidimensional types, we apply the misspecification framework of \citet{madarasz2017sellers}. For two distributions $\gamma_1, \gamma_2 \in \Delta(\mathcal{K})$, $\gamma_1$ and $\gamma_2$ are said to be \textit{\textbf{$\delta$-close}} if $\mathcal{K}$ can be partitioned into disjoint measurable sets $S_1,\dots, S_k$ such that $||v- v'||_{\infty} < \delta$ for any $v, v'$ in the same cell $S_j$ and $\gamma_1(S_j) = \gamma_2(S_j)$ for each $S_j$ (\citealt{madarasz2017sellers}). We say that a distribution $\gamma$ is \textbf{\textit{within-$\delta$-positively-correlated}} if it is $\delta$-close to the comonotonic distribution $\gamma^{\text{mon}}$.

\begin{theorem}\label{thm:positive-correlation}
Consider any collection of sold-alone demand curves $\{P(x, \,\cdot\,)\}_{x \in \X}$ where each demand curve $P(x, \,\cdot\,)$ is strictly decreasing and continuously differentiable (for $x\neq\varnothing$). Suppose that $\X^\star$ is a surplus-elasticity frontier given $\{P(x, \,\cdot\,)\}_{x \in \X}$.  Then, for any $\epsilon > 0$, there exists $\delta > 0$ such that,  for any within-$\delta$-positively-correlated distribution $\gamma$ whose sold-alone demand curves coincide with $\{P(x, \,\cdot\,)\}_{x \in \X}$, the menu $\X^\star$ yields an objective value within $\varepsilon$ of the optimal objective value. Moreover, the same holds if $\X^\star$ is a generalized frontier, and the same holds among all stochastic mechanisms if $\X^\star$ is a strong frontier. 
\end{theorem}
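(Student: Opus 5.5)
The plan is to reduce the statement to the one-dimensional comonotonic model and then apply a uniform-in-mechanism continuity estimate based on \Cref{lem:local}. First, let $\gamma^{\text{mon}}$ denote the Fréchet--Hoeffding coupling of the given marginals. Under $\gamma^{\text{mon}}$ we have $\mathbf{v} = (P(x, 1-t))_{x \in \X}$ for a uniform quantile variable $t \in [0,1]$. This is exactly our main model with $v(x,t) := P(x, 1-t)$, which is continuous, strictly increasing, and continuously differentiable in $t$ by the assumptions on the demand curves. Because the frontier notions depend only on sold-alone demand curves, $\X^\star$ remains a surplus-elasticity frontier (or generalized frontier, or strong frontier) in this model. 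The welfare weight $\lambda(\mathbf{v})$ is nonincreasing in $\mathbf{v}$, so along the comonotone curve it induces a nonincreasing continuous $\lambda(t)$. Applying \Cref{thm:main}, \Cref{thm:main3}, or \Cref{thm:main2} then gives $V_{\gamma^{\text{mon}}}(\X^\star) = V_{\gamma^{\text{mon}}}(\X)$, or the stochastic analogue $V_{\gamma^{\text{mon}}}(\X^\star)=V_{\gamma^{\text{mon}}}(\Delta(\X))$. In this step the comonotonic optimum is compared against mechanisms defined on the whole product set $\mathcal{K}$. That is harmless, since any mechanism on $\mathcal{K}$ restricts to an IC/IR mechanism on the support curve.

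The second step is a two-sided continuity bound: for $\delta$-close $\gamma_1,\gamma_2$ on $\mathcal{K}$ and any menu $\Y$, we want $V_{\gamma_1}(\Y) \geq V_{\gamma_2}(\Y) - \epsilon/2$, uniformly over $\gamma$'s with the given marginals. To get it, fix an (approximately) optimal mechanism $(x,p)$ for $\gamma_2$ using $\Y$. Apply \Cref{lem:local} on the compact type space $\mathcal{K}$ to obtain $(\tilde x,\tilde p)$ with the same range closure, $\E_{\gamma_1}[\tilde p] > \E_{\gamma_2}[p] - \epsilon'$, and $\tilde u \geq u$ pointwise. \Cref{lem:local} requires $\X$ finite or closed ranges, and that holds here.

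The welfare term $\E[\lambda \tilde u]$ then needs control. We have $\tilde u \geq u$ and $\lambda \geq 0$. The function $u(\mathbf{v}) = \max_{y}(\mathbf{v}\cdot y - p_y)$ is $1$-Lipschitz in sup norm for deterministic or lottery allocations, and $\lambda$ is uniformly continuous on compact $\mathcal{K}$ with bounded $u$, using the payment bounds from \Cref{subsec:downward}. So moving mass within cells of diameter $\delta$ changes $\E[\lambda u]$ by at most $\omega(\delta)$, with $\omega$ independent of the mechanism. This mirrors the proof of \Cref{lem:finite}. With $\epsilon'$ and $\delta$ small, we obtain both $V_\gamma(\X^\star) \geq V_{\gamma^{\text{mon}}}(\X^\star) - \epsilon/2$ and $V_{\gamma^{\text{mon}}}(\X) \geq V_\gamma(\X)-\epsilon/2$ (or the $\Delta(\X)$ versions). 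Chaining these with the first step gives $V_\gamma(\X^\star) \geq V_\gamma(\X) - \epsilon$.

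The main obstacle is making the \Cref{lem:local} step uniform: $\delta$ must depend only on $\epsilon$, not on the mechanism, and the welfare-weight term must also be controlled. \Cref{lem:local} is stated for revenue with a uniform $\delta$. So the extra work is the mechanism-independent modulus for $\lambda u$, which I will obtain from the Lipschitz property of indirect utility in values together with uniform payment bounds (which need $\E[\lambda]\leq 1$, as in \Cref{subsec:downward}). There is a second subtlety: $\lambda$ under $\gamma$ need not equal the induced $\lambda(t)$ on the comonotone support. I would treat $\lambda$ as a fixed continuous function on $\mathcal{K}$ and rely on the cell-wise matching in the $\delta$-closeness definition, which moves both measures together.
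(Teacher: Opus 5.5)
Your proposal is correct and follows essentially the same route as the paper. Both arguments use the comonotone one-dimensional model $v(x,t)=P(x,1-t)$ with $t\sim U[0,1]$ and $\widetilde{\lambda}(t)=\lambda\big((P(x,1-t))_{x}\big)$, derive the two one-sided bounds from \Cref{lem:local} with a mechanism-independent modulus for $\lambda u$ (the paper bounds $u$ by normalizing $\inf u=0$ and using the $1$-Lipschitz property, which is equivalent to your payment-bound step), and chain through $V(\X^\star,\gamma^{\text{mon}})=V(\X,\gamma^{\text{mon}})$; the one detail you leave implicit is that the main theorems need $\E_{\gamma^{\text{mon}}}[\lambda]\le 1$, which the paper handles by noting that if it fails, then for small $\delta$ no admissible $\gamma$ with $\E[\lambda(\mathbf{v})]\le 1$ exists, so the claim holds vacuously.
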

\begin{proof}
Let $\mathcal{T} = [0, 1]$ and $t$ be drawn from a uniform distribution on $[0, 1]$. For all $x$, let  
\[v(x, t) := P\big(x, 1-t\big)\,.\]
It follows immediately that the sold-alone demand curves under $\Big(\big\{v(x, t)\big\}_{x \in \X}, U[0, 1] \Big)$ are given by $\big\{P(x, \,\cdot\,)\big\}_{x \in \X}$. Moreover, $v(x, t)$ is strictly increasing in $t$ for all $x \neq \emptyset$ and is continuously differentiable in $t$. Since $\X$ is finite, this implies that $v_t(x, t)$ is a continuous function on $\X \times \mathcal{T}$. Moreover, let 
\[\widetilde{\lambda}(t) := \lambda\Big(\big(P(x, 1-t)\big)_{x \in \X}\Big)\,.\]
Then $\widetilde{\lambda}(t)$ is a continuous and nonincreasing function in $t$ since $\lambda(\mathbf{v})$ is continuous and nonincreasing in $\mathbf{v}$. Together, $\Big(\big\{v(x, t)\big\}_{x \in \X}, \,\widetilde{\lambda}(t),\, U[0, 1] \Big)$ then satisfies the assumptions of \Cref{sec:main}.\footnote{Note that we can without loss of generality consider $\E[\widetilde{\lambda}(t)] \leq 1$, since otherwise there exists no within-$\delta$-positively-correlated $\gamma$ such that $\E[\lambda(\mathbf{v})] \leq 1$ for $\delta$ small enough and the theorem holds vacuously.} Therefore, by \Cref{thm:main}, if $\X^\star$ is a surplus-elasticity frontier given $\big\{P(x, \,\cdot\,)\big\}_{x \in \X}$, then $\X^\star$ is optimal under $\Big(\big\{v(x, t)\big\}_{x \in \X},\,\widetilde{\lambda}(t),\, U[0, 1] \Big)$. 

Since the demand curves are bounded, without loss of generality, we restrict the type space to the compact set $\mathcal{K} \subset \R^{|\X|}$ containing the supports of all distributions under consideration. Moreover, note that, for any type distribution $\gamma$, it is without loss of optimality to restrict attention to IC and IR mechanisms with the agent's indirect utility function satisfying 
\[\inf_{\mathbf{v}}u(\mathbf{v}) = 0\,.\]
Indeed, if the above is not satisfied, then uniformly shifting the transfers for all types would result in  a weak improvement on the objective since $\E[\lambda(\mathbf{v})] \leq 1$. Now, note that for any IC and IR mechanism, the indirect utility function can be written as 
\[u(\mathbf{v}) = \sup_{x \in \X}\big(v^{x} - p(x) \big)_+\,,\]
where $p(x)$ is the tariff schedule implementing the original mechanism (by the taxation principle), and hence $u(\mathbf{v})$ is $1$-Lipschitz with respect to the sup norm. Since the type space is a compact subset of $\R^{|\X|}$, it follows that $u(\mathbf{v})$ is uniformly bounded by some constant $M$. This also implies that the family of functions
$\mathbf{v} \mapsto \lambda(\mathbf{v})u(\mathbf{v})$
has a common modulus of continuity, denoted by $\omega$,
that is independent of the mechanism. Indeed, for any $\mathbf{v}, \mathbf{v}'$ in the compact support with $||\mathbf{v} - \mathbf{v}'||_{\infty} \leq \delta$, we have 
\[\big|\lambda(\mathbf{v})u(\mathbf{v}) - \lambda(\mathbf{v}')u(\mathbf{v}')\big| \leq |u(\mathbf{v})| \cdot |\lambda(\mathbf{v}) - \lambda(\mathbf{v}')| + |\lambda(\mathbf{v}')| \cdot |u(\mathbf{v}) - u(\mathbf{v}')|\leq M \cdot \omega_{\lambda}(\delta) + ||\lambda||_{\infty} \cdot ||\mathbf{v} - \mathbf{v}'||_{\infty}\,,\]
where $\omega_{\lambda}$ is the modulus of continuity of $\lambda(\,\cdot\,)$, which exists since $\lambda(\,\cdot\,)$ is continuous on a compact set and hence uniformly continuous by the Heine--Cantor theorem. By inspection, the above gives a modulus of continuity for $\lambda(\mathbf{v}) \cdot u(\mathbf{v})$ that is independent of the mechanism, as desired. Finally, the last auxiliary observation we make is that for any two $\delta$-close distributions $\gamma, \gamma'$, and any uniformly continuous function $h$ with a modulus of continuity $\omega_h$, we have
\[\Big|\E_{\gamma}[h(\mathbf{v})] - \E_{\gamma'}[h(\mathbf{v})]\Big| \leq \omega_{h}(\delta)\,,\]
because for each cell $S_j$ where $\gamma(S_j) = \gamma'(S_j)$ in the definition of $\delta$-closeness, we have $||\mathbf{v} - \mathbf{v}'||_\infty < \delta$ for all $\mathbf{v}, \mathbf{v}' \in S_j$, and hence for each cell $S_j$, 
\[\Big|\E_{\gamma}[h(\mathbf{v}) \mid \mathbf{v} \in S_j] - \E_{\gamma'}[h(\mathbf{v}) \mid \mathbf{v} \in S_j]\Big| \leq \sup_{\mathbf{v}, \mathbf{v'} \in S_j} |h(\mathbf{v}) - h(\mathbf{v}')| \leq \omega_h(\delta)\,.\]

Fix any $\varepsilon > 0$. Let $\delta_1 > 0$ be given by \Cref{lem:local} against $\varepsilon / 4 > 0$. Let $0 < \delta \leq \delta_1$ be such that $\omega(\delta) < \varepsilon / 4$. Fix any distribution $\gamma$ that is within-$\delta$-positively-correlated and has sold-alone demand curves coinciding with $\big\{P(x, \,\cdot\,)\big\}_{x \in \X}$. Let 
\[V(\X, \gamma)\]
denote the optimal objective value given distribution $\gamma$.

First, we claim that 
\[V(\X, \gamma^{\text{mon}}) \geq V(\X, \gamma) -\varepsilon/2\,.\]
Indeed, fix any mechanism $(x, p)$ (which, as argued, can be assumed to have $\inf_{\mathbf{v}}u(\mathbf{v}) = 0$). Let $(\tilde{x}, \tilde{p})$ be given by \Cref{lem:local}. Then, we have 
\[\E_{\gamma^{\text{mon}}}[\tilde{p}(\mathbf{v})] > \E_{\gamma}[p(\mathbf{v})] - \varepsilon/4\,,\]
$\text{Ran}(\tilde{x}) \subseteq \text{Closure}(\text{Ran}(x))$, and $\tilde{u}(\mathbf{v}) \geq u(\mathbf{v})$ for all $\mathbf{v}$. Moreover, since $\tilde{u}(\mathbf{v}) \geq u(\mathbf{v}) \geq 0$ and $\lambda(\mathbf{v}) \geq 0$, by our previous observations, we can write 
\[\E_{\gamma^{\text{mon}}}\big[\lambda(\mathbf{v}) \tilde{u}(\mathbf{v})\big] \geq \E_{\gamma^{\text{mon}}}\big[\lambda(\mathbf{v}) u(\mathbf{v})\big] \geq \E_{\gamma}\big[\lambda(\mathbf{v}) u(\mathbf{v})\big]  -  \omega(\delta) > \E_{\gamma}\big[\lambda(\mathbf{v}) u(\mathbf{v})\big]  -  \varepsilon / 4\,.\]
Combining these two, we obtain 
\[V(\X, \gamma^{\text{mon}}) \geq \E_{\gamma^{\text{mon}}}\big[\lambda(\mathbf{v}) \tilde{u}(\mathbf{v}) + \tilde{p}(\mathbf{v})\big] \geq \E_{\gamma}\big[\lambda(\mathbf{v}) u(\mathbf{v}) + p(\mathbf{v})\big] - \varepsilon / 2\,.\]
Taking sup over the IC and IR mechanisms $(x, p)$ on the right-hand side then gives the desired inequality.

Now, we claim that 
\[V(\X^\star, \gamma)  \geq  V(\X^\star, \gamma^{\text{mon}}) -\varepsilon/2\,.\]
This follows by the identical argument as above except that we use allocation space $\X^\star$ instead of $\X$ and swap the roles of $\gamma$ and $\gamma^{\text{mon}}$---indeed, since $\X^\star$ is a compact subset of $\X$ and $\delta$-closeness is a symmetric relation, the identical argument establishes the above. 

Moreover, note that, by construction, $\big(v(x, t)\big)_{x \in \X}$ with $t \sim U[0, 1]$ results in a comonotonic value distribution sharing the same marginals with $\gamma$. Therefore, its induced value distribution in $\R^{|\X|}$ must coincide with $\gamma^{\text{mon}}$ since the comonotonic coupling is unique given continuous marginals. By construction, this also implies that the joint distribution of $\big(\widetilde{\lambda}(t), \big(v(x, t)\big)_{x \in \X}\big)$ with $t \sim U[0, 1]$ is the same as the joint distribution of $\big(\lambda(\mathbf{v}), \mathbf{v}\big)$ with $\mathbf{v} \sim \gamma^{\text{mon}}$. Therefore, by \Cref{thm:main}, as we have argued, it then follows that $V(\X^\star, \gamma^{\text{mon}}) = V(\X, \gamma^{\text{mon}})$. Finally, combining these together, we obtain 
\[V(\X^\star, \gamma) \geq V(\X^\star, \gamma^{\text{mon}}) -\varepsilon/2  =  V(\X, \gamma^{\text{mon}}) -\varepsilon/2 \geq  V(\X, \gamma) -\varepsilon\,,\]
proving the result. 

Note that the same conclusion holds for any generalized frontier $\X^\star$ by applying \Cref{thm:main3} instead of \Cref{thm:main} in the last step. Now, suppose $\X^\star$ is a strong frontier. The same conclusion holds among all stochastic mechanisms because the above proof holds identically when replacing $\X$ with $\Delta(\X)$ and applying \Cref{thm:main2} instead of \Cref{thm:main} in the last step, while keeping everything else the same---in particular, note that 
\[u(\mathbf{v}) = \sup_{a \in \Delta(\X)}\big(\mathbf{v} \cdot a - p(a) \big)_+\,,\]
continues to be $1$-Lipschitz in the sup norm for any IC and IR stochastic mechanism, and $\{\delta_{x}\}_{x \in \X^\star}$ is a compact subset of $\Delta(\X)$, which itself is compact under the weak-$^*$ topology (metrizable since $\X$ is compact). 
\end{proof}

\end{document}